\title{Improved Sliding Window Algorithms for Clustering and Coverage via Bucketing-Based Sketches}
\author
{Alessandro Epasto\\
  \texttt{aepasto@google.com}\\
  Google Research
\and 
Mohammad Mahdian \\
  \texttt{mahdian@google.com}\\
  Google Research
\and
Vahab Mirrokni \\
  \texttt{mirrokni@google.com}\\
  Google Research
\and Peilin Zhong \\
  \texttt{peilinz@google.com}\\
  Google Research}
\date{}
\newtheorem{theorem}{Theorem}[section]
\newtheorem{lemma}[theorem]{Lemma}
\newtheorem{definition}[theorem]{Definition}
\newtheorem{fact}[theorem]{Fact}
\newtheorem{remark}[theorem]{Remark}
\newtheorem{claim}[theorem]{Claim}
\newcommand{\wt}{\widetilde}
\renewcommand{\varepsilon}{\epsilon}
\renewcommand{\div}{\mathrm{div}}
\renewcommand{\hat}{\widehat}
\newcommand{\cost}{{\rm{cost}}}
\newcommand{\wb}{\overline}
\DeclareMathOperator*{\E}{{\bf {E}}}
\DeclareMathOperator*{\Var}{{\bf {Var}}}
\DeclareMathOperator{\OPT}{OPT}
\DeclareMathOperator{\poly}{poly}
\DeclareMathOperator{\e}{ \mathbf{e}}
\begin{document}

\maketitle

\begin{abstract}
Streaming computation plays an important role in large-scale data analysis.
The sliding window model is a model of streaming computation which also captures the recency of the data.
In this model, data arrives one item at a time, but only the latest $W$ data items are considered for a particular problem.
The goal is to output a good solution at the end of the stream by maintaining a small summary during the stream.

In this work, we propose a new algorithmic framework for designing efficient sliding window algorithms via \emph{bucketing-based sketches}.
Based on this new framework, we develop space-efficient sliding window algorithms for $k$-cover, $k$-clustering and diversity maximization problems.
For each of the above problems, our algorithm achieves $(1\pm \varepsilon)$-approximation.
Compared with the previous work, it improves both the approximation ratio and the space.

\end{abstract}

\section{Introduction}
The success of large-scale computational systems together with the need of solving problems on massive data motivate the development of efficient algorithms on these systems. 
The \emph{streaming model}, where data items arrive one-by-one and can only be accessed by a single pass, was introduced by a seminal work~\cite{alon1999space}.
The goal is to (approximately) solve a problem at the end of the stream while using as little space as possible during the stream.
Due to its theoretical elegance and accurate modeling of real-world streaming computing systems (such as Spark Streaming~\cite{zaharia2012discretized}), the streaming model has attracted a lot of attention in the past decades.
Space-efficient streaming algorithms were developed for a series of fundamental problems including e.g., frequency moments estimation~\cite{alon1999space,indyk2005optimal,braverman2010zero,jayaram2019towards}, $\ell_p$ sampling~\cite{monemizadeh20101,andoni2011streaming,jayaram2021perfect}, clustering~\cite{feldman2011unified,guha2016clustering,braverman2017clustering}, coverage~\cite{badanidiyuru2014streaming,chakrabarti2016incidence,emek2016semi,saha2009maximum}, diversity maximization~\cite{indyk2004algorithms,ceccarello2017mapreduce}, sparse recovery~\cite{nakos2019nearly,nakos2019stronger}, low rank matrix approximation~\cite{ghashami2014relative,liberty2013simple,boutsidis2016optimal,song2017low}, graph problems~\cite{feigenbaum2005graph,ahn2013spectral,andoni2014towards,sun2015tight,liu2020breaking,chen2021near}.
We refer readers to a survey~\cite{muthukrishnan2005data} for more streaming algorithms and applications.

However, the classic streaming model does not completely capture the important aspect of the recency of the data. While this model treats all data items equally over the data stream,
recent data is much more important in many applications. For example, the recommendation system may not want to generate recommendations based on very old user history. Moreover, in some scenarios, data may need to be removed after a certain time for privacy or legal reasons, e.g., data privacy
laws such as the General Data Protection Regulation (GDPR), requires to not retain data beyond a specified period~\cite{upadhyay2019sublinear}.

To study these scenarios, the \emph{sliding window model} was proposed by~\cite{datar2002maintaining}.
This model is similar to the streaming model, except that only the latest $W$ data items in the stream are considered for analysis.
Although sliding window algorithms are more useful in many applications, designing an efficient sliding window algorithm is usually more difficult than designing an efficient streaming algorithm.
The reason is that when $W$ is an upper bound of the size of the stream, the streaming model can be regarded as a special case of the sliding window model. 
To tackle problems in the sliding window model, several frameworks~\cite{datar2002maintaining,braverman2007smooth,braverman2018nearly} were proposed to reduce the sliding window problems to the streaming problems. While such reductions exist for some problems with certain structural properties, many well-studied problems such as clustering, coverage, diversity maximization, low rank matrix approximation, graph sparsification and submodular maximization are not captured by such general frameworks and are studied separately in the sliding window literature~\cite{cohen2016diameter,braverman2015clustering,braverman2016clustering,borassi2020sliding,borassi2019better,braverman2020near,crouch2013dynamic,chen2016submodular,epasto2017submodular}.

In this work, we develop a new framework for sliding window algorithms. In contrast to requiring certain properties of the underlying objectives, our framework asks for an efficient algorithmic primitive called \emph{bucketing-based sketch} for the problem.
Based on this new framework, we develop near-optimal sliding window algorithms for $k$-cover and $k$-clustering problems. We also develop an efficient sliding window algorithm for diversity maximization which almost matches the space and the approximation ratio of the state-of-the-art streaming algorithm.

\paragraph{Notation and Preliminaries}
We start by providing notation and preliminaries. Let $[n]$ denote the set $\{1,2,\cdots,n\}$.
For $A,B,\alpha\geq 0$, we say $B$ is an $\alpha$-approximation of $A$ if $A\leq B\leq \alpha \cdot A$ $(\alpha \geq 1)$ or $\alpha\cdot A\leq B\leq A$ $(\alpha\leq 1)$.
We use $\mathbf{1}(\cdot)$ to denote the indicator function, i.e. $\mathbf{1}(\mathcal{E})=1$ if the event $\mathcal{E}$ happens and $\mathbf{1}(\mathcal{E})=0$ otherwise.
For any set $S$, we use $2^S$ to denote the family of all subsets of $S$.
We use $\wt{O}(f(n))$ to denote $O(f(n)\log(f(n)))$.
For a vector $x\in\mathbb{R}^d$, we use $x(i)$ to denote the $i$-th entry of $x$ for $i\in [d]$.
We use $\e_i\in\mathbb{R}^d$ for $i\in[d]$ to denote the standard unit vector where the $i$-th entry of $\e_i$ is $1$ and all other entries of $\e_i$ are $0$.
For $x\in\mathbb{R}^d,$ we use $\|x\|_2$ to denote the $\ell_2$ norm of $x$, i.e., $\|x\|_2=\sqrt{\sum_{i\in[d]}x(i)^2}$.

In the sliding window model, there is a stream of data items and the $i$-th data item in the stream is $x_i$. At an arbitrary timestamp $N$, the input data set $X$ is implicitly defined by the data stream and a given window size parameter $W>0$ such that $X=\left\{x_{N-W+1},x_{N-W+2},\cdots,x_{N}\right\}$.
The goal is to output a good (approximate) solution for some specific computational problem with respect to $X$ while requiring as small space as possible at any time during the stream.

\subsection{Our Results and Comparison to Prior Work}
There exist a long line of research on the sliding window model.
As the first algorithmic framework for analyzing sliding window model, ~\cite{datar2002maintaining} proposed \emph{exponential histogram} as a technique for approximating the class of ``weakly additive'' functions in this model.
Later, a more general framework \emph{smooth histogram} was developed by~\cite{braverman2007smooth} in which the authors handle the class of ``smooth'' functions where all weakly additive functions are shown to be smooth. This framework is further generalized by~\cite{braverman2018nearly}.
Based on these frameworks, space-efficient $(1\pm\varepsilon)$-approximate sliding window algorithms were developed for many problems including count, sum of integers, $\ell_p$ norm, frequency moments, length of longest subsequence, geometric mean, distinct elements, and heavy hitters.
However, many fundamental problems such as coverage, clustering and diversity maximization are not smooth or not smooth enough to be $(1\pm\varepsilon)$-approximated by smooth histogram (see Appendix~\ref{sec:lack_of_smoothness} for a discussion).
These problems are studied in the sliding window model separately~\cite{braverman2015clustering,braverman2016clustering,borassi2019better,borassi2020sliding}. None of these papers, however, achieve optimal or near-optimal approximation factors or space bounds for these problems.  
In contrast, we develop a new general framework for the sliding window model, and show efficient sliding window algorithms for $k$-cover, $k$-clustering and diversity maximization as natural applications of this unified framework.
Compared to previous work~\cite{borassi2019better,braverman2020near}, all our results can achieve an improved $(1\pm\varepsilon)$-approximation with improved space requirements: the space of our algorithm for $k$-cover and $k$-clustering is near optimal and the space for diversity maximization almost matches the previous best algorithm. Previous work either achieves suboptimal  approximation (e.g., 2-approximation~\cite{borassi2019better} or $O(1)$ \cite{borassi2020sliding}) or sub-optimal space requirements (e.g., quadratic space requirement for $k$-clustering in~\cite{braverman2020near}). 
%To see why previous results applies to specific numerical algebra problems or requires a core-set for the online problem~\cite{braverman2020near}.

\paragraph{Algorithmic framework via Bucketing-based sketches.}
We define the notion of \emph{bucketing-based} sketch as a well-structured summary of the data such that it can be easily updated in the sliding window model.
Roughly speaking, a bucketing-based sketch of the data contains a number of buckets where each bucket has a threshold.
For each data item, we process some information and put the item with processed information into some buckets.
If a bucket is full, i.e., the size of the information stored in the bucket is larger than the threshold, then only arbitrary maximal number of items are kept in the bucket.
A good (approximate) solution can be recovered from the sketch.
The formal definition of the bucketing-based sketch is given by Definition~\ref{def:bucketing_based_sketch}, and the sliding window algorithmic framework via bucketing-based sketch is described in Algorithm~\ref{alg:sliding_framework}.
We refer readers to Section~\ref{sec:ideas_alg_sketch} for high level ideas and concrete simple examples of developing sliding window algorithms via bucketing-based sketches.

In contrast to the smooth histogram framework~\cite{braverman2007smooth} which utilizes the  property of the function to make it suitable for sliding window computations,  we ask for an algorithmic primitive, i.e., an efficient bucketing-based sketch, which in turn, yields an efficient sliding window algorithm.
%We show efficient bucketing-based sketches for $k$-clustering, $k$-cover and diversity maximization problems.
%As a result, we obtain space-efficient sliding window algorithms for above problems, and all our algorithms achieve $(1\pm\varepsilon)$-approximation.
%Most relevant work on frameworks of sliding window algorithms via algorithmic primitives would be~\cite{borassi2019better,braverman2020near}.
%In comparison with~\cite{borassi2019better}, though they propose a sliding window algorithmic framework via mergeable summary, none of its applications achieves better than $1/2$-approximation (or $2$-approximation). 
%\cite{braverman2020near} proposes two frameworks.
%The first one is the row sampling framework which is specific for tackling numerical linear algebra problems in the sliding window model.
%The second one requires a online algorithm for maintaining a coreset of the data. Therefore, our framework is orthogonal to these prior frameworks.

\paragraph{$k$-Clustering.}
We consider the point set $X$ in the discrete Euclidean space $\{1,2,\cdots,\Delta\}^d$ for $\Delta,d\in\mathbb{Z}_{\geq 1}$.
This is without loss of generality, because if the clustering cost is not zero, we can discretize the space by changing the cost by an arbitrary small multiplicative factor (see e.g.,~\cite{indyk2004algorithms,frahling2005coresets,braverman2017clustering,hu2018nearly}).
For $k\geq 1$ and $p>0$, $\ell_p$ $k$-clustering asks for a set of $k$ centers $Z\subset\mathbb{R}^d$ such that the $\ell_p$ clustering cost is minimized, i.e.,
\begin{align*}
\OPT(X):=\min_{Z\subset \mathbb{R}^d:|Z|=k}\sum_{x\in X}\min_{z\in Z} \|x-z\|_2^p.
\end{align*}
The above clustering objective is $k$-median if $p=1$ and is $k$-means if $p=2$.
We extend the sensitivity sampling based sketching technique~\cite{feldman2011unified,braverman2016new,hu2018nearly,braverman2019streaming} and construct an efficient bucketing-based sketch for $\ell_p$ $k$-clustering.
Based on such sketch, we show an efficient sliding window algorithm for $\ell_p$ $k$-clustering in the sliding window model.

\begin{theorem}[A simple version of Theorem~\ref{thm:sliding_window_clustering}]
Consider a point set in $[\Delta]^d$ given by a stream with window size $W\geq 1$.
For $\varepsilon\in(0,0.5)$, there is a sliding window algorithm which outputs a $(1+\varepsilon)$-approximation for $\ell_p$ $k$-clustering with probability at least $0.99$.
The algorithm uses space $(kd+d^{O(p)})/\varepsilon^3\cdot\poly\log(kd\Delta W)$.
The update time for each point during the stream is at most $(kd+d^{O(p)})/\varepsilon^3\cdot\poly\log(kd\Delta W)$.
\end{theorem}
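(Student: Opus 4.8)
\medskip
\noindent\textbf{Proof plan.}
The plan is to derive the theorem from two ingredients. The first is the generic guarantee of the framework: any valid bucketing-based sketch (Definition~\ref{def:bucketing_based_sketch}) plugged into Algorithm~\ref{alg:sliding_framework} yields a sliding window algorithm whose space and per-item update time are those of the sketch times a $\wt{O}(1/\varepsilon)\cdot\poly\log W$ overhead for the suffix structure. The second is an explicit bucketing-based sketch for $\ell_p$ $k$-clustering built from sensitivity (importance) sampling. Almost all of the work is in the second ingredient: defining the buckets, their thresholds, and the recovery procedure, and proving that the recovered weighted point set is a $(1+\varepsilon)$-coreset with probability $\ge 1-1/\poly(kd\Delta W)$ --- and, crucially, that this holds not only for the full stream but uniformly for every suffix, which is exactly what Algorithm~\ref{alg:sliding_framework} consumes.

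\medskip
\noindent\textbf{The sketch.}
In parallel I would maintain, for the current window, a coarse bicriteria solution: a set $C$ of $O(k\,\poly\log(\Delta W))$ centers of $\ell_p$ cost $O(1)\cdot\OPT(X)$, together with an $O(1)$-estimate $\mu$ of $\OPT(X)/|X|$; this can be maintained in $\poly(k,d,\log(\Delta W))$ space either by a known sliding-window bicriteria clustering routine or by a cruder instantiation of the same framework, and the point is that it is correct for every suffix. Given $C$, when a point $x\in[\Delta]^d$ arrives I compute its nearest center $c(x)\in C$ and $\cost_C(x)=\min_{c\in C}\|x-c\|_2^p$, and route $x$ to the bucket indexed by $\big(c(x),\,\lfloor\log_2(1+\cost_C(x)/\mu)\rfloor\big)$; since the cost lies in a range of width $W\cdot(d\Delta^2)^{O(p)}$, there are $O\!\big(kp\,\poly\log(d\Delta W)\big)$ buckets. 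Each bucket keeps a uniform sample of the items routed to it together with the count of those items, the per-bucket sample sizes chosen so their sum is $\tau=\wt{O}\big((kd+d^{O(p)})/\varepsilon^2\big)$; when a bucket overflows we retain an arbitrary maximal subsample, which is exactly the threshold behavior of Definition~\ref{def:bucketing_based_sketch}. To recover a solution, reweight each retained point by (its bucket's count)$/$(number of retained samples in its bucket) and run an offline $(1+O(\varepsilon))$-approximate $\ell_p$ $k$-clustering on the resulting weighted set of size $\tau$.

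\medskip
\noindent\textbf{Correctness.}
Inside a fixed bucket all points have sensitivity equal up to a constant factor, since for $\ell_p$ $k$-clustering the sensitivity of $x$ is governed (up to the per-cluster refinement, which is why the bucket index also records $c(x)$) by $\cost_C(x)/\OPT(X)+1/|X|$, i.e. proportionally by $1+\cost_C(x)/\mu$, and we bucketed by a geometric grid on precisely this quantity --- the ``$+1$'' being why the index is $\log_2(1+\cost_C(x)/\mu)$ rather than $\log_2\cost_C(x)$. Hence a count-reweighted uniform sample of a bucket is, up to constants, a sensitivity-proportional sample of it, and the union of these over all buckets is an importance sample of $X$ of size $\tau$. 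The standard sensitivity-sampling coreset analysis for $\ell_p$ $k$-clustering in Euclidean space --- a uniform concentration bound over a $(1+\varepsilon)$-net of candidate center $k$-tuples, whose effective dimension contributes the $kd$ term and whose handling of the $p$-th power of distances contributes the $d^{O(p)}$ term --- then shows that for $\tau=\wt{O}\big((kd+d^{O(p)})/\varepsilon^2\big)$ the weighted cost $(1+\varepsilon)$-approximates $\sum_{x\in X}\min_{z\in Z}\|x-z\|_2^p$ simultaneously for all $Z$ with $|Z|=k$, except with probability $1/\poly(kd\Delta W)$; in particular the output is a $(1+O(\varepsilon))$-approximation, and rescaling $\varepsilon$ finishes this step. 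What makes this a genuine bucketing-based sketch, rather than merely a streaming coreset, is that the bucket index of $x$ depends only on $x$ and on $C,\mu$ (maintained for the window), not on arrival order; so restricting the sketch to the points of the current window restricts each bucket to exactly the window's points in it, and the timestamp structure of Algorithm~\ref{alg:sliding_framework} keeps the retained subsample (conditionally) a uniform sample of the window's sub-bucket, so the recovered coreset is valid for $X$.

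\medskip
\noindent\textbf{Accounting, and the main obstacle.}
The sketch stores $\tau=\wt{O}\big((kd+d^{O(p)})/\varepsilon^2\big)$ sampled points plus $O(kp\,\poly\log(\Delta W))$ counters; multiplying by the $\wt{O}(1/\varepsilon)\cdot\poly\log W$ overhead of Algorithm~\ref{alg:sliding_framework} gives the claimed space $(kd+d^{O(p)})/\varepsilon^3\cdot\poly\log(kd\Delta W)$. The per-item update is dominated by finding the nearest of the $O(k\,\poly\log)$ coarse centers and updating $O(1)$ reservoir samples and counts, which is within the same bound. Since Algorithm~\ref{alg:sliding_framework} spawns only $\poly\log(\Delta W)$ copies of the sketch, a union bound over their $1/\poly$ failure probabilities gives overall success probability $\ge 0.99$ after choosing the polynomial large enough. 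I expect the main obstacle to be precisely the ``every suffix'' requirement: one must argue that $C$ and $\mu$ are good approximations for \emph{every} window (handled by running $O(\log(\Delta W))$ geometric guesses of $\OPT$ and a coarse sliding-window bicriteria sketch in parallel) and that passing to an arbitrary maximal subsample of an overflowing bucket does not destroy the conditional uniformity on which the sensitivity-sampling analysis is re-run for the restricted instance. The remaining pieces --- the net/dimension bound yielding the $(kd+d^{O(p)})/\varepsilon^2$ sample size, and the bookkeeping of buckets, counts, and weights --- are routine.
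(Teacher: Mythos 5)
There is a genuine gap, and it sits exactly where you flagged the ``main obstacle,'' but it is more severe than a suffix-uniformity issue: your bucketing function is not a bucketing-based sketch in the sense of Definition~\ref{def:bucketing_based_sketch} at all. In the framework, $h_i,g_i,\zeta_i$ must be (pre-drawn random) functions of the data item alone; Claim~\ref{cla:maintain_sketch} and hence Theorem~\ref{thm:framework} rely on the fact that an item's bucket membership is fixed at arrival and never changes, so that keeping the latest-timestamp items per bucket and later discarding expired timestamps yields exactly $Z^o$ of the window. Your bucket index $\bigl(c(x),\lfloor\log_2(1+\cost_C(x)/\mu)\rfloor\bigr)$ depends on a bicriteria solution $C$ and an estimate $\mu$ that are maintained adaptively for the \emph{current} window; a point that arrived long ago was routed using a stale $C,\mu$ (valid for the window ending at its arrival, not for the final window), and since almost all window points are not stored, you cannot re-route them when $C,\mu$ drift. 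Consequently the retained items in a bucket are not a sensitivity class with respect to the final window, and the importance-sampling analysis you invoke has nothing to apply to. Running $O(\log(\Delta W))$ geometric guesses fixes only the scale $\mu$, not the center set $C$; there is no oblivious surrogate for $C$ in your scheme. The paper's route avoids this precisely by making the buckets data-oblivious: randomly shifted hierarchical grid cells at $O(\log(nd\Delta))$ levels, with per-level thresholds $R_i$ determined by the guess $o$ alone, and with all adaptive decisions (which cells are heavy/crucial, which level each point's sensitivity bound comes from, the sampling and weights) deferred to the recovery function applied to the window-restricted sketch.

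A second, related gap is your per-bucket primitive ``uniform (reservoir) sample plus a count.'' Under Definition~\ref{def:bucketing_based_sketch}, an overflowing bucket retains an arbitrary maximal subset --- instantiated in Algorithm~\ref{alg:sliding_framework} as the latest-timestamp items --- which is deterministic given the routing and is \emph{not} a uniform sample of the window's sub-bucket; your claim that the timestamp structure keeps the retained subsample conditionally uniform is false. Likewise, the window-restricted count needed for reweighting is not available from the sketch. The paper sidesteps both problems by putting the randomness into the filter/processing functions themselves ($h_i$ with probability $p_i$, and $\zeta''_i$ as a random subset of $[\hat m]$ to support many independent draws), and then proving that w.h.p.\ the relevant buckets (crucial cells) do not overflow within the window, so the window restriction of the sketch is exactly the Bernoulli sample and estimates such as $\lambda(C)$ and $\lambda(X^i)$ can be read off at recovery time. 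Without replacing your adaptive routing and reservoir/count mechanism by oblivious analogues of this kind, the proposal does not yield a valid sketch to plug into Theorem~\ref{thm:framework}, so the claimed space and approximation guarantees do not follow.
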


If dimension $d$ is much larger than $\log(k)$, it can be reduced to $O(\log(k)/\varepsilon^2)$ by using $O(kd\log(k)/\varepsilon^2)$ space based on the dimension reduction technique of~\cite{mmr19} (see Appendix~\ref{sec:dim_reduce} for more details). 
Since $\Omega(k)$ space is necessary for any multiplicative approximation of $k$-clustering (see Appendix~\ref{sec:necessity_k_clustering}), our space is optimal up to a $\poly(\log(kd\Delta W)/\varepsilon)$ factor.

Our algorithm can actually output a corset (see Definition~\ref{def:coreset}).
Thus, we output a $(1+\varepsilon)$-approximation by finding the optimal clustering of the coreset.
If we desire a polynomial time approximation algorithm at the end of the stream, we can apply any polynomial time $\alpha$-approximate clustering algorithm over the coreset and we can obtain a $(1+\varepsilon)\alpha$-approximate solution.
For example, polynomial time constant approximation algorithms for $k$-median and $k$-means are known~\cite{ahmadian2019better}.

$k$-Clustering has been studied in the sliding window model by a line of work~\cite{braverman2015clustering,braverman2017clustering,borassi2020sliding}.
To obtain $(1+\varepsilon)$-approximation, \cite{braverman2017clustering} proposed an algorithm that, given a coreset which is maintainable in the streaming model with space $s$, maintains the coreset in the sliding window model using $O(s^2\varepsilon^{-2}\log W)$ space.
Since any coreset has size $\Omega(k)$ (by similar argument of Appendix~\ref{sec:necessity_k_clustering}), the space needed by their algorithm is $\Omega(k^2)$.
\cite{borassi2020sliding} proposed a sliding window algorithm, for arbitrary metric spaces, with space linear in $k$ while they can only achieve a constant approximation with a constant $\gg 10$.
In contrast to their algorithms, our algorithm achieves both optimal space and $(1+\varepsilon)$-approximation ratio simultaneously.

\paragraph{$k$-Cover.}
In the $k$-cover problem, given a ground set $\mathcal{E}$ of $m$ elements and a family $\mathcal{S}\subseteq 2^{\mathcal{E}}$ of $n$ sets, the goal is to choose $k$ sets $\mathcal{P}\subseteq \mathcal{S}$ such that $|\bigcup_{S\in\mathcal{P}} S|$ is maximized.
We consider $k$-cover in the \emph{edge-arrival} model~\cite{bateni2017almost}, i.e., $\mathcal{S}$ is given by a set of pairs $(S_1,e_1),(S_2,e_2),\cdots\in\mathcal{S}\times\mathcal{E}$ where $(S_i,e_i)$ indicates that element $e_i$ is in the set $S_i$. 
We show how to use a bucketing-based sketch to implement the sketch proposed by~\cite{bateni2017almost}.
As a result, we develop an efficient sliding window algorithm for $k$-cover in the edge-arrival sliding window model.
\begin{theorem}[A simple version of Theorem~\ref{thm:sliding_window_kcover_better_approx}]
Consider a $k$-cover instance over $n$ sets given by an edge-arrival streaming model with window size $W\geq 1$.
For $\varepsilon\in(0,0.5)$, there is a sliding window algorithm which outputs a $(1-\varepsilon)$-approximation for $k$-cover with probability at least $0.99$.
The algorithm uses space $n/\varepsilon^3\cdot \poly\log(nW/\varepsilon)$.
The update time for each edge during the stream is at most $k\cdot \poly\log(nW)$.
\end{theorem}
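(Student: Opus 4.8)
The plan is to prove the general statement, Theorem~\ref{thm:sliding_window_kcover_better_approx}, from which the present simple version is immediate, by instantiating our sliding-window framework (Algorithm~\ref{alg:sliding_framework}) with an appropriate bucketing-based sketch for $k$-cover; essentially all of the work is to exhibit that sketch and to check that it satisfies Definition~\ref{def:bucketing_based_sketch} with the claimed parameters. Concretely, we implement the subsampling sketch of~\cite{bateni2017almost} as a bucketing-based sketch: fix $L=O(\log(nW/\varepsilon))$ geometrically decreasing subsampling rates $p_0>p_1>\dots>p_{L-1}$, each realized by an $O(\log(nW))$-wise independent hash that marks each ground element as ``present at level $\ell$''; for every set $S$ and level $\ell$ keep a bucket $B_{S,\ell}$ with threshold $\tau=\poly\log(nW)/\varepsilon^2$; and when an edge $(S,e)$ arrives, for each level $\ell$ at which $e$ is present, insert $e$ into $B_{S,\ell}$ unless it is already full. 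To recover an answer one estimates the coverage of a collection $\mathcal{P}$ by $p_\ell^{-1}\bigl|\bigcup_{S\in\mathcal{P}}B_{S,\ell}\bigr|$, taken at the level where $\bigl|\bigcup_{S\in\mathcal{P}}B_{S,\ell}\bigr|$ falls into the ``sweet spot'' $[\Theta(\varepsilon^2\tau),\tau]$; by~\cite{bateni2017almost} this is a $(1\pm\varepsilon)$-estimate of $\bigl|\bigcup_{S\in\mathcal{P}}S\bigr|$, with high probability and simultaneously over the collections inspected by the recovery routine, so the best $k$-collection with respect to the sketch has coverage at least $(1-\varepsilon)\OPT$.

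To see that this is a legitimate bucketing-based sketch we index buckets by pairs $(S,\ell)$, take the per-edge ``processed information'' of $(S,e)$ to be the labels $\{(\ell,e):e\text{ present at level }\ell\}$ routed to the corresponding buckets, and use threshold $\tau$; the one subtle clause of Definition~\ref{def:bucketing_based_sketch} is that on overflow an \emph{arbitrary} maximal subset of items is retained, so we must argue the correctness above is insensitive to which $\tau$ items survive. This holds because the estimator for a collection only ever consults its sweet-spot level $\ell$, and at that level the relevant sampled union has size $O(\tau)$; hence each individual set in the collection has $O(\tau)$ qualifying elements in the relevant suffix, so no bucket that actually contributes to the estimate is forced to drop an element, and overflow only discards elements belonging to strictly higher (hence irrelevant) levels. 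Isolating exactly this monotonicity/robustness property of the~\cite{bateni2017almost} estimator --- so that ``the $\tau$ items we chose to keep'' can be replaced by ``any maximal set of $\tau$ items'', which is what the sliding-window bucketing mechanism forces on us --- is the step I expect to be the main obstacle; everything else is either a direct appeal to Algorithm~\ref{alg:sliding_framework} or a routine Chernoff-and-union-bound computation.

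Given the sketch, the sliding-window algorithm follows by running Algorithm~\ref{alg:sliding_framework}: its guarantee says that if each bucket retains its $\tau$ most recent qualifying elements together with their timestamps, then for \emph{every} active window endpoint one can reconstruct a valid (maximal) bucket content, and therefore a $(1\pm\varepsilon)$-estimate and a near-optimal $k$-collection for that window. We choose $\tau=\Theta\bigl((\log n+\log W+\log(1/\varepsilon))/\varepsilon^2\bigr)$ large enough that a union bound over the $L$ levels, over the window checkpoints maintained by the framework, and over the collections examined during recovery keeps the total failure probability below $0.01$. It remains to account for resources: the sketch stores $n\cdot L$ buckets of $O(\tau)$ words each, which together with the overhead of the framework is $n/\varepsilon^3\cdot\poly\log(nW/\varepsilon)$ space; and processing an edge touches one bucket per level plus the framework's incremental bookkeeping, for an update time of $k\cdot\poly\log(nW)$. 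This matches the claimed bounds and completes the proof.
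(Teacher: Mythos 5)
Your architecture is genuinely different from the paper's, and it has a real gap at exactly the point you flag as ``the main obstacle.'' The paper does not build per-set buckets with a multi-level sweet-spot estimator; it implements the sketch of Lemma~\ref{lem:k-cover-offline} verbatim as a bucketing-based sketch in which the \emph{buckets are the ground elements}, the filter keeps an element iff $\hslash(e)\leq p$ with $p=\min(k\log(1/\delta)\log(n)/(\varepsilon^2 o),1)$, and each element-bucket has threshold $n\log(1/\varepsilon)/(\varepsilon k)$; one such sketch is built for each guess $o$ of $\OPT_k$ and plugged into Theorem~\ref{thm:framework} (Theorems~\ref{thm:bucketsketch_kcover} and~\ref{thm:sliding_window_kcover_better_approx} with $\alpha=1$). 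The robustness to arbitrary truncation that worries you is then free: the offline guarantee of~\cite{bateni2017almost} is already stated for an \emph{arbitrary} choice of $n\log(1/\varepsilon)/(\varepsilon k)$ edges retained at each over-full element, so the arbitrary tiebreaking forced by Definition~\ref{def:bucketing_based_sketch} is covered by hypothesis and no new monotonicity argument is needed.

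In your scheme the gap is quantitative, not just presentational. To recover a $(1-\varepsilon)$-approximation the recovery routine must compare all $\binom{n}{k}$ candidate collections, so the coverage estimate must be $(1\pm\varepsilon)$-accurate simultaneously for $n^{\Theta(k)}$ unions; a Chernoff-plus-union-bound argument therefore needs $\Omega(k\log n/\varepsilon^2)$ samples per estimate (and correspondingly $\Omega(k\log n)$-wise independence, which is why the paper's hash is $O(k\log(1/\delta)\log n)$-wise independent), not the $\tau=\Theta((\log n+\log W+\log(1/\varepsilon))/\varepsilon^2)$ you propose. Bateni et al.\ absorb this factor $k$ by sampling elements at rate $\Theta(k\log(1/\delta)\log n/(\varepsilon^2\OPT_k))$ while capping the \emph{per-element} degree at $n\log(1/\varepsilon)/(\varepsilon k)$; the product of the two is what yields $n/\varepsilon^3\cdot\poly\log$ space. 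With per-set caps, repairing the union bound forces $\tau=\Omega(k\log n/\varepsilon^2)$ per set, i.e.\ $\Omega(nk/\varepsilon^2)$ space overall, which exceeds the claimed bound for large $k$. Moreover the step you defer --- that at the sweet-spot level no contributing bucket overflows --- is not true in general: a single set in the evaluated collection can account for essentially the entire union (so its sampled count is within a constant of $\tau$, and sampling fluctuations or repeated occurrences of the same edge push it over), and the claim that overflow ``only discards elements of strictly higher levels'' is inconsistent with buckets indexed by $(S,\ell)$, whose overflow evicts level-$\ell$ items. As written, the proposal neither establishes the truncation-robustness it needs nor meets the stated space bound once the union bound is made honest.
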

In addition to the above theorem, if polynomial running time is desired to recover an approximation at the end of the stream, we can use near linear time to obtain a $(1-1/e-\varepsilon)$-approximation (see Theorem~\ref{thm:sliding_window_kcover_fast_runtime}).

To the best of our knowledge, sliding window algorithm for $k$-cover has not been studied previously. 
The most relevant work is~\cite{bateni2017almost} which studies $k$-cover in the edge-arrival streaming model.
They achieve $(1-1/e-\varepsilon)$-approximation if the running time at the end of the stream is required to be polynomial and $(1-\varepsilon)$-approximation otherwise. 
The space of their algorithm is proven to be near optimal.
Note that the streaming model is a special case of the sliding window model when we set the window size $W$ to be an upper bound of the length of the stream.
Since the space of our algorithm matches theirs up to a poly-logarithmic factor, the space of our sliding window algorithm is also near optimal.
Coverage problems are also heavily studied in the \emph{set-arrival} model.
Set-arrival model is a special case of edge-arrival model, where each set arrives at a time and brings with it a list of its elements.
Even in the set-arrival model, the best known streaming algorithms for $k$-cover need $\sim n$ space~\cite{badanidiyuru2014streaming,saha2009maximum}.

\paragraph{Diversity maximization.}
We consider the point set $X$ in the discrete Euclidean space $\{1,2,\cdots,\Delta\}^d$ for $\Delta,d\in\mathbb{Z}_{\geq 1}$.
The goal of diversity maximization is to find a subset $Q\subseteq X$ with $|Q|=k$ such that $\div(Q)$ is maximized, where $\div(\cdot)$ can be any diversity function listed in Table~\ref{tab:diversity_functions} which is originally studied by~\cite{indyk2014composable}.

Inspired by the coreset ideas of~\cite{ceccarello2017mapreduce}, we develop an efficient bucketing-based sketch for diversity maximization.
Therefore, we obtain efficient sliding window algorithms for diversity maximization problems.
\begin{table}[h]
    \centering
    \small
    \begin{tabular}{|l|l|}
    \hline
    Name & Diversity Function \\
    \hline
     \textbf{Remote-edge} & $\mathrm{div}(Q)=\min_{u\not =v\in Q} \|u - v\|_2$ \\
     \textbf{Remote-clique} & $\mathrm{div}(Q) = \sum_{u\not = v\in Q} \|u-v\|_2$ \\
     \textbf{Remote-tree} & $\mathrm{div}(Q) = $ the cost of minimum spanning tree of $Q$ \\
     \textbf{Remote-cycle} & $\mathrm{div}(Q)= $ the cost of minimum TSP tour of $Q$ \\
     \textbf{Remote $t$-trees} & $\mathrm{div}(Q)=$ the minimum cost of $t$ trees spanning $Q$ \\
     \textbf{Remote $t$-cycles} & $\mathrm{div}(Q)=$ the minimum cost of $t$ cycles spanning $Q$\\
     \textbf{Remote-star} & $\mathrm{div}(Q) = \min_{u\in Q}\sum_{v\in Q: v\not = u} \|u-v\|_2$ \\
     \textbf{Remote-bipartition} & $\mathrm{div}(Q) = \min_{Q_1,Q_2\subseteq Q:Q2 = Q\setminus Q_1,|Q_1|=\lfloor k/2\rfloor} \sum_{u\in Q_1,v\in Q_1}\|u-v\|_2$ \\
     \textbf{Remote-pseudoforest} & $\mathrm{div}(Q) = \sum_{u\in Q} \min_{v\in Q:v\not =u} \|u-v\|_2$ \\
     \textbf{Remote-matching} & minimum cost of a perfect matching of $Q$ (for even $k$) \\
     \hline
    \end{tabular}
    \caption{Diversity Functions}
    \label{tab:diversity_functions}
\end{table}

\begin{theorem}[A simple version of Theorem~\ref{thm:sliding_window_div}]
Consider a point set in $[\Delta]^d$ given by a stream with window size $W\geq 1$.
For $\varepsilon\in(0,0.5)$, there is a deterministic sliding window algorithm which outputs a $(1-\varepsilon)$-approximation for the diversity maximization problem.
The algorithm uses space $kd\log(d\Delta)\cdot O(\sqrt{d}/\varepsilon)^d$ for remote-edge, remote-tree, remote-cycle, remote $t$-trees and remote $t$-cycles, and uses space $k^2d\log(d\Delta)\cdot O(\sqrt{d}/\varepsilon)^d$ for remote-clique, remote-star, remote-bipartition, remote-pseudoforest and remote-matching.
The update time for each point during the stream is at most $O(d\log(d\Delta))$.
\end{theorem}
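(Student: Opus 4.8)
The plan is to build a deterministic \emph{bucketing-based sketch} that stores a grid-refined composable coreset of the windowed point set, and then to feed it into the sliding window framework of Algorithm~\ref{alg:sliding_framework}: that framework turns any sketch satisfying Definition~\ref{def:bucketing_based_sketch} into a sliding window algorithm with essentially the same space and per-item cost, and this is the reduction Theorem~\ref{thm:sliding_window_div} will apply. So it suffices to (i) exhibit an insertion-friendly sketch whose structure matches Definition~\ref{def:bucketing_based_sketch}, (ii) show that a $(1-\varepsilon)$-approximate diverse $k$-subset is recoverable from it, and (iii) bound the space and the $O(d\log(d\Delta))$ per-item processing time. The coreset itself adapts the ideas of \cite{ceccarello2017mapreduce} (and \cite{indyk2014composable}) to the grid geometry of $[\Delta]^d$.

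\emph{The sketch.} All pairwise distances lie in $[1,\sqrt d\Delta]$, so we work with $O(\log(d\Delta))$ geometric \emph{scales} $r\in\{1,2,4,\dots\}$ up to $\sqrt d\Delta$. For each $r$ we greedily maintain a set $P_r$ of \emph{pivots}: the first point is a pivot, every later point at distance $>r$ from all current pivots becomes a new pivot, and we stop once $|P_r|=k+1$; each non-pivot point is assigned to the pivot within distance $r$ that it first met (points that, after the cap is reached, are far from every pivot are discarded at that scale, which is harmless since such scales are never used below). A pivot region has diameter at most $2r$; inside it we keep a $\delta_r$-net with $\delta_r:=\varepsilon r/c_0$ for a constant $c_0$, implemented by overlaying the axis-parallel grid of side $\delta_r/\sqrt d$ (each cell has diameter $\le\delta_r$, and a region meets $(O(\sqrt d/\varepsilon))^d$ cells) and keeping one representative per non-empty cell, except that for the objectives in the second group of Theorem~\ref{thm:sliding_window_div} (remote-clique, remote-star, remote-bipartition, remote-pseudoforest, remote-matching) we keep up to $\min(k,\#\{\text{points in the cell}\})$ representatives per cell. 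The buckets of Definition~\ref{def:bucketing_based_sketch} are then indexed by triples $(r,p,\mathrm{cell})$ with threshold $1$ (resp.\ $k$), and the ``processed information'' of an arriving point $x$ is, for each of the $O(\log(d\Delta))$ scales, its grid cell and its assigned pivot, computed by $O(d)$-time coordinate arithmetic plus hash-table lookups; hence the update time is $O(d\log(d\Delta))$, and keeping an arbitrary maximal subset of an overflowing bucket is exactly keeping as many representatives as the threshold allows.

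\emph{Recovery and approximation.} We output the best diverse $k$-subset of the union $C$ of all stored representatives. Since $C\subseteq X$, its best $k$-subset has diversity $\le\OPT(X)$, one side of the $(1-\varepsilon)$-coreset condition (Definition~\ref{def:coreset}). For the other side, let $R:=\max_{|Q|=k}\min_{u\ne v\in Q}\|u-v\|_2$ and let $r$ be the smallest power of two with $r\ge R$, so $R\le r<2R$ and $\delta_r<2\varepsilon R/c_0$. At this scale the cap is not reached: $k+1$ pairwise-$(>r)$ pivots would contain $k$ points pairwise $>r\ge R$, contradicting the definition of $R$. Hence $P_r$ is a maximal $r$-separated set, so every point of $X$ — in particular every point of an optimal solution $Q^\star$ — lies in some pivot region and within $\delta_r$ of a stored representative, and the $\le k$ points of $Q^\star$ in any single cell map to $\le k$ distinct representatives of that cell. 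Replacing each point of $Q^\star$ by its representative moves every pairwise distance by at most $2\delta_r$; for each objective the number $T$ of distance terms appearing in $\div(\cdot)$ satisfies $T\in\{1,\Theta(k),\Theta(k^2)\}$ while $\OPT(X)=\Omega(T\cdot R)$ (witnessed by the remote-edge optimum), so the accumulated additive error $\le 2T\delta_r$ is an $O(\delta_r/R)=O(\varepsilon/c_0)$ fraction of $\OPT(X)$; taking $c_0=\Theta(1)$ thus makes this a $(1\pm\varepsilon)$ multiplicative perturbation, so $C$ contains a $k$-subset of diversity $\ge(1-\varepsilon)\OPT(X)$, which is the composability analysis of \cite{indyk2014composable} transported to the present construction. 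The number of stored representatives is $O(\log(d\Delta))\cdot(k+1)\cdot(O(\sqrt d/\varepsilon))^d$, times an extra $k$ for the second group, each a point of $[\Delta]^d$ using $O(d)$ words; Algorithm~\ref{alg:sliding_framework} preserves these space and update-time bounds, and determinism is immediate since the whole construction is greedy and grid-based.

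\emph{Main obstacle.} The delicate part is reconciling the greedily-built pivot/net structure with the sliding window framework: when the window slides and old points expire, the pivots and cell representatives that a hindsight construction would pick may differ from those maintained online, so one must verify that the sketch genuinely satisfies Definition~\ref{def:bucketing_based_sketch} — in particular that the ``keep an arbitrary maximal subset of a full bucket'' rule never destroys the covering property used in the recovery argument — which is exactly what enables Algorithm~\ref{alg:sliding_framework} to reassemble a correct summary for the active window. The secondary technical burden is the uniform per-objective analysis over Table~\ref{tab:diversity_functions}: proving in each case that (a) a $(k+1)$-point $r$-separated set with $r\ge R$ contradicts $k$-optimality and (b) an additive $O(\delta_r)$ perturbation of the chosen points distorts $\div(\cdot)$ only by a multiplicative $O(\varepsilon)$, and pinning down precisely when one representative per cell suffices (the first group of objectives) versus when $\Theta(k)$ are needed (the second group, which is what forces the $k^2$ space). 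These facts are available from \cite{indyk2014composable}, but the bookkeeping across all ten objectives must be carried out with care.
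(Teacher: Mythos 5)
There is a genuine gap: your sketch is not a bucketing-based sketch in the sense of Definition~\ref{def:bucketing_based_sketch}, so the reduction through Algorithm~\ref{alg:sliding_framework} that your whole argument leans on is unavailable. In the definition, the filter, bucketing and processing functions $h_i,g_i,\zeta_i$ act on a single data item (possibly with shared randomness); it is exactly this item-wise structure that makes Claim~\ref{cla:maintain_sketch} go through, i.e.\ that evicting the oldest items from a full bucket, or dropping items with expired timestamps, still yields the sketch of the current window. Your bucket index $(r,p,\mathrm{cell})$ and your discard rule (``drop a point far from all current pivots once the cap of $k+1$ pivots is reached'') depend on the greedy pivot set at the moment of arrival, hence on the stream history, not on the item alone. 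This is not a notational issue but the failure mode you yourself flag as the ``main obstacle'': pivots can expire from the window, and points that were discarded at scale $r\approx R$ because the (now-expired) early pivots had saturated the cap are gone forever, even though for the current window the cap would not be reached at that scale and the recovery argument needs every window point to be within $\delta_r$ of a stored representative. Nothing in the proposal repairs this, and without the framework you also lose the space bound, since at small scales the fine grid alone (without valid pivot regions) can have up to $W$ non-empty cells.

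The paper avoids pivots entirely: for each guess $o$ of the normalized optimum $\wb{\OPT}$ (with $o\in[1,\sqrt{d}\Delta]$, giving the $O(\log(d\Delta))$ factor) it buckets points by a single global grid of side $\mu=\varepsilon o/(10\sqrt d)$ via $\hslash_\mu$ — a function of the point only — with threshold $T_{\div}$ per cell, plus one extra bucket holding $k$ arbitrary points; the number of non-empty cells is bounded not by a maintained pivot structure but by the covering/volume argument of Lemma~\ref{lem:div-offline} (fewer than $k$ balls of radius $\wb{\OPT}$ cover $P$, else the remote-edge witness contradicts optimality), and guesses $o$ that are too small are neutralized by the framework's space budget and the pointer $l_o$, not by any in-sketch discarding. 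Your approximation bookkeeping (perturbation by $O(\delta_r)$ per distance, $\OPT=\Omega(T\cdot R)$ witnessed by the remote-edge optimum) is essentially sound and parallels Lemma~\ref{lem:div-offline}, but it sits on top of a data structure that the sliding-window machinery cannot maintain. You also omit the $\OPT(P)=0$ case (fewer than $k$ distinct points in the window), which Theorem~\ref{thm:sliding_window_div} handles by a separate parallel procedure storing up to $T_{\div}$ duplicates of at most $k$ distinct points, since Theorem~\ref{thm:framework} only covers $f(X)\geq m>0$.
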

We actually output a subset $S\subseteq X$ at the end of the stream such that the maximized diversity of $S$ is a $(1-\varepsilon)$-approximation of the maximized diversity of $X$ (see Section~\ref{sec:div_max_sliding_window}).
Therefore, if we apply any polynomial time $\alpha$-approximate diversity maximization algorithm on $S$, we can get a $(1-\varepsilon)\alpha$-approximation of the maximized diversity of $X$ in polynomial time. 
We refer readers to~\cite{chandra2001approximation,halldorsson1999finding, hassin1997approximation,tamir1991obnoxious} for polynomial time constant approximation algorithms for specific diversity functions.

Diversity maximization was previously studied in the sliding window model by~\cite{borassi2019better}.
In comparison with their algorithm, we achieve $(1-\varepsilon)$-approximation while they only achieve $(1/5-\varepsilon)$-approximation.
On the other hand, our space matches their space while ignoring the factors of $d$ and $\varepsilon$.
In the case where the window size $W$ is an upper bound of the length of the stream, that is, in the streaming model, diversity maximization was studied by~\cite{indyk2014composable,ceccarello2017mapreduce}.
The space needed by~\cite{indyk2014composable} is $\Omega(\sqrt{W})$.
The space used by~\cite{ceccarello2017mapreduce} is almost the same as ours.

\subsection{High-Level Ideas of Sliding Window Algorithms via Bucketing-based Sketches}\label{sec:ideas_alg_sketch}
To deliver the high level ideas of our framework, in this section, we consider two simple problems fitting our framework. 

Consider a (multi-)set $X\subseteq\{0,1\}$.
Let $X_0=\{x\in X\mid x=0\},X_1=\{x\in X\mid x=1\}$.
Suppose $|X_1|\in[o,2\cdot o]$ for some $o>0$.
There is a simple way to construct a bucketing-based sketch $Z(X)$ for estimating $|X_1|$ up to $(1\pm \varepsilon)$-approximation.
Let $Z(X)$ only contain one bucket, and we add each $x$ with $x=1$ into the bucket with probability $p=\min\left(\Theta(\log(1/\delta)/(\varepsilon^2\cdot o)),1\right)$.
By Chernoff bound, $|Z(X)|/p$ is a $(1\pm\varepsilon)$-approximation to $|X_1|$ with probability at least $1-\delta$.
Now consider that $X$ is given by a stream with window size $W$.
Then it corresponds to the following sliding window algorithm for estimating $|X_1|$.
Let threshold $k=10\cdot \log(1/\delta)/\varepsilon^2$.
We maintain a set $H$ during the stream.
For each data item $x$ in the stream, if $x=1$, we add $x$ into $H$ with probability $p$.
If $|H|>k$, we remove the data item from $H$ with the earliest timestamp.
The observation is that if $|X_1|\in[o,2\cdot o]$ at the end of the stream, $Z(X)\subseteq H$ with probability at least $1-\delta$.
Thus, we can recover $Z(X)$ via $H$, i.e., $Z(X)=\{x\in H\mid\text{the timestamp of }x\text{ is at least }N-W+1\}$, where $N$ is the final timestamp.
Thus, a $(1\pm\varepsilon)$-approximation to $|X_1|$ can be obtained via $H$ and the space needed during the stream is $O(k)$.
Notice that we are able to approximately verify whether $|X_1|\in [o,2\cdot o]$ via $H$ for any $X$:
If we cannot use $H$ to recover $Z(X)$, i.e., the earliest timestamp of an item in $H$ is later than $N-W+1$, then $|Z(X)|>|H|$ and $|Z(X)|/p>|H|/p=k/p\gg 2\cdot o$ which implies that $|X_1|\gg 2\cdot o$.
Otherwise, if we can recover $Z(X)$ from $H$ and $|Z(X)|/p=|H|/p\ll o$, then $|X_1|\ll o$.
Thus, by considering $o=1,2,4,8,\cdots,W$ in parallel during the stream, we are able to estimate $|X_1|$ at the end of the stream.
The total space needed is $O(k\log W)$.

The above algorithm can be extended to the following toy $1$-median problem.
We still consider the (multi-)set $X\subseteq \{0,1\}$, the goal is to estimate $\OPT(X)=\min_{z\in\{0,1\}}\sum_{x\in X}|x-z|$.
It is easy to verify that $\OPT(X)=\min(|X_0|,|X_1|)$.
Suppose $\OPT(X)\in[o,2\cdot o]$.
Similar to the problem described in the previous paragraph, we can construct a bucketing-based sketch $Z'(X)$.
$Z'(X)$ has two buckets, and each has a threshold $k=10\cdot \log(1/\delta)/\varepsilon^2$.
We add each $x$ with $x=1$ into the first bucket with probability $p=\min(\Theta(\log(1/\delta)/(\varepsilon^2\cdot o),1)$, and add each $x$ with $x=0$ into the second bucket with probability $p$.
Since $\min(|X_0|,|X_1|)=\OPT(X)\in[o,2\cdot o]$, we know that $\min(|\{x\in Z'(X)\mid x= 0\}|,|\{x\in Z'(X)\mid x= 1\}|)/p$ is a $(1\pm \varepsilon)$-approximation of $\OPT(X)$ with probability at least $1-\delta$.
Now suppose $X$ is given by a stream with window size $W$.
Based on the same argument as the previous paragraph, we can either obtain the sketch $Z'(X)$ and get a $(1\pm\varepsilon)$-approximation of $\OPT(X)$ via $Z'(X)$ at the end of the stream or we can approximately confirm that $\OPT(X)\not\in [o,2\cdot o]$.
By considering $o=1,2,4,8,\cdots,W$ in parallel during the stream, we are able to estimate $\OPT(X)$ at the end of the stream. 
The total space needed is $O(k\log W)$.
In Appendix~\ref{sec:lack_of_smoothness_toy_1_median}, we show that toy $1$-median problem is not smooth enough to be $(1\pm\varepsilon)$-approximated by smooth histogram.

By formalizing above ideas in Section~\ref{sec:framework_via_bucketing_based_sketch}, we obtain our algorithmic framework in Algorithm~\ref{alg:sliding_framework}.

\section{A New Algorithmic Framework for Sliding Window Model}\label{sec:framework_via_bucketing_based_sketch}
%\subsection{Bucketing Based Sketch}
Given a set of data items $X$, a natural class of problems is to estimate the value $f(X)$ for some non-negative function $f(\cdot)$.
In many situations, $X$ is very large, and we cannot afford to store entire $X$ in the space.
A common approach is to compute a small (randomized) summary/sketch $Z(X)$ such that we can estimate $f(X)$ by $z(Z(X))$ for some function $z(\cdot)$.
If $z(Z(X))$ is an $\alpha$-approximation to $f(X)$ (with a good probability), then we say $Z(\cdot)$ is an $\alpha$-approximate (randomized) sketch function for $f(\cdot)$, and $z(\cdot)$ is the corresponding recover function.
Similarly, if (with a good probability) $z(Z(X))$ either outputs \textbf{FAIL} or outputs an $\alpha$-approximation to $f(X)$, then we say $Z(\cdot)$ is a weak $\alpha$-approximate (randomized) sketch function for $f(\cdot)$.
If it is clear in the context, we will call a sketch function a sketch for short.
We say $Z(\cdot)$ is a (weak) $o$-restricted $\alpha$-approximate sketch for $f(\cdot)$ if $Z(\cdot)$ is a (weak) $\alpha$-approximate sketch for the restriction of $f(\cdot)$ to $\{X\mid f(X)\geq o\}$.

In the following, we give a definition of a particular type of sketches. 
We call it bucketing-based sketch. 
We will show that this sketch plays an important role in designing our sliding window algorithms.
Figure~\ref{fig:bucketing_based_sketch} shows an explanation of the bucketing-based sketch.

\begin{figure}[t!]
    \centering
    \includegraphics[width=\textwidth]{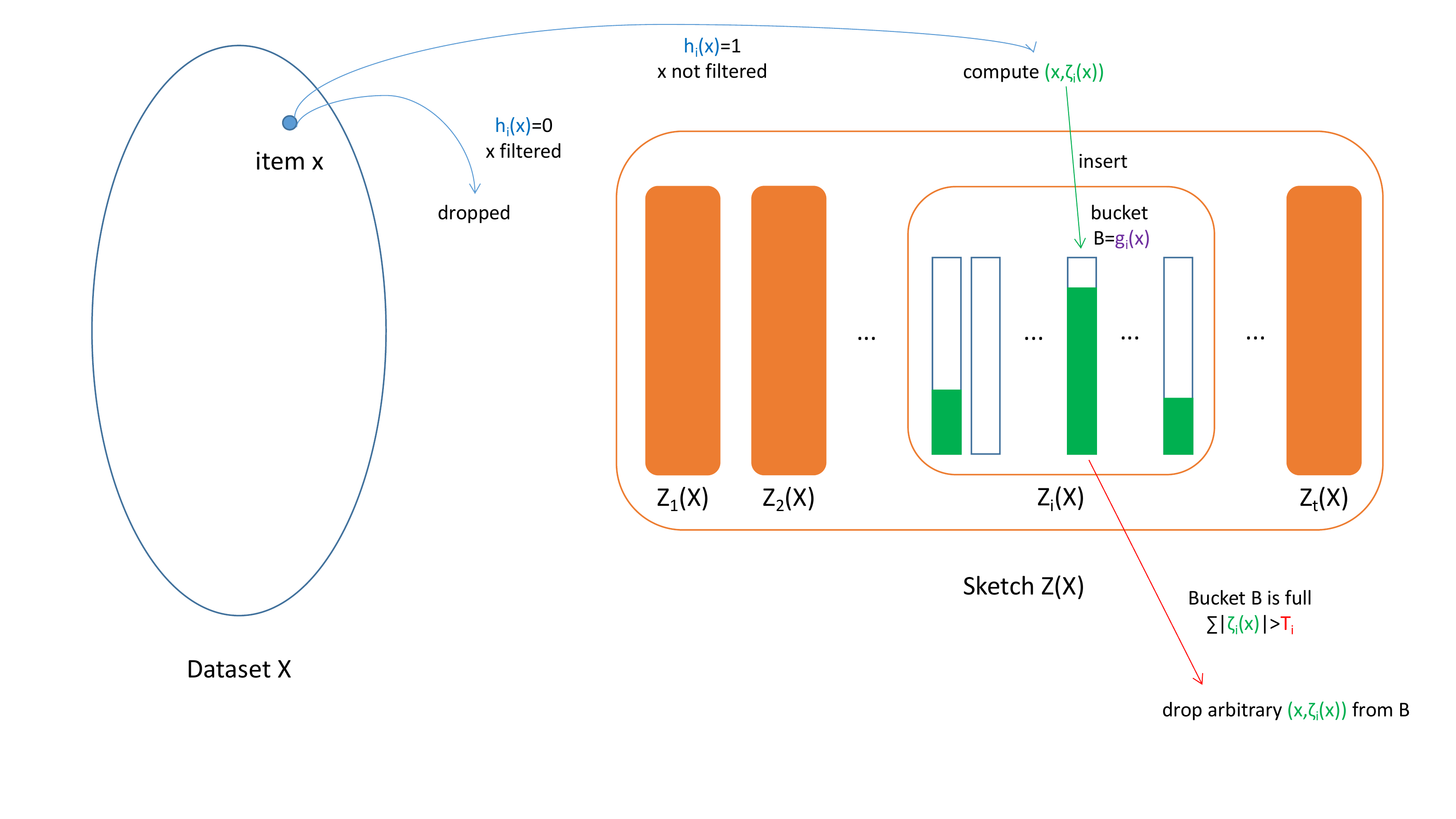}
    \vspace{-0.7in}
    \caption{Bucketing-based sketch.}
    \label{fig:bucketing_based_sketch}
\end{figure}

\begin{definition}[Bucketing-based sketch]\label{def:bucketing_based_sketch}
Let $\mathcal{X}$ be the universe of data items.
We say $Z(\cdot)$ is a (randomized) bucketing-based sketch function, if for any $X\subseteq \mathcal{X}$, $Z(X)$ satisfies following properties:
\begin{enumerate}
    \item $Z(X)$ is a tuple of sub-sketches $(Z_1(X),Z_2(X),\cdots,Z_t(X))$.
    \item For $i\in[t],$ there is a (random) filter function $h_i:\mathcal{X}\rightarrow \{0,1\}$, a (random) bucketing function $g_i:\mathcal{X}\rightarrow \mathcal{B}$, a (random) processing function $\zeta_i:\mathcal{X}\rightarrow \mathcal{I}$ and a threshold $T_i\geq 0$ such that $Z_i(X)\subseteq \mathcal{X}\times \mathcal{I}$ is constructed as the following:
    \begin{enumerate}
        \item For $x\in X$ with $h_i(x)=0$, $(x,\zeta_i(x))$ must not be added into $Z_i(X)$. \label{it:bucket_sketch_2a}
        \item Consider each bucket $B\in\mathcal{B}$. 
        If $\sum_{x\in X: h_i(x)=1,g_i(x)=B} |\zeta_i(x)|\leq T_i$, where $|\zeta_i(x)|$ denotes the space needed to store the information $\zeta_i(x)$, then for every $x\in X$ with $h_i(x)=1$ and $g_i(x)=B$, $(x,\zeta_i(x))$ is added into $Z_i(X)$.
        Otherwise, $(x_1,\zeta_i(x_1)),(x_2,\zeta_i(x_2)),\cdots,(x_s,\zeta_i(x_s))$ are added into $Z_i(X)$ for some arbitrary\footnote{
        An arbitrary tiebreaker is allowed for selecting $x_1,x_2,\cdots,x_{s}$. 
        In our paper, we keep the data items with the latest timestamps during the stream.
        }  points $x_1,x_2,\cdots,x_s$ satisfying $h_i(x_1)=h_i(x_2)=\cdots=h_i(x_s)=1,g_i(x_1)=g_i(x_2)=\cdots=g_i(x_s)=B,\sum_{j=1}^s |\zeta_i(x_j)|\leq T_i$ and $\exists x'\in \{x\in X\mid h_i(x)=1,g_i(x)=B\}\setminus\{x_1,\cdots,x_s\},|\zeta_i(x')|+\sum_{j=1}^s |\zeta_i(x_j)|>T_i$. \label{it:bucket_sketch_2b}
    \end{enumerate}
\end{enumerate}
The space budget of $Z(X)$ is defined as $\sum_{i=1}^t\sum_{B\in\mathcal{B}}\min(T_i, \sum_{x\in X:h_i(x)=1,g_i(x)=B} |\zeta_i(x)|)$.
\end{definition}

It is easy to see that the space budget of $Z(X)$ defined in Definition~\ref{def:bucketing_based_sketch} is always an upper bound of the actual space used by $Z(X)$.
The notion of the space budget is more convenient in the analysis since the space budget of $Z(X)$ is always an upper bound of the space budget of $Z(X')$ if $X'\subseteq X$, while the actual space used by $Z(X)$ can be less than the actual space used by $Z(X')$. 
This is because when a bucket $B\in\mathcal{B}$ is full, data items $x_1,x_2,\cdots,x_s$ in condition~\ref{it:bucket_sketch_2b} of Definition~\ref{def:bucketing_based_sketch} can be any choice.
Different choices of $x_1,x_2,\cdots,x_s$ will result in different actual space used by $Z(X)$.
In addition, it is easy to verify that if the size of $\zeta_i(x)$ is the same for all data items $x$, the space budget of $Z(X)$ is always the same as the actual space used by $Z(X)$.

\begin{algorithm}[ht]
	\small
	\begin{algorithmic}[1]\caption{Sliding Window Algorithmic Framework via Bucketing-based Sketches}\label{alg:sliding_framework}
	    \STATE {\bfseries Pre-determined:}  $f(\cdot)$, $S$, $m$ and $M$. {\hfill //See Theorem~\ref{thm:framework}.}
		\STATE {\bfseries Input:} A stream of data items $x_1,x_2,\cdots\in \mathcal{X}$ and the window size $W$. 
		\STATE $O:=\{m,m\cdot 2,m\cdot 4,m\cdot 8,\cdots, M\}$.
		\STATE For $o\in O$, let $Z^o(\cdot)=(Z^o_1(\cdot),Z^o_2(\cdot),\cdots,Z^o_t(\cdot))$ be a weak $o$-restricted $\alpha$-approximate bucketing-based sketch function for $f(\cdot)$ described by Theorem~\ref{thm:framework}. {\hfill //The notations $Z^o(\cdot)$ are for analysis only.}
		\STATE For $o\in O, i\in[t]$, let $h^o_i(\cdot), g^o_i(\cdot),\zeta^o_i(\cdot)$ and $T^o_i$ be the filter function, bucketing function, processing function and the threshold corresponding to the sub-sketch function $Z_i^o(\cdot)$ respectively. {\hfill //Definition~\ref{def:bucketing_based_sketch}.}
		\STATE For $o\in O$, initialize $l_o:=1,H^o:=(H^o_1,H^o_2,\cdots,H^o_t)$ and $\forall i\in[t],$ $H^o_i:=\emptyset$.
		\STATE Initialize current timestamp $N:=0$.
		\FOR{ the next data item $x_{N+1}$ in the stream} \label{sta:start_process_stream}
		    \STATE $N:=N+1$.
		    \STATE Set the timestamp $\tau(x_N):=N$.
		    \FOR{$o\in O$} 
		        \FOR{$i\in[t]$}\label{sta:start_inserting}
		            \STATE If $h^o_i(x_N)=1$, add $(x_N,\zeta_i^o(x_N))$ into $H_i^o$. {\hfill //$X_N$ is not filtered.} \label{sta:filtering_step}
		            \WHILE {$ \sum_{(x,\zeta_i^o(x))\in H_i^o:g_i^o(x) = g_i^o(x_N)} |\zeta_i^o(x)|>T_i^o$} 
		            \item[] {\hfill //The bucket is full.}
		            \label{sta:start_bucketing_step} 
		            \STATE Remove $(y,\zeta_i^o(y))\in \{(x,\zeta_i^o(x))\in H_i^o\mid g_i^o(x)=g_i^o(x_N)\}$ with the smallest $\tau(y)$ from $H_i^o$.\label{sta:bucketing_step}
		            \ENDWHILE\label{sta:end_bucketing_step}
		        \ENDFOR\label{sta:end_inserting}
		        \WHILE{the size of $H^o$ is larger than $S$}\label{sta:while_start}
		            \STATE For $i\in[t]$, if $\exists (x,\zeta_i^o(x))\in H_i^o$ such that the timestamp $\tau(x)\leq l_o$, remove $(x,\zeta_i^o(x))$ from $H_i^o$. \label{sta:remove_earlist}
		            \STATE $l_o:=l_o+1$. \label{sta:left_pointer_increase}
		        \ENDWHILE\label{sta:while_end}
		    \ENDFOR
		\ENDFOR\label{sta:end_process_stream}
		\STATE For each $o\in O$ with $l_o\leq N-W+1$, let $\hat{H}^o=(\hat{H}_1^o,\cdots,\hat{H}^o_t)$ be $H^{o}=(H^{o}_1,\cdots,H^{o}_t)$ but for each $i\in[t]$ remove all $(x,\zeta_i^o(x))$ with timestamp $\tau(x)\leq N-W$ from $H_i^o$.\label{sta:build_hat_h}
		\STATE Find the smallest $o^*$ such that the recover function over $\hat{H}^{o^*}$ does not output \textbf{FAIL} and return the output of the recover function over $\hat{H}^{o^*}$ as the $\alpha$-approximation to $f(\{x_{N-W+1},x_{N-W+2},\cdots,x_N\})$. \label{sta:final_construction}
	\end{algorithmic}
\end{algorithm}

We show that if $f(\cdot)$ admits a weak $\alpha$-approximate bucketing-based sketch, there is a sliding window algorithm for estimating $f(\cdot)$.
The algorithm is shown in Algorithm~\ref{alg:sliding_framework}.
\begin{theorem}\label{thm:framework}
Let $\mathcal{X}$ be the universe of data items.
Let $\mathcal{D}$ be an arbitrary family of subsets of $\mathcal{X}$.
Consider a function $f:\mathcal{D}\rightarrow \mathbb{R}_{\geq 0}$ satisfying that $\forall X\in \mathcal{D}$, $f(X)\in[m,M]$ for $0<m\leq M$.
Suppose $\forall o\in[m,M]$, there always exists a weak $o$-restricted $\alpha$-approximate bucketing-based sketch $Z^o(\cdot)=(Z_1^o(\cdot),Z_2^o(\cdot),\cdots,Z_t^o(\cdot))$ for $f(\cdot)$
such that 
\begin{enumerate}
    \item the recover function over $Z^o(X)$ either outputs \textbf{FAIL} or outputs an $\alpha$-approximation to $f(X)$ with probability at least $1-\delta$ conditioned on $f(X)\geq o$;
    \item if $f(X)\leq 2\cdot o$, with probability at least $1-\delta$, the space budget of $Z^o(X)$ is upper bounded by $S$ and the recover function over $Z^o(X)$ does not output \textbf{FAIL}.
\end{enumerate}
Let $X\in \mathcal{D}$ be a set of data items given by a sliding window over a stream.
Then there is a sliding window algorithm (Algorithm~\ref{alg:sliding_framework}) which outputs an $\alpha$-approximation to $f(X)$ and uses space at most $O(S\cdot \log(M/m))$.
The success probability of the algorithm is at least $1-O(\delta\cdot \log(M/m))$.
Furthermore, if the time needed to compute a filter function, a bucketing function and a processing function on any data item $x\in\mathcal{X}$ is always upper bounded by $\mathcal{T}$, the update time of the algorithm for each data item is at most $O(t\log(M/m)\cdot \mathcal{T}+S\cdot \log(M/m))$.
\end{theorem}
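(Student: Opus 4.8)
The plan is to argue that, for every scale $o\in O$, the tuple $H^o=(H^o_1,\dots,H^o_t)$ kept by Algorithm~\ref{alg:sliding_framework} is \emph{exactly} a bucketing-based sketch $Z^o$ of a suffix of the stream, namely of $\{x_{l_o},x_{l_o+1},\dots,x_N\}$ for the current value of the left pointer $l_o$; here and below $Z^o(\cdot)$ denotes the instantiation of the sketch whose tiebreaker in condition~\ref{it:bucket_sketch_2b} of Definition~\ref{def:bucketing_based_sketch} keeps the pairs of largest timestamp (we may take the hypotheses to refer to this instantiation, since the two guarantees only constrain the recover function and the space budget, both of which are tiebreaker-independent). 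Concretely, I would prove by induction on the number of processed items that after the block lines~\ref{sta:start_inserting}--\ref{sta:while_end} for item $x_N$ has run — and indeed after each individual iteration of lines~\ref{sta:while_start}--\ref{sta:while_end} — one has $H^o=Z^o(\{x_{l_o},\dots,x_N\})$. The inductive step has two parts. First, passing $x_N$ through line~\ref{sta:filtering_step} and the while-loop of lines~\ref{sta:start_bucketing_step}--\ref{sta:end_bucketing_step} turns $Z^o(\{x_{l_o},\dots,x_{N-1}\})$ into $Z^o(\{x_{l_o},\dots,x_N\})$: only the single bucket $g^o_i(x_N)$ of sub-sketch $i$ is affected (and only when $h^o_i(x_N)=1$), and because $x_N$ carries the largest timestamp, evicting the smallest-timestamp element of that bucket until it fits within $T^o_i$ leaves exactly the largest-timestamp maximal prefix required by condition~\ref{it:bucket_sketch_2b}.

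The second part — which I expect to be the crux — is that one iteration of lines~\ref{sta:while_start}--\ref{sta:while_end} turns $Z^o(\{x_a,\dots,x_N\})$ into $Z^o(\{x_{a+1},\dots,x_N\})$, where $a$ is the current left pointer; more generally, deleting from $Z^o(\{x_a,\dots,x_N\})$ all pairs with timestamp at most $b$ yields $Z^o(\{x_{b+1},\dots,x_N\})$ whenever $b\ge a-1$. I would establish this bucket by bucket. In the first form, the only deletable pair is the unique one with timestamp exactly $a$. If it is absent from the bucket (filtered out, not among the kept pairs, or mapped elsewhere), the bucket is unchanged, and so is the corresponding bucket of $Z^o(\{x_{a+1},\dots,x_N\})$, since the prefix-sum comparisons that determine the kept prefix are unaffected by dropping the last element of the decreasing-timestamp order. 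If it is present, then, having the smallest timestamp among the kept pairs, it is the last element of that prefix, so after its deletion the bucket holds precisely the largest-timestamp prefix of the reduced point set that is maximal subject to $T^o_i$ — which is what $Z^o(\{x_{a+1},\dots,x_N\})$ prescribes on that bucket. The general form follows by iteration. In particular, when $l_o\le N-W+1$, applying this with $b=N-W$ shows that line~\ref{sta:build_hat_h} converts $H^o=Z^o(\{x_{l_o},\dots,x_N\})$ into $\hat H^o=Z^o(\{x_{N-W+1},\dots,x_N\})=Z^o(X)$, where $X=\{x_{N-W+1},\dots,x_N\}$ is the current window (taking $N\ge W$; otherwise $X$ is the whole stream and $l_o=1$).

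Given the invariant, correctness follows. As $O$ runs geometrically from $m$ to $M$ and $f(X)\in[m,M]$, fix the deterministic $\bar o\in O$ with $\bar o\le f(X)\le 2\bar o$. Space budgets of bucketing sketches are monotone under set inclusion (as noted after Definition~\ref{def:bucketing_based_sketch}), so for $N'\le N$ the space budget of $Z^{\bar o}(\{x_{N-W+1},\dots,x_{N'}\})$ is at most that of $Z^{\bar o}(X)$; since the actual size of a bucketing sketch never exceeds its space budget, the invariant gives that whenever $l_{\bar o}=N-W+1$ the size of $H^{\bar o}$ is at most the space budget of $Z^{\bar o}(X)$. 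Hence, were $l_{\bar o}$ ever advanced beyond $N-W+1$ (line~\ref{sta:left_pointer_increase}), the space budget of $Z^{\bar o}(X)$ would have to exceed $S$. By the second guarantee of Theorem~\ref{thm:framework} (valid since $f(X)\le 2\bar o$), with probability at least $1-\delta$ the space budget of $Z^{\bar o}(X)$ is at most $S$ and the recover function over $Z^{\bar o}(X)$ does not output \textbf{FAIL}; call this event $E_1$. By the first guarantee (valid since $f(X)\ge\bar o\ge o$), for each of the $O(\log(M/m))$ scales $o\in O$ with $o\le\bar o$, the recover function over $Z^o(X)$ outputs \textbf{FAIL} or an $\alpha$-approximation of $f(X)$ with probability at least $1-\delta$; let $E_2$ be the intersection of these events, so $\Pr[E_1\cap E_2]\ge 1-O(\delta\log(M/m))$. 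On $E_1\cap E_2$ we have $l_{\bar o}\le N-W+1$, hence $\hat H^{\bar o}=Z^{\bar o}(X)$ and the recover function over it does not \textbf{FAIL}; therefore line~\ref{sta:final_construction} returns a value at some scale $o^*\le\bar o$ with $l_{o^*}\le N-W+1$, so $\hat H^{o^*}=Z^{o^*}(X)$, and by $E_2$ the recover function over it outputs \textbf{FAIL} or an $\alpha$-approximation while by the choice of $o^*$ it is not \textbf{FAIL}; thus the algorithm outputs an $\alpha$-approximation of $f(X)$ with the stated success probability $1-O(\delta\log(M/m))$.

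Finally, the resource bounds. After each item, lines~\ref{sta:while_start}--\ref{sta:while_end} force the size of $H^o$ to be at most $S$ for every $o\in O$; together with a few counters ($l_o$, $N$) and the succinct descriptions of the random functions, the total space is $O(S\log(M/m))$. For the update time, processing one item at a fixed scale $o$ costs $O(t\cdot\mathcal{T})$ to evaluate the $t$ triples $(h^o_i,g^o_i,\zeta^o_i)$ at line~\ref{sta:filtering_step}, plus the cost of the insertions and deletions across lines~\ref{sta:filtering_step}--\ref{sta:while_end}; since $H^o$ never holds more than $O(S)$ items, at most $O(S)$ deletions (in lines~\ref{sta:start_bucketing_step}--\ref{sta:end_bucketing_step} and lines~\ref{sta:remove_earlist}--\ref{sta:left_pointer_increase}) can be charged to this item, for cost $O(S)$. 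Summing over the $O(\log(M/m))$ scales gives update time $O(t\log(M/m)\cdot\mathcal{T}+S\log(M/m))$, completing the plan.
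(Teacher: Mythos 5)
Your plan is correct and takes essentially the same route as the paper's proof: the core is the inductive invariant that $H^o$ is exactly $Z^o(\{x_{l_o},\dots,x_N\})$ under the latest-timestamp tiebreaker (this is precisely Claim~\ref{cla:maintain_sketch}), followed by the monotonicity-of-space-budget argument showing $l_{\bar o}\le N-W+1$ for the scale $\bar o$ with $f(X)\in[\bar o,2\bar o]$, a union bound over the scales $o\le \bar o$ to certify the output at $o^*$, and the same space and update-time accounting. No substantive differences from the paper's argument.
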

\begin{proof}
Firstly, let us analyze the total space used by Algorithm~\ref{alg:sliding_framework}.
The total space needed is the space to store $H^o$ for every $o\in O$.
Due to the loop in line~\ref{sta:while_start}-\ref{sta:while_end} of Algorithm~\ref{alg:sliding_framework}, the size of $H^o$ is always upper bounded by $S$.
Thus, the total space of the algorithm is at most $O(|O|\cdot S)=O(S\cdot \log(M/m))$.

Next, let us analyze the correctness.
Let $X^o=\{x_{l_o},x_{l_o+1},\cdots,x_N\}$ during the stream.
\begin{claim}\label{cla:maintain_sketch}
At any time at the end (or before the start) of the iteration of the loop in the line~\ref{sta:start_process_stream}-\ref{sta:end_process_stream} of Algorithm~\ref{alg:sliding_framework}, $\forall o\in O$, $H^o$ is always $Z^o(X^o)$. 
Furthermore, $\forall o\in O, i\in [t]$, $H_i^o$ only contains maximal number of data items with the latest timestamps, i.e., $\forall x\in X^o$ with $h_i^o(x)=1$, if $(x,\zeta_i^o(x))\not\in H_i^o$ then $\sum_{y\in X^o:\tau(y)\geq \tau(x),h_i^o(y)=1,g_i^o(y)=g_i^o(x)} |\zeta_i^o(y)|>T_i^o$.
\end{claim}
\begin{proof}
The proof is by induction.
Let $N$ be the value as at the end (or before the start) of the iteration of the loop in the line~\ref{sta:start_process_stream}-\ref{sta:end_process_stream} of Algorithm~\ref{alg:sliding_framework}.
Consider an arbitrary $o\in O$.
Consider the case when $N=0$. 
$X^o$ is empty and thus the claim holds.
In the following we consider the case when the claim holds for $N-1$.

Let us first consider the behavior of the loop in line~\ref{sta:start_inserting}-\ref{sta:end_inserting}.
Due to line~\ref{sta:filtering_step}, $x_N$ can be added into $H_i^o$ for $i\in [t]$ only when $h_i^o(x_N)=1$.
Thus, the condition~\ref{it:bucket_sketch_2a} of Definition~\ref{def:bucketing_based_sketch} is satisfied.
Fix $i\in [t]$, and let the bucket $B=g_i^o(x_N)$.
Since we may add $(x_N,\zeta_i^o(x_N))$ into $H_i^o$, the bucket $B$ may violate the condition~\ref{it:bucket_sketch_2b} of Definition~\ref{def:bucketing_based_sketch}.
By our induction hypothesis, the condition~\ref{it:bucket_sketch_2b} holds before processing $x_N$.
There are two cases.
In the first case $\sum_{x\in X^o\setminus\{x_N\}:h_i^o(x)=1,g_i^o(x)=B}|\zeta_i^o(x)|\leq T_i^o$.
In this case, $H_i^o=\{(x,\zeta_i^o(x))\mid x\in X^o, h_i^o(x)=1,g_i^o(x)=B\}$ before line~\ref{sta:start_bucketing_step}. 
In the second case $\sum_{x\in X^o\setminus\{x_N\}: h_i^o(x)=1,g_i^o(x)=B} |\zeta_i^o(x)|>T_i^o$, we have $\exists x'\in X_i^o\setminus\{x_N\},(x',\zeta_i^o(x'))\not\in H_i^o,h_i^o(x')=1,g_i^o(x')=B$ and $|\zeta_i^o(x')|+\sum_{(x,\zeta_i^o(x))\in H_i^o} |\zeta_i^o(x)|>T_i^o$ before line~\ref{sta:start_bucketing_step}.
In either case, due to the loop in line~\ref{sta:start_bucketing_step}-\ref{sta:end_bucketing_step}, it is easy to verify that the condition~\ref{it:bucket_sketch_2b} of Definition~\ref{def:bucketing_based_sketch} is always preserved at the end of the loop.
Furthermore, due to the induction hypothesis and the condition of the loop in line~\ref{sta:start_bucketing_step}, we have that $\forall x\in X^o$ with $h_i^o(x)=1$ if $(x,\zeta_i^o(x))\not\in H_i^o$, then $\sum_{y\in X^o:\tau(y)\geq \tau(x),h_i^o(y)=1,g_i^o(y)=g_i^o(x)}|\zeta_i^o(y)|>T_i^o$

Next, let us analyze the behavior of the loop in line~\ref{sta:while_start}-\ref{sta:while_end}.
Let us focus on a particular $o\in O$ and $i\in [t]$.
Since line~\ref{sta:remove_earlist} only removes elements from $H_i^o$, the condition~\ref{it:bucket_sketch_2a} will not be violated.
Furthermore, since line~\ref{sta:remove_earlist} only removes elements with the earliest timestamps, we still have that $\forall x\in X^o$ with $h_i^o(x)=1$, if $(x,\zeta_i^o(x))\not\in H_i^o$ then $\sum_{y\in X^o:\tau(y)\geq \tau(x),h_i^o(y)=1,g_i^o(y)=g_i^o(x)}|\zeta_i^o(y)|>T_i^o$.
Now we are going to prove that condition~\ref{it:bucket_sketch_2b} of Definition~\ref{def:bucketing_based_sketch} holds after line~\ref{sta:left_pointer_increase}.
Suppose that the condition~\ref{it:bucket_sketch_2b} does not hold after line~\ref{sta:left_pointer_increase}.
We can find $x'\in X^o$ such that $h_i^o(x')=1,(x',\zeta_i^o(x'))\not\in H_i^o$ and $\sum_{x\in X^o:h_i^o(x)=1,g_i^o(x)=g_i^o(x'),\tau(x)\geq \tau(x')}\leq T_i^o$ which contradicts to the condition $\forall x\in X^o$ with $h_i^o(x)=1$ if $(x,\zeta_i^o(x))\not\in H_i^o$, then $\sum_{y\in X^o:\tau(y)\geq \tau(x),h_i^o(y)=1,g_i^o(y)=g_i^o(x)} |\zeta_i^o(y)|>T_i^o$.
Thus the condition~\ref{it:bucket_sketch_2b} of Definition~\ref{def:bucketing_based_sketch} must hold after line~\ref{sta:left_pointer_increase}.

Thus, the claimed statement always hold.
\end{proof}

Due to Claim~\ref{cla:maintain_sketch}, we are able to verify that $\hat{H}^o$ for each valid $o$ in line~\ref{sta:build_hat_h} is equivalent to $Z^{o}(\{x_{N-W+1},x_{N-W+2},\cdots,x_N\})$: the condition~\ref{it:bucket_sketch_2a} of Definition~\ref{def:bucketing_based_sketch} holds since line~\ref{sta:build_hat_h} only removes elements, and the condition~\ref{it:bucket_sketch_2b} holds since otherwise we can find $x'\in \{x_{N-W+1},x_{N-W+2},\cdots,x_N\}$, $i\in [t]$ such that $h_i^{o}(x')=1,(x',\zeta_i^{o}(x'))\not\in H_i^{o},\sum_{x\in X^{o}:\tau(x)\geq \tau(x'),h_i^{o}(x)=1,g_i^{o}(x)=g_i^{o}(x')} |\zeta_i^{o}(x)|\leq T_i^{o}$ which violates Claim~\ref{cla:maintain_sketch}.
Since for $o\in O$, $Z^{o}(\cdot)$ is a weak $o$-restricted $\alpha$-approximate sketch for $f(\cdot)$.
In the remaining of the proof, we will show that $o^*\leq f(\{x_{N-W+1},x_{N-W+2},\cdots,x_{N}\})$ with a good probability.

By the construction of $O$, we can find $o'\in O$ such that $f(x)\in[o',2\cdot o']$.
With probability at least $1-\delta$, $Z^{o'}(\{x_{N-W+1},x_{N-W+2},\cdots,x_N\})$ has space budget at most $S$, and the recover function over $Z^{o'}(\{x_{N-W+1},x_{N-W+2},\cdots,x_N\})$ does not output \textbf{FAIL}.
We condition on that $Z^{o'}(\{x_{N-W+1},x_{N-W+2},\cdots,x_N\})$ has space budget at most $S$, and the recover function over $Z^{o'}(\{x_{N-W+1},x_{N-W+2},\cdots,x_N\})$ does not output \textbf{FAIL}.
If $l_{o'}>N-W+1$, by the condition of line~\ref{sta:while_start} and the proof of Claim~\ref{cla:maintain_sketch}, it implies that the space budget of $Z^{o'}(\{N-W+1,N-W+2,\cdots,N'\})$ for some $N'\leq N$ is greater than $S$ which contradicts to that $Z^{o'}(\{x_{N-W+1},\cdots,x_N\})$ has space budget at most $S$.
Thus, we must have $l_{o'}\leq N-W+1$.
Thus, by line~\ref{sta:final_construction}, we know that $o^*\leq o'\leq f(x)$.

For any $o\in O$ with $o\leq f(\{x_{N-W+1},x_{N-W+2},\cdots,x_N\})$, with probability at least $1-\delta$, we are able to obtain an $\alpha$-approximation from $Z^o(\{x_{N-W+1},x_{N-W+2},\cdots,x_N\})$ conditioned on that the recover function over $Z^o(\{x_{N-W+1},x_{N-W+2},\cdots,x_N\})$ does not output \textbf{FAIL}.
By taking union bound over all $o\in O$, with probability at least $1-O(\delta\cdot \log(M/m))$, we are able to obtain an $\alpha$-approximation to $f(\{x_{N-W+1},x_{N-W+2},\cdots,x_N\})$ from $\hat{H}^{o^*}=Z^{o^*}(\{x_{N-W+1},x_{N-W+2},\cdots,x_N\})$.

Finally, let us analyze the update time of the algorithm.
When the algorithm processes $x_N$, the time needed to compute $h_i^o(x),g_i^o(x),\zeta_i^o(x)$ for each $i\in [t],o\in O$ is always upper bounded by $\mathcal{T}$.
Thus the total time needed to compute the filter function, bucketing function and processing function on $x_N$ is at most $O(t\cdot |O|\cdot \mathcal{T})=O(t\cdot \log(M/m)\cdot \mathcal{T})$.
In the remaining of the updating process, the algorithm can only remove items from $H^o$ for $o\in O$.
Since the size of $H^o$ for $o\in O$ is at most $S$, the overall time of removal is at most $O(S\cdot |O|)=O(S\cdot \log(M/m))$.
Thus, the overall update time is $O(t\log(M/m)\cdot \mathcal{T}+S\cdot \log(M/m))$.
\end{proof}

\begin{remark}
Although Theorem~\ref{thm:framework} does not handle the instance $X$ with $f(X)=0$, it is usually easy to verify whether $f(X)=0$ in the sliding window model for many problems (such as all problems studied in this paper).
Therefore, we can use Algorithm~\ref{alg:sliding_framework} to handle the case $f(X)>0$ and run a sliding window procedure to verify whether $f(X)=0$ in parallel.

\end{remark}

\section{$k$-Cover in the Sliding Window Model}\label{sec:k_cover}
In the $k$-cover problem, there is a ground set $\mathcal{E}$ of $m$ elements and a family $\mathcal{S}\subseteq 2^{\mathcal{E}}$ of $n$ subsets of the elements.
Let $\mathcal{P}\subseteq \mathcal{S}$ be a subfamily of subsets.
The \emph{coverage} of $\mathcal{P}$ is denoted as $\mathcal{C}(\mathcal{P})=\left|\bigcup_{S\in \mathcal{P}} S\right|$, i.e., the total number of elements in the union of all subsets in $\mathcal{P}$.
The goal of $k$-coverage is to find a subfamily $\mathcal{P}\subseteq \mathcal{S}$ with size $|\mathcal{P}|=k$ such that $\mathcal{C}(\mathcal{P})$ is maximized.
The $k$-cover problem can also be described by a bipartite graph $G$, where $\mathcal{S}$ corresponds to the vertices on the one part, and $\mathcal{E}$ corresponds to the vertices on the other part. 
There is an edge between a subset $S\in \mathcal{P}$ and an element $i\in\mathcal{E}$ if and only if the element $i$ is in the subset $S$.
For a set $\mathcal{P}$ of vertices in $G$, let $\Gamma(G,\mathcal{P})$ denote the neighbors of $\mathcal{P}$ in $G$. 
In particular, if $\mathcal{P}$ corresponds to a subfamily of subsets in $\mathcal{S}$, then we have $|\Gamma(G,\mathcal{P})|=\mathcal{C}(\mathcal{P})$.
We use $\OPT_k(G)$ to denote the optimal $k$-cover value, i.e., $\OPT_k(G):=\max_{\mathcal{P}\subseteq\mathcal{S}:|\mathcal{P}|=k} |\Gamma(G,\mathcal{P})|$.
If $G$ is clear in the context, we will use $\OPT_k$ to denote $\OPT_k(G)$ for short.

In this paper, we consider $k$-cover in the edge-arrival model~\cite{bateni2017almost}.
In particular, there is a stream of set-element pairs $(S_1,e_1),(S_2,e_2),\cdots$, where $S_i\in\mathcal{S}$ corresponds to a subset, $e_i\in \mathcal{E}$ corresponds to an element.
Given a window size $W$, at timestamp $N$, the goal is to approximate $\OPT_k(G)$, where $G$ is represented by the edges $(S_{N-W+1},e_{N-W+1}),(S_{N-W+2},e_{N-W+2}),\cdots,(S_N,e_N)$.

\subsection{Offline Sketch via Subsampling}\label{sec:kcover_offline_sketch}
Let us review a sketch proposed by~\cite{bateni2017almost}. 
In particular, a sketch graph $H'_p$ for $G$ is created.
Let $\delta\in(0,0.5)$ be a probability parameter.
Let $\hslash$ be an $O(k\log(1/\delta)\log n)$-wise independent hash function which maps each element in $\mathcal{E}$ to $[0,1]$ uniformly at random.
Let $\varepsilon\in(0,0.5)$ be a given accuracy parameter.
For $p\in [0,1]$, the sketch $H'_p$ is constructed as follows.
$H'_p$ is a bipartite graph where the vertices on the one part correspond to $\mathcal{S}$ and the vertices on the other part correspond to the elements $\{e\in\mathcal{E}\mid \hslash(e)\leq p\}$.
For an element $e\in\mathcal{E}$ with $\hslash(e)\leq p$, if the degree of $e$ in $G$ is at most $n\log(1/\varepsilon)/(\varepsilon k)$, we add all edges connected to $e$ in $G$ to $H'_p$, otherwise we add \emph{arbitrary} $n\log(1/\varepsilon)/(\varepsilon k)$ edges conneted to $e$ in $G$ to $H'_p$.
The properties of $H'_p$ is stated in the following lemma.
The randomness is over the choice of the hash function $\hslash$.

\begin{lemma}[\cite{bateni2017almost}]\label{lem:k-cover-offline}
For $\min\left(\frac{k\log(1/\delta)\log(n)}{\varepsilon^2\OPT_k(G)},1\right)\leq p\leq 1$, with probability at least $1-O(\delta)$, for any $\alpha\in [0,1]$ and any $\mathcal{P}\subseteq\mathcal{S}$ with $|\mathcal{P}|=k$ and $|\Gamma(H'_p,\mathcal{P})|\geq \alpha\cdot \OPT_k(H'_p)$, we have 
$
|\Gamma(G,\mathcal{P})| \geq (\alpha-3\varepsilon)  \cdot \OPT_k(G),
$
and $(1+\varepsilon)\cdot \OPT_k(G)\geq \frac{1}{p}\cdot |\Gamma(H'_p,\mathcal{P})|\geq  (\alpha-2\varepsilon)\cdot \OPT_k(G)$.
Furthermore, for any $C\geq 1$, if $p\leq C\cdot \frac{k\log(1/\delta)\log(n)}{\varepsilon^2\OPT_k(G)}$, then with probability at least $1-O(\delta)$, the number of edges in $H_p'$ is at most $C\cdot 100n\log(1/\delta)\log(1/\varepsilon)\log(n)/\varepsilon^3$.
\end{lemma}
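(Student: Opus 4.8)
The plan is to isolate two probabilistic facts and then carry out three deterministic deductions on top of them. Set $D:=n\log(1/\varepsilon)/(\varepsilon k)$ (the truncation threshold), let $\mathcal E_p:=\{e\in\mathcal E:\hslash(e)\le p\}$, split $\mathcal E=\mathcal E_{\mathrm{low}}\cup\mathcal E_{\mathrm{high}}$ according to whether $\deg_G(e)\le D$, and for $\mathcal P\subseteq\mathcal S$ put $\Gamma_{\mathrm{sub}}(G,\mathcal P):=\Gamma(G,\mathcal P)\cap\mathcal E_p$. The construction discards edges only at $\mathcal E_{\mathrm{high}}$ vertices, so $\Gamma(H'_p,\mathcal P)\subseteq\Gamma_{\mathrm{sub}}(G,\mathcal P)$ always, and the two can differ only in $\mathcal E_{\mathrm{high}}$ vertices all of whose $G$-edges into $\mathcal P$ were truncated. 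The first fact I would establish is a subsampling concentration bound: for a fixed $A\subseteq\mathcal E$ with $|A|\le\OPT_k(G)$, the quantity $|A\cap\mathcal E_p|$ is a sum of $O(k\log(1/\delta)\log n)$-wise independent indicators of mean $p|A|$, so if $p=1$ it equals $|A|$ exactly, and otherwise the hypothesis forces $p\,\OPT_k(G)\ge k\log(1/\delta)\log n/\varepsilon^2$, whence a standard moment/Chernoff bound for limited independence gives $\big|\,|A\cap\mathcal E_p|-p|A|\,\big|\le\varepsilon p\,\OPT_k(G)$ except with probability $n^{-\Omega(k\log(1/\delta))}$. A union bound over the $n^{O(k)}$ sets of the form $\Gamma(G,\mathcal P)$, $\Gamma(G,\mathcal P)\cap\mathcal E_{\mathrm{low}}$, $\Gamma(G,\mathcal P)\cap\mathcal E_{\mathrm{high}}$ with $|\mathcal P|\le k$ then yields an event $\mathcal G$ of probability $1-O(\delta)$ on which all these deviations hold simultaneously --- arranging this union bound to cost only $O(\delta)$ is exactly the point of taking $\hslash$ to be $O(k\log(1/\delta)\log n)$-wise independent. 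The second fact concerns the uniformly random $k$-subset: such a $\mathcal P$ covers a degree-$d$ element with probability $\ge1-(1-d/n)^k\ge1-e^{-dk/n}$, so $\OPT_k(G)\ge\sum_e(1-e^{-\deg_G(e)k/n})$; restricting the sum to $\mathcal E_{\mathrm{high}}$, where $e^{-\deg_G(e)k/n}<\varepsilon$, gives $|\mathcal E_{\mathrm{high}}|\le\OPT_k(G)/(1-\varepsilon)$, and inside $H'_p$, where every surviving high-degree vertex has degree exactly $D$ with $Dk/n=\log(1/\varepsilon)/\varepsilon$, even a uniform $\lceil\varepsilon k\rceil$-subset of $\mathcal S$ covers each such vertex with probability $\ge1-e^{-\log(1/\varepsilon)}=1-\varepsilon$.

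On $\mathcal G$ the ``easy'' inequalities are immediate: for every $\mathcal P$ with $|\mathcal P|=k$ we have $|\Gamma(H'_p,\mathcal P)|\le|\Gamma_{\mathrm{sub}}(G,\mathcal P)|\le p|\Gamma(G,\mathcal P)|+\varepsilon p\,\OPT_k(G)$, which already gives $\tfrac1p|\Gamma(H'_p,\mathcal P)|\le(1+\varepsilon)\OPT_k(G)$ (using $|\Gamma(G,\mathcal P)|\le\OPT_k(G)$) and $|\Gamma(G,\mathcal P)|\ge\tfrac1p|\Gamma(H'_p,\mathcal P)|-\varepsilon\OPT_k(G)$.

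The step I expect to be the main obstacle is the matching lower bound $\OPT_k(H'_p)\ge(1-O(\varepsilon))\,p\,\OPT_k(G)$: because the $D$ edges retained at a high-degree vertex are chosen adversarially, the $G$-optimal family $\mathcal P^\star$ need not transfer to $H'_p$, so one has to build a good $H'_p$-solution by splitting the $k$-budget. Let $o:=|\Gamma(G,\mathcal P^\star)\cap\mathcal E_{\mathrm{low}}|$. A uniform $(1-\varepsilon)k$-subset of $\mathcal P^\star$ covers any fixed element with $\ge1$ edge to $\mathcal P^\star$ with probability $\ge(1-\varepsilon)$, so by averaging some $\mathcal P'\subseteq\mathcal P^\star$ with $|\mathcal P'|=(1-\varepsilon)k$ covers $\ge(1-\varepsilon)o$ of the elements of $\Gamma(G,\mathcal P^\star)\cap\mathcal E_{\mathrm{low}}$; being low-degree they keep their edges in $H'_p$, so on $\mathcal G$ they contribute $\ge(1-\varepsilon)po-\varepsilon p\,\OPT_k(G)$ covered $\mathcal E_{\mathrm{low}}$-vertices. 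I then adjoin a uniform $\lceil\varepsilon k\rceil$-subset $\mathcal P''$ of $\mathcal S$, which by the second fact covers, in expectation over $\mathcal P''$, at least $(1-\varepsilon)\,|\mathcal E_{\mathrm{high}}\cap\mathcal E_p|\ge(1-\varepsilon)\big(p(\OPT_k(G)-o)-\varepsilon p\,\OPT_k(G)\big)$ vertices of $\mathcal E_{\mathrm{high}}$ in $H'_p$ (using $\mathcal G$ and $|\mathcal E_{\mathrm{high}}|\ge\OPT_k(G)-o$). Since the $\mathcal E_{\mathrm{low}}$- and $\mathcal E_{\mathrm{high}}$-contributions are on disjoint vertex sets, some choice of $\mathcal P''$ gives $|\Gamma(H'_p,\mathcal P'\cup\mathcal P'')|\ge(1-3\varepsilon)\,p\,\OPT_k(G)$ after a short calculation, and $|\mathcal P'\cup\mathcal P''|\le k$ yields $\OPT_k(H'_p)\ge(1-3\varepsilon)\,p\,\OPT_k(G)$. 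Feeding this into the easy inequalities, any $\mathcal P$ with $|\Gamma(H'_p,\mathcal P)|\ge\alpha\,\OPT_k(H'_p)$ satisfies $\tfrac1p|\Gamma(H'_p,\mathcal P)|\ge(\alpha-3\varepsilon)\OPT_k(G)$ and $|\Gamma(G,\mathcal P)|\ge(\alpha-4\varepsilon)\OPT_k(G)$; running the whole argument with $\varepsilon$ replaced by a fixed constant multiple absorbs the slack and produces exactly the stated $(1+\varepsilon)$, $(\alpha-2\varepsilon)$, $(\alpha-3\varepsilon)$ bounds.

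For the ``furthermore'' on the edge count, note the number of edges of $H'_p$ equals $\sum_{j=1}^{D}\big|\{e:\deg_G(e)\ge j\}\cap\mathcal E_p\big|$. The random-$k$-subset bound gives $|\{e:\deg_G(e)\ge j\}|\le\OPT_k(G)/(1-e^{-jk/n})=O(\OPT_k(G))\cdot\max(1,n/(jk))$, so $\sum_{j=1}^{D}|\{e:\deg_G(e)\ge j\}|=O(\OPT_k(G)\cdot D)$ (the $j\le n/k$ range contributes only $O((n/k)\log n\cdot\OPT_k(G))$, which is lower order). One more application of the limited-independence Chernoff bound to the $\hslash$-subsampling of this sum, with $p\le C\,k\log(1/\delta)\log n/(\varepsilon^2\OPT_k(G))$, then gives an edge count of $O(p\cdot\OPT_k(G)\cdot D)=O\big(C\,n\log(1/\delta)\log(1/\varepsilon)\log(n)/\varepsilon^3\big)$ with probability $1-O(\delta)$. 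In summary, every step except the budget-splitting lower bound reduces to a limited-independence Chernoff bound plus a union bound over $n^{O(k)}$ families $\mathcal P$; the only care needed elsewhere is to check that all concentration events live over the randomness of $\hslash$ and that the independence degree dominates $\log\big(n^{O(k)}\big)$.
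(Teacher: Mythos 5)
First, note that the paper does not prove this lemma at all: it is imported verbatim from \cite{bateni2017almost}, so there is no in-paper proof to compare against. Your argument is, in substance, the standard analysis of that sketch: subsample elements with a limited-independence hash, take a union bound over the $n^{O(k)}$ families $\mathcal{P}$ (which is exactly why the independence is set to $O(k\log(1/\delta)\log n)$), truncate high-degree elements, observe that $\lceil\varepsilon k\rceil$ random sets cover any vertex of degree $D=n\log(1/\varepsilon)/(\varepsilon k)$ except with probability $\varepsilon$, and transfer the optimum by splitting the budget into $(1-\varepsilon)k$ sets from $\mathcal{P}^\star$ (for low-degree elements, which keep all their edges) plus $\varepsilon k$ random sets (for the truncated high-degree elements). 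That part of your proposal is sound; the fact that you land on $(\alpha-3\varepsilon)$/$(\alpha-4\varepsilon)$ instead of $(\alpha-2\varepsilon)$/$(\alpha-3\varepsilon)$ is a benign constant-tracking issue, and the paper only ever uses the lemma with $(\alpha-O(\varepsilon))$.

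The one step that does not hold as written is in the ``furthermore'' edge-count bound. You write the edge count as $\sum_{j=1}^{D}|\{e:\deg_G(e)\ge j\}\cap\mathcal{E}_p|$, bound each level set by $O(\OPT_k(G))\cdot\max(1,n/(jk))$, and then assert that the $j\le n/k$ range, which contributes $O(\OPT_k(G)\cdot (n/k)\log n)$ after the harmonic sum, is ``lower order.'' It is not: compared with $O(\OPT_k(G)\cdot D)=O(\OPT_k(G)\cdot (n/k)\log(1/\varepsilon)/\varepsilon)$, the extra term dominates whenever $\log n\gg \log(1/\varepsilon)/\varepsilon$ (e.g.\ constant $\varepsilon$ and large $n$), and your final bound becomes $O\bigl(Cn\log(1/\delta)\log^2(n)/\varepsilon^2\bigr)+O\bigl(Cn\log(1/\delta)\log(1/\varepsilon)\log(n)/\varepsilon^3\bigr)$, which exceeds the stated $C\cdot 100\,n\log(1/\delta)\log(1/\varepsilon)\log(n)/\varepsilon^3$ in that regime. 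The fix is to avoid summing per-level bounds and instead compare elementwise: since $\sum_{j\le D}\mathbf{1}(\deg_G(e)\ge j)=\min(\deg_G(e),D)$ and $1-e^{-\deg_G(e)k/n}\ge \Omega\bigl(\min(\deg_G(e),D)\cdot \min(k/n,1/D)\bigr)$ (split on $\deg_G(e)\gtrless n/k$), your own inequality $\OPT_k(G)\ge\sum_e(1-e^{-\deg_G(e)k/n})$ gives $\sum_e\min(\deg_G(e),D)\le O(D)\cdot\OPT_k(G)$ directly, with no spurious $\log n$. Relatedly, the final concentration step is for a \emph{weighted} sum $\sum_e\min(\deg_G(e),D)\,\mathbf{1}(\hslash(e)\le p)$ with weights up to $D$, so a plain indicator-sum Chernoff bound does not apply verbatim; either split the contribution of elements of degree $\ge n/k$ (count sampled such elements, each contributing at most $D$ edges) from that of low-degree elements, or invoke a bounded-weight Bernstein-type bound for limited independence.
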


\subsection{Bucketing-based Sketch for $k$-Cover Problem}
To design an efficient sliding window algorithm for $k$-cover, we only need to develop an efficient bucketing-based sketch according to Theorem~\ref{thm:framework}.
Fortunately, the sketch graph $H'_p$ described in Section~\ref{sec:kcover_offline_sketch} yields an efficient bucketing-based sketch.
In this section, we will show how to construct the bucketing-based sketch.
Notice that $\OPT_k(G)$ is always in $[1,W]$.
Let $o\in [1,W]$ and let $\delta$ and $\varepsilon$ be the same as described in Section~\ref{sec:kcover_offline_sketch}.
We construct the sketch function $Z(\cdot)$ as following.
$Z(\cdot)$ only contains one sub-sketch $Z_1(\cdot)$, i.e., $Z(\cdot)=(Z_1(\cdot))$.
For convenience, we abuse the notation and denote $Z_1(\cdot)$ as $Z(\cdot)$.
We use $Z(G)$ to denote $Z(X)$ for $X\subseteq \mathcal{S}\times \mathcal{E}$ where $G$ is the bipartite graph corresponding to the edge set $X$.
According to Definition~\ref{def:bucketing_based_sketch}, to describe $Z(\cdot)$, we only need to specify the filter function $h(\cdot)$, the bucketing function $g(\cdot)$, the processing function $\zeta(\cdot)$ and the threshold $T$. 
Let $\hslash$ be the same as described in Section~\ref{sec:kcover_offline_sketch}: an $O(k\log(1/\delta)\log n)$-wise independent hash function which maps each element in $\mathcal{E}$ to $[0,1]$ uniformly at random.
Let $p=\min\left(\frac{k\log(1/\delta)\log(n)}{\varepsilon^2\cdot o},1\right)$.
We construct $h(\cdot),g(\cdot),\zeta(\cdot)$ and $T$ as follows:
\begin{enumerate}
    \item $\forall (S,e)\in \mathcal{S}\times \mathcal{E}$, 
    \begin{align*}
        h(S,e) = \left\{
        \begin{array}{ll}
        1, & \hslash(e)\leq p,\\
        0, & \text{o.w.}
        \end{array}
        \right.
    \end{align*}
    \item Let $\mathcal{B}=\mathcal{E}$, i.e., each element in $\mathcal{E}$ corresponds to a bucket.
    $\forall (S,e)\in\mathcal{S}\times \mathcal{E},g(S,e)=e$.
    \item $\forall (S,e)\in\mathcal{S}\times\mathcal{E}, \zeta(S,e) = (S,e)$.
    \item $T = n\log(1/\varepsilon)/(\varepsilon k)$.
\end{enumerate}
\begin{lemma}\label{lem:recover_Hp}
The space budget of $Z(G)$ is the same as the number of edges of $H'_p$, and $H'_p$ can be constructed via $Z(G)$ in $O(|Z(G)|)$ time.
\end{lemma}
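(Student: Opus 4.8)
The plan is to prove both assertions by unwinding Definition~\ref{def:bucketing_based_sketch} for the particular $h,g,\zeta,T$ chosen in this section and comparing it term-by-term with the construction of $H'_p$ from Section~\ref{sec:kcover_offline_sketch}; the point is simply that the two objects retain exactly the same edges, so their edge counts agree and each can be read off from the other.

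First I would compute the space budget of $Z(G)$ straight from its definition $\sum_{B\in\mathcal{B}}\min\bigl(T,\ \sum_{x\in X:\,h(x)=1,\,g(x)=B}|\zeta(x)|\bigr)$ (there is a single sub-sketch, so the outer sum over $i$ disappears). Each edge occupies unit space, so $|\zeta(S,e)|=1$; the buckets are the elements of $\mathcal{E}$; and for a bucket $e$ the filter keeps precisely the pairs $(S,e)$ with $\hslash(e)\le p$. Hence the $e$-th summand is $0$ when $\hslash(e)>p$ and $\min(T,d_G(e))$ when $\hslash(e)\le p$, where $d_G(e)$ denotes the degree of $e$ in $G$, so the space budget equals $\sum_{e:\,\hslash(e)\le p}\min(T,d_G(e))$. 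On the other side, $H'_p$ keeps, for each element $e$ with $\hslash(e)\le p$, all its incident edges if $d_G(e)\le T$ and an arbitrary $T$ of them otherwise — that is, $\min(T,d_G(e))$ edges — and keeps no edge at an element with $\hslash(e)>p$; summing over $e$ gives the same value, which proves the first claim. The only bookkeeping point is that $T=n\log(1/\varepsilon)/(\varepsilon k)$ need not be an integer, in which case the ``maximal number of unit-size items of total size $\le T$'' kept in a full bucket is $\lfloor T\rfloor$; this is harmless because Lemma~\ref{lem:k-cover-offline} is stated with ``at most'' that many edges, and one may simply carry $\lfloor T\rfloor$ throughout.

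For the recoverability claim I would invoke conditions~\ref{it:bucket_sketch_2a} and~\ref{it:bucket_sketch_2b} of Definition~\ref{def:bucketing_based_sketch} with these $h,g,\zeta,T$: no pair $(S,e)$ with $\hslash(e)>p$ lies in $Z(G)$; if $\hslash(e)\le p$ and $d_G(e)\le T$ then every $(S,e)$ lies in $Z(G)$; and if $\hslash(e)\le p$ and $d_G(e)>T$ then an arbitrary maximal (hence size-$\lfloor T\rfloor$) subset of the $(S,e)$ lies in $Z(G)$. Since $\zeta(S,e)=(S,e)$, each stored pair literally records its own edge, so the set of edges appearing in $Z(G)$ is a legitimate realization of the edge set of $H'_p$ — the arbitrary tiebreaker used when a bucket is full plays exactly the role of the arbitrary choice of $T$ edges at a high-degree element, and Lemma~\ref{lem:k-cover-offline} holds for any such choice. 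Thus $H'_p$ is built by initializing its two vertex parts to $\mathcal{S}$ and $\{e:\hslash(e)\le p\}$ and scanning $Z(G)$ once, inserting each recorded edge into an adjacency list, at cost $O(|Z(G)|)$; and since $|\zeta|$ is constant, the actual size $|Z(G)|$ equals the space budget, consistent with the first part. I do not anticipate a genuine obstacle: the lemma is a direct translation between the two definitions, and the only care needed is (i) checking that ``bucket $e$ is full'' in the sketch corresponds to ``$e$ is a high-degree element of $G$'' in $H'_p$ and (ii) the non-integrality of $T$ noted above — both routine.
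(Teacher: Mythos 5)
Your proof is correct and follows essentially the same route as the paper's: both establish a bucket-by-bucket (element-by-element) correspondence between the items kept in $Z(G)$ and the edges kept in $H'_p$ — nothing for $\hslash(e)>p$, all incident edges when the degree is at most $T$, and an arbitrary maximal/size-$T$ subset otherwise — and then read $H'_p$ off $Z(G)$ in a single linear scan. Your extra remark about the possible non-integrality of $T$ is a harmless refinement the paper glosses over.
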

\begin{proof}
Since each item in $Z(G)$ is a tuple $((S,e),\zeta(S,e))$ and $\zeta(S,e)=(S,e)$, we abuse the notation and regard each item in $Z(G)$ as $(S,e)\in\mathcal{S}\times\mathcal{E}$, and it is clear that the space budget of $Z(G)$ is the same as the size of $Z(G)$.
It is easy to verify that there is a one-to-one correspondence between the items of $Z(G)$ and the edges of $H'_p$.
Consider an element $e\in\mathcal{E}$. 
If $\hslash(e)>p$, any edge $(S,e)$ will neither be added to $Z(G)$ ($(S,e)$ is filtered by the filter function $h(\cdot)$) nor $H'_p$.
If $\hslash(e)\leq p$ and the degree of $e$ in $G$ is at most $T=n\log(1/\varepsilon)/(\varepsilon k)$, every edge $(S,e)$ will be added into $Z(G)$ since the bucket $e$ will not be full, and $(S,e)$ will also be added into $H'_p$ due to the construction of $H'_p$.
If $\hslash(e)\leq p$ and the degree of $e$ in $G$ is larger than $T=n\log(1/\varepsilon)/(\varepsilon k)$, arbitrary $T$ edges $(S,e)$ will be added into $Z(G)$ since the bucket $e$ is full, and these edges are also added into $H'_p$ according to the construction of $H'_p$.
\end{proof}

\begin{theorem}[Bucketing-based sketch for $k$-cover under edge-arrival model]\label{thm:bucketsketch_kcover}
Consider the $k$-cover problem over a set of edges in $\mathcal{S}\times\mathcal{E}$ where $\mathcal{E}$ is a ground set of $m$ elements and $\mathcal{S}$ is a family of $n$ sets of elements.
For any $\varepsilon,\delta\in(0,0.5),o\geq 1,\alpha\in[0,1]$, there always exists an $o$-restricted $(\alpha-\varepsilon)$-approximate bucketing-based sketch $Z^o(\cdot)$ such that for any edge set $X\subseteq \mathcal{S}\times \mathcal{E}$,
\begin{enumerate}
    \item the recover function over $Z^o(X)$ outputs an $(\alpha-\varepsilon)$-approximation to the optimal $k$-cover value with probability at least $1-\delta$ conditioned on that the optimal value is at least $o$;
    \item if the optimal value is at most $2\cdot o$, with probability at least $1-\delta$, the space budget of $Z^o(X)$ is upper bounded by $O(n\log(1/\delta)\log(1/\varepsilon)\log(n)/\varepsilon^3)$.
\end{enumerate}
The time needed to compute a filter function, a bucketing function and a processing function on any $(S,e)\in \mathcal{S}\times \mathcal{E}$ is always upper bounded by $O(k\log(1/\delta)\log(n))$.
Furthermore, if $\alpha\leq 1-1/e$, the time to compute recover function over $Z^o(X)$ is at most $\wt{O}(|Z^o(X)|)$.
\end{theorem}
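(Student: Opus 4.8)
The plan is to take the construction of $Z^o(\cdot)$ as given and to (i) define the recover function, (ii) read the two probabilistic guarantees off Lemma~\ref{lem:k-cover-offline} through the identification of $Z^o(X)$ with the sketch graph $H'_p$ supplied by Lemma~\ref{lem:recover_Hp}, and (iii) bound the per-item and recover running times. Throughout I would run the construction with accuracy parameter $\varepsilon'=\Theta(\varepsilon)$ (so $p=\min(\frac{k\log(1/\delta)\log n}{(\varepsilon')^2\, o},1)$) and failure parameter $\Theta(\delta)$, small enough that the $O(\delta)$ failure probabilities in Lemma~\ref{lem:k-cover-offline} become at most $\delta$ and the $\Theta(\varepsilon)$ losses below stay inside $\varepsilon$. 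That $Z^o(\cdot)$ meets Definition~\ref{def:bucketing_based_sketch} is immediate from how $h,g,\zeta,T$ were specified; in particular ``keeping an arbitrary maximal subset when a bucket overflows'' matches the ``keep an arbitrary set of $T$ edges'' step in the definition of $H'_p$, and Lemma~\ref{lem:k-cover-offline} holds for every such choice, so the sliding-window tiebreaker is harmless.

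On input $Z^o(X)$ the recover function first rebuilds $H'_p$ in $O(|Z^o(X)|)$ time (Lemma~\ref{lem:recover_Hp}), then picks some $\mathcal{P}\subseteq\mathcal{S}$ with $|\mathcal{P}|=k$ and $|\Gamma(H'_p,\mathcal{P})|\ge\alpha\cdot\OPT_k(H'_p)$ --- by exhaustive search in general, or, when $\alpha\le 1-1/e$, by the greedy algorithm, which already attains a $(1-1/e)$-fraction of $\OPT_k(H'_p)$ --- and returns $\frac{1}{(1+\varepsilon')\,p}\,|\Gamma(H'_p,\mathcal{P})|$. For Property~1 I would condition on $\OPT_k(G)\ge o$, which forces $p\ge\min\bigl(\frac{k\log(1/\delta)\log n}{(\varepsilon')^2\OPT_k(G)},\,1\bigr)$ --- exactly the hypothesis of Lemma~\ref{lem:k-cover-offline} --- so with probability $\ge 1-\delta$ the lemma gives $(1+\varepsilon')\OPT_k(G)\ge\frac1p|\Gamma(H'_p,\mathcal{P})|\ge(\alpha-2\varepsilon')\OPT_k(G)$; dividing by $1+\varepsilon'$ puts the output in $[(\alpha-\varepsilon)\OPT_k(G),\,\OPT_k(G)]$ once $\varepsilon'$ is a small enough constant multiple of $\varepsilon$ (using $\alpha\le1$), i.e.\ it is the claimed $(\alpha-\varepsilon)$-approximation.

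For Property~2 (trivial if $G$ has no edges) I would use $\OPT_k(G)\le 2o$, so that $p\le\frac{k\log(1/\delta)\log n}{(\varepsilon')^2\, o}\le\frac{2\,k\log(1/\delta)\log n}{(\varepsilon')^2\OPT_k(G)}$ whether or not $p$ has been truncated to $1$; the ``Furthermore'' part of Lemma~\ref{lem:k-cover-offline} with $C=2$ then bounds the number of edges of $H'_p$, hence by Lemma~\ref{lem:recover_Hp} the space budget of $Z^o(X)$, by $O(n\log(1/\delta)\log(1/\varepsilon)\log n/\varepsilon^3)$ with probability $\ge 1-\delta$. For the time bounds, evaluating $h(S,e)$ costs one evaluation of the $O(k\log(1/\delta)\log n)$-wise independent hash $\hslash$ (a polynomial of that degree) plus one comparison, and $g,\zeta$ are trivial, which is the stated per-item bound; and when $\alpha\le 1-1/e$, rebuilding $H'_p$ is $O(|Z^o(X)|)$ while greedy on $H'_p$ runs in near-linear time via a bucket-queue implementation (marginal coverages are integers, each element of $H'_p$ is covered at most once so the total decrement work is $O(|Z^o(X)|)$, and only $k$ extract-maxes occur), with $|\Gamma(H'_p,\mathcal{P})|$ produced as a by-product, for a total of $\wt{O}(|Z^o(X)|)$.

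The only genuinely delicate point will be reconciling the \emph{two-sided} estimate of Lemma~\ref{lem:k-cover-offline}, whose upper side is $(1+\varepsilon')\OPT_k(G)$ rather than $\OPT_k(G)$, with the paper's one-sided notion of ``$(\alpha-\varepsilon)$-approximation''; this is what forces the $\frac{1}{1+\varepsilon'}$ rescaling of the output together with the reparametrization $\varepsilon'=\Theta(\varepsilon)$, and one has to check that each use of Lemma~\ref{lem:k-cover-offline} stays valid both when $p<1$ and in the truncated regime $p=1$. Everything else is routine bookkeeping.
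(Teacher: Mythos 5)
Your proposal is correct and follows essentially the same route as the paper's proof: identify $Z^o(X)$ with $H'_p$ via Lemma~\ref{lem:recover_Hp}, invoke Lemma~\ref{lem:k-cover-offline} once for the approximation guarantee (when $\OPT_k(G)\geq o$) and once, through its ``Furthermore'' clause with constant $C$, for the space bound (when $\OPT_k(G)\leq 2o$), and use brute force respectively greedy for the recover function, with a final constant rescaling of $\varepsilon$ and $\delta$. Your explicit division of the output by $1+\varepsilon'$ to respect the paper's one-sided notion of approximation is a small extra care the paper glosses over with its ``scaling $\varepsilon$ by a constant factor'' remark, but it does not change the argument.
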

\begin{proof}
Let $G$ be the bipartite graph corresponding to the edge set $X$. 
We will prove that the sketch $Z(G)$ described in this section is the desired sketch $Z^o(X)$.

According to Lemma~\ref{lem:recover_Hp}, we can recover $H'_p$ for $p=\min\left(\frac{k\log(1/\delta)\log(n)}{\varepsilon^2\cdot o},1\right)$ via $Z(G)$ in $O(|Z(G)|)$ time.
If $\OPT_k(G)\geq o$, we have $p\geq \min\left(\frac{k\log(1/\delta)\log(n)}{\varepsilon^2\OPT_k(G)},1\right)$.
Let $\mathcal{P}'\subseteq \mathcal{S}$ with $|\mathcal{P}'|=k$ such that $|\Gamma(H'_p,\mathcal{P}')|\geq \alpha\cdot \OPT_k(H'_p)$.
According to Lemma~\ref{lem:k-cover-offline}, we have $(1+\varepsilon)\cdot \OPT_k(G)\geq \frac{1}{p}\cdot |\Gamma(H'_p,\mathcal{P}')|\geq (\alpha-O(\varepsilon))\cdot \OPT_k(G)$ with probability at least $1-\delta$.
Thus, $Z(G)$ is an $o$-restricted $(\alpha-O(\varepsilon))$-approximate bucketing-based sketch.
If $\alpha=1$, we can find $\mathcal{P}'$ by brute-force search which takes $O(n^k\cdot |Z^o(X)|)$ time.
If $\alpha =1-1/e$, we can find $\mathcal{P}'$ by running a well-known greedy algorithm~\cite{nemhauser1978analysis} on $H'_p$ which takes $\wt{O}(|Z(G)|)$ time.
If $\OPT_k(G)\leq 2\cdot o$, according to Lemma~\ref{lem:k-cover-offline}, with probability at least $1-O(\delta)$, the number of edges of $H'_p$ is at most $200n\cdot\log(1/\delta)\log(1/\varepsilon)\log(n)/\varepsilon^3)$.
According to Lemma~\ref{lem:recover_Hp}, the space budget of $Z(G)$ is the same as the number of edges of $H'_p$, and thus $|Z(G)|\leq O(n\log(1/\delta)\log(1/\varepsilon)\log(n)/\varepsilon^3)$.

The time to perform the filter function $h$ is the same as the time to evaluate the $O(k\log(1/\delta)\log(n))$-wise independent hash function $\hslash$, which is $O(k\log(1/\delta)\log(n))$.
The time to perform the bucketing function $g(\cdot)$ and the processing function $\zeta(\cdot)$ is $O(1)$.

After scaling $\delta$ and $\varepsilon$ by a constant factor, we conclude the proof.
\end{proof}

\subsection{$k$-Cover in Edge-Arrival Sliding Window Model}
By plugging the bucketing-based sketch for $k$-cover into our algorithmic framework (Algorithm~\ref{alg:sliding_framework}), we are able to obtain an efficient sliding window algorithm for $k$-cover problem.
\begin{theorem}[Sliding window $k$-cover with fast running time]\label{thm:sliding_window_kcover_fast_runtime}
For any $\varepsilon,\delta'\in(0,0.5)$, there is a $(1-1/e-\varepsilon)$-approximation algorithm for $k$-cover in the edge-arrival sliding window model with window size $W\geq 1$ using space $\wt{O}(n\log(1/\delta')\log(W)/\varepsilon^3)$.
The update time is $\wt{O}(k\log(1/\delta')\log(n)\log(W))$.
The running time to get the approximation at the end of the stream is at most $\wt{O}(n\log(1/\delta')\log(W)/\varepsilon^3)$.
The success probability is at least $1-\delta'$.
\end{theorem}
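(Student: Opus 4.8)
The plan is to derive Theorem~\ref{thm:sliding_window_kcover_fast_runtime} as a black-box instantiation of Theorem~\ref{thm:framework} using the $k$-cover bucketing-based sketch of Theorem~\ref{thm:bucketsketch_kcover}. Set $f(X) := \OPT_k(G)$, where $G$ is the bipartite graph induced by the edges $X$ currently inside the window. A nonempty window has $1 \le \OPT_k(G) \le W$, so I would run Algorithm~\ref{alg:sliding_framework} with $m=1$ and $M=W$; the remaining case $\OPT_k(G)=0$ occurs exactly when no edge has arrived in the last $W$ steps, and is detected by keeping, in parallel, the timestamp of the most recent arriving edge in $O(\log W)$ bits, as observed right after Theorem~\ref{thm:framework}.

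For each scale $o\in\{1,2,4,\dots,W\}$, I would use the sketch $Z^o(\cdot)$ of Theorem~\ref{thm:bucketsketch_kcover} with target ratio $\alpha=1-1/e$, accuracy parameter $\Theta(\varepsilon)$, and failure probability $\delta := \Theta(\delta'/\log W)$; its recover function reconstructs $H'_p$ via Lemma~\ref{lem:recover_Hp} and runs the classical greedy maximization on it, and in particular never returns \textbf{FAIL}. The two hypotheses of Theorem~\ref{thm:framework} then follow directly from Theorem~\ref{thm:bucketsketch_kcover}: item~1 says that conditioned on $\OPT_k(G)\ge o$ the recover output is an $(\alpha-\Theta(\varepsilon))$-approximation with probability $1-\delta$, which is hypothesis~1 since there is no \textbf{FAIL} branch; and item~2 says that if $\OPT_k(G)\le 2o$ then the space budget of $Z^o$ is at most $S := O(n\log(1/\delta)\log(1/\varepsilon)\log(n)/\varepsilon^3)$ with probability $1-\delta$, which gives hypothesis~2 (again, recover never \textbf{FAIL}s). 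Invoking Theorem~\ref{thm:framework} with $m=1$, $M=W$, $t=1$ yields a sliding-window algorithm using space $O(S\log W)$ with success probability $1-O(\delta\log W)\ge 1-\delta'$; substituting $\delta=\Theta(\delta'/\log W)$ and folding $\log n$, $\log(1/\varepsilon)$, $\log\log W$ into the $\wt{O}(\cdot)$ gives the claimed space $\wt{O}(n\log(1/\delta')\log(W)/\varepsilon^3)$. The end-of-stream cost is that of running recover on at most $|O|=O(\log W)$ of the stored summaries $\hat H^o$, each of size $\le S$; by the $\wt{O}(|Z^o(X)|)$-time greedy (the $\alpha=1-1/e$ branch of Theorem~\ref{thm:bucketsketch_kcover}) this is $\wt{O}(S\log W)=\wt{O}(n\log(1/\delta')\log(W)/\varepsilon^3)$. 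Rescaling the constant hidden in $\Theta(\varepsilon)$ turns $\alpha-\Theta(\varepsilon)$ into exactly $1-1/e-\varepsilon$.

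I expect the only nontrivial step to be the update time: Theorem~\ref{thm:framework} only gives the weaker generic bound $O(t\log(M/m)\,\mathcal{T} + S\log(M/m))$, whose $S\log W$ term is far larger than the claimed $\wt{O}(k\log(1/\delta')\log(n)\log(W))$. The hard part is to argue that term away for $k$-cover, using the structure of the sketch: $t=1$, each processed item $\zeta(S,e)$ has $O(1)$ size, and the filter routes an arriving edge $(S,e)$ to the single bucket $e$, so each arriving edge inserts at most one element into each $H^o$ and causes at most one overflow deletion in that bucket, hence at most $O(1)$ total deletions per scale per update. Maintaining each $H_1^o$ in a balanced search tree keyed by timestamp lets the overflow deletion (line~\ref{sta:bucketing_step}) and the window-expiry deletion (line~\ref{sta:remove_earlist}, implemented by jumping directly to the minimum-timestamp element instead of advancing $l_o$ one unit at a time over possibly long empty gaps) each cost $O(\log S)$; so the per-update cost is $O(\log W\,(\mathcal{T}+\log S))$, where $\mathcal{T}=O(k\log(1/\delta)\log n)$ is the cost of evaluating the $O(k\log(1/\delta)\log n)$-wise independent hash $\hslash$, i.e.\ $\wt{O}(k\log(1/\delta')\log(n)\log(W))$. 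Everything else is a mechanical composition of Theorems~\ref{thm:framework} and~\ref{thm:bucketsketch_kcover} together with the trivial $\OPT_k=0$ check.
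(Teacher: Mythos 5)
Your proposal is correct and follows essentially the same route as the paper: plug the Theorem~\ref{thm:bucketsketch_kcover} sketch with $\alpha=1-1/e$ and $\delta=\Theta(\delta'/\log W)$ into Theorem~\ref{thm:framework} (noting the recover function never outputs \textbf{FAIL}), read off the space, success probability and end-of-stream time, and then beat the generic update-time bound by observing that $t=1$ and every stored tuple has unit size, so each arriving edge triggers $O(1)$ insertions and deletions per scale. Your balanced-tree implementation with the ``jump $l_o$'' handling of line~\ref{sta:remove_earlist} is in fact slightly more careful than the paper, which simply asserts that the loop in lines~\ref{sta:while_start}--\ref{sta:while_end} runs at most once per update (glossing over iterations in which $l_o$ merely advances past empty timestamps); this is an implementation refinement, not a different argument.
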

\begin{proof}
By plugging the bucketing-based sketch in Theorem~\ref{thm:bucketsketch_kcover} with approximation parameter $\alpha = 1-1/e$ and probability parameter $\delta=\Theta(\delta'/\log(W))$ into Algorithm~\ref{alg:sliding_framework}, we get the desired algorithm.

Since the window size is $W$, the optimal $k$-cover value is between $1$ and $W$.
According to Theorem~\ref{thm:bucketsketch_kcover}, $\forall o\in [1,W]$, there always exists an $o$-restricted $(1-1/e-\varepsilon)$-approximate bucketing-based sketch $Z^o(\cdot)$ for $k$-cover such that a $(1-1/e-\varepsilon)$-approximation to the optimal $k$-cover value can be recovered with probability at least $1-\delta$ conditioned on that the optimal $k$-cover value is at least $o$.
Furthermore, if the optimal $k$-cover value is at most $2\cdot o$, the space budget of the sketch is at most $O(n\log(1/\delta)\log(1/\varepsilon)\log(n)/\varepsilon^3)$ with probability at least $1-\delta$.
According to Theorem~\ref{thm:framework}, after plugging the bucketing-based sketches into Algorithm~\ref{alg:sliding_framework}, we obtain the sliding window algorithm which outputs a $(1-1/e-\varepsilon)$-approximation to the optimal $k$-cover value.
The space needed is at most $O(n\log(1/\delta)\log(1/\varepsilon)\log(n)/\varepsilon^3\cdot \log(W))=\wt{O}(n\log(1/\delta')\log(W)/\varepsilon^3)$.
The success probability is at least $1-O(\delta\cdot \log(W))=1-\delta'$.
According to Theorem~\ref{thm:bucketsketch_kcover}, the time needed to compute a filter function, a bucketing function and a processing function on any data item is always at most $O(k\log(1/\delta)\log(n))$.
By taking a more careful analysis, we can get a update time better than that claimed in Theorem~\ref{thm:framework}.
Consider an iteration of the loop in line~\ref{sta:start_process_stream}-\ref{sta:end_process_stream} of Algorithm~\ref{alg:sliding_framework}.
We only need to compute filter function, bucketing function and processing function $O(\log W)$ times since $|O|=O(\log W)$ and $t=1$.
It takes $O(k\log(1/\delta)\log(n)\log(W))$ time.
Since each tuple in the sketch has size $1$, the loop in line~\ref{sta:start_bucketing_step}-\ref{sta:end_bucketing_step} of Algorithm~\ref{alg:sliding_framework} has at most one iteration each time.
Similarly, the loop in line~\ref{sta:while_start}-\ref{sta:while_end} of Algorithm~\ref{alg:sliding_framework} has at most one iteration each time.
Thus, the overall update time is at most $O(k\log(1/\delta)\log(n)\log(W))=O(k\log(1/\delta')\log(n)\log(W)\log\log(W))$ time.
At the end of the stream, to get the final approximation, we need to evaluate the recover function $O(\log W)$ times according to Algorithm~\ref{alg:sliding_framework}.
Since the size of each sketch is at most $\wt{O}(n\log(1/\delta')/\varepsilon^3)$, the time needed to compute the recover function for each sketch is at most $\wt{O}(n\log(1/\delta')/\varepsilon^3)$ according to Theorem~\ref{thm:bucketsketch_kcover}.
Thus, the overall running time to get the approximation at the end of the stream is at most $\wt{O}(n\log(1/\delta')\log(W)/\varepsilon^3)$.
\end{proof}

\begin{theorem}[Sliding window $k$-cover with $(1-\varepsilon)$-approximation]\label{thm:sliding_window_kcover_better_approx}
For any $\varepsilon,\delta'\in(0,0.5)$, there is a $(1-\varepsilon)$-approximation algorithm for $k$-cover in the edge-arrival sliding window model with window size $W\geq 1$ using space $\wt{O}(n\log(1/\delta')\log(W)/\varepsilon^3)$.
The update time is at most $\wt{O}(k\log(1/\delta')\log(n)\log(W))$.
The success probability is at least $1-\delta'$.
\end{theorem}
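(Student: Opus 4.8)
The plan is to instantiate Theorem~\ref{thm:framework} with the bucketing-based sketch of Theorem~\ref{thm:bucketsketch_kcover}, exactly as in the proof of Theorem~\ref{thm:sliding_window_kcover_fast_runtime}, but with the approximation parameter set to $\alpha = 1$ instead of $\alpha = 1 - 1/e$. Since the window has size $W$, the optimal $k$-cover value $\OPT_k(G)$ always lies in $[1, W]$, so we take $m = 1$ and $M = W$ in Algorithm~\ref{alg:sliding_framework}. We set the per-level failure probability to $\delta = \Theta(\delta'/\log W)$ and feed the family of sketches $Z^o(\cdot)$ from Theorem~\ref{thm:bucketsketch_kcover} (with accuracy parameter $\varepsilon$ rescaled by a small constant) into the framework.

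First I would verify the two hypotheses of Theorem~\ref{thm:framework}. By Theorem~\ref{thm:bucketsketch_kcover} with $\alpha = 1$: (i) conditioned on $\OPT_k(G) \geq o$, the recover function over $Z^o(X)$ — which now performs a brute-force search over all size-$k$ subfamilies of $H'_p$ — outputs a $(1-\varepsilon)$-approximation with probability at least $1 - \delta$; and (ii) if $\OPT_k(G) \leq 2 o$, then with probability at least $1 - \delta$ the space budget of $Z^o(X)$ is at most $S := O(n\log(1/\delta)\log(1/\varepsilon)\log n/\varepsilon^3)$, and since $H'_p$ is always well-defined the recover function never outputs \textbf{FAIL} in this event. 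Hence $Z^o(\cdot)$ is a weak $o$-restricted $(1-\varepsilon)$-approximate bucketing-based sketch meeting the requirements of Theorem~\ref{thm:framework}.

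Then I would read off the conclusion directly. Theorem~\ref{thm:framework} yields a sliding window algorithm outputting a $(1-\varepsilon)$-approximation to $\OPT_k(G)$ using space $O(S \cdot \log(M/m)) = O(n\log(1/\delta)\log(1/\varepsilon)\log n/\varepsilon^3 \cdot \log W) = \wt{O}(n\log(1/\delta')\log W/\varepsilon^3)$, with success probability at least $1 - O(\delta \log W) = 1 - \delta'$. For the update time I would repeat the refined counting from the proof of Theorem~\ref{thm:sliding_window_kcover_fast_runtime}: here $t = 1$ and $|O| = O(\log W)$, so each stream step evaluates the filter, bucketing, and processing functions $O(\log W)$ times at cost $O(k\log(1/\delta)\log n)$ each (Theorem~\ref{thm:bucketsketch_kcover}); moreover every tuple stored has unit size, so the eviction loops in lines~\ref{sta:start_bucketing_step}--\ref{sta:end_bucketing_step} and lines~\ref{sta:while_start}--\ref{sta:while_end} of Algorithm~\ref{alg:sliding_framework} run at most one iteration per step, giving total update time $O(k\log(1/\delta)\log n\log W) = \wt{O}(k\log(1/\delta')\log n\log W)$.

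I do not expect a genuine obstacle: the statement is a direct specialization of the framework. The only substantive change from Theorem~\ref{thm:sliding_window_kcover_fast_runtime} is that with $\alpha = 1$ the recover function is the brute-force search of cost $O(n^k \cdot |Z^o(X)|)$ rather than the near-linear-time greedy algorithm, so no polynomial end-of-stream running time is claimed — consistent with the theorem statement, which omits that guarantee. The one point to handle carefully is the constant-factor rescaling of $\varepsilon$ and $\delta$ so that, after the union bound over the $O(\log W)$ levels in Algorithm~\ref{alg:sliding_framework}, the final approximation is $(1-\varepsilon)$ and the overall failure probability is $\delta'$.
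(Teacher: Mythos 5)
Your proposal is correct and follows essentially the same route as the paper: the paper proves Theorem~\ref{thm:sliding_window_kcover_fast_runtime} by plugging the sketch of Theorem~\ref{thm:bucketsketch_kcover} into Theorem~\ref{thm:framework} with $\delta=\Theta(\delta'/\log W)$, and then obtains Theorem~\ref{thm:sliding_window_kcover_better_approx} by simply switching the approximation parameter to $\alpha=1$, exactly as you do, including the refined per-step update-time count and the observation that the only change is the (non-polynomial) brute-force recover step.
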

\begin{proof}
The proof is almost the same as the proof of Theorem~\ref{thm:sliding_window_kcover_fast_runtime} except that we choose the approximation parameter $\alpha=1$ for the bucketing-based sketch in Theorem~\ref{thm:bucketsketch_kcover}.
\end{proof}

\section{Diversity Maximization in the Sliding Window Model}\label{sec:div_max_sliding_window}
We consider diversity maximization problems in $d$-dimensional Euclidean space. 
In these problems, we are give a set of points $P\subset \mathbb{R}^d$, and the goal is to find a subset $Q$ of $k$ points with the maximum diversity. 
The diversity of a set $Q,$ $\mathrm{div}(Q)$ depends on the distances between points in $Q$.
In this section, we study diversity maximization problems with various well-known diversity functions $\mathrm{div}(\cdot)$~\cite{indyk2014composable,borassi2019better}. 
In particular, the descriptions of these diversity functions are listed in Table~\ref{tab:diversity_functions}.

When $k,d$ and $\mathrm{div}(\cdot)$ are specified, we use $\OPT(P)$ to denote the optimal cost of $P$, i.e., $\OPT(P)=\max_{Q\subseteq P:|Q| = k} \mathrm{div}(Q)$.
Let $\wb{\OPT}(P)$ denote $\OPT(P)$ divided by the number of pairwise distances that are summed in the diversity function $\mathrm{div}(\cdot)$.
In particular, for remote-edge, $\wb{\OPT}(P)=\OPT(P)$; for remote-clique, $\wb{\OPT}(P)=\OPT(P)/(k(k-1)/2)$; for remote-tree and remote-star, $\wb{\OPT}(P)=\OPT(P)/(k-1)$; for remote-cycle, remote $t$-cycles and remote-pseudoforest, $\wb{\OPT}(P)=\OPT(P)/k$; for remote $t$-trees, $\wb{\OPT}(P)=\OPT(P)/(k-t)$; for remote-bipartition, $\wb{\OPT}(P)=\OPT(P)/(\lfloor k/2 \rfloor \cdot \lceil k/2 \rceil)$; and for remote-matching, $\wb{\OPT}(P)=\OPT(P)/(k/2)$.
The goal of estimating $\OPT(P)$ becomes to estimate $\wb{\OPT}(P)$.

\subsection{Offline Sketch via Discretization of the Space}\label{sec:offline_sketch_diversity}
In this section, we show how to get a subset of points (sketch) of $P$ such that a good approximate solution of the obtained subset of points is a good approximate solution for $P$.
Given a parameter $\mu\in\mathbb{R}_{>0}$, let $\mathcal{G}_{\mu}\subset \mathbb{R}^d$ denote the set of regular grid points with side length $\mu$, i.e., $\mathcal{G}_{\mu} = \{(x_1,x_2,\cdots,x_d)\in\mathbb{R}^d\mid \forall i\in [d], x_i=\mu\cdot y_i,y_i\in \mathbb{Z}\}$.
Let $\hslash_{\mu}:\mathbb{R}^d\rightarrow \mathcal{G}_{\mu}$ be a mapping constructed as the following:
\begin{align*}
    \forall x=(x(1),x(2),\cdots,x(d))\in\mathbb{R}^d,\hslash_{\mu}(x) = (\lfloor x(1)/\mu \rfloor, \lfloor x(2)/\mu \rfloor, \cdots \lfloor x(d)/\mu \rfloor)\cdot \mu.
\end{align*}
\begin{fact}\label{fac:net_point}
$\forall p\not= q\in\mathcal{G}_{\mu}$, $\|p-q\|_2\geq \mu$. $\forall p\in \mathbb{R}^d,\|p-\hslash_{\mu}(p)\|_2\leq \sqrt{d}\cdot \mu$.
\end{fact}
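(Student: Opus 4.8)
The statement to prove is Fact~\ref{fac:net_point}, which has two parts: (i) any two distinct grid points in $\mathcal{G}_{\mu}$ are at distance at least $\mu$, and (ii) every point $p\in\mathbb{R}^d$ is within distance $\sqrt{d}\cdot\mu$ of its rounding $\hslash_{\mu}(p)$. Both are elementary geometric facts about the axis-aligned grid of side length $\mu$, so the plan is just to unwind the definitions.

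For part (i), I would write $p=(\mu y_1,\ldots,\mu y_d)$ and $q=(\mu y'_1,\ldots,\mu y'_d)$ with all $y_i,y'_i\in\mathbb{Z}$, as in the definition of $\mathcal{G}_\mu$. Since $p\neq q$, there is some coordinate $j\in[d]$ with $y_j\neq y'_j$, hence $|y_j-y'_j|\geq 1$ because these are integers. Then
\begin{align*}
\|p-q\|_2^2=\sum_{i\in[d]}\mu^2(y_i-y'_i)^2\geq \mu^2(y_j-y'_j)^2\geq \mu^2,
\end{align*}
so $\|p-q\|_2\geq\mu$.

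For part (ii), fix $p=(p(1),\ldots,p(d))\in\mathbb{R}^d$. By definition $\hslash_\mu(p)$ has $i$-th coordinate $\lfloor p(i)/\mu\rfloor\cdot\mu$, and since $t-\lfloor t\rfloor\in[0,1)$ for every real $t$, we get $|p(i)-\lfloor p(i)/\mu\rfloor\cdot\mu|=\mu\cdot(p(i)/\mu-\lfloor p(i)/\mu\rfloor)<\mu$ for each coordinate $i$. Summing the squares over $i\in[d]$ gives $\|p-\hslash_\mu(p)\|_2^2<d\mu^2$, hence $\|p-\hslash_\mu(p)\|_2\leq\sqrt{d}\cdot\mu$ (the non-strict bound is all that is claimed, and it also covers the boundary case where a coordinate of $p$ is an exact multiple of $\mu$).

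There is no real obstacle here — the only things to be careful about are keeping the coordinate-wise bounds consistent (strict $<\mu$ per coordinate versus the non-strict $\leq\sqrt d\,\mu$ in aggregate) and correctly handling that $\lfloor\cdot\rfloor$ rounds toward $-\infty$, which does not matter since only the fractional gap $t-\lfloor t\rfloor$ enters. If one wanted a cleaner bound one could instead round to the \emph{nearest} grid point and get $\tfrac{\sqrt d}{2}\mu$, but the stated $\sqrt d\,\mu$ suffices for the downstream discretization argument, so I would not bother.
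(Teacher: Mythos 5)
Your proof is correct: part (i) follows from integrality of the grid coordinates and part (ii) from the per-coordinate floor bound, exactly as one would expect. The paper states Fact~\ref{fac:net_point} without proof, treating it as immediate from the definition of $\mathcal{G}_{\mu}$ and $\hslash_{\mu}$, and your write-up is precisely the standard elementary verification that is being omitted, with the strict-versus-non-strict issue handled correctly.
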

Let $T_{\mathrm{div}}\in\mathbb{Z}_{\geq 1}$ be a parameter which only depends on the type of the diversity function.
In particular $T_{\mathrm{div}} = 1$ for remote-edge, remote-tree, remote-cycle, remote $t$-trees and remote $t$-cycles, and $T_{\mathrm{div}} = k$ for remote-clique, remote-star, remote-bipartition, remote-pseudoforest and remote-matching.
The sketch $S_{\mu}$ is constructed as follows. 
$S_{\mu}$ is a subset of points of $P$.
Initialize $S_{\mu}$ to be \emph{arbitrary} $k$ points from $P$ to ensure $|S_{\mu}|\geq k$.
For each point $q\in\mathcal{G}_{\mu}$ in the image of $P$ under $\hslash_{\mu}$, if $|\{p\in P\mid \hslash_{\mu}(p)=q\}|\leq T_{\mathrm{div}}$, add all points $p\in P$ with $\hslash_{\mu}(p) = q$ into $S_{\mu}$, otherwise, add \emph{arbitrary} $T_{\mathrm{div}}$ points $p\in P$ with $\hslash_{\mu}(p) = q$ into $S_{\mu}$.
The properties of $S_{\mu}$ is stated in the following lemma. 
%If $|S_{\mu}|\leq k$, we define $\wb{\OPT}(S_{\mu})=\OPT(S_{\mu})=0$.

\begin{lemma}\label{lem:div-offline}
Let $\varepsilon \in (0,0.5)$. 
For $0<\mu\leq  \varepsilon\cdot \wb{\OPT}(P)/(10\sqrt{d})$, $\OPT(P)\geq \OPT(S_{\mu})\geq (1-\varepsilon)\cdot \OPT(P)$.
Furthermore, for any $C > 0$, if $\mu \geq C\cdot \varepsilon\cdot \wb{\OPT}(P)/(10\sqrt{d})$,  $|S_{\mu}|\leq k\cdot(2\sqrt{d}+20\sqrt{d}/(C\cdot \varepsilon) + 1)^d\cdot T_{\div}$+k.
\end{lemma}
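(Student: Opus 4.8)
The plan is to prove the two halves of Lemma~\ref{lem:div-offline} separately, with the first half (the approximation guarantee) relying on a snap-to-grid argument and the second half (the size bound) relying on a volume/packing estimate.

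For the \emph{approximation} part, first I would observe that $\OPT(S_\mu)\le\OPT(P)$ is immediate since $S_\mu\subseteq P$. For the lower bound, let $Q^*\subseteq P$ with $|Q^*|=k$ attain $\OPT(P)$. The idea is to replace each $p\in Q^*$ by a nearby representative that survives in $S_\mu$: by construction, for the grid cell $q=\hslash_\mu(p)$ the set $S_\mu$ keeps at least $\min(T_{\mathrm{div}},|\{p'\in P:\hslash_\mu(p')=q\}|)$ points of that cell, and since $T_{\mathrm{div}}\ge 1$ we can always pick at least one surviving point $\phi(p)\in S_\mu$ with $\hslash_\mu(\phi(p))=\hslash_\mu(p)$; when $T_{\mathrm{div}}=k$ we can moreover do this injectively across all of $Q^*$, so that $Q'=\{\phi(p):p\in Q^*\}$ has $|Q'|=k$. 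By Fact~\ref{fac:net_point}, $\|p-\phi(p)\|_2\le\|p-\hslash_\mu(p)\|_2+\|\hslash_\mu(\phi(p))-\phi(p)\|_2\le 2\sqrt d\,\mu$, so every pairwise distance changes by at most $4\sqrt d\,\mu$ under the map $p\mapsto\phi(p)$ (triangle inequality). Now the key case analysis: for the "sum"-type objectives (remote-clique, remote-star, remote-tree, remote-cycle, remote $t$-trees/$t$-cycles, remote-bipartition, remote-pseudoforest, remote-matching) the diversity is a sum of some number $\rho$ of pairwise distances, where $\rho$ is exactly the normalizing factor defining $\wb{\OPT}$; hence $\mathrm{div}(Q')\ge\mathrm{div}(Q^*)-4\sqrt d\,\mu\cdot\rho=\OPT(P)-4\sqrt d\,\mu\cdot\rho\ge(1-\varepsilon)\OPT(P)$ using $\mu\le\varepsilon\,\wb{\OPT}(P)/(10\sqrt d)=\varepsilon\,\OPT(P)/(10\sqrt d\,\rho)$. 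For remote-edge the same estimate applies with the min in place of the sum ($\rho=1$). One subtlety I would flag: for objectives defined via a minimization over structures (spanning trees, TSP tours, matchings, bipartitions), perturbing the point set by $\le 2\sqrt d\,\mu$ changes the cost of the optimal structure by at most $4\sqrt d\,\mu$ times the number of edges in that structure, which is again $\rho$; this needs the observation that the combinatorial structure attaining the min for $Q^*$ can be transported to $Q'$ and vice versa. This case check is the part that needs the most care, though each individual case is routine.

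For the \emph{size} part, I would bound $|S_\mu|$ by $k+T_{\mathrm{div}}$ times the number of nonempty grid cells in the image $\hslash_\mu(P)$. The point is that all of $P$—in particular an optimal $k$-subset—has bounded diameter in terms of $\wb{\OPT}(P)$: for remote-edge type functions, any grid cell that is "far" cannot help, but more simply I would argue that the nonempty cells we must count all lie within a bounded region. Concretely, fix an optimal solution $Q^*$; its points span a region of diameter $O(\OPT(P))$ at most (each pairwise distance is at most $\diam$, and for the relevant objectives $\diam\le O(\wb{\OPT}(P))$ after accounting for the normalization — here I'd use that $\OPT(P)\le \rho\cdot\diam(P)$ can be turned around, or more directly restrict attention as in \cite{ceccarello2017mapreduce} to the cells actually reachable). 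The number of grid cells of side $\mu$ inside a ball of radius $R$ is at most $(2R/\mu+1)^d$; plugging $R=O(\sqrt d\,\wb{\OPT}(P))$ and $\mu\ge C\varepsilon\,\wb{\OPT}(P)/(10\sqrt d)$ gives $(2\sqrt d\cdot 10\sqrt d/(C\varepsilon)+\text{lower order})^d$, i.e. the claimed $(2\sqrt d+20\sqrt d/(C\varepsilon)+1)^d$ after collecting constants. Multiplying by $T_{\mathrm{div}}$ for the per-cell budget and adding the initial $k$ seed points yields the bound $k\cdot(2\sqrt d+20\sqrt d/(C\varepsilon)+1)^d\cdot T_{\mathrm{div}}+k$.

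The main obstacle I expect is making the diameter bound in the size part clean and uniform across all diversity functions: unlike the approximation part where every objective is literally a sum/min of $\rho$ distances, here I need that the cells worth keeping all sit in a bounded window, which is true because points outside that window can be swapped out of any near-optimal solution without loss — but spelling this out per objective (especially for remote $t$-trees and remote-bipartition, where the structure is less rigid) is where the real work lies. A secondary, lesser issue is the injectivity of $\phi$ in the $T_{\mathrm{div}}=k$ regime: I need that when a grid cell contains several points of $Q^*$, it also retains that many points in $S_\mu$, which holds precisely because $T_{\mathrm{div}}=k\ge|Q^*|$ forces every cell to keep all its points up to $k$, hence at least as many as $Q^*$ places there.
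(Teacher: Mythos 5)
Your approximation argument is essentially the paper's (snap each optimal point to a surviving representative in its grid cell, lose at most $4\sqrt{d}\mu$ per summed distance), but your size argument has a genuine gap, and it is exactly the part you flag as ``where the real work lies.'' The claim that the nonempty cells of $\hslash_{\mu}(P)$ ``all lie within a bounded region'' of radius $O(\sqrt{d}\cdot\wb{\OPT}(P))$ is false in general: take $P$ to be, say, $k-1$ tight clusters placed arbitrarily far apart. Then every $k$-subset contains two points in the same cluster, so $\wb{\OPT}(P)$ is tiny for every listed objective, yet the nonempty cells are spread over an unboundedly large region, and no single ball of radius $O(\sqrt{d}\,\wb{\OPT}(P))$ contains them. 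Your fallback --- ``points outside that window can be swapped out of any near-optimal solution'' --- does not help either, because the lemma bounds $|S_{\mu}|$ for the sketch as actually constructed, which keeps up to $T_{\div}$ points from \emph{every} nonempty cell, not only from cells near an optimal solution. The paper's missing key idea is a greedy covering argument: there exist fewer than $k$ points $Q\subseteq P$ such that every point of $P$ is within distance $\wb{\OPT}(P)$ of $Q$ (if the greedy procedure produced $k$ centers, they would be pairwise more than $\wb{\OPT}(P)$ apart, yielding a $k$-subset of normalized diversity exceeding $\wb{\OPT}(P)$, a contradiction). One then packs the $\mu$-separated image points into the $|Q|<k$ balls of radius $\wb{\OPT}(P)+O(\sqrt{d}\mu)$, so the factor $k$ in the bound comes from the number of covering balls, and the per-ball count is $(2\sqrt{d}+2\wb{\OPT}(P)/\mu+1)^d$. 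Note also that even granting your single-window claim, plugging $R=O(\sqrt{d}\,\wb{\OPT}(P))$ and $\mu\geq C\varepsilon\wb{\OPT}(P)/(10\sqrt{d})$ into $(2R/\mu+1)^d$ gives $(O(d/(C\varepsilon)))^d$, not the claimed $(2\sqrt{d}+20\sqrt{d}/(C\varepsilon)+1)^d$; ``collecting constants'' cannot absorb the extra $\sqrt{d}$.

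A secondary, smaller issue is in the approximation part for the $T_{\div}=1$ objectives (remote-tree, remote-cycle, remote $t$-trees, remote $t$-cycles, remote-edge): there your map $\phi$ need not be injective, since a cell containing several optimal points retains only one, so $Q'=\{\phi(p):p\in Q^*\}$ may have fewer than $k$ distinct points and is not yet a feasible $k$-subset of $S_{\mu}$. The paper handles this by working with the multiset $Q'$ and then replacing duplicates by other points of $S_{\mu}$ (which is why the construction seeds $S_{\mu}$ with $k$ arbitrary points); your proposal silently treats $Q'$ as a $k$-subset for these objectives and needs this extra step spelled out.
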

\begin{proof}
Let us consider the size of $S_{\mu}$ when $\mu \geq C\cdot \varepsilon \cdot \wb{\OPT}(P)/(10\sqrt{d})$ for some $C>0$.
We claim that there exists a subset $Q\subseteq P$ with $|Q|< k$ such that $\forall p\in P,\min_{q\in Q}\|p-q\|_2\leq \wb{\OPT}(P)$.
We use the following procedure to find $Q$:
\begin{enumerate}
    \item Mark every point in $P$ as \emph{uncovered}. Let $Q\gets \emptyset$.
    \item Choose an arbitrary uncovered point $q\in P$ and add $q$ into $Q$.
    \item Mark every point $p\in P$ with $\|p-q\|_2\leq \wb{\OPT}(P)$ as \emph{covered}.
    \item Repeat above two steps until $P$ is empty.
\end{enumerate}
It is easy to verify that $\forall p\in P$, there exists $q\in Q$ such that $\|p-q\|_2\leq \wb{\OPT}(P)$.
If $|Q|\geq k$, we can choose a subset $Q'\subseteq Q$ such that $|Q'|=k$.
Notice that the pairwise distances between points in $Q'$ are always greater than $\wb{\OPT}(P)$ which implies $\div(Q')> \wb{\OPT}(P)$ and thus leads to a contradiction.
Let us consider the size of image set $I=\{\hslash_{\mu}(p) \mid p\in P\}$. %{\bf \color{red} ale: what's difference between I and $S_\mu$?}
By Fact~\ref{fac:net_point}, $\forall x\in I,\exists p\in P,\|x-p\|_2\leq \sqrt{d}\cdot \mu$ which implies that $\exists q\in Q, \|x-q\|_2\leq \sqrt{d}\cdot \mu +\wb{\OPT}(P)$.
By Fact~\ref{fac:net_point}, $\forall x,y\in I$, $\|x-y\|_2\geq \mu$.
Let $B(x,r)$ denote the ball centered at $x$ with radius $r$, i.e., $B(x,r)=\{y\in\mathbb{R}^d\mid \|x-y\|_2\leq r\}.$
Then, we know that $\bigcup_{x\in I} B(x,\mu/2) \subseteq \bigcup_{q\in Q} B(q, \sqrt{d}\cdot \mu + \wb{\OPT}(P) + \mu / 2)$.
By analyzing the volume of the balls, we have $|I|\leq |Q|\cdot (2\cdot \sqrt{d}+2\cdot \wb{\OPT}(P)/\mu + 1)^d\leq k\cdot (2\sqrt{d}+ 20\sqrt{d}/(C\cdot \varepsilon ) + 1)^d$.
Since for each $x\in I$, we keep at most $T_{\div}$ points $p\in P$ with $\hslash_{\mu}(p) = x$, the size of $S_{\mu}$ is at most  $k\cdot(2\sqrt{d}+20\sqrt{d}/(C\cdot \varepsilon) + 1)^d\cdot T_{\div} + k$.

Next, let us prove the approximation.
Firstly, consider the cases for remote-clique, remote-star, remote-bipartition, remote-pseudoforest and remote-matching.
In these cases, $T_{\div} = k$.
Let $Q^* = \{q_1^*, q_2^*,\cdots, q_k^*\}$ be the optimal solution, i.e., $\div(Q^*)=\OPT(P)$.
We can find a set $Q'=\{q'_1,q'_2,\cdots,q'_k\}\subseteq S_{\mu}$ such that $\forall i\in [k],\hslash{\mu}(q^*_i)=\hslash_{\mu}(q'_i)$.
By fact~\ref{fac:net_point}, we have $\forall i\in[k], \|q^*_i-q'_i\|_2\leq 2 \sqrt{d}\cdot \mu$.
Thus, $\forall i,j\in[k]$, $ \|q^*_i-q^*_j\|_2\leq \|q'_i-q'_j\|_2 + 4\sqrt{d}\cdot \mu$.
Since $\mu\leq \varepsilon\cdot \wb{\OPT}(P)/(10\sqrt{d})$, we have
\begin{align*}
\wb{\OPT}(P)\geq \wb{\OPT}(S_{\mu})\geq \div(Q')\geq \wb{\OPT}(P) - 4\sqrt{d}\cdot \mu \geq (1-\varepsilon)  \cdot \wb{\OPT}(P)
\end{align*}
which implies that $\OPT(P)\geq \OPT(S_{\mu})\geq (1-\varepsilon)\cdot \OPT(P)$.

Then consider the cases for remote-edge, remote-tree, remote-cycle, remote $t$-trees and remote $t$-cycles.
Let $Q^* = \{q_1^*, q_2^*,\cdots, q_k^*\}$ be the optimal solution, i.e., $\div(Q^*)=\OPT(P)$.
Let $Q'=\{q'_1,q'_2,\cdots,q'_k\}$ be a multi-set such that $\forall i\in[k],\hslash_{\mu}(q^*_i)=\hslash_{\mu}(q'_i)$.
By the similar argument, we can show that $\div(Q')\geq \wb{\OPT}(P)-4\sqrt{d}\cdot \mu\geq (1-\varepsilon)\cdot \wb{\OPT}(P)$.
Notice that if we replace the duplicated points with arbitrary points, the diversity does not decrease. 
So we can find a set $Q''\subseteq S_{\mu}$ such that $\div(Q'')\geq \div(Q')$.
Thus, $\OPT(P)\geq \OPT(S_{\mu})\geq \div(Q'')\geq (1-\varepsilon)\cdot \OPT(P)$.
\end{proof}

\subsection{Bucketing-based Sketch for Diversity Maximization}
To design an efficient sliding window algorithm for diversity maximization, we need to develop an efficient bucketing-based sketch according to Theorem~\ref{thm:framework}.
In this section, we show how to construct $S_{\mu}$ described in Section~\ref{sec:offline_sketch_diversity} via bucketing-based sketch.
We suppose that the point set $P\subseteq [\Delta]^d$.
If $\wb{\OPT}(P)\not = 0,$ we have $\wb{\OPT}(P)\in[1,\sqrt{d}\cdot \Delta]$.
In this section, we consider the case when $\wb{\OPT}(P)>0$.
We will handle the case when $\wb{\OPT}(P)=0$ in our final sliding window diversity maximization algorithm (see the proof of Theorem~\ref{thm:sliding_window_div}).
Let $o\in[1,\sqrt{d}\cdot \Delta]$.
Let $\varepsilon\in(0,0.5)$.
Let $\mu  = \varepsilon\cdot o/(10\sqrt{d})$.
Let $\hslash_{\mu}(\cdot)$, $\mathcal{G}_{\mu}$ and $T_{\div}$ be the same as described in Section~\ref{sec:offline_sketch_diversity}.
We construct our bucketing-based sketch $Z(\cdot)$ as following.
$Z(\cdot)$ is composed by two sub-sketches $(Z_1(\cdot),Z_2(\cdot))$.
The goal of $Z_1(P)$ is to maintain arbitrary $k$ points of $P$.
The goal of $Z_2(P)$ is to maintain other points in $S_{\mu}$.
According to Definition~\ref{def:bucketing_based_sketch}, we need to specify the filter functions $h_1(\cdot),h_2(\cdot)$, the bucketing functions $g_1(\cdot),g_2(\cdot)$, the processing functions $\zeta_1(\cdot),\zeta_2(\cdot)$ and the thresholds $T_1,T_2$.
For $Z_1(\cdot)$, we construct $h_1(\cdot),g_1(\cdot),\zeta_1(\cdot)$ and $T_1$ as follows:
\begin{enumerate}
\item $\forall x\in \mathbb{R}^d,h_1(x)=1$, i.e., no points are filtered.
\item $Z_1(\cdot)$ only has one bucket, i.e., $\forall x\in\mathbb{R}^d$, $g_1(\cdot)$ always maps $x$ to the same bucket in $Z_1(\cdot)$.
\item $\forall x\in \mathbb{R}^d,\zeta_1(x)=x$.
\item $T_1=k$.
\end{enumerate}
For $Z_2(\cdot)$, we construct $h_2(\cdot),g_2(\cdot),\zeta_2(\cdot)$ and $T_2$ as follows:
\begin{enumerate}
\item $\forall x\in\mathbb{R}^d,h_2(x)=1$, i.e., no points are filtered.
\item The buckets $\mathcal{B}=\mathcal{G}_{\mu}$, i.e., each grid point in $\mathcal{G}_{\mu}$ corresponds to a bucket. 
$\forall x\in\mathbb{R}^d,g_2(x)=\hslash_{\mu}(x)$.
\item $\forall x\in \mathbb{R}^d,\zeta_2(x)=x$.
\item $T_2=T_{\div}$.
\end{enumerate}

\begin{lemma}\label{lem:recover_Smu}
$|Z(P)|=|S_{\mu}|$, the space budget of $Z(P)$ is at most $O(|S_{\mu}|\cdot d)$, and $S_{\mu}$ can be constructed via $Z(P)$ in the time linear in the size of $Z(P)$.
\end{lemma}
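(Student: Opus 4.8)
The plan is to show that the two sub-sketches $Z_1(\cdot)$ and $Z_2(\cdot)$ literally carry out the two phases of the construction of $S_{\mu}$ from Section~\ref{sec:offline_sketch_diversity}, so that $Z(P)$ records exactly the points making up $S_{\mu}$, and then to read $S_{\mu}$ off by a single scan of $Z(P)$.

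First I would instantiate Definition~\ref{def:bucketing_based_sketch} for $Z_1(\cdot)$. Since $h_1\equiv 1$ nothing is filtered, $g_1$ routes every point into a single bucket $B$, $\zeta_1(x)=x$, and the threshold is $T_1=k$ (each stored entry counted as one point toward the threshold, matching the convention used for $k$-cover in Lemma~\ref{lem:recover_Hp}). By condition~\ref{it:bucket_sketch_2b} of Definition~\ref{def:bucketing_based_sketch}, $Z_1(P)$ then keeps all of $P$ if $|P|\le k$ and an arbitrary $k$-subset of $P$ otherwise; since $|P|\ge k$ (as $S_{\mu}$ is by definition initialized with $k$ points of $P$), $Z_1(P)$ is exactly an arbitrary $k$-subset of $P$, i.e.\ the initialization of $S_{\mu}$. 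I would do the same for $Z_2(\cdot)$: again $h_2\equiv 1$, while $g_2(x)=\hslash_{\mu}(x)$ places each point in the bucket indexed by its grid cell, $\zeta_2(x)=x$, and $T_2=T_{\div}$. Hence, for every cell $q\in\mathcal{G}_{\mu}$ appearing in the image of $P$ under $\hslash_{\mu}$, condition~\ref{it:bucket_sketch_2b} keeps all of $\{p\in P:\hslash_{\mu}(p)=q\}$ when there are at most $T_{\div}$ such points and an arbitrary $T_{\div}$ of them otherwise --- exactly the second phase of the construction of $S_{\mu}$.

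Coupling the ``arbitrary'' choices made in $S_{\mu}$ with those made by $Z_1(P)$ and $Z_2(P)$, the set of points recorded across $Z(P)=(Z_1(P),Z_2(P))$ is a valid realization of $S_{\mu}$, so $|Z(P)|=|S_{\mu}|$ (and even counting the two sub-sketches separately adds at most $k\le|S_{\mu}|$ more, since $|S_{\mu}|\ge k$). Each stored entry is a pair $(x,\zeta_i(x))=(x,x)$ holding a single point of $[\Delta]^d$, i.e.\ $O(d)$ words, and the sketch retains $O(|S_{\mu}|)$ such points, so its space budget $\sum_{i}\sum_{B}\min\bigl(T_i,\sum_{x}|\zeta_i(x)|\bigr)$ is $O(|S_{\mu}|\cdot d)$. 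Finally, $S_{\mu}$ is reconstructed by a single pass over $Z(P)$ collecting the distinct first coordinates of the pairs in $Z_1(P)\cup Z_2(P)$, which runs in time linear in the size of $Z(P)$.

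The one point that needs care is arguing that $Z(P)$ genuinely realizes $S_{\mu}$: both constructions fix their selections only ``arbitrarily'', so one must choose them consistently (in particular, for a grid cell with more than $T_{\div}$ points that also contains one of the $k$ initially kept points, retain $T_{\div}$ points of the cell that avoid that point) so that $|Z(P)|=|S_{\mu}|$ holds on the nose rather than merely up to an additive $k$. The reduction of the space budget to $O(|S_{\mu}|\cdot d)$ and the linear-time reconstruction are then routine bookkeeping against Definition~\ref{def:bucketing_based_sketch}.
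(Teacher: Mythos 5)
Your proposal is correct and takes essentially the same route as the paper: both identify $Z_1$ with the arbitrary-$k$-points initialization and the buckets of $Z_2$ with the per-cell selections of $S_{\mu}$, invoke the arbitrary-tiebreak freedom to say the sketch realizes a valid $S_{\mu}$, and then read off the $O(|S_{\mu}|\cdot d)$ space budget and the linear-time reconstruction. The one wrinkle you flag --- exact equality $|Z(P)|=|S_{\mu}|$ when an initially kept point lies in a cell with at most $T_{\div}$ points, where overlap cannot be avoided by any tiebreak --- is glossed in the paper as well (it implicitly counts the initial $k$ points separately, consistent with the $+k$ in Lemma~\ref{lem:div-offline}), and your additive-$k$ fallback bound already suffices for every downstream use of the lemma.
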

\begin{proof}
Since $\zeta_1(x)=\zeta_2(x)=x$, we abuse the notation and regard each item in $Z_1(P)$ and $Z_2(P)$ as a point $x$ itself instead of the pair $(x,\zeta_1(x))$ or $(x,\zeta_2(x))$.
The construction of $S_{\mu}$ described in Section~\ref{sec:offline_sketch_diversity} is equivalent to the following process.
\begin{enumerate}
    \item Initialize $S_{\mu}=\emptyset$.
    \item Add all points stored in $Z_1(P)$ into $S_{\mu}$.
    Since the threshold $T_1=k$ and $g_1(\cdot)$ maps points into the same bucket, this step corresponds to adding arbitrary $k$ points from $P$ into $S_{\mu}$.
    \item For each bucket $q\in\mathcal{G}_{\mu}$, add all points $x\in Z_2(P)$  with $g_2(x)=q$ into $S_{\mu}$.
    Since the bucketing function $g_2(x)=\hslash_{\mu}(x)$ and the threshold $T_2=T_{\div}$, this step corresponds to adding all points $p\in P$ with $\hslash_{\mu}(p)=q$ into $S_{\mu}$ if $|\{p\in P\mid \hslash(p)=q\}|\leq T_{\div}$ and adding $T_{\div}$ points $p\in P$ with $\hslash_{\mu}(p)=q$ into $S_{\mu}$ otherwise.
\end{enumerate}
Therefore, $|Z(P)|=|S_{\mu}|$.
Since each point has dimension $d$, the space budget of $Z(P)$ is at most $O(|S_{\mu}|\cdot d)$ 
The construction time is linear.
\end{proof}

\begin{theorem}[Bucketing-based sketch for diversity maximization]\label{thm:bucketsketch_diversity}
Consider any diversity maximization problem over a set of points $P\subseteq [\Delta]^d$ with parameter $k\geq 1$.
Let $T_{\div} = 1$ if the diversity function is remote-edge, remote-tree, remote-cycle, remote $t$-trees or remote $t$-cycles, and let $T_{\div}=k$ if the diversity function is remote-clique, remote-star, remote-bipartition, remote-pseudoforest or remote-matching.
For any $\varepsilon\in(0,0.5),o\geq 1,\alpha\in[0,1]$, there always exists an $o$-restricted $(\alpha-\varepsilon)$-approximate bucketing-based sketch $Z^o(\cdot)$ such that for any $P\subseteq [\Delta]^d$,
\begin{enumerate}
\item the recover function over $Z^o(P)$ outputs an $(\alpha-\varepsilon)$-approximation to $\OPT(P)$ if $\wb{\OPT}(P)\geq o$;
\item if $\wb{\OPT}(P)\leq 2\cdot o$, the space budget of $Z^o(P)$ is at most $k\cdot T_{\div}\cdot O(\sqrt{d}/\varepsilon)^d\cdot d$.
\end{enumerate}
The time needed to compute a filter function, a bucketing function and a processing function on any $x\in\mathbb{R}^d$ is always upper bounded by $O(d)$.
The time to compute recover function over $Z^o(P)$ is at most $\mathcal{T}(|Z^o(P)|)$ where $\mathcal{T}(n')$ denotes the running time needed to compute an $\alpha$-approximation for the diversity maximization for a point set with $n'$ points in $[\Delta]^d$.
\end{theorem}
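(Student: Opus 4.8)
The plan is to take $Z^{o}(\cdot)$ to be exactly the two-part sketch $Z(\cdot)=(Z_1(\cdot),Z_2(\cdot))$ built in this section, instantiated with $\mu=\varepsilon\cdot o/(10\sqrt{d})$, equipped with the following recover function: reconstruct $S_{\mu}$ from $Z^{o}(P)$ using Lemma~\ref{lem:recover_Smu}, run an $\alpha$-approximation algorithm for the diversity problem on the point set $S_{\mu}\subseteq[\Delta]^d$, and return the resulting subset $Q\subseteq S_{\mu}$, so that $\div(Q)$ is the reported estimate of $\OPT(P)$. First I would observe that, by construction, the maps $h_1,h_2,g_1,g_2,\zeta_1,\zeta_2$ and the thresholds $T_1=k$, $T_2=T_{\div}$ conform to Definition~\ref{def:bucketing_based_sketch}; that the sketch is \emph{deterministic} (all filters are identically $1$ and all other maps are deterministic), so there is no failure event; and that Lemma~\ref{lem:recover_Smu} gives $|Z^{o}(P)|=|S_{\mu}|$, a space budget of $O(|S_{\mu}|\cdot d)$, and reconstruction of $S_{\mu}$ in time $O(|Z^{o}(P)|)$. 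Hence both claims reduce to controlling $|S_{\mu}|$ and $\OPT(S_{\mu})$, which is exactly what Lemma~\ref{lem:div-offline} provides once $\mu$ is positioned correctly relative to $\wb{\OPT}(P)$.

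For item~1, assume $\wb{\OPT}(P)\ge o$. Then $\mu=\varepsilon o/(10\sqrt{d})\le\varepsilon\,\wb{\OPT}(P)/(10\sqrt{d})$, so the first half of Lemma~\ref{lem:div-offline} yields $\OPT(P)\ge\OPT(S_{\mu})\ge(1-\varepsilon)\OPT(P)$. The set $Q\subseteq S_{\mu}$ returned by the $\alpha$-approximation algorithm satisfies $\alpha\cdot\OPT(S_{\mu})\le\div(Q)\le\OPT(S_{\mu})$, hence $\div(Q)\le\OPT(P)$ and $\div(Q)\ge\alpha(1-\varepsilon)\OPT(P)\ge(\alpha-\varepsilon)\OPT(P)$ since $\alpha\le 1$. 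Therefore the recover function outputs an $(\alpha-\varepsilon)$-approximation to $\OPT(P)$, and as nothing is randomized this holds with certainty.

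For item~2, assume $\wb{\OPT}(P)\le 2o$, i.e.\ $o\ge\wb{\OPT}(P)/2$, so $\mu=\varepsilon o/(10\sqrt{d})\ge\varepsilon\,\wb{\OPT}(P)/(20\sqrt{d})$; applying the second half of Lemma~\ref{lem:div-offline} with $C=1/2$ gives $|S_{\mu}|\le k\,(2\sqrt{d}+40\sqrt{d}/\varepsilon+1)^d\,T_{\div}+k=k\cdot T_{\div}\cdot O(\sqrt{d}/\varepsilon)^d$. By Lemma~\ref{lem:recover_Smu} the space budget of $Z^{o}(P)$ is then $O(|S_{\mu}|\cdot d)=k\cdot T_{\div}\cdot O(\sqrt{d}/\varepsilon)^d\cdot d$, as claimed. (This uses the standing assumption $\wb{\OPT}(P)>0$ of this section; the case $\wb{\OPT}(P)=0$ is handled separately in the proof of Theorem~\ref{thm:sliding_window_div}.) For the running times, the maps $h_1,h_2,g_1,\zeta_1,\zeta_2$ are trivial and $g_2=\hslash_{\mu}$ costs $O(d)$ (one floor per coordinate), so the per-point cost is $O(d)$; the recover function rebuilds $S_{\mu}$ in $O(|Z^{o}(P)|)$ time and then runs the $\alpha$-approximation on $|S_{\mu}|=|Z^{o}(P)|$ points of $[\Delta]^d$, for a total of $\mathcal{T}(|Z^{o}(P)|)$.

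The main obstacle here is bookkeeping rather than a new idea: the statement is essentially a corollary of Lemmas~\ref{lem:div-offline} and~\ref{lem:recover_Smu}. The points that require care are (i) choosing $\mu=\varepsilon o/(10\sqrt{d})$ so that the hypothesis $\wb{\OPT}(P)\ge o$ forces $\mu\le\varepsilon\,\wb{\OPT}(P)/(10\sqrt{d})$ (the regime where Lemma~\ref{lem:div-offline} gives the $(1-\varepsilon)$ approximation), while $\wb{\OPT}(P)\le 2o$ forces $\mu\ge\varepsilon\,\wb{\OPT}(P)/(20\sqrt{d})$ (the regime where Lemma~\ref{lem:div-offline} bounds $|S_{\mu}|$, with constant $C=1/2$); and (ii) tracking the arithmetic between $\OPT$ and $\wb{\OPT}$ and between $\alpha$ and $\alpha-\varepsilon$, and the fact that the space budget equals $d$ times the number of stored points, which is legitimate precisely because every $\zeta_i(x)$ has the same size $d$.
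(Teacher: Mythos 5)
Your proposal is correct and follows essentially the same route as the paper: instantiate the section's two-bucket sketch with $\mu=\varepsilon\cdot o/(10\sqrt{d})$, recover $S_{\mu}$ via Lemma~\ref{lem:recover_Smu}, and apply Lemma~\ref{lem:div-offline} in the two regimes $\wb{\OPT}(P)\geq o$ (approximation) and $\wb{\OPT}(P)\leq 2o$ (size, with constant $C$ a fixed fraction), together with the same $O(d)$ per-point and $\mathcal{T}(|Z^o(P)|)$ recovery-time accounting. Your explicit tracking of $C=1/2$ and of the determinism of the sketch is just a slightly more detailed rendering of the paper's argument, not a different approach.
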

\begin{proof}
We will prove that the sketch $Z(P)$ described in this section is the desired sketch $Z^o(P)$.

According to Lemma~\ref{lem:recover_Smu}, we can recover $S_{\mu}$ for $\mu= \varepsilon\cdot o/(10\sqrt{d})$ via $Z(P)$ in time linear in the size of $Z(P)$.
If $\wb{\OPT}(P)\geq o$, we have $\mu \leq \varepsilon \cdot \wb{\OPT}(P)/(10\sqrt{d})$.
According to Lemma~\ref{lem:div-offline}, we have $\OPT(P)\geq \OPT(S_{\mu})\geq (1-\varepsilon)\cdot \OPT(P)$.
According to Lemma~\ref{lem:recover_Smu}, we have $|S_{\mu}|=|Z(P)|$.
We can use $\mathcal{T}(|Z(P)|)$ time to find a subset $S\subseteq S_{\mu}\subseteq P$ with $|S|=k$ such that $\OPT(P)\geq \div(S)\geq \alpha \OPT(S_{\mu})\geq \alpha(1-\varepsilon)\OPT(P)\geq (\alpha-\varepsilon)\OPT(P)$.
If $\wb{\OPT}(P)\leq 2\cdot o$, according to Lemma~\ref{lem:div-offline}, we have $|S_{\mu}|\leq k\cdot T_{\div}\cdot O(\sqrt{d}/\varepsilon)^d$.
Thus the time needed to compute recover function over $Z(P)$ is at most $\mathcal{T}(|Z(P)|)$.
According to Lemma~\ref{lem:recover_Smu}, the space budget of $Z(P)$ is at most $O(|S_{\mu}|\cdot d)$, and thus the space budget of $Z(P)$ is at most $k\cdot T_{\div}\cdot O(\sqrt{d}/\varepsilon)^d\cdot d$.

Since each point $x$ has dimension $d$, the time to perform a filter function, a bucketing function and a processing function on $x$ is $O(d)$.
\end{proof}

\subsection{Sliding Window Algorithm for Diversity Maximization}
By plugging the bucketing-based sketch for diversity maximization into our algorithmic framework (Algorithm~\ref{alg:sliding_framework}), we are able to obtain an efficient sliding window algorithm for diversity maximization.

\begin{theorem}\label{thm:sliding_window_div}
For any $\varepsilon\in (0,0.5),\alpha\in [0,1]$ and any diversity function listed in Table~\ref{tab:diversity_functions}, there is a $(\alpha-\varepsilon)$-approximate algorithm for the diversity maximization problem with $k\geq 1$ for a point set from $[\Delta]^d$ in the sliding window model with window size $W\geq 1$ using space $kdT_{\div}\log(d\Delta)\cdot O(\sqrt{d}/\varepsilon)^d$ where $T_{\div}=1$ if the diversity function is remote-edge, remote-tree, remote-cycle, remote $t$-trees or remote $t$-cycles, and $T_{\div}=k$ if the diversity function is remote-clique, remote-star, remote-bipartition, remote-pseudoforest or remote-matching.
The update time is at most $O(d\log(d\Delta))$.
The running time to get the approximation at the end of the stream is at most $\mathcal{T}(k\cdot T_{\div}\cdot O(\sqrt{d}/\varepsilon)^d)\cdot O(\log(d\Delta))$ where $\mathcal{T}(n')$ denotes the running time needed to compute an $\alpha$-approximation for the diversity maximization for a point set with $n'$ points in $[\Delta]^d$.
\end{theorem}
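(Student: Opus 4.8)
The plan is to instantiate the generic reduction of Theorem~\ref{thm:framework} with the bucketing-based sketch of Theorem~\ref{thm:bucketsketch_diversity}, after separating out the degenerate instances with $\wb{\OPT}(X)=0$. In parallel with Algorithm~\ref{alg:sliding_framework} I would run a trivial sliding-window routine that keeps the $k$ most recent points of the current window and decides whether $\OPT(X)=0$: for every diversity function in Table~\ref{tab:diversity_functions}, whether $\OPT(X)=0$ depends only on the window length (which the algorithm always knows) and on the number of distinct point values in the window compared against a fixed threshold that is at most $k$ (e.g., fewer than $k$ distinct points for remote-edge, at most $1$ for remote-clique and remote-tree, at most $t$ for remote $t$-trees), and this is maintainable with $O(kd)$ space by keeping, for each of the up to $k$ distinct values with the latest occurrences, its most recent timestamp; see also the remark after Theorem~\ref{thm:framework}. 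When $\OPT(X)=0$ we output any $k$ points of the window, which is a valid $(\alpha-\varepsilon)$-approximation since its diversity is $0=\OPT(X)$.

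Otherwise $\wb{\OPT}(X)\in[1,\sqrt{d}\,\Delta]$, so I would apply Theorem~\ref{thm:framework} with $f(\cdot)=\wb{\OPT}(\cdot)$, $m=1$, $M=\sqrt{d}\,\Delta$. For each $o\in[1,\sqrt{d}\,\Delta]$, Theorem~\ref{thm:bucketsketch_diversity} provides a \emph{deterministic} $o$-restricted $(\alpha-\varepsilon)$-approximate bucketing-based sketch $Z^o(\cdot)$ with $t=2$ sub-sketches whose recover function always returns a $k$-subset and never outputs \textbf{FAIL}; hence hypothesis (1) of Theorem~\ref{thm:framework} holds trivially, and hypothesis (2) holds with $\delta=0$ and $S=kd\,T_{\div}\cdot O(\sqrt{d}/\varepsilon)^d$ by the space-budget bound of Theorem~\ref{thm:bucketsketch_diversity}(2). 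The point to check is that, even without an explicit failure flag, the scale $o^{*}$ chosen in line~\ref{sta:final_construction} is small enough: the proof of Theorem~\ref{thm:framework} guarantees $o^{*}\le o'\le\wb{\OPT}(X)$ where $\wb{\OPT}(X)\in[o',2o']$, so the grid width $\mu^{*}=\varepsilon o^{*}/(10\sqrt{d})\le\varepsilon\,\wb{\OPT}(X)/(10\sqrt{d})$ is fine enough for Lemma~\ref{lem:div-offline} to apply and the recovered $S\subseteq\hat H^{o^{*}}=Z^{o^{*}}(X)$ satisfies $\div(S)\ge(\alpha-\varepsilon)\OPT(X)$. Theorem~\ref{thm:framework} then gives space $O(S\log(M/m))=kd\,T_{\div}\log(d\Delta)\cdot O(\sqrt{d}/\varepsilon)^d$, deterministic (success probability $1$); the $O(kd)$ of the degenerate-case routine is absorbed, and rescaling $\varepsilon$ by a constant hides the accumulated $\varepsilon$-losses.

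For the update time, the generic bound $O(t\log(M/m)\,\mathcal{T}+S\log(M/m))$ from Theorem~\ref{thm:framework} is too weak (here $t=2$, $\mathcal{T}=O(d)$, $\log(M/m)=O(\log(d\Delta))$, so its $S\log(M/m)$ term is large), so I would repeat the sharper accounting from the proof of Theorem~\ref{thm:sliding_window_kcover_fast_runtime}: for each incoming point and each of the $O(\log(d\Delta))$ scales, one evaluates the two filter/bucketing/processing triples in $O(d)$ time, each affected bucket overflows by at most one point and is repaired by a single $O(d)$-time eviction of the oldest point, and since $|H^o|$ increases by at most $t=O(1)$ per point the outer trimming loop performs only $O(1)$ evictions, each $O(d)$; summing over scales gives $O(d\log(d\Delta))$. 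To report an answer at the end of the stream, the recover function is evaluated on $O(\log(d\Delta))$ candidate sketches, each with space budget $\le kd\,T_{\div}O(\sqrt{d}/\varepsilon)^d$, and by Theorem~\ref{thm:bucketsketch_diversity} each evaluation costs $\mathcal{T}(k\,T_{\div}\,O(\sqrt{d}/\varepsilon)^d)$ where $\mathcal{T}(n')$ is the running time of the chosen $\alpha$-approximate diversity algorithm on $n'$ points, for a total of $\mathcal{T}(k\,T_{\div}\,O(\sqrt{d}/\varepsilon)^d)\cdot O(\log(d\Delta))$.

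The main obstacle is conceptual rather than computational: since the diversity sketch of Theorem~\ref{thm:bucketsketch_diversity} never emits \textbf{FAIL}, one must confirm that Algorithm~\ref{alg:sliding_framework} still selects a scale $o^{*}$ at which Lemma~\ref{lem:div-offline}'s guarantee is in force, i.e., that the implicit failure signalled by the trimming pointer $l_o$ outrunning the window suffices to rule out the coarse-grid scales. Once that is in place, the space bound, the determinism, the refined update-time count, and the end-of-stream time all follow by routine bookkeeping on top of Theorems~\ref{thm:framework} and~\ref{thm:bucketsketch_diversity}, together with the standard observation that $\OPT(X)=0$ is detectable in the sliding-window model.
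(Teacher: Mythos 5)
Your instantiation of Theorem~\ref{thm:framework} with the sketch of Theorem~\ref{thm:bucketsketch_diversity}, the observation that no \textbf{FAIL} flag is needed because the framework already guarantees $o^{*}\le\wb{\OPT}(X)$ when $\wb{\OPT}(X)\in[m,M]$, and your refined update-time and end-of-stream accounting all match the paper's proof. The gap is in the degenerate case. You only treat the event $\OPT(X)=0$, and there you output an arbitrary $k$-subset. But the problematic regime is the window having fewer than $k$ distinct points while $\OPT(X)>0$: e.g., for remote-clique with $k=10$ and a window consisting of many copies of just two grid points at distance $1$, the optimum is $25$ (five copies of each), so $\OPT(X)>0$ yet $\wb{\OPT}(X)=25/45<1$. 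In this regime your degenerate routine does not fire, and the framework is being invoked outside its hypothesis $f(X)\in[m,M]$ with $m=1$: the smallest scale $o=1$ is selected (its space budget is small since $\wb{\OPT}\le 2o$), but Theorem~\ref{thm:bucketsketch_diversity}(1) and Lemma~\ref{lem:div-offline} only guarantee the approximation when $\wb{\OPT}(X)\ge o$, which fails here. So neither branch of your algorithm certifies an $(\alpha-\varepsilon)$-approximation, and the claim that ``otherwise $\wb{\OPT}(X)\in[1,\sqrt{d}\Delta]$'' is simply not true when there are fewer than $k$ distinct points; it only holds once at least $k$ distinct points are present.

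The paper closes exactly this hole with a different degenerate-case routine: it maintains a list of up to $kT_{\div}+1$ points that keeps, for each distinct value, its $T_{\div}$ (not one) most recent occurrences, and it proves that whenever the window has at most $k-1$ distinct points, the retained points inside the window realize $\OPT$ exactly, so it can return $\OPT(L')=\OPT(P)$ in that case; only when at least $k$ distinct points are present does it hand control to the bucketing framework, and then $\wb{\OPT}(X)\ge 1$ is legitimate. Your sketch of the routine (tracking only the number of distinct values and one latest timestamp per value, $O(kd)$ space) is not enough for the $T_{\div}=k$ functions, where reproducing the optimum requires multiplicities up to $k$ per distinct value. Everything else in your write-up (space bound, determinism, update time, recovery time) is routine once this case is handled as in the paper.
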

\begin{proof}
Suppose the point set of interest at the end of the stream is $P\subseteq[\Delta]^d$.
Firstly, we need to handle the case if $\OPT(P)=0$.
According to the definition of the diversity functions, $\OPT(P)=0$ can only happen when the number of distinct points in $P$ is at most $k-1$.
We can use the following sliding window procedure to check whether $P$ has at least $k$ points, and if the number of distinct points is at most $k-1$, we can retrieve either all points or at least $T_{\div}$ points at each distinct point.
\begin{enumerate}
    \item Initialize a list of points $L=\emptyset$.
    \item For the latest $x$ in the stream:
        \begin{enumerate}
            \item Add $x$ into $L$.
            \item If there are $T_{\div}+1$ points in $L$ that are equal to $x$, remove the point which is equal to $x$ with the earliest timestamp from $L$.
            \item Otherwise, if $L$ contains $kT_{\div}+1$ points, remove the point with the earliest timestamp from $L$.
        \end{enumerate}
    \item At timestamp $N$, let $L'=\{x\in L\mid~\text{the timestamp is at least }N-W+1,\text{ i.e., }x\in P\}$.
    \item If $L'$ has at most $k-1$ distinct points, return $\OPT(L')$ as $\OPT(P)$.
\end{enumerate}
It is easy to verify that $L$ has the following two properties: 1) if the timestamp of $x$ is earlier than the timestamp of $y$ and $x$ is in $L$, then either $y$ is in $L$ or there are $T_{\div}$ points $y'$ in $L$ such that $y'=y$ and the timestamp of every $y'$ is later than the timestamp of $y$;
2) For each $x\in L$, the number of points $x'\in L$ (including $x$ itself) that is equal to $x$ is at most $T_{\div}$.
Since $L$ can contain at most $kT_{\div}$ points and at most $T_{\div}$ duplicates can be stored in $L$ for each distinct point, if $P$ has at least $k$ distinct points, then $L$ must contain at least $k$ distinct points.
Next, we are going to prove that $L'$ also contains at least $k$ distinct points.
Suppose $x_{i_1},x_{i_2},\cdots,x_{i_k}$ with $i_1<i_2<\cdots<i_k$ are $k$ distinct points in the stream such that $\forall j\geq i_1$, $x_j$ in the stream is equal to $x_{i_q}$ for some $q\in[k]$, and $\forall q\in[k],j>i_q,$ $x_j$ is distinct from $x_{i_q}$.
Since $P$ contains at least $k$ distinct points, all  $x_{i_1},x_{i_2},\cdots,x_{i_k}$ must in $P$, i.e., $i_1\geq N-W+1$.
If $\exists q\in[k]$ such that $x_{i_q}$ is not in $L$, then since $L$ contains at least $k$ distinct points, there must be $x_j\in L$ with $j<i_q$ which contradicts to the first property of $L$.
Thus, all $x_{i_1},x_{i_2},\cdots,x_{i_k}$ must in $L$ which implies that $L'$ contains at least $k$ distinct points $x_{i_1},x_{i_2},\cdots,x_{i_k}$ since $i_1\geq N-W+1$.
Next, we discuss the case when $P$ contains at most $k-1$ distinct points.
For $j\geq N-W+1$, if $x_j$ is not in $L'$, then $x_j$ is not in $L$. 
It implies that there are at least $T_{\div}$ $j'$ with $j'>j$ such that $x_{j'}$ is equal to $x_j$ and $x_{j'}$ is in $L'$ according to the first property of $L$.
Thus, for each distinct point in $P$, $L'$ contains either all duplicates or at least $T_{\div}$ duplicates.
According to the definition of diversity functions in Table~\ref{tab:diversity_functions}, we can verify that $\OPT(L')=\OPT(P)$.
The space needed is $O(kT_{\div}d)$, and the update time is $O(d)$.

In the remaining of the proof, we only need to discuss the case when $\OPT(P)>0$.
In this case, we have $\wb{\OPT}(P)\in[1,\sqrt{d}\Delta]$.
According to Theorem~\ref{thm:bucketsketch_diversity}, $\forall o\in [1,\sqrt{d}\Delta],\alpha\in[0,1],\varepsilon\in(0,0.5)$, there always exists an $o$-restricted $(\alpha-\varepsilon)$-approximate bucketing-based sketch $Z^o(\cdot)$ for diversity maximization such that an $(\alpha-\varepsilon)$-approximation to $\OPT(P)$ can be recovered if $\wb{\OPT}(P)\geq o$.
Furthermore, if $\wb{\OPT}(P)\leq 2\cdot o$, the space budget of the sketch is at most $k\cdot T_{\div}\cdot O(\sqrt{d}/\varepsilon)^d\cdot d$.
According to Theorem~\ref{thm:framework}, after plugging the bucketing-based sketch into Algorithm~\ref{alg:sliding_framework}, we obtain the sliding window algorithm which outputs a $(\alpha-\varepsilon)$-approximation to $\OPT(P)$.
The space needed is at most $k\cdot T_{\div}\cdot O(\sqrt{d}/\varepsilon)^d\cdot d\cdot \log(d\Delta)$.
According to Theorem~\ref{thm:bucketsketch_diversity}, the time needed to compute a filter function, a bucketing function and a processing function on any data item is always at most $O(d)$.
By taking a more careful analysis, we can get a update time better than that claimed in Theorem~\ref{thm:framework}.
Consider an iteration of the loop in line~\ref{sta:start_process_stream}-\ref{sta:end_process_stream} of Algorithm~\ref{alg:sliding_framework}.
We only need to compute filter function, bucketing function and processing function $O(\log(d\Delta))$ times since $|O|=O(\log(d\Delta))$ and $t=2$.
It takes $O(d\log(d\Delta))$ time.
Since each tuple in the sketch has the same size $\Theta(d)$, the loop in line~\ref{sta:start_bucketing_step}-\ref{sta:end_bucketing_step} of Algorithm~\ref{alg:sliding_framework} has at most one iteration each time.
Similarly, the loop in line~\ref{sta:while_start}-\ref{sta:while_end} of Algorithm~\ref{alg:sliding_framework} has at most one iteration each time.
Thus, the overall update time is at most $O(d\log(d\Delta))$ time.
At the end of the stream, to get the final approximation, we need to evaluate the recover function $O(\log(d\Delta))$ times according to Algorithm~\ref{alg:sliding_framework}.
Since each sketch contains at most $k\cdot T_{\div}\cdot O(\sqrt{d}/\varepsilon)^d$ points, the time needed to compute the recover function for each sketch is at most $\mathcal{T}(k\cdot T_{\div}\cdot O(\sqrt{d}/\varepsilon)^d)$.
Thus, the overall running time to get the approximation at the end of the stream is at most $\mathcal{T}(k\cdot T_{\div}\cdot O(\sqrt{d}/\varepsilon)^d)\cdot O(\log(d\Delta))$.

%$\OPT(P)=0$ if and only if the number of distinct points in $P$ is at most $k-1$ for remote-edge, at most $1$ for remote-clique, remote-tree, remote-cycle and remote-star, at most $t$ for remote $t$-trees and remote $t$-cycles.
\end{proof}

\section{$k$-Clustering in the Sliding Window Model}
In the $\ell_p$ $k$-clustering problem $(p\in[1,\infty))$, we consider the data universe $\mathcal{X}$ as the (discretized) $d$-dimensional Euclidean sapce $[\Delta]^d$.
Given a point set $X \subseteq [\Delta]^d$ and a parameter $k\in \mathbb{Z}_{\geq 1}$, the goal is to find a subset of $k$ centers $B\subset \mathbb{R}^d$ $(|B|\leq k)$ such that the clustering cost 
\begin{align*}
\cost(X,B):=\sum_{x\in X}\min_{b\in B}\|x-b\|_2^p
\end{align*}
is minimized.
Notice that this problem formulation is a general case of $k$-median clustering $(p=1)$ and $k$-means clustering $(p=2)$.
One popular way to solve $k$-clustering problem in large-scale computational models is to use the coreset technique.
A coreset is a small weighted subset of data points which approximately preserves the clustering cost for any set of centers. 
The formal definition of $\varepsilon$-coreset is defined as the following.

\begin{definition}[$\varepsilon$-coreset]\label{def:coreset}
Let $k\in \mathbb{Z}_{\geq 1},p\in[1,\infty)$. 
Given a set of points $X\subseteq [\Delta]^d$, if a subset $S\subseteq X$ together with the weights $w:S\rightarrow \mathbb{R}_{\geq 0}$ satisfies that 
\begin{align*}
\forall B\subset \mathbb{R}^d \text{ with } |B|\leq k, (1-\varepsilon)\cdot \cost(X,B)\leq \cost(S,w,B)\leq (1+\varepsilon) \cdot \cost(X,B),
\end{align*}
where $\cost(S,w,B):=\sum_{x\in S} w(x)\cdot \min_{b\in B}\|x - b\|_2^p$, then $(S,w)$ is called an $\varepsilon$-coreset of $X$ for the $\ell_p$ $k$-clustering problem.
\end{definition}

In the remaining subsections, we will first review an offline coreset construction algorithm and then we will show how to construct a bucketing-based sketch (see Definition~\ref{def:bucketing_based_sketch}) for an $\varepsilon$-coreset.
Thus, it will imply an efficient sliding window algorithm.

\subsection{Offline Coreset Construction}\label{sec:offline_coreset_k_clustering}
The offline construction is almost the same as the algorithm proposed by~\cite{hu2018nearly}.
We put all analysis into Appendix~\ref{sec:analysis_offline_k_clustering} for completeness.

Suppose the data set is a set of at most $n$ points $X\subseteq [\Delta]^d$.
We first partition the space $[\Delta]^d$ via a hierarchical grid structure~\cite{chen2009coresets,braverman2017clustering,hu2018nearly}.
%Without loss of generality, we assume $\Delta$ is a power of $2$, i.e., $\Delta=2^L$ for some $L\in\mathbb{Z}_{\geq 0}$.
We sample a random vector $v\in \mathbb{R}^d$ such that each entry $v(i)$ is drawn independently and uniformly at random from $[0,\Delta]$ %\Peilin{Define $v(i)$ as the $i$-th entry of $v$ in the preliminary section}.
Then we impose a standard hierarchical grids shifted by $-v$.
Let $L=\lceil\log(nd\Delta)\rceil+10$.
The grids have $L+1$ levels.
$\forall i\in\{0,1,\cdots,L\}$, the grid $G_i$ partitions $\mathbb{R}^d$ into cells with side length $\Delta_i = \Delta/2^i$.
In particular, 
\begin{align*}
&\forall i\in \{0,1,\cdots,L\},\\
&G_i=\{C\mid C=[t_1\Delta_i-v(1),(t_1+1)\Delta_i-v(1))\times\cdots\times[t_d-v(d)\Delta_i,(t_d+1)\Delta_i-v(d)), t_1,\cdots,t_d\in \mathbb{Z}\}.
\end{align*}
It is easy to see that each cell $C\in G_i$ is always partitioned by $2^d$ cells in $G_{i+1}$.
For convenience in notation, we also define $G_{-1}$ in the same way where each cell in $G_{-1}$ has side length $\Delta_{-1}=2\Delta$.
It is easy to verify that there is a unique cell in $G_{-1}$ which contains $[\Delta]^d$ entirely.
Consider two cells $C\in G_i$ and $C'\in G_j$.
If $C\subset C'$, we call $C'$ an ancestor of $C$.
If $C$ is a cell in $G_i$ for $i\in \{0,1,\cdots,L\}$ and a cell $C'\in G_{i-1}$ is an ancestor of $C$, then we say that $C$ is a child cell of $C'$.
Consider a point $x\in [\Delta]^d$.
If the cell containing $x$ in $G_i$ is $C$, we define $c_i(x):=C$.
Let $\OPT$ denote the optimal $l_p$ $k$-clustering cost for a point set $X$, i.e.:
$
\OPT:=\min_{B\subset \mathbb{R}^d:|B|\leq k} \cost(X,B).
$
The $\varepsilon$-coreset construction is shown in Algorithm~\ref{alg:offline_coreset_construction} which uses Algorithm~\ref{alg:offline_heavy_cell_decomposition} as a subroutine.

\begin{algorithm}[ht]
	\small
	\begin{algorithmic}[1]\caption{Heavy Cell Partitioning}\label{alg:offline_heavy_cell_decomposition}
	  \STATE \textbf{Predetermined:} $o\geq 1$: a guess of $\OPT$.
	  \STATE \textbf{Input:} A point set $X\subseteq [\Delta]^d$ with at most $n$ points.
	  \STATE Impose randomly shifted grids $G_{-1},G_0,G_1,\cdots,G_L$ with $L:=\lceil \log(nd\Delta)\rceil+10$ levels.
	  \STATE For $i\in \{-1,0,1,\cdots,L\}$, set the threshold $R_i:= 0.01\cdot o/(\sqrt{d}\Delta_i)^p$. {\hfill //$R_{i+1}=R_{i}\cdot2^p$.}
	  \STATE Find the unique cell $C\in G_{-1}$ such that $[\Delta]^d\subseteq C$ and mark $C$ as \emph{heavy}. \label{sta:handle_minusone_level}
	  \FOR{$i:=0\rightarrow L$}
	    \STATE For each cell $C\in G_i$, let $\lambda(C)$ be an estimation of $|C\cap X|$, the number of points in the cell $C$.\label{sta:point_estimation_in_each_cell}
	    \STATE If $\lambda(C)\geq R_i$ and all the ancestors of $C$ are heavy, mark $C$ as heavy. \label{sta:mark_heavy_via_estimation}
	  \ENDFOR
	  \STATE Output all heavy cells.
	\end{algorithmic}
\end{algorithm}

\begin{algorithm}[ht]
	\small
	\begin{algorithmic}[1]\caption{Coreset Construction for $\ell_p$ $k$-Clustering}\label{alg:offline_coreset_construction}
	  \STATE \textbf{Input:} A point set $X\subseteq [\Delta]^d$ with at most $n$ points, and $\delta,\varepsilon\in (0,0.5)$. \label{sta:start_offline_coreset}
	  \STATE Run Algorithm~\ref{alg:offline_heavy_cell_decomposition} and obtain all heavy cells.\label{sta:call_of_heavy_decomposition}
	  \STATE Let $L$, $G_{-1},G_0,\cdots,G_L$ and $R_{-1},R_0,\cdots,R_L$ be the same as in Algorithm~\ref{alg:offline_heavy_cell_decomposition}. \label{sta:post_exe_call_partition} 
	  \STATE For $i\in\{0,1,\cdots,L\}$, for $C\in G_i$, if $C$ is not heavy but all ancestors of $C$ are heavy, mark $C$ as \emph{crucial}.
	  \STATE For $i\in\{0,1,\cdots,L\}$, let $X^i=\{x\in X\mid x\text{ is in a crucial cell in }G_i\}$.
	  Let $\lambda(X^i)$ be an estimation of $|X^i|$.\label{sta:point_estimation_in_each_crucial_level}
	  \STATE Let $I=\{i\mid i\in \{0,1,\cdots,L\},\lambda(X^i)\geq \gamma R_i\}$, where $\gamma:=\varepsilon/(40\cdot 2^{2p+2}\cdot L)$.
	  \STATE $t':=\sum_{i\in I} \lambda(X^i)\cdot \min(2^{2p+1}/R_i,1), m:=\lceil 1000t'\varepsilon^{-2}(\log n\log (2t') + \log 1/\delta)\rceil$, and initialize $S=\emptyset$.\label{sta:mid_offline_coreset}
	  \WHILE{repeat $m$ times} \label{sta:repeated_sampling_start}
	    \STATE Sample a level $i\in I$ with probability $(\lambda(X^i)\cdot \min(2^{2p+1}/R_i,1))/t'$.\label{sta:level_sampling}
	    \STATE Uniformly sample a point $x\in X^i$. \label{sta:uniform_sampling}
	    \STATE Add the point $x$ into $S$ and set the weight $w(x):=t'/(m\cdot \min(2^{2p+1}/R_i,1))$.
	  \ENDWHILE \label{sta:repeated_sampling_end}
	  \STATE Output $(S,w)$ as the coreset.
	\end{algorithmic}
\end{algorithm}

The guarantee of Algorithm~\ref{alg:offline_coreset_construction} is shown in the following theorem.
We include the proof of Theorem~\ref{thm:offline_k_clustering_coreset} in Appendix~\ref{sec:analysis_offline_k_clustering} for completeness.
\begin{theorem}[Generalization of Theorem 8 of~\cite{hu2018nearly}]\label{thm:offline_k_clustering_coreset}
Suppose for every $i\in \{0,1,\cdots, L\}$ and every cell $C\in G_i$, the estimated value $\lambda(C)$ in line~\ref{sta:point_estimation_in_each_cell} of Algorithm~\ref{alg:offline_heavy_cell_decomposition} satisfies $\lambda(C)\in |C\cap X|\pm 0.1R_i$ or $\lambda(C)\in (1\pm 0.01)\cdot |C\cap X|$, and for every $i\in\{0,1,\cdots, L\}$, the estimated value $\lambda(X^i)$ in line~\ref{sta:point_estimation_in_each_crucial_level} of Algorithm~\ref{alg:offline_coreset_construction} satisfies $\lambda(X^i)\in |X^i|\pm 0.1\varepsilon\gamma R_i$ or $\lambda(X^i)\in(1\pm 0.01\varepsilon)\cdot |X^i|$.
If the parameter $o$ in Algorithm~\ref{alg:offline_heavy_cell_decomposition} satisfies $o\in [1,\OPT]$, then $(S,w)$ outputted by Algorithm~\ref{alg:offline_coreset_construction} is an $\varepsilon$-coreset of $X$ for the $\ell_p$ $k$-clustering problem with probability at least $1-\delta$.
\end{theorem}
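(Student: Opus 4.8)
The plan is to follow the coreset analysis of~\cite{hu2018nearly} essentially line by line; the only genuinely new work is to push the two kinds of estimation error allowed in the hypotheses ($\lambda(C)\in|C\cap X|\pm 0.1R_i$ or $(1\pm 0.01)|C\cap X|$, and similarly for $\lambda(X^i)$) through the argument and check that in each place where an exact count was used a $(1\pm O(\varepsilon))$ or constant-factor perturbation is harmless. So the first step is bookkeeping: from the bound on $\lambda(C)$, every cell marked heavy at level $i$ has $|C\cap X|\ge 0.9R_i$, every cell with a heavy parent that is not itself heavy has $|C\cap X|<1.1R_i$, and in particular each crucial cell at level $i$ has fewer than $1.1R_i$ points; from the bound on $\lambda(X^i)$, for $i\notin I$ we get $|X^i|<1.1\gamma R_i$, while for $i\in I$ we get $|X^i|\ge 0.9\gamma R_i$ and $\lambda(X^i)/|X^i|\in[1-\tfrac{\varepsilon}{9},1+\tfrac{\varepsilon}{9}]$. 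I would also note that $L=\lceil\log(nd\Delta)\rceil+10$ makes every level-$L$ cell have diameter $\sqrt d\,\Delta_L<1$, hence at most one point of $X\subseteq[\Delta]^d$, so no level-$L$ cell is heavy; therefore every point lies in a crucial cell at some level $\le L$, i.e.\ $X=\bigsqcup_{i=0}^L X^i$, and for every $B$ with $|B|\le k$ we obtain the exact decomposition $\cost(X,B)=\sum_{i=0}^L\cost^{(i)}(X,B)$ where $\cost^{(i)}(X,B):=\sum_{x\in X^i}\min_{b\in B}\|x-b\|_2^p$.

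The geometric heart, which I would reprove exactly as in~\cite{hu2018nearly}, is the sensitivity bound: for every $i$, every $x\in X^i$ and every $B$ with $|B|\le k$, $\min_{b\in B}\|x-b\|_2^p\le s_i\cdot\cost(X,B)$ with $s_i:=\min(2^{2p+1}/R_i,1)$. The two cases are as usual: if the nearest center to $x$ lies within $O(\sqrt d\,\Delta_i)$ then $\min_b\|x-b\|_2^p=O((\sqrt d\,\Delta_i)^p)=O(o/R_i)\le O(\cost(X,B)/R_i)$, using $o\le\OPT\le\cost(X,B)$; otherwise the $\Omega(R_{i-1})=\Omega(R_i/2^p)$ points sharing $x$'s heavy parent cell all lie within $O(\sqrt d\,\Delta_i)$ of $x$, hence at distance $\ge\tfrac12\min_b\|x-b\|_2$ from every center, so they already account for an $\Omega(R_i/2^{2p})$-fraction of $\cost(X,B)$ relative to $x$'s contribution (the degenerate level-$0$ case is handled by the additional observation that $o\le\OPT\le|X|(\sqrt d\,\Delta)^p$ forces $|X|=\Omega(R_0)$). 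Combining this with the cell-size bounds gives, for each $i\notin I$, $\cost^{(i)}(X,B)\le|X^i|\,s_i\,\cost(X,B)\le 1.1\cdot 2^{2p+1}\gamma\,\cost(X,B)$, so summing over the at most $L+1$ light levels and plugging in $\gamma=\varepsilon/(40\cdot 2^{2p+2}L)$ yields $\sum_{i\notin I}\cost^{(i)}(X,B)\le 0.1\varepsilon\,\cost(X,B)$ simultaneously for all $B$. Hence it suffices to show that $\cost(S,w,B)$ is a $(1\pm\tfrac{\varepsilon}{2})$-approximation of $\cost_I(X,B):=\sum_{i\in I}\cost^{(i)}(X,B)$ for all $B$ with probability $\ge 1-\delta$.

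For that I would run the importance-sampling analysis of~\cite{feldman2011unified,hu2018nearly}. One round of the loop contributes in expectation $\sum_{i\in I}\tfrac{\lambda(X^i)}{m|X^i|}\cost^{(i)}(X,B)$, so $\E[\cost(S,w,B)]=\sum_{i\in I}\tfrac{\lambda(X^i)}{|X^i|}\cost^{(i)}(X,B)\in(1\pm\tfrac{\varepsilon}{9})\cost_I(X,B)$ by the count estimates recorded above; and each of the $m$ i.i.d.\ per-round contributions lies in $[0,\tfrac{t'}{m}\cost(X,B)]$, because a point sampled from level $i$ receives weight $t'/(ms_i)$ while $\min_b\|x-b\|_2^p\le s_i\cost(X,B)$ by the sensitivity bound. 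A Bernstein inequality then gives, for each fixed $B$, failure probability $2\exp(-\Omega(\varepsilon^2 m/t'))$, which is far below $\delta$ for $m=\Theta\!\big(t'\varepsilon^{-2}(\log n\log(2t')+\log(1/\delta))\big)$.

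The main obstacle is upgrading this per-$B$ statement to one holding simultaneously for all (infinitely many) center sets $B\subset\mathbb R^d$ with $|B|\le k$: here I would invoke the standard uniform-convergence / bounded-combinatorial-dimension argument of~\cite{feldman2011unified,hu2018nearly}, which is exactly what the $\log n\log(2t')$ factor in $m$ pays for and into which the estimates $\lambda$ no longer enter, so it can be cited as a black box. Assembling the pieces: with probability $\ge 1-\delta$, $\cost(S,w,B)\in(1\pm\tfrac{\varepsilon}{2})\cost_I(X,B)\subseteq(1\pm\varepsilon)\cost(X,B)$ for every such $B$, i.e.\ $(S,w)$ is an $\varepsilon$-coreset of $X$ for the $\ell_p$ $k$-clustering problem.
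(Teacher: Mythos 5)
Your proposal is correct, and its skeleton matches the paper's (partition into levels, show the light levels are negligible, bound sensitivities, then importance sampling with a uniform-convergence step paid for by the $\log n\log(2t')$ factor in $m$), but you route through the two internal lemmas differently. The paper first proves an inductive point-retention claim (every heavy cell keeps a $(1-2^{p+1}(L-i)\gamma)$-fraction of its points inside $X^I:=\bigcup_{i\in I}X^i$), uses it with an averaging/charging argument to show $\cost(X,B)\leq(1+\varepsilon/10)\cost(X^I,B)$ for all $B$, then bounds the sensitivity of each $x\in X^i$, $i\in I$, \emph{relative to} $\cost(X^I,B)$, and finally invokes the sensitivity-sampling theorem of \cite{braverman2016new,braverman2019streaming} as a black box (after checking that the weights are within $(1\pm\varepsilon/8)$ of inverse probabilities). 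You instead bound the sensitivity of \emph{every} point, in every level, relative to the full cost $\cost(X,B)$ — which only needs the heavy parent cell to contain $\Omega(R_{i-1})$ points of $X$, not of $X^I$ — and then get the light-level bound $\sum_{i\notin I}\cost^{(i)}(X,B)\leq 0.1\varepsilon\,\cost(X,B)$ by simply multiplying cardinality $|X^i|\le 1.1\gamma R_i$ by the sensitivity cap; this eliminates the induction and the relocation argument entirely and is arguably cleaner. On the sampling side you re-derive the fixed-$B$ expectation and Bernstein bound by hand and black-box only the ``for all $B$'' uniform-convergence step, which is essentially the content of the theorem the paper cites. Two small points you should make explicit in a full write-up: (i) the uniform-convergence/sensitivity frameworks want sensitivities relative to the set actually being approximated, namely $X^I$; your $X$-relative bounds transfer because $\cost(X^I,B)\geq(1-0.1\varepsilon)\cost(X,B)$, and the resulting $1/(1-0.1\varepsilon)$ loss is absorbed by the slack between your derived bound $2^{2p}/(0.9R_i)$ and the cap $2^{2p+1}/R_i$; (ii) your claim that no level-$L$ cell is heavy relies on level-$L$ cells containing at most one point, which uses distinctness of the points of $X\subseteq[\Delta]^d$; the paper's argument via $R_L\geq 2n\geq 2|X|$ is insensitive to duplicated points and is the safer one in the streaming application, though for the theorem as stated your version is fine.
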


\subsection{Coreset Construction via bucketing-based Sketches}
To simulate Algorithm~\ref{alg:offline_heavy_cell_decomposition} and Algorithm~\ref{alg:offline_coreset_construction} in the sliding window model, We only need to construct a bucketing-based sketch to simulate the algorithms.
\subsubsection{Heavy Cell Partitioning via Bucketing-based Sketches}\label{sec:heavy_cell_partition_via_bucket_based_sketches}
We first describe a bucketing-based sketch to simulate Algorithm~\ref{alg:offline_heavy_cell_decomposition}.
Our sketch $Z(\cdot)$ is composed by $L+1$ sub-sketches $Z_0(\cdot),Z_1(\cdot),\cdots,Z_L(\cdot)$.
Let $\delta\in (0,1)$ be a probability parameter.
To describe $Z_i(\cdot)$ for $i\in\{0,1,\cdots, L\}$, we need to specify the filter function $h_i(\cdot)$, the bucketing function $g_i(\cdot)$, the processing function $\zeta_i(\cdot)$ and the threshold $T_i$ (see Definition~\ref{def:bucketing_based_sketch}):
\begin{enumerate}
\item $\forall x\in [\Delta]^d$, $h_i(x)$ are independent, and 
\begin{align*}
h_i(x)=\left\{\begin{array}{ll}1, & \text{w.p. }p_i=\min(1, 10^6\ln(nL/\delta)/R_i),\\ 0, & \text{o.w.}\end{array}\right.
\end{align*}
\item Let $\mathcal{B}$ be all cells of randomly shifted grids generated by Algorithm~\ref{alg:offline_heavy_cell_decomposition}, i.e., $\mathcal{B}=G_0\cup G_1\cup\cdots \cup G_L$. 
Then, $g_i(x)$ is the cell in $G_i$ that contains $x$, i.e.,
\begin{align*}
\forall x\in[\Delta]^d,g_i(x) = C:C\in G_i,x\in C.
\end{align*}
\item $\zeta_i(x)=x$.
\item $T_i = 10\cdot p_i\cdot R_i\cdot d$.
\end{enumerate}

\begin{lemma}[Simulation of Algorithm~\ref{alg:offline_heavy_cell_decomposition} via bucketing-based sketch]\label{lem:heavy_cell_decomposition_via_bucketing_based_sketch}
Algorithm~\ref{alg:offline_heavy_cell_decomposition} can be simulated via sketch $Z(X)$ described above. 
Furthermore, with probability at least $1-\delta$, $\forall i\in\{0,1,\cdots,L\},C\in G_i$, the estimated value $\lambda(C)$ in line~\ref{sta:point_estimation_in_each_cell} of Algorithm~\ref{alg:offline_heavy_cell_decomposition} during the simulation satisfies $\lambda(C)\in |C\cap X|\pm 0.1 R_i$ or $\lambda(C)\in (1\pm 0.01)\cdot |C\cap X|$.
The running time of the simulation is $O(|Z(X)|)$.
\end{lemma}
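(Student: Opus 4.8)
Two things must be shown: that one run of Algorithm~\ref{alg:offline_heavy_cell_decomposition} can be replayed using only the contents of $Z(X)$, and that the cell-count estimates produced in that replay satisfy the accuracy hypothesis of Theorem~\ref{thm:offline_k_clustering_coreset} with probability at least $1-\delta$.

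For the replay, observe that the grids $G_{-1},\dots,G_L$ are determined by the single random shift $v$, which we regard as part of the sketch's random seed, so from $Z(X)$ we recover every $G_i$, every threshold $R_i$, and the ancestor relation between cells. For a level $i\in\{0,\dots,L\}$ and a grid cell $C$, set $\lambda(C):=|\{(x,\zeta_i(x))\in Z_i(X)\mid g_i(x)=C\}|/p_i$, the number of retained samples in bucket $C$ of $Z_i$ scaled by $1/p_i$; when that bucket is full we instead take $\lambda(C):=\lfloor T_i/d\rfloor/p_i=\lfloor 10p_iR_i\rfloor/p_i$, which --- in the good event below --- is at least $R_i$, so a full bucket is a certificate of heaviness. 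Only the $\le n$ cells per level that correspond to a non-empty bucket of $Z_i$ can ever have $\lambda(C)\ge R_i$, so the simulation iterates over just those, and running the marking loop of Algorithm~\ref{alg:offline_heavy_cell_decomposition} on these values reproduces one legitimate execution of that algorithm. With a hash table keyed by cell for the ancestor look-ups, each retained tuple is examined $O(1)$ times, so the replay runs in $O(|Z(X)|)$ time.

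For the accuracy, fix $v$; then for a cell $C\in G_i$ the points of $C\cap X$ are retained by $h_i$ independently with probability $p_i$, so before truncation bucket $C$ holds $\mathrm{Bin}(|C\cap X|,p_i)$ samples, with mean $\mu=p_i|C\cap X|$, and it overflows precisely when this exceeds $T_i/d=10p_iR_i$. If $p_i=1$ there is no randomness: an unfull bucket gives $\lambda(C)=|C\cap X|$ exactly, and a full one forces $|C\cap X|>10R_i\ge R_i$. If $p_i<1$, so that $R_i>10^6\ln(nL/\delta)$ and $p_iR_i=10^6\ln(nL/\delta)$, I would apply the standard multiplicative and additive Chernoff bounds (the $h_i$ are fully independent, so nothing beyond the usual tail inequalities is needed): for $|C\cap X|\le 2R_i$ the bucket overflows, or is unfull and yields $|\lambda(C)-|C\cap X||>0.1R_i$, only with probability $\ll\delta/(nL)$ --- both would require a deviation of order $10^5\ln(nL/\delta)$ from a mean of order at most $10^6\ln(nL/\delta)$; for $|C\cap X|>2R_i$ we have $\mu>2\cdot10^6\ln(nL/\delta)$, so except with probability $\ll\delta/(nL)$ either the bucket is unfull and $\lambda(C)\in(1\pm0.01)|C\cap X|$, or it is full and $\lambda(C)\ge R_i$ flags $C$ heavy. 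A union bound over the $\le n(L+1)$ non-empty cells then gives failure probability $\le\delta$, and the resulting estimates are exactly the ones Theorem~\ref{thm:offline_k_clustering_coreset} asks for.

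The delicate point --- and the one that needs the most care --- is the overflow case: once $C$ is dense enough that its sample count saturates the bucket at roughly $10R_i$, the retained count no longer tracks $|C\cap X|$, so the literal inequality ``$\lambda(C)\in|C\cap X|\pm0.1R_i$ or $\lambda(C)\in(1\pm0.01)|C\cap X|$'' is unattainable for such a cell. The way around this is to check, from the way $\lambda(C)$ is used downstream in Algorithm~\ref{alg:offline_coreset_construction} and in the proof of Theorem~\ref{thm:offline_k_clustering_coreset}, that a saturated (hence far-above-threshold) cell only ever needs to be recognized as heavy --- it is then subdivided and never becomes crucial --- so its numeric estimate is immaterial there; the only cells whose numeric estimate is actually consumed are those whose count is within a constant factor of the threshold $R_i$, and the Chernoff argument above shows these are, with high probability, never truncated.
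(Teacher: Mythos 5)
Your probabilistic estimates and the simulation mechanics are fine, but your handling of what you yourself call the ``delicate point'' leaves a genuine gap: the lemma asserts that \emph{every} estimate $\lambda(C)$ used in the simulation satisfies $\lambda(C)\in |C\cap X|\pm 0.1R_i$ or $\lambda(C)\in(1\pm 0.01)|C\cap X|$, and under your definition (the truncated sketch count, or $\lfloor T_i/d\rfloor/p_i$ when the bucket is full) this is, as you admit, false for saturated cells. Your patch --- asserting that downstream in Algorithm~\ref{alg:offline_coreset_construction} and Theorem~\ref{thm:offline_k_clustering_coreset} a saturated cell only needs to be classified heavy --- is never actually verified; you would have to re-derive the facts that the coreset analysis extracts from the hypothesis ``all $\lambda(C)$ are good'' (e.g.\ that every heavy cell has $\geq 0.9R_i$ points and every crucial cell $\leq 1.1R_i$ points) under your weaker guarantee. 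So as written you prove a weaker statement plus an unproved claim about how it is consumed, not the lemma itself.

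The detour is also unnecessary, and the paper's proof shows the cleaner route: define $\lambda(C):=|\{x\in X\mid h_i(x)=1,\,g_i(x)=C\}|/p_i$, i.e.\ the \emph{untruncated} sampled count, which is a purely analytical quantity that never needs to be computed from the sketch. Bernstein/Chernoff plus a union bound over non-empty cells shows this $\lambda(C)$ satisfies the stated accuracy condition for \emph{all} cells with probability $1-\delta$, saturation being irrelevant. Then observe that the only way Algorithm~\ref{alg:offline_heavy_cell_decomposition} uses $\lambda(C)$ is through the comparison $\lambda(C)\geq R_i$, and let $\bar\lambda(C)$ be $1/p_i$ times the number of tuples actually stored in bucket $C$ of $Z_i(X)$: if the bucket is not full then $\bar\lambda(C)=\lambda(C)$, and if it is full then both $\lambda(C)\geq R_i$ and $\bar\lambda(C)\geq R_i$ (since $T_i=10p_iR_id$), so the thresholding decision computed from the sketch coincides with the decision of a run of Algorithm~\ref{alg:offline_heavy_cell_decomposition} fed the good estimates $\lambda(C)$. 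This yields the lemma exactly as stated, with the same $O(|Z(X)|)$ running-time accounting you already have.
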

\begin{proof}
In the proof, we will show how to use $Z(X)$ to simulate Algorithm~\ref{alg:offline_heavy_cell_decomposition}.
Let us define $\forall i\in\{0,1,\cdots,L\},C\in G_i,$ $\lambda(C):=|\{x\in X\mid h_i(x)=1,g_i(x)=C\}|/p_i.$
Notice that we do not need to explicitly compute $\lambda(C)$ during the simulation of Algorithm~\ref{alg:offline_heavy_cell_decomposition}.

We first prove that $\lambda(C)$ is a good approximation to $|C\cap X|$.
If $p_i=1$, it is obvious that $\lambda(C)=|C\cap X|$.
We only need to consider the case when $p_i<1$.
Consider $i\in \{0,1,\cdots,L\}$ and cell $C\in G_i$.
We have $\E[|\{x\in X\mid h_i(x)=1,g_i(x)=C\}|] = p_i\cdot |C\cap X|$.
There are two cases. 
If $|C\cap X| >R_i$, by Bernstein inequality, we have:
\begin{align*}
&\Pr\left[\left||\{x\in X\mid h_i(x)=1,g_i(x)=C\}|-p_i\cdot |C\cap X|\right|\geq 0.01\cdot p_i\cdot |C\cap X|\right]\\
\leq & 2 \exp\left(- \frac{ 0.5\cdot 0.01^2\cdot p_i^2\cdot |C\cap X|^2 }{p_i\cdot |C\cap X|+\frac{1}{3}\cdot 0.01\cdot p_i\cdot |C\cap X|}\right)\\
\leq & \delta/(n(L+1)),
\end{align*}
where the last inequality follows from that $|C\cap X|>R_i$ and $p_i\geq 10^6 \ln(nL/\delta)/R_i$.
In the second case, we have $|C\cap X|\leq R_i$.
In this case, by Bernstein inequality, we have:
\begin{align*}
&\Pr\left[\left||\{x\in X\mid h_i(x)=1,g_i(x)=C\}|-p_i\cdot |C\cap X|\right|\geq 0.1\cdot p_i\cdot R_i\right]\\
\leq & 2 \exp\left(- \frac{ 0.5\cdot 0.1^2\cdot p_i^2\cdot R_i^2 }{p_i\cdot |C\cap X|+\frac{1}{3}\cdot 0.1\cdot p_i\cdot R_i}\right)\\
\leq & \delta/(n(L+1)),
\end{align*}
where the last inequlaity follows from that $|C\cap X|\leq R_i$ and $p_i \geq 10^4  \ln(nL/\delta)/ R_i$.
Notice that $\lambda(C)=|\{x\in X\mid h_i(x)=1,g_i(x)=C\}|/p_i$.
Thus, by taking union bound over all $i\in\{0,1,\cdots, L\}$ and all non-empty cells $C\in G_i$, with probability at least $1-\delta$, $\forall i\in \{0,1,\cdots, L\},C\in G_i$, $\lambda(C)$ satisfies $\lambda(C)\in |C\cap X|\pm 0.1R_i$ or $\lambda(C)\in (1\pm 0.01)\cdot |C\cap X|$.

Now let us go back to Algorithm~\ref{alg:offline_heavy_cell_decomposition}. 
All steps of Algorithm~\ref{alg:offline_heavy_cell_decomposition} can be implemented without any information of $X$ except line~\ref{sta:point_estimation_in_each_cell} and line~\ref{sta:mark_heavy_via_estimation}.
Let us focus on line~\ref{sta:point_estimation_in_each_cell} and line~\ref{sta:mark_heavy_via_estimation} of Algorithm~\ref{alg:offline_heavy_cell_decomposition}.
The goal to determine whether $\lambda(C)\geq R_i$. 
Since $\lambda(C)=|\{x\in X\mid h_i(x)=1,g_i(x)=C\}|/p_i$, it is equivalent to determine whether $|\{x\in X\mid h_i(x)=1,g_i(x)=C\}|\geq p_i\cdot R_i$.
Let $\bar{\lambda}(C)$ be $1/p_i$ times the number of tuples $(x,\zeta_i(x))$ stored in $Z_i(X)$.
Notice that $\bar{\lambda}(C)$ can be computed during the simulation via the sketch $Z(X)$.
Consider the condition~\ref{it:bucket_sketch_2b} of Definition~\ref{def:bucketing_based_sketch}.
There are two cases. 
In the first case, $\sum_{x\in X:h_i(x)=1,g_i(x)=C}|\zeta_i(x)|\leq T_i$.
In this case, $\forall x\in X$ with $h_i(x)=1,g_i(x)=C$, we have $(x,\zeta_i(x))\in Z_i(x)$.
Therefore, we have $\bar{\lambda}(C)=\lambda(C)$ and thus line~\ref{sta:point_estimation_in_each_cell} and line~\ref{sta:mark_heavy_via_estimation} of Algorithm~\ref{alg:offline_heavy_cell_decomposition} can be successfully simulated.
In the second case, $\sum_{x\in X:h_i(x)=1,g_i(x)=C}|\zeta_i(x)|> T_i$.
By our definition of $\lambda(C)$, it implies that $\lambda(C)\geq R_i$.
In this case, it is also easy to verify that $\bar{\lambda}(C)\geq R_i$ according to the condition~\ref{it:bucket_sketch_2b} of Definition~\ref{def:bucketing_based_sketch}.
Thus, the outcome of using $\bar{\lambda}(C)$ to simulate line~\ref{sta:point_estimation_in_each_cell} and line~\ref{sta:mark_heavy_via_estimation} of Algorithm~\ref{alg:offline_heavy_cell_decomposition} is always the same as the outcome of using $\lambda(C)$.

Finally let us consider the running time of the simulation process. The overall running time is linear in the number of times that line~\ref{sta:point_estimation_in_each_cell} and line~\ref{sta:mark_heavy_via_estimation} are executed. 
Observe that each tuple $(x,\zeta_i(x))\in Z_i(X)$ can be considered at most once. 
The total running time is at most linear in the size of $Z(X)$.

\end{proof}

\begin{lemma}[Size of the sketch $Z(X)$]\label{lem:size_heavy_cell_partioning_sketch}
With probability at least $1-2\delta$, the space budget of $Z(X)$ is at most $O((kd+d^{O(p)}\cdot \OPT/o)\cdot L\log(nL/\delta)/\delta)$.
\end{lemma}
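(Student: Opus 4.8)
The plan is to bound the space budget of $Z(X)=(Z_0(X),\ldots,Z_L(X))$, which by Definition~\ref{def:bucketing_based_sketch} (and since every stored tuple $(x,\zeta_i(x))=(x,x)$ occupies $\Theta(d)$ space) equals, up to constants,
\[
\sum_{i=0}^{L}\sum_{C\in G_i}\min\!\left(T_i,\ d\cdot\bigl|\{x\in X: h_i(x)=1,\ g_i(x)=C\}\bigr|\right).
\]
Write $X_C:=\bigl|\{x\in X: h_i(x)=1,\ g_i(x)=C\}\bigr|$ for $C\in G_i$. First I would record two cheap facts: $T_i=10p_iR_id$ and $p_iR_i=\min(R_i,10^6\ln(nL/\delta))\le 10^6\ln(nL/\delta)$, so every cell contributes at most $T_i\le 10^7 d\ln(nL/\delta)$; and for a fixed level $i$ the cells of $G_i$ are disjoint, so I may freely split them into \emph{dense} cells, with $|C\cap X|\ge R_i$, and \emph{sparse} cells, with $|C\cap X|<R_i$.

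For the dense cells, each contributes at most $T_i\le 10^7 d\ln(nL/\delta)$, so the total dense contribution is $10^7 d\ln(nL/\delta)$ times the number of dense cells summed over all $L+1$ levels. A cell with $\ge R_i$ points has every ancestor containing $\ge R_i>R_j$ points, hence every dense cell is a ``heavy'' cell in the sense of Algorithm~\ref{alg:offline_heavy_cell_decomposition}; I would then invoke the heavy-cell counting from the offline analysis (Appendix~\ref{sec:analysis_offline_k_clustering}): a dense cell far (at its own scale) from all optimal $k$-centers isolates a disjoint block of $\ge R_i$ points each costing $\Omega(\Delta_i^p)$, i.e.\ $\Omega(R_i\Delta_i^p)=\Omega(o/d^{p/2})$ of $\OPT$, so there are $O(d^{O(p)}\OPT/o)$ of these per level, while the dense cells near the optimal centers number $O(kd^{O(p)})$ per level. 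Summing over levels gives a dense contribution of $O\!\big((kd+d^{O(p)}\OPT/o)\cdot L\log(nL/\delta)\big)$, already of the claimed shape (without the extra $1/\delta$).

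For the sparse cells I would pass to expectations, since they need not concentrate. Because $\sum_{C\ \mathrm{sparse\ at\ level}\ i}X_C=\bigl|\{x\in X: h_i(x)=1,\ |c_i(x)\cap X|<R_i\}\bigr|$, and because for each fixed $x$ the set of levels at which its containing cell is sparse is exactly $\{i_x,i_x+1,\ldots,L\}$ --- where $i_x$ is the first level at which $c_{i_x}(x)$ has fewer than $R_{i_x}$ points, this set being monotone since cells shrink while $R_i$ grows by a factor $2^p$ per level --- the expected sparse contribution equals $d\sum_{x\in X}\sum_{i\ge i_x}p_i$. Using $R_{i_x}>1$ and $R_i=R_{i_x}2^{(i-i_x)p}$ one checks $\sum_{i\ge i_x}p_i=O\!\big(\log\log(nL/\delta)\big)$ when $R_{i_x}<10^6\ln(nL/\delta)$ and $\sum_{i\ge i_x}p_i\le 2\cdot 10^6\ln(nL/\delta)/R_{i_x}$ otherwise, whence $\E[\text{sparse contribution}]=O\!\big(d\ln(nL/\delta)\log\log(nL/\delta)\big)\cdot\sum_{x\in X}1/R_{i_x}$. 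Finally $\sum_{x\in X}1/R_{i_x}=\sum_i|X^i|/R_i$, with $X^i$ exactly the set of points in crucial cells of level $i$ in Algorithm~\ref{alg:offline_coreset_construction}; this is, up to the factor $\min(2^{2p+1}/R_i,1)$, the quantity $t'$ controlled in the proof of Theorem~\ref{thm:offline_k_clustering_coreset}, and it is $O\!\big((k+d^{O(p)}\OPT/o)\cdot\mathrm{polylog}\big)$ by the same charging of crucial cells to the optimal clustering. Hence $\E[\text{sparse contribution}]=O\!\big((kd+d^{O(p)}\OPT/o)\cdot L\log(nL/\delta)\big)$, and Markov's inequality shows the sparse contribution exceeds this by more than a factor $1/\delta$ with probability at most $\delta$ --- this is the origin of the $1/\delta$ in the statement.

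To conclude I would union-bound: the dense bound holds deterministically once the random grid shift is fixed (its heavy-cell count is purely geometric); the sparse bound holds with probability $\ge 1-\delta$ by the Markov step; and I would also condition on the probability-$(1-\delta)$ event of Lemma~\ref{lem:heavy_cell_decomposition_via_bucketing_based_sketch}, needed only so that the dense/sparse bookkeeping here matches the quantities the simulated algorithm manipulates. This yields $O\!\big((kd+d^{O(p)}\OPT/o)\cdot L\log(nL/\delta)/\delta\big)$ with probability at least $1-2\delta$. The hard part is not the streaming/Markov machinery but the structural step imported from the offline analysis --- bounding the number of heavy cells (equivalently, $\sum_i|X^i|/R_i$) by $O(kd^{O(p)}+d^{O(p)}\OPT/o)$ instead of the trivial $O(n/R_i)$ per level via the optimal $k$-clustering; I would make sure that bound is stated so as to keep the $\OPT/o$ dependence explicit, so that every guess $o$ used by the framework in Algorithm~\ref{alg:sliding_framework} is covered.
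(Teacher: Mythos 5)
Your decomposition is genuinely different from the paper's (the paper splits cells into \emph{center cells} --- cells within $\Delta_i/(2d)$ of an optimal center --- versus the rest, whereas you split into dense versus sparse cells by their exact point counts and handle the sparse part in expectation), but as written it has a real gap in the dense part. You claim the dense cells near the optimal centers number $O(kd^{O(p)})$ per level and that ``the dense bound holds deterministically once the random grid shift is fixed (its heavy-cell count is purely geometric).'' That is not true: points sitting next to an optimal center contribute essentially nothing to $\OPT$, so they cannot be charged, and the only remaining control is geometric --- a ball of radius $c\Delta_i$ around a center can intersect up to $2^d$ cells of side $\Delta_i$ (e.g.\ a center at a grid corner), each of which can be dense once $n\ge 2^dR_i$. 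So deterministically the near-center dense count can be $\Theta(k2^d)$ per level, which destroys the claimed $kd\cdot L\log(nL/\delta)/\delta$ term. The paper avoids this precisely through the random shift: Lemma~\ref{lem:center_cells} shows the \emph{expected} number of center cells per level is $O(k)$ (the probability of straddling a boundary in each coordinate is $1/d$, giving $(1+1/d)^d\le e$), and Markov then gives $6kL/\delta$ over all levels with probability $1-\delta$ --- this Markov step is one source of the $1/\delta$ in the statement, which you attribute solely to the sparse part.

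The same dependence is hidden in your sparse part: the bound you import, $\sum_i|X^i|/R_i=O\bigl((k+d^{O(p)}\OPT/o)\cdot L\bigr)$ (Equation~\eqref{eq:bound_Xi} in the appendix), is itself proved by conditioning on the center-cell count of Lemma~\ref{lem:center_cells}, so it is neither deterministic nor free of that probabilistic event; you would need to condition on it, track its failure probability, and keep the $1/\delta$ from its Markov bound. Once you invoke Lemma~\ref{lem:center_cells} correctly in both places, your dense/sparse route can be repaired, but at that point it essentially reduces to (a more laborious version of) the paper's argument, which simply charges each center cell its threshold $T_i$ and bounds the total sampled mass of all non-center cells in expectation by $d^{O(p)}\ln(nL/\delta)\cdot\OPT/o$ per level, finishing with one application of Markov's inequality.
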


Before proving Lemma~\ref{lem:size_heavy_cell_partioning_sketch}, let us prove the following useful lemma. 
Let $B^*\subset \mathbb{R}^d$ be the optimal centers for $X$, i.e., $\cost(X,B^*)=\OPT$.
A \emph{center cell} $C\in G_i$ in level $i\in\{0,1,\cdots,L\}$ is a cell such that $\inf_{x\in C,y\in B^*}\|x-y\|_2\leq \Delta_i/(2d)$
The following Lemma shows that the total number of center cells cannot be too large since the grids are randomly shifted. 
This is an observation made by~\cite{chen2009coresets,braverman2017clustering,hu2018nearly}. 
We include the proof here for completeness.
\begin{lemma}\label{lem:center_cells}
With probability at least $1-\delta$, the total number of center cells is at most $6kL/\delta$.
\end{lemma}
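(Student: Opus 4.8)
The plan is to bound the \emph{expected} number of center cells, where the expectation is over the random shift $v$ (which is independent of the fixed optimal center set $B^*$), and then apply Markov's inequality. Fix a center $y\in B^*$ and a level $i\in\{0,1,\dots,L\}$. The key observation is that center cells factorize over coordinates: if a cell $C\in G_i$ satisfies $\inf_{x\in C}\|x-y\|_2\le \Delta_i/(2d)$, then $\inf_{x\in C}|x(j)-y(j)|\le \Delta_i/(2d)$ for every coordinate $j\in[d]$, so the $j$-th one-dimensional interval of $C$ must be one of the length-$\Delta_i$ grid intervals lying within distance $\Delta_i/(2d)$ of $y(j)$. Since consecutive breakpoints of the $j$-th coordinate grid are $\Delta_i$ apart and $\Delta_i/(2d)<\Delta_i$, there is (almost surely) exactly one such interval unless $y(j)$ is within $\Delta_i/(2d)$ of a breakpoint, in which case there are at most two. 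Hence the number of center cells of level $i$ with respect to $y$ is at most $2^{B_i(y)}$, where $B_i(y)$ counts the coordinates in which $y(j)$ lies within $\Delta_i/(2d)$ of a breakpoint.

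The next step is to estimate $\E[2^{B_i(y)}]$. The breakpoints of the $j$-th coordinate grid at level $i$ are $\{t\Delta_i-v(j):t\in\mathbb{Z}\}$, and since $v(j)$ is uniform on $[0,\Delta]$ with $\Delta=2^i\Delta_i$ an integer multiple of $\Delta_i$, the offset $(y(j)+v(j))\bmod\Delta_i$ is uniform on $[0,\Delta_i)$; thus the probability that $y(j)$ lies within $\Delta_i/(2d)$ of some breakpoint is at most $1/d$. These events are independent across $j$ because the $v(j)$ are, so
\[
\E\!\left[2^{B_i(y)}\right]=\prod_{j=1}^{d}\E\!\left[2^{\mathbf{1}[\,j\text{ close}\,]}\right]=\prod_{j=1}^{d}\bigl(1+\Pr[\,j\text{ close}\,]\bigr)\le\Bigl(1+\tfrac1d\Bigr)^{d}\le e .
\]
Summing over the at most $k$ centers of $B^*$ and the $L+1$ levels $0,1,\dots,L$, the expected total number of center cells is at most $e\,k\,(L+1)\le 2ekL$.

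Finally, Markov's inequality yields that the number of center cells exceeds $6kL/\delta$ with probability at most $2ekL\big/(6kL/\delta)=e\delta/3<\delta$, which gives the claim. The only step that requires real care is the first one: establishing that a center cell forces \emph{every} coordinate of $y$ to be close to the corresponding interval of the cell, so that the count genuinely factorizes and is bounded by $2^{B_i(y)}$ (and checking the per-coordinate count of intervals within $\Delta_i/(2d)$ is at most two). Everything afterwards — the uniformity of the offset modulo $\Delta_i$, the independence across coordinates, and the union/Markov bound — is routine.
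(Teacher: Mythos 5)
Your proposal is correct and follows essentially the same argument as the paper: bound the number of level-$i$ center cells for a fixed center by $2$ raised to the number of coordinates within $\Delta_i/(2d)$ of a grid boundary, use independence of the coordinates of the random shift to get an expectation of at most $(1+1/d)^d\le e$ per center per level, sum over the $k$ centers and $L+1$ levels, and finish with Markov's inequality. The only cosmetic differences are your slightly sharper constants and the explicit per-coordinate factorization discussion, which the paper states more briefly.
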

\begin{proof}
Let $B^*=\{b_1^*,b_2^*,\cdots,b_k^*\}$.
Consider the grid $G_i$ in level $i\in \{0,1,\cdots, L\}$.
Let $Y_{j,\alpha}$ denote an indicator variable of the event that the Euclidean distance from $b_j$ to the boundary of $G_i$ in the $\alpha$-th dimension is at most $\Delta_i/(2d)$.
Notice that if a center is close to a boundary of $G_i$ in a dimension, it will contribute a factor of at most $2$ to the number of center cells.
Thus, the number of cells that have distance to $b^*_j$ at most $\Delta_i/(2d)$ is at most
\begin{align*}
2^{\sum_{\alpha=1}^d Y_{j,\alpha}}.
\end{align*}
Notice that
\begin{align*}
\E\left[2^{\sum_{\alpha=1}^d Y_{j,\alpha}}\right]=\prod_{\alpha=1}^d\E[2^{Y_{j,\alpha}}]= (1+1/d)^d\leq e\leq 3,
\end{align*}
where the second inequality follows from that $\Pr[Y_{j,\alpha}=1]= (2\cdot \Delta_i/(2d))/\Delta_i=1/d$ and $2^{Y_{j,\alpha}}=1+Y_{j,\alpha}$.
Since there are $k$ centers, the expected total number of center cells in $G_i$ is at most $3\cdot k$.
Since there are $L+1$ levels. the expected total number of center cells over all levels is at most $3k\cdot (L+1)\leq 6kL$.
By Markov's inequality, with probability at least $1-\delta$, the number of center cells in all levels is at most $6kL/\delta$.
\end{proof}

\paragraph{Proof of Lemma~\ref{lem:size_heavy_cell_partioning_sketch}.} We are going to prove Lemma~\ref{lem:size_heavy_cell_partioning_sketch}.
The space budget of $Z(X)$ can be upper bounded as the following:
\begin{align}
&\text{The space budget of $Z(X)$}\notag\\
=&\sum_{i=0}^L \sum_{C\in G_i} \min\left(T_i,\sum_{x\in X:h_i(x)=1,g_i(x)=C} |\zeta_i(x)|\right)\notag\\
\leq &\sum_{i=0}^L \left(\sum_{C\in G_i:C\text{ is a center cell}} T_i +\sum_{C\in G_i:C\text{ is not a center cell}} |\{x\in X\mid h_i(x)=1,g_i(x)=C\}| \cdot d\right)\notag\\
= & \sum_{i=0}^L \sum_{C\in G_i:C\text{ is a center cell}} T_i\label{eq:first_part}\\
&+d\cdot \sum_{i=0}^L \sum_{x\in X:x\text{ is not in any center cell in }G_i} h_i(x).\label{eq:second_part}
\end{align}
According to Lemma~\ref{lem:center_cells}, part~\eqref{eq:first_part} is at most $6kL/\delta\cdot 10^7\ln(nL/\delta)\cdot d$ with probability at least $1-\delta$. In the remaining of the proof let us bound part~\eqref{eq:second_part}.

Consider $i\in\{0,1,\cdots,L\}$. 
If $x\in X$ is not in any center cell in $G_i$, then $\min_{b\in B^*} \|x-b\|_2\geq \Delta_i/(2d)$.
Thus, $|\{x\in X\mid x\text{ is not in any center cell in }G_i\}|\leq \frac{\OPT}{(\Delta_i/(2d))^p}\leq 100\cdot \left(2d^{1.5}\right)^p\cdot R_i\cdot \frac{\OPT}{o}$ where the last inequality follows from that $R_i=0.01o/(\sqrt{d}\Delta_i)^p$.
Since $p_i=\min(1,10^6\ln(nL/\delta)/R_i)$, we have
\begin{align*}
\E\left[ \sum_{x\in X:x\text{ is not in any center cell in }G_i} h_i(x)\right]\leq 10^8\cdot (2d^{1.5})^p\cdot \ln(nL/\delta)\cdot \frac{\OPT}{o}.
\end{align*}
By Markov's inequality, with probability at least $1-\delta$, we have:
\begin{align*}
\sum_{i=0}^L \sum_{x\in X:x\text{ is not in any center cell in }G_i} h_i(x) \leq 10^8\cdot (2d^{1.5})^p\cdot (L+1)\cdot \ln(nL/\delta)/\delta\cdot \frac{\OPT}{o}.
\end{align*}
Thus, we can bound part~\eqref{eq:second_part} by $2\cdot 10^8\cdot (2d^{1.5})\cdot dL\cdot \ln(nL/\delta)/\delta\cdot \frac{\OPT}{o}$.
By combining the part~\eqref{eq:first_part} with the part~\eqref{eq:second_part}, we conclude the proof.
\qed

\subsubsection{Sampling Process via Bucketing-based Sketches}\label{sec:sampling_process}
Now we will describe the simulation of Algorithm~\ref{alg:offline_coreset_construction} via bucketing-based sketches.
We divided Algorithm~\ref{alg:offline_coreset_construction} into two parts.
The first part contains line~\ref{sta:start_offline_coreset}-\ref{sta:mid_offline_coreset} of Algorithm~\ref{alg:offline_coreset_construction}.
The second part contains remaining steps of Algorithm~\ref{alg:offline_coreset_construction}.

Let us first consider the first part. 
In Section~\ref{sec:heavy_cell_partition_via_bucket_based_sketches}, we described how to simulate Algorithm~\ref{alg:offline_heavy_cell_decomposition} via bucketing-based sketches.
Thus, to simulate line~\ref{sta:start_offline_coreset}-\ref{sta:mid_offline_coreset} of Algorithm~\ref{alg:offline_coreset_construction}, we only need to obtain $\lambda(X^i)$ for all $i\in\{0,1,\cdots,L\}$ via bucketing-based sketches.
We describe a bucketing-based sketch $Z'(\cdot)$ as the following.
Our sketch $Z'(\cdot)$ is composed by $L+1$ sub-sketches $Z'_0(\cdot),Z'_1(\cdot),\cdots,Z'_L(\cdot)$.
Let $\delta\in(0,0.5)$ be a probability parameter.
Let $\varepsilon\in(0,0.5)$ be an error parameter.
To describe $Z'_i(\cdot)$ for $i\in\{0,1,\cdots,L\}$, we need to specify the filter function $h'_i(\cdot)$, the bucketing function $g'_i(\cdot)$, the processing function $\zeta_i'(\cdot)$ and the threshold $T'_i$ (see Definition~\ref{def:bucketing_based_sketch}):
\begin{enumerate}
    \item $\forall x\in [\Delta]^d,h'_i(x)$ are independent, and
    \begin{align*}
        h'_i(x)=\left\{\begin{array}{ll}1, & \text{w.p. } p'_i=\min(1,10^6\ln(nL/\delta)/(\varepsilon^2\gamma R_i)),\\ 0 & \text{o.w.}\end{array}\right.
    \end{align*}
    \item Let $\mathcal{B}$ be all cells of grids described by Algorithm~\ref{alg:offline_coreset_construction}, i.e., $\mathcal{B}=G_0\cup G_1\cup \cdots \cup G_L$. 
    Then $g'_i(x)$ is the cell in $G_i$ that contains $x$, i.e.,
    \begin{align*}
        \forall x\in[\Delta]^d,g'_i(x)=C:C\in G_i,x\in C.
    \end{align*}
    \item $\zeta'_i(x)=x$.
    \item $T'_i=10\cdot p'_i\cdot R_i\cdot d$.
\end{enumerate}
Note that $Z'(\cdot)$ is very similar to $Z(\cdot)$.
The main differences are that $h'_i$ has a higher sampling rate than $h_i$ and $T'_i>T_i$.

\begin{lemma}[Obtaining $\lambda(X^i)$ of Algorithm~\ref{alg:offline_coreset_construction} via bucketing-based sketch]\label{lem:simulate_sampling_first_part}
If during the simulation of Algorithm~\ref{alg:offline_heavy_cell_decomposition} in line~\ref{sta:call_of_heavy_decomposition} of Algorithm~\ref{alg:offline_coreset_construction}, $\forall i\in \{0,1,\cdots,L\},C\in G_i$ $\lambda(C)$ satisfies $\lambda(C)\in |C\cap X|\pm 0.1R_i$ or $\lambda(C)\in (1\pm 0.01) \cdot |C\cap X|$, then $\forall i\in \{0,1,\cdots,L\}$, $\lambda(X^i)$ can be obtained via $Z'(X)$ such that with probability at least $1-2\delta$, $\forall i\in \{0,1,\cdots,L\},$ $\lambda(X^i)\in |X^i|\pm 0.1\varepsilon \gamma R_i$ or $\lambda(X^i) \in (1\pm 0.01\varepsilon)\cdot |X^i|$.
Line~\ref{sta:post_exe_call_partition}-\ref{sta:mid_offline_coreset} of Algorithm~\ref{alg:offline_coreset_construction} can be simulated once $\lambda(X^i)$ for all $i\in\{0,1,\cdots,L\}$ are obtained.
The running time to simulate line~\ref{sta:post_exe_call_partition}-\ref{sta:mid_offline_coreset} of Algorithm~\ref{alg:offline_coreset_construction} is at most $O(|Z'(X)|+L)$.
\end{lemma}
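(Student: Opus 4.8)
The plan is to mirror the structure of the proof of Lemma~\ref{lem:heavy_cell_decomposition_via_bucketing_based_sketch}, but with the higher sampling rate $p'_i$ tuned so that the additive error is scaled down by $\varepsilon\gamma$ rather than just being controlled at the level of $R_i$. First I would recall that, conditioned on the event in the hypothesis, the heavy-cell partitioning simulated in line~\ref{sta:call_of_heavy_decomposition} produces exactly the same set of heavy and crucial cells as the offline Algorithm~\ref{alg:offline_heavy_cell_decomposition} would, so the sets $X^i=\{x\in X\mid x\text{ is in a crucial cell in }G_i\}$ are well-defined from the sketch structure (the crucial cells are determined purely by the heavy/crucial marking, which the sketch reproduces). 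Then define $\lambda(X^i):=\big|\{x\in X\mid h'_i(x)=1,\ g'_i(x)\in G_i\text{ is crucial}\}\big|/p'_i$; this is computable from $Z'(X)$ provided none of the relevant buckets (crucial cells in $G_i$) overflowed their threshold $T'_i$.

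The key estimation step: fix $i$ and the union $X^i$. We have $\E\big[|\{x\in X^i\mid h'_i(x)=1\}|\big]=p'_i|X^i|$. Apply Bernstein's inequality exactly as in Lemma~\ref{lem:heavy_cell_decomposition_via_bucketing_based_sketch}, splitting on whether $|X^i|>\varepsilon\gamma R_i$ or $|X^i|\le\varepsilon\gamma R_i$. In the first case aim for a $(1\pm 0.01\varepsilon)$ multiplicative guarantee; in the second for an additive $0.1\varepsilon\gamma R_i$ guarantee. Since $p'_i=\min(1,10^6\ln(nL/\delta)/(\varepsilon^2\gamma R_i))$, the variance term $p'_i|X^i|$ (resp.\ $p'_i\varepsilon\gamma R_i$) is small enough that the exponent in Bernstein's bound is at least $\ln(n(L+1)/\delta)$, giving failure probability at most $\delta/(n(L+1))$ per level; a union bound over the $L+1$ levels gives $1-\delta$. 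A second source of $\delta$ failure comes from bucket overflow: I would argue that with probability $\ge 1-\delta$ every crucial cell $C\in G_i$ satisfies $|\{x\in X\mid h'_i(x)=1,g'_i(x)=C\}|\le T'_i=10 p'_i R_i d$ — since a crucial cell has $|C\cap X|<R_i$ (its estimate $\lambda(C)<R_i$, and by the hypothesis $|C\cap X|\le \lambda(C)+0.1R_i$ or $\le \lambda(C)/(1-0.01)$, so $|C\cap X|\le 2R_i$ say), so $\E[|\{x\in C: h'_i(x)=1\}|]\le 2p'_iR_i$, and a Bernstein/Chernoff tail plus union bound over all nonempty crucial cells controls the overflow with the stated probability. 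Combining the two bad events yields the claimed $1-2\delta$.

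Finally, once $\lambda(X^i)$ is available for all $i$, lines~\ref{sta:post_exe_call_partition}--\ref{sta:mid_offline_coreset} involve only: recomputing $L$, the grids and the $R_i$ (no data needed), marking crucial cells (already done inside the heavy-cell simulation), forming $I=\{i:\lambda(X^i)\ge\gamma R_i\}$, and computing $t'$ and $m$ by arithmetic over the $L+1$ values $\lambda(X^i)$ — all doable in $O(L)$ time plus the $O(|Z'(X)|)$ already spent scanning the sketch to assemble the counts, for a total of $O(|Z'(X)|+L)$. I expect the main obstacle to be the bucket-overflow bookkeeping: one must be careful that a crucial cell is genuinely light (so its true count is $O(R_i)$ and the threshold $T'_i\propto p'_iR_id$ is not exceeded w.h.p.), and that this ``light crucial cell'' fact is exactly what the hypothesis on $\lambda(C)$ buys us; if a bucket did overflow we would lose the exact value of $\lambda(X^i)$ rather than merely learning a lower bound, so unlike in Lemma~\ref{lem:heavy_cell_decomposition_via_bucketing_based_sketch} the overflow event must be ruled out rather than tolerated, which is why the extra $\delta$ appears.
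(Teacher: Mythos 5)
Your proposal follows essentially the same route as the paper's proof: define $\lambda(X^i)$ from the sampled points of $Z'_i(X)$ lying in crucial cells, use the hypothesis on $\lambda(C)$ to conclude every crucial cell has $|C\cap X|\le 1.1R_i$ so that (by Bernstein plus a union bound over nonempty crucial cells) no crucial bucket exceeds $T'_i$ and the counts are exact, then apply Bernstein again in two cases to get the additive-or-multiplicative guarantee on $\lambda(X^i)$, and finish with the same $O(|Z'(X)|+L)$ accounting; your remark that overflow must be ruled out rather than tolerated is exactly the distinction from Lemma~\ref{lem:heavy_cell_decomposition_via_bucketing_based_sketch}. The one calibration slip is your case split at $\varepsilon\gamma R_i$ with a multiplicative target above it: for $|X^i|\in(\varepsilon\gamma R_i,\gamma R_i]$ the Bernstein exponent for a $(1\pm 0.01\varepsilon)$ relative bound is only about $\varepsilon\cdot\log(nL/\delta)$ with the given $p'_i$, which need not reach the claimed $\ln(n(L+1)/\delta)$; the paper avoids this by splitting at $\gamma R_i$ (equivalently, in that middle range you should target the additive $0.1\varepsilon\gamma R_i$ bound, which does go through), a one-line fix that leaves your argument intact.
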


\begin{proof}
Consider $i\in \{0,1,\cdots,L\}$ and $C\in G_i$.
If $C$ is a crucial cell, since $\lambda(C)$ is a good estimation of $|C\cap X|$, we know that $|C\cap X|\leq 1.1 R_i$.
Notice that $\E[|\{x\in X\mid h'_i(x)=1,g'_i(x)=C\}|]=p'_i\cdot |C\cap X|$.
By Bernstein inequality, we have:
\begin{align*}
&\Pr\left[|\{x\in X\mid h'_i(x)=1,g'_i(x)=C\}|-p'_i\cdot |C\cap X|\geq 0.1\cdot p'_i\cdot R_i\right]\\
\leq & \exp\left(-\frac{0.5\cdot 0.1^2\cdot {p'_i}^2\cdot R_i^2}{p'_i\cdot |C\cap X|+\frac{1}{3}\cdot 0.1\cdot p'_i\cdot R_i}\right)\\
\leq & \delta/(n(L+1)),
\end{align*}
where the last inequality follows from that $|C\cap X|\leq 1.1R_i$ and $p'_i\geq 10^4\ln(nL/\delta)/R_i$.
Thus, for a crucial cell $C$, with probability at least $1-\delta/(n(L+1))$, $|\{x\in X\mid h'_i(x)=1,g'_i(x)=C\}|\leq 1.1\cdot p_i'\cdot R_i$.
Thus, by taking union bound over all non-empty crucial cells, with probability at least $1-\delta$, $\forall i\in\{0,1,\cdots,L\},\forall C\in G_i$ that is a crucial cell, $|\{x\in X\mid h'_i(x)=1,g'_i(x)=C\}|\leq 1.1\cdot p'_i\cdot R_i$.
Thus, with probability at least $1-\delta$, $\forall i\in\{0,1,\cdots,L\},\forall C\in G_i$ that is a crucial cell, $\sum_{x\in X: h'_i(x)=1,g'_i(x)=C}|\zeta'_i(x)|\leq T'_i$.
According to condition~\ref{it:bucket_sketch_2b} of Definition~\ref{def:bucketing_based_sketch} and the construction of $X^i$ in Algorithm~\ref{alg:offline_coreset_construction}, it implies that with probability at least $1-\delta$, $\forall i\in \{0,1,\cdots, L\},\forall x\in X^i$, if $h'_i(x)=1$, then $(x,\zeta'_i(x))\in Z'_i(X)$.

Notice that for each $(x,\zeta'_i(x))\in Z'_i(X)$, we are able to know whether $x\in X^i$ since we know all heavy cells.
By the construction of $X^i$, $x\in X^i$ if and only if $x$ is not in a heavy cell of $G_i$ and $x$ is in a heavy cell of $G_{i-1}$.
Thus, we are able to compute $\lambda(X^i)=|\{(x,\zeta'_i(x))\in Z'_i(X)\mid x\in X^i\}|/p_i'=|\{x\in X^i\mid h'_i(x)=1\}|/p'_i$.
Next, we will show that $\lambda(X^i)$ is a good approximation to $|X^i|$.
There are two cases.
If $|X^i|>\gamma R_i$, by Bernstein inequality, we have:
\begin{align*}
&\Pr[\left| |\{x\in X^i\mid h'_i(x)=1\}| - p'_i\cdot |X^i| \right| \geq 0.01\cdot \varepsilon\cdot p'_i\cdot |X^i|]\\
\leq & 2 \exp\left(-\frac{0.5\cdot 0.01^2\cdot \varepsilon^2\cdot {p'_i}^2\cdot |X^i|^2}{p'_i\cdot |X^i|+\frac{1}{3}\cdot 0.01\cdot \varepsilon\cdot p'_i\cdot |X^i|}\right)\\
\leq & \delta/(L+1),
\end{align*}
where the last inequality follows from that $|X^i|>\gamma R_i$ and $p'_i\geq 10^6\ln(L/\delta)/(\varepsilon^2\gamma R_i)$.
In the second case, we have $|X^i|\leq \gamma R_i$.
In this case, by Bernstein inequality, we have:
\begin{align*}
&\Pr\left[\left||\{x\in X^i\mid h'_i(x)=1\}|-p'_i\cdot |X^i|\right|\geq 0.1\cdot \varepsilon\cdot p'_i\cdot \gamma\cdot R_i\right]\\
\leq & \exp\left(-\frac{0.5\cdot 0.1^2\cdot \varepsilon^2 \cdot {p'_i}^2\cdot \gamma^2\cdot R_i^2}{p'_i\cdot |X^i| + \frac{1}{3}\cdot 0.1\cdot \varepsilon\cdot p'_i\cdot \gamma\cdot R_i }\right)\\
\leq & \delta/(L+1),
\end{align*}
where the last inequality follows from that $|X^i|\leq \gamma R_i$ and $p'_i\geq 10^4\ln(L/\delta)/(\varepsilon^2\gamma R_i)$.
Thus, by taking union bound over all $i\in\{0,1,\cdots,L\}$, with probability at least $1-\delta$, $\forall i\in\{0,1,\cdots,L\},\lambda(X^i)$ satisfies $\lambda(X^i)\in |X^i|\pm 0.1 \varepsilon \gamma R_i$ or $\lambda(X^i)\in (1\pm 0.01\varepsilon)|X^i|$.

By using such $\lambda(X^i)$, we are able to simulate line~\ref{sta:post_exe_call_partition}-line~\ref{sta:mid_offline_coreset} of Algorithm~\ref{alg:offline_coreset_construction}.
It is easy to verify that we only need to scan each item in $Z'(X)$ at most once to compute $\lambda(X^i)$ for all $i\in\{0,1,\cdots,L\}$.
Thus the total running time to simulate line~\ref{sta:post_exe_call_partition}-line~\ref{sta:mid_offline_coreset} of Algorithm~\ref{alg:offline_coreset_construction} is at most $O(L+|Z'(X)|)$.
\end{proof}

\begin{lemma}[Size of $Z'(X)$]\label{lem:size_of_Zprime}
With probability at least $1-2\delta$, the space budget of $Z'(X)$ is at most $O((kd+d^{O(p)}\cdot \OPT/o)\cdot 2^{O(p)}L^2\log(nL/\delta)/(\delta\varepsilon^3))$.
\end{lemma}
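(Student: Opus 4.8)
The plan is to follow the proof of Lemma~\ref{lem:size_heavy_cell_partioning_sketch} essentially verbatim, since $Z'(\cdot)$ differs from $Z(\cdot)$ only in the larger sampling probability $p'_i$ and larger threshold $T'_i$; all the extra factors in the bound come from tracking $1/(\varepsilon^2\gamma)$, which by $\gamma=\varepsilon/(40\cdot 2^{2p+2}L)$ equals $40\cdot 2^{2p+2}L/\varepsilon^3 = 2^{O(p)}L/\varepsilon^3$. Let $B^*$ be optimal centers for $X$ and recall the notion of a \emph{center cell} from just before Lemma~\ref{lem:center_cells} (the randomly shifted grids of Algorithm~\ref{alg:offline_coreset_construction} coincide with those of Algorithm~\ref{alg:offline_heavy_cell_decomposition}, so this notion is unchanged). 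Expanding the space budget from Definition~\ref{def:bucketing_based_sketch} and bounding, at each level $i$, the contribution of center cells by $T'_i$ and of non-center cells by $d$ times their point count, I get
\begin{align*}
\text{space budget of }Z'(X)\ \le\ \sum_{i=0}^{L}\ \sum_{C\in G_i:\ C\text{ center}} T'_i\ +\ d\cdot\sum_{i=0}^{L}\ \sum_{x\in X:\ x\text{ in no center cell of }G_i} h'_i(x),
\end{align*}
and I would bound the two terms separately.

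For the first term, note that whether or not $p'_i=1$ one has $T'_i = 10\,p'_i R_i d\le 10^{7}d\ln(nL/\delta)/(\varepsilon^2\gamma)$ (if $p'_i<1$ then $p'_iR_i = 10^6\ln(nL/\delta)/(\varepsilon^2\gamma)$; if $p'_i=1$ then $R_i\le 10^6\ln(nL/\delta)/(\varepsilon^2\gamma)$). By Lemma~\ref{lem:center_cells}, with probability at least $1-\delta$ the number of center cells over all $L+1$ levels is at most $6kL/\delta$, so the first term is at most $6kL/\delta\cdot 10^{7}d\ln(nL/\delta)/(\varepsilon^2\gamma) = O(kd\cdot 2^{O(p)}L^2\log(nL/\delta)/(\delta\varepsilon^3))$. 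For the second term, argue exactly as in Lemma~\ref{lem:size_heavy_cell_partioning_sketch}: if $x$ lies in no center cell of $G_i$ then $\min_{b\in B^*}\|x-b\|_2\ge\Delta_i/(2d)$, so the number of such points is at most $\OPT/(\Delta_i/(2d))^p\le 100(2d^{1.5})^p R_i\cdot\OPT/o$ (using $R_i=0.01\,o/(\sqrt d\,\Delta_i)^p$). Hence $\E\bigl[\sum_{x\text{ in no center cell of }G_i}h'_i(x)\bigr]\le p'_i\cdot 100(2d^{1.5})^p R_i\,\OPT/o\le 10^{8}(2d^{1.5})^p\ln(nL/\delta)\,\OPT/(o\,\varepsilon^2\gamma)$; summing over the $L+1$ levels and applying Markov's inequality gives, with probability at least $1-\delta$, a total of at most $10^{8}(2d^{1.5})^p(L+1)\ln(nL/\delta)\,\OPT/(\delta\,o\,\varepsilon^2\gamma)$, and multiplying by $d$ this is $O(d^{O(p)}\cdot(\OPT/o)\cdot 2^{O(p)}L^2\log(nL/\delta)/(\delta\varepsilon^3))$. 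A union bound over the two $(1-\delta)$ events gives the claimed bound with probability at least $1-2\delta$.

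There is essentially no hard part: the argument is a routine re-run of the proof of Lemma~\ref{lem:size_heavy_cell_partioning_sketch}, and it is in fact slightly easier here since no Bernstein bound is needed — only Lemma~\ref{lem:center_cells} (itself a Markov argument) and one further application of Markov's inequality. The only thing to be careful about is the parameter bookkeeping: $\gamma^{-1}$ supplies one extra factor of $L$ (making the final power of $L$ equal to $2$, together with the $L+1$ from the number of levels) and one extra factor of $\varepsilon^{-1}$ (making the final power of $\varepsilon$ equal to $3$, together with the $\varepsilon^{-2}$ already present in $p'_i$), while the $2^{2p+2}$ in $\gamma^{-1}$ and the $(2d^{1.5})^p$ from the volume argument are absorbed into the $2^{O(p)}$ and $d^{O(p)}$ factors.
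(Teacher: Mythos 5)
Your proposal is correct and follows essentially the same route as the paper's own proof: the same decomposition of the space budget into center-cell terms bounded by $T'_i$ via Lemma~\ref{lem:center_cells} and non-center-cell terms bounded by the distance-to-$B^*$ volume argument plus Markov's inequality, with the extra $L/\varepsilon$ factor coming from $\gamma^{-1}$. The parameter bookkeeping (including the case distinction on whether $p'_i=1$) is handled correctly, so there is nothing to add.
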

\begin{proof}
The proof is similar to the proof of Lemma~\ref{lem:size_heavy_cell_partioning_sketch}.
We will still use the concept of center cells (see Lemma~\ref{lem:center_cells}).
\begin{align}
&\text{The space budget of }Z'(X)\notag\\
=&\sum_{i=0}^L\sum_{C\in G_i}\min\left(T'_i,\sum_{x\in X:h'_i(x)=1,g'_i(x)=C} |\zeta'_i(x)|\right)\notag\\
\leq & \sum_{i=0}^L\left(\sum_{C\in G_i:C\text{ is a center cell}} T_i + \sum_{C\in G_i:C\text{ is not a center cell}} |\{x\in X\mid h'_i(x)=1,g'_i(x)=C\}|\cdot d\right)\notag\\
=&\sum_{i=0}^L \sum_{C\in G_i:C\text{ is a center cell}} T'_i \label{eq:prime_part_one}\\
& +d\cdot \sum_{i=0}^L \sum_{x\in X:x\text{ is not in any center cell in }G_i} h'_i(x). \label{eq:prime_part_two}
\end{align}
According to Lemma~\ref{lem:center_cells}, part~\eqref{eq:prime_part_one} is at most $6kL/\delta\cdot 10^7\ln(nL/\delta)/(\varepsilon^2\gamma)\cdot d$ with probability at least $1-\delta$.
In the remaining of the proof let us upper bound part~\eqref{eq:prime_part_two}.

Recall that $B^*$ is the optimal centers for $X$, i.e. $B^*$ satisfies $\cost(X,B^*)=\OPT$.
Consider $i\in \{0,1,\cdots,L\}$.
If $x\in X$ is not in any center cell in $G_i$, then $\min_{b\in B^*} \|x-b\|_2\geq \Delta_i/(2d)$.
Thus, $|\{x\in X\mid x \text{ is not in any center cell in }G_i\}|\leq \frac{\OPT}{(\Delta_i/(2d))^p}\leq 100\cdot (2d^{1.5})^p\cdot R_i\cdot \frac{\OPT}{o}$ where the last inequality follows from that $R_i=0.01o/(\sqrt{d}\Delta_i)^p$.
Since $p'_i=\min(1,10^6\ln(nL/\delta)/(\varepsilon^2\gamma R_i))$, we have
\begin{align*}
\E\left[\sum_{x\in X:x\text{ is not in any center cell in }G_i} h'(x)\right] \leq 10^8\cdot (2d^{1.5})^p\cdot \ln(nL/\delta)/(\varepsilon^2\gamma)\cdot \frac{\OPT}{o}.
\end{align*}
By Markov's inequality, with probability at least $1-\delta$, we have:
\begin{align*}
\sum_{i=0}^L \sum_{x\in X:x\text{ is not in any center cell in }G_i} h'_i(x) \leq 10^8\cdot (2d^{1.5})^p\cdot \ln(nL/\delta)/(\varepsilon^2\gamma)\cdot \frac{\OPT}{o}\cdot (L+1)/\delta.
\end{align*}
Thus, we can upper bound part~\eqref{eq:prime_part_two} by $10^8\cdot (2d^{1.5})^p\cdot \ln(nL/\delta)/(\varepsilon^2\gamma)\cdot \frac{\OPT}{o}\cdot d(L+1)/\delta$.
By combining with part~\eqref{eq:prime_part_one} with part~\eqref{eq:prime_part_two}, and since $\gamma = \Theta(\varepsilon/(2^{O(p)}L))$ we conclude the proof.
\end{proof}

In the remaining of the section, we will describe how to use a bucketing-based sketch to simulate the sampling procedure shown in line~\ref{sta:repeated_sampling_start}-\ref{sta:repeated_sampling_end} of Algorithm~\ref{alg:offline_coreset_construction}.
The most challenging part is line~\ref{sta:uniform_sampling} since each time we need to draw a uniform sample from $X^i$ while we cannot explicitly store the entire $X^i$.
We use the following idea to draw a uniform sample from $X^i$: we choose a small proper sampling probability $p_i''$  and we construct independent random subsets $Y^i_1, Y^i_2,\cdots,Y^i_{\hat{m}}\subseteq X^i$, where $\forall j\in [\hat{m}]$, each $x\in X^i$ is independently added into $Y_j^i$ with probability $p_i''$. 
Since $p_i''$ is very small, $Y_j^i$ may be an empty set.
But for a non-empty set $Y_j^i$, if we choose a uniform sample from $Y_j^i$, then such sample is also a uniform sample from $X^i$.
Thus, when each time we want to draw an independent uniform sample from $X^i$, we just choose an arbitrary unused non-empty set $Y^i_j$ and draw a uniform sample from $Y^i_j$ as a uniform sample from $X^i$.
Since $p_i''$ is properly chosen, we may afford to maintain all the subsets $Y_1,Y_2,\cdots, Y^i_{\hat{m}}$ via our sketch and the number of non-empty subsets is large enough.
In the following, we formalize the above idea by first introducing an another bucketing-based sketch $Z''(\cdot)$.

Our sketch $Z''(\cdot)$ is composed by $L+1$ sub-sketches $Z''_0(\cdot),Z''_1(\cdot),\cdots,Z''_L(\cdot)$ where $Z''_i(\cdot)$ will be used to handle the uniform sampling over $X^i$.
Let $\delta\in (0,0.5)$ be a probability parameter.
Let $\varepsilon\in(0,0.5)$ be an error parameter.
To describe $Z''_i(\cdot)$ for $i\in\{0,1,\cdots,L\}$, we need to specify the filter function $h''_i(\cdot)$, the bucketing function $g''_i(\cdot)$, the processing function $\zeta_i''(\cdot)$ and the threshold $T''_i(\cdot)$ (see Definition~\ref{def:bucketing_based_sketch}):
\begin{enumerate}
\item $\forall x\in[\Delta]^d,$ $h''_i(x)$ is determined by $\zeta''_i(x)$: $h''_i(x)=1$ if $\zeta''_i(x)\not=\emptyset$; $h''_i(x)=0$ otherwise.
\item Let $\mathcal{B}$ be all cells of grids described by Algorithm~\ref{alg:offline_coreset_construction}, i.e., $\mathcal{B}=G_0\cup G_1\cup\cdots\cup G_L$.
Then $g''_i(x)$ is the cell in $G_i$ that contains $x$, i.e., 
\begin{align*}
\forall x\in[\Delta]^d,g''_i(x)=C:C\in G_i,x\in C.
\end{align*}
\item Let 
\begin{align*}
\hat{m}= 10^{9}\cdot 2^{2p+2}\cdot \varepsilon^{-3}\cdot (kL+(2d^{1.5})^p)\cdot (p\log n\log(kLd)+\log(1/\delta))\cdot \frac{L}{\delta}.
\end{align*}
Let $\zeta''_i(x)$ be a random subset of $[\hat{m}]$ such that $\forall j\in [\hat{m}]$, $j$ is added into $\zeta''_i(x)$ independently with probability $p''_i= \min((2000\cdot (kL+(2d^{1.5})^p)\cdot R_i)^{-1},1)$.
\item $T''_i= 10 \cdot p''_i\cdot \hat{m}\cdot R_i\cdot d$.
\end{enumerate}

We describe the simulating process of line~\ref{sta:repeated_sampling_start}-\ref{sta:repeated_sampling_end} via $Z''(X)$ as the following:
\begin{enumerate}
    \item For each $i\in\{0,1\cdots,L\},j\in[\hat{m}]$, construct $\hat{Y}^i_j=\{x\in X^i\mid (x,\zeta''_i(x))\in Z''(X),j\in \zeta''_i(x)\}$.
    Notice that we are able to verify whether $x\in X^i$ given all heavy cells. 
    Due to the construction of $X^i$, $x\in X^i$ if and only if $x$ is in a non-heavy cell in $G_i$ and is in a heavy cell in $G_{i-1}.$
    \item Initialize $A^0=A^1=\cdots=A^L=[\hat{m}]$.
    \item Repeat $m$ times:
        \begin{enumerate}
            \item Sample a level $i\in I$ with probability $\lambda(X^i)\cdot \min(2^{2p+1}/R_i,1)/t'$.
            \item Pick an arbitrary $j\in A^i$ satisfying $\hat{Y}^i_j\not=\emptyset$. 
            If such $j$ does not exist, return \textbf{FAIL}.
            \item Uniformly sample a point $x\in \hat{Y}^i_j$ and remove $j$ from $A^i$, i.e., $A^i\gets A^i\setminus\{j\}$.
            \item Add the point $x$ into $S$ and set the weight $w(x):=t'/(m\cdot \min(2^{2p+1}/R_i,1))$.
        \end{enumerate}
\end{enumerate}

For the convenience of notation in our analysis, for $i\in\{0,1,\cdots,L\},j\in[\hat{m}]$, we define $Y^i_j=\{x\in X^i \mid j\in \zeta_i''(x)\}$.
It is clear that if $Y^i_j$ is non-empty, then a uniform sample from $Y^i_j$ is a uniform sample for $X^i$.
Notice that $\hat{Y}^i_j\subseteq Y^i_j$, a uniform sample from $\hat{Y}^i_j$ is a unform sample from $X^i$ only when $\hat{Y}^i_j=Y^i_j$.

\begin{lemma}[$\hat{Y}^i_j=Y^i_j$ with a good probability]\label{lem:hatYijisgood}
If line~\ref{sta:start_offline_coreset}-\ref{sta:mid_offline_coreset} of Algorithm~\ref{alg:offline_coreset_construction} are simulated such that $\forall i\in\{0,1,\cdots,L\},C\in G_i$, $\lambda(C)\in |C\cap X|\pm 0.1 R_i$ or $\lambda(C)\in (1\pm 0.01)\cdot |C\cap X|$ and $\forall i\in\{0,1,\cdots,L\}$, $\lambda(X^i)\in |X^i|\pm 0.1\varepsilon\gamma R_i$ or $\lambda(X^i)\in(1\pm 0.01\varepsilon)\cdot |X^i|$, then with probability at least $1-\delta$, $\forall i\in\{0,1,\cdots,L\},j\in [\hat{m}],\hat{Y}^i_j=Y^i_j$.
\end{lemma}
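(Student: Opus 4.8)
The plan is to show that the event $\{\widehat Y^i_j=Y^i_j\text{ for all }i,j\}$ contains a ``no bucket of $Z''(X)$ overflows'' event, and then to bound the overflow probability of each bucket by a Chernoff bound followed by a union bound. First I would record the trivial inclusion $\widehat Y^i_j\subseteq Y^i_j$, and argue that, for a fixed $i$, equality holds for every $j\in[\widehat m]$ as soon as every $x\in X^i$ with $\zeta_i''(x)\neq\emptyset$ is retained in the sub-sketch $Z_i''(X)$: indeed $h_i''(x)=1$ precisely when $\zeta_i''(x)\neq\emptyset$, and $g_i''(x)$ is the cell of $G_i$ containing $x$, which is a crucial cell whenever $x\in X^i$, so by condition~\ref{it:bucket_sketch_2b} of Definition~\ref{def:bucketing_based_sketch} all such points are retained as long as $\sum_{x\in C\cap X}|\zeta_i''(x)|\le T_i''$ for every crucial cell $C\in G_i$ (here I use that $h_i''(x)=0$ forces $|\zeta_i''(x)|=0$, so the sum over $\{x:h_i''(x)=1,g_i''(x)=C\}$ coincides with the sum over $C\cap X$). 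Thus it suffices to prove that, with probability at least $1-\delta$ over the independent random subsets $\zeta_i''(\cdot)\subseteq[\widehat m]$, no crucial cell overflows.

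Next I would bound $|C\cap X|$ for a crucial cell $C\in G_i$: it is not marked heavy although all of its ancestors are, hence $\lambda(C)<R_i$, and the hypothesis that $\lambda(C)$ approximates $|C\cap X|$ (either additively within $0.1R_i$ or multiplicatively within $1\pm0.01$) gives $|C\cap X|<1.1R_i$. In particular, if $p_i''=1$ then $R_i\le 1/(2000(kL+(2d^{1.5})^p))<1$, so $|C\cap X|=0$ and level $i$ is vacuous; hence I only need to treat $p_i''<1$, where $p_i''=(2000(kL+(2d^{1.5})^p)R_i)^{-1}$. Fix a non-empty crucial cell $C\in G_i$. Since $|\zeta_i''(x)|=\sum_{j=1}^{\widehat m}\mathbf{1}(j\in\zeta_i''(x))$ and all these indicators (over $x$ and $j$) are independent $\mathrm{Bernoulli}(p_i'')$ variables, $\sum_{x\in C\cap X}|\zeta_i''(x)|$ is a binomial with mean $\mu=|C\cap X|\,p_i''\widehat m\le 1.1R_ip_i''\widehat m$, whereas $T_i''=10R_ip_i''\widehat m\,d\ge 10R_ip_i''\widehat m$ exceeds $\mu$ by a factor of at least $10/1.1>9$ (using $d\ge 1$). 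A standard multiplicative Chernoff bound then yields $\Pr\big[\sum_{x\in C\cap X}|\zeta_i''(x)|>T_i''\big]\le 3^{-T_i''}\le 3^{-10R_ip_i''\widehat m}$, and substituting the values of $p_i''$ and $\widehat m$ gives $R_ip_i''\widehat m=\widehat m/(2000(kL+(2d^{1.5})^p))=\Theta\!\big(2^{2p}\varepsilon^{-3}(p\log n\log(kLd)+\log(1/\delta))L/\delta\big)$, which is far larger than $\log(n(L+1)/\delta)$; hence each non-empty crucial cell overflows with probability at most $\delta/(n(L+1))$.

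Finally I would union-bound over all crucial cells. In each of the $L+1$ levels the non-empty crucial cells are pairwise disjoint grid cells each containing at least one point of $X$, so there are at most $n$ of them per level and at most $n(L+1)$ overall; the union bound gives a total failure probability of at most $\delta$. I would also spell out the independence bookkeeping: the estimates $\lambda(C)$ and $\lambda(X^i)$ depend only on the randomness of the sketches $Z$ and $Z'$, which is independent of the random subsets $\zeta_i''(\cdot)$ used by $Z''$; conditioning on the good event for those estimates therefore fixes the set of crucial cells while leaving the $\zeta_i''$'s fully random, so the Chernoff-plus-union-bound argument applies as written. I expect the only mildly delicate point to be the reduction in the first paragraph — translating ``$\widehat Y^i_j=Y^i_j$ for all $j$'' into ``no crucial cell of $Z_i''(X)$ overflows'' — since after that the estimate is routine, with a comfortable constant-factor gap between $\mu$ and $T_i''$.
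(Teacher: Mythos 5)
Your proposal is correct and follows essentially the same route as the paper's proof: bound $|C\cap X|\leq 1.1R_i$ for crucial cells from the goodness of $\lambda(C)$, show via a concentration bound that $\sum_{x\in C\cap X}|\zeta_i''(x)|\leq T_i''$ for every non-empty crucial cell (the paper uses Bernstein where you use a multiplicative Chernoff bound, an immaterial difference given the large slack), union-bound over the at most $n(L+1)$ non-empty crucial cells, and invoke condition~\ref{it:bucket_sketch_2b} of Definition~\ref{def:bucketing_based_sketch} to conclude every $x\in X^i$ with $\zeta_i''(x)\neq\emptyset$ is retained, hence $\hat{Y}^i_j=Y^i_j$. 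Your explicit treatment of the $p_i''=1$ case and the independence bookkeeping are small refinements the paper leaves implicit, not a different argument.
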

\begin{proof}
Suppose all $\lambda(C)$ and $\lambda(X^i)$ are good enough as claimed in the statement.

Consider $i\in\{0,1,\cdots,L\}$ and a crucial cell $C\in G_i$.
Since $\lambda(C)$ is a good estimation of $|C\cap X|$, we know that $|C\cap X|\leq 1.1R_i$.
Notice that
\begin{align*}
\E\left[\sum_{x\in X:h''_i(x)=1,g''_i(x)=C} |\zeta_i''(x)|\right] = p_i''\cdot \hat{m}\cdot  |C\cap X|.
\end{align*}
By Bernstein inequality, we have:
\begin{align*}
&\Pr\left[\sum_{x\in X:h''_i(x)=1,g''_i(x)=C} |\zeta_i''(x)| - p_i''\cdot \hat{m}\cdot |C\cap X|\geq 0.1 \cdot p_i''\cdot \hat{m} \cdot R_i\right]\\
\leq& \exp\left(-\frac{0.5\cdot 0.1^2\cdot {p_i''}^2\cdot \hat{m}^2\cdot R_i^2}{p_i''\cdot \hat{m}\cdot |C\cap X| + \frac{1}{3}\cdot 0.1\cdot {p_i''}\cdot \hat{m}\cdot R_i}\right)\\
\leq& \delta/(n(L+1)),
\end{align*}
where the last inequality follows from that $|C\cap X|\leq 1.1R_i$ and $p_i''\cdot \hat{m}\geq 10^4\ln(nL/\delta)/R_i$
By taking union bound over all $i\in \{0,1,\cdots,L\}$ and all non-empty crucial cells $C\in G_i$, with probability at least $1-\delta$,
\begin{align*}
\sum_{x\in X:h''_i(x)=1,g''_i(x)=C} |\zeta''_i(x)| \leq 1.2\cdot p_i''\cdot\hat{m}\cdot R_i \leq T''_i.
\end{align*}
Thus, according to the condition~\ref{it:bucket_sketch_2b} of Definition~\ref{def:bucketing_based_sketch}, $\forall x\in X^i,$ if $\zeta''_i(x)\not=\emptyset$, then $(x,\zeta''_i(x))\in Z''_i(X)$.
According to the construction of $\hat{Y}_j^i$, we have $\hat{Y}_j^i=Y^i_j$.
\end{proof}

\begin{lemma}[Good simulation if not \textbf{FAIL}]\label{lem:simulation_second_part}
If line~\ref{sta:start_offline_coreset}-\ref{sta:mid_offline_coreset} of Algorithm~\ref{alg:offline_coreset_construction} are simulated such that $\forall i\in\{0,1,\cdots,L\},C\in G_i$, $\lambda(C)\in |C\cap X|\pm 0.1 R_i$ or $\lambda(C)\in (1\pm 0.01)\cdot |C\cap X|$ and $\forall i\in\{0,1,\cdots,L\}$, $\lambda(X^i)\in |X^i|\pm 0.1\varepsilon\gamma R_i$ or $\lambda(X^i)\in(1\pm 0.01\varepsilon)\cdot |X^i|$, then conditioned on that the simulating process via $Z''(X)$ does not output \textbf{FAIL}, it simulates line~\ref{sta:repeated_sampling_start}-\ref{sta:repeated_sampling_end} of Algorithm~\ref{alg:offline_coreset_construction} with probability at least $1-\delta$.
\end{lemma}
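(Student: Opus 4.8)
The plan is to argue that, on the event guaranteed by Lemma~\ref{lem:hatYijisgood}, the simulating process is literally running lines~\ref{sta:repeated_sampling_start}--\ref{sta:repeated_sampling_end} of Algorithm~\ref{alg:offline_coreset_construction}, only accessing the point set through the sketch $Z''(X)$ rather than directly. First I would invoke Lemma~\ref{lem:hatYijisgood}: under the stated hypotheses on the $\lambda(C)$'s and $\lambda(X^i)$'s, with probability at least $1-\delta$ we have $\hat{Y}^i_j = Y^i_j$ for all $i\in\{0,\dots,L\}$ and $j\in[\hat{m}]$; call this event $E$ and condition on it. On $E$, every time the simulating process inspects or draws from a set $\hat{Y}^i_j$ it is really inspecting or drawing from the ideal set $Y^i_j$, and in particular its \textbf{FAIL} event is exactly ``some repetition draws a level $i$ all of whose still-unused sets $Y^i_j$ are empty''. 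Since the level-sampling rule (line~\ref{sta:level_sampling}) and the weight assigned at each repetition use only the quantities $I$, $\lambda(X^i)$, $t'$, $m$, $R_i$ that were fixed identically in the first part of the simulation, it remains only to show that, on $E$ and conditioned on $\neg\textbf{FAIL}$, the point drawn at repetition $t$ is uniform on $X^{i_t}$ and that these draws are mutually independent given the realized levels --- which is precisely the behaviour of lines~\ref{sta:uniform_sampling}--\ref{sta:repeated_sampling_end}.

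For that, two elementary facts suffice. First, by the definition of $\zeta''_i$ --- each $j\in[\hat{m}]$ is placed into $\zeta''_i(x)$ independently with probability $p''_i$, independently over points and levels --- the family $\{Y^i_j\}_{i,j}$ is \emph{mutually independent}, with each $Y^i_j$ a $p''_i$-random subset of $X^i$. Second, for any finite non-empty set $U$ and any $p\in(0,1]$, if $T\subseteq U$ includes each element independently with probability $p$, then conditioned on $T\neq\emptyset$ a uniformly random element of $T$ is uniform on $U$ (by symmetry under relabeling of $U$). Now reveal the randomness in the order the process uses it: at repetition $t$, first the level $i_t$, then the emptiness indicators $\mathbf{1}(Y^i_j=\emptyset)$ needed to locate a usable index $j_t\in A^{i_t}$, and only then the \emph{contents} of $Y^{i_t}_{j_t}$. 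The choice of $j_t$ and the \textbf{FAIL} decision are functions of the $i_s$'s and of emptiness indicators only, which by the first fact are independent of the contents of the set that actually gets sampled; and since $j_t$ is removed from $A^{i_t}$, distinct repetitions at a common level consume distinct, hence independent, sets $Y^i_j$, while distinct levels are trivially independent. Combining with the second fact, conditioned on $E$, on the full level sequence, and on all emptiness indicators (hence on $\neg\textbf{FAIL}$ and on all the $j_t$'s), the sampled points $x_1,\dots,x_m$ are independent with $x_t$ uniform on $X^{i_t}$. Thus under $E$ the simulating process, conditioned on $\neg\textbf{FAIL}$, is a faithful run of lines~\ref{sta:repeated_sampling_start}--\ref{sta:repeated_sampling_end}, and one can couple it so that the two outputs $(S,w)$ coincide.

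The main obstacle I anticipate is the careful ``what is revealed when'' bookkeeping of the previous paragraph: one must be sure that scanning candidate indices and discarding the empty ones --- the mechanism by which the simulation finds a usable $j_t$ --- leaks no information about the contents of the set ultimately sampled, and this is exactly the place where the mutual independence of $\{Y^i_j\}_j$ for fixed $i$ must be used explicitly rather than implicitly. A second, purely arithmetic point is that conditioning the whole run on $\neg\textbf{FAIL}$ perturbs the joint law of the realized level sequence by at most $\Pr[\textbf{FAIL}]$; together with the bound on $\Pr[\textbf{FAIL}]$ established separately and the bound $\Pr[E]\ge 1-\delta$ from Lemma~\ref{lem:hatYijisgood}, these combine, after a constant-factor rescaling of $\delta$, into the single $1-\delta$ bound in the statement.
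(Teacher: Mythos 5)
Your proposal is correct and follows essentially the same route as the paper: invoke Lemma~\ref{lem:hatYijisgood} to get $\hat{Y}^i_j=Y^i_j$ for all $i,j$ with probability at least $1-\delta$, observe that a uniform draw from a non-empty $p''_i$-random subset of $X^i$ is uniform on $X^i$, and conclude that on this event the simulation, when it does not output \textbf{FAIL}, coincides with lines~\ref{sta:repeated_sampling_start}--\ref{sta:repeated_sampling_end}. Your extra bookkeeping about the order in which randomness is revealed and about independence across the indices $j$ simply makes explicit what the paper asserts in a single sentence, so the two arguments are the same in substance.
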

\begin{proof}
To show the lemma statement, we only need to show that each sample drawn via $\hat{Y}^i_j$ is a uniform sample from $X^i$.
According to Lemma~\ref{lem:hatYijisgood}, with probability at least $1-\delta$, $\forall i\in \{0,1,\cdots, L\},j\in[\hat{m}]$, $\hat{Y}^i_j=Y^i_j$.
Thus, a uniform sample from a non-empty $\hat{Y}^i_j$ is a uniform sample from $X^i$.
If the simulating process does not output \textbf{FAIL}, then the simulating process via $Z''(X)$ is exactly the same as line~\ref{sta:repeated_sampling_start}-\ref{sta:repeated_sampling_end} of Algorithm~\ref{alg:offline_coreset_construction}.
\end{proof}

\begin{lemma}[Size of $Z''(X)$]\label{lem:size_Zprimeprime}
With probability at least $1-2\delta$, the space budget of $Z''(X)$ is at most 
\begin{align*}
O\left(2^{2p+2}\cdot \varepsilon^{-3}\left(kd+d^{O(p)}\cdot \frac{\OPT}{o}\right)\cdot (p\log n\log(kLd)+\log(1/\delta))\cdot L^2/\delta^2\right).
\end{align*}
\end{lemma}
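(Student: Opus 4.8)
The plan is to mirror the proofs of Lemma~\ref{lem:size_heavy_cell_partioning_sketch} and Lemma~\ref{lem:size_of_Zprime}. First I would write the space budget of $Z''(X)$ as $\sum_{i=0}^{L}\sum_{C\in G_i}\min\!\big(T''_i,\ \sum_{x\in X:\,h''_i(x)=1,\,g''_i(x)=C}|\zeta''_i(x)|\big)$, where for a tuple $(x,\zeta''_i(x))$ the quantity $|\zeta''_i(x)|$ accounts for storing $x$ (cost $O(d)$) together with the random subset $\zeta''_i(x)\subseteq[\hat m]$, whose expected size is $p''_i\hat m$. As before, I would split the outer double sum according to whether $C$ is a \emph{center cell} of $G_i$ in the sense of Lemma~\ref{lem:center_cells}, with respect to a fixed optimal center set $B^*$ (with $\cost(X,B^*)=\OPT$): for center cells bound $\min(T''_i,\cdot)\le T''_i$, and for non-center cells bound $\min(T''_i,\cdot)\le\sum_x|\zeta''_i(x)|$. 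A single observation used throughout is that $p''_i R_i\le\big(2000(kL+(2d^{1.5})^p)\big)^{-1}$ uniformly in $i$, which is immediate from the definition of $p''_i$ (it holds in both branches of the $\min$); consequently $T''_i=10\,p''_iR_i\hat m d\le \hat m d\big/\big(200(kL+(2d^{1.5})^p)\big)$.

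For the center-cell part, Lemma~\ref{lem:center_cells} gives, with probability at least $1-\delta$, at most $6kL/\delta$ center cells across all $L+1$ levels, so this part is at most $(6kL/\delta)\cdot\hat m d\big/\big(200(kL+(2d^{1.5})^p)\big)=O(\hat m d/\delta)$. Substituting $\hat m=\Theta\big(2^{2p+2}\varepsilon^{-3}(kL+(2d^{1.5})^p)(p\log n\log(kLd)+\log(1/\delta))L/\delta\big)$ and using $kL\cdot d=kd\cdot L$ and $(2d^{1.5})^p\cdot d=d^{O(p)}$, this contributes $O\big(2^{2p+2}\varepsilon^{-3}(kd+d^{O(p)})(p\log n\log(kLd)+\log(1/\delta))L^2/\delta^2\big)$ to the bound.

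For the non-center-cell part, if $x\in X$ lies in no center cell of $G_i$ then $\min_{b\in B^*}\|x-b\|_2>\Delta_i/(2d)$, so the number of such $x$ is at most $\OPT/(\Delta_i/(2d))^p=100\,(2d^{1.5})^p R_i\cdot\OPT/o$ by $R_i=0.01\,o/(\sqrt d\,\Delta_i)^p$. Each such $x$ contributes $|\zeta''_i(x)|$, whose expectation (including the $O(d)$ for storing $x$ when $\zeta''_i(x)\ne\emptyset$) is $O(d\,p''_i\hat m)$, so the expected contribution of level $i$ is $O\big(d\hat m\cdot p''_i R_i\cdot(2d^{1.5})^p\cdot\OPT/o\big)=O\big(d\hat m(2d^{1.5})^p\,\OPT\big/\big(o\,(kL+(2d^{1.5})^p)\big)\big)$. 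Here it is important to keep the factor $(2d^{1.5})^p/(kL+(2d^{1.5})^p)$ rather than crudely replacing it by $1$: when $\hat m$ is substituted, the $(kL+(2d^{1.5})^p)$ in its definition cancels this denominator, leaving only $(2d^{1.5})^p$, so the $kL$ part of $\hat m$ does not leak into this term (which is exactly why the final bound has $kd$ \emph{without} an $\OPT/o$ factor but $d^{O(p)}$ \emph{with} one). Summing over the $L+1$ levels and applying Markov's inequality (failure probability $\delta$) then bounds the non-center-cell part by $O\big(2^{2p+2}\varepsilon^{-3}d^{O(p)}(p\log n\log(kLd)+\log(1/\delta))L^2\,\OPT/(o\delta^2)\big)$.

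A union bound over the two bad events (the center-cell count from Lemma~\ref{lem:center_cells} and the Markov step) gives the claimed bound with probability at least $1-2\delta$. The step I expect to be the main obstacle is the non-center-cell estimate: unlike in Lemma~\ref{lem:size_of_Zprime}, where each retained point costs a fixed $d$, here $|\zeta''_i(x)|$ is itself random, so one must pass to the expectation over the subsets $\zeta''_i(\cdot)$ before invoking Markov, and one must track the cancellation between the sampling rate $p''_i$ (chosen with the extra $(kL+(2d^{1.5})^p)$ in the denominator) and the $(2d^{1.5})^p\cdot\OPT/o$ blow-up in the count of non-center points; getting this bookkeeping right is the crux.
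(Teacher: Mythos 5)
Your proposal is correct and follows essentially the same route as the paper's proof: the same split of the space budget into center cells (bounded by $T''_i$ via Lemma~\ref{lem:center_cells}) and non-center cells (bounded by counting points at distance at least $\Delta_i/(2d)$ from $B^*$, giving $100(2d^{1.5})^pR_i\cdot\OPT/o$ per level), the same cancellation of the $(kL+(2d^{1.5})^p)$ factor between $p''_i$ and $\hat m$, and the same expectation-plus-Markov step with a union bound yielding $1-2\delta$. The only differences are cosmetic bookkeeping (e.g., whether the $O(d)$ cost of storing $x$ is carried in the non-center term, which is absorbed into $d^{O(p)}$ anyway).
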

\begin{proof}
The proof is similar to the proof of Lemma~\ref{lem:size_heavy_cell_partioning_sketch}.
We will still use the concept of center cells (see Lemma~\ref{lem:center_cells}).
\begin{align}
&\text{The space budget of }Z''(X)\notag\\
=&\sum_{i=1}^L\sum_{C\in G_i} \min\left(T''_i,\sum_{x\in X:h''_i(x)=1,g''_i(x)=C} |\zeta''_i(x)|\right)\notag\\
\leq & \sum_{i=0}^L \left(\sum_{C\in G_i:C\text{ is a center cell}} T''_i + \sum_{C\in G_i:C\text{ is not a center cell}}\sum_{x\in C\cap X}|\zeta'_i(x)|\right)\notag\\
=& \sum_{i=0}^L \sum_{C\in G_i:C\text{ is a center cell}} T''_i\label{eq:last_part_one}\\
&+\sum_{i=0}^L\sum_{x\in X:x\text{ is not in any center cell in $G_i$}}\sum_{j=1}^{\hat{m}}\mathbf{1}(j\in\zeta''_i(x))\label{eq:last_part_two}.
\end{align}
According to Lemma~\ref{lem:center_cells}, part~\eqref{eq:last_part_one} is at most $6kL/\delta\cdot 10^7\cdot 2^{2p+2}\cdot \varepsilon^{-3}\cdot (p\log(n)\log(kLd)+\log(1/\delta))\cdot Ld/\delta$ with probability at least $1-\delta$.
In the remaining of the proof let us upper bound part~\eqref{eq:last_part_two}.

Recall that $B^*$ is the optimal centers for $X$, i.e., $B^*$ satisfifes $\cost(X,B^*)=\OPT$.
Consider $i\in\{0,1,\cdots,L\}$.
If $x\in X$ is not in any center cell in $G_i$, then $\min_{b\in B^*}\|x-b\|_2\geq \Delta_i/(2d)$.
Thus, $|\{x\in X\mid x\text{ is not in any center cell in }G_i\}|\leq \frac{\OPT}{(\Delta_i/(2d))^p}\leq 100\cdot (2d^{1.5})^p\cdot R_i\cdot \frac{\OPT}{o}$ where the last inequality follows from that $R_i=0.01o/(\sqrt{d}\Delta_i)^p$.
Since $p_i''\cdot \hat{m}\leq 10^6\cdot 2^{2p+2}\cdot \varepsilon^{-3}\cdot (p\log n\log(kLd)+\log(1/\delta))\cdot L/\delta \cdot R_i^{-1}$,
\begin{align*}
&\E\left[\sum_{x\in X:x\text{ is not in any center cell in }G_i}|\zeta''_i(x)|\right]\\
\leq &10^8\cdot 2^{2p+2}\cdot \varepsilon^{-3}\cdot (2d^{1.5})^p\cdot (p\log n\log(kLd)+\log(\delta^{-1}))\cdot \frac{L}{\delta}\cdot \frac{\OPT}{o}.
\end{align*}
By Markov's inequality, with probability at least $1-\delta$, the part~\eqref{eq:last_part_two} can be upper bounded by
\begin{align*}
10^8\cdot 2^{2p+2}\cdot \varepsilon^{-3}\cdot (2d^{1.5})^p\cdot (p\log n\log(kLd)+\log(\delta^{-1}))\cdot \frac{L^2}{\delta^2}\cdot \frac{\OPT}{o}.
\end{align*}
By combining the upper bound of part~\eqref{eq:last_part_one} with the upper bound of part~\eqref{eq:last_part_two}, we complete the proof.
\end{proof}

\begin{lemma}[Number of non-empty $Y^i_j$ is large] \label{lem:num_Yij_large}
Consider $o\geq \OPT/16$.
Suppose for every $i\in \{0,1,\cdots, L\}$ and every cell $C\in G_i$, the estimated value $\lambda(C)$ in line~\ref{sta:point_estimation_in_each_cell} of Algorithm~\ref{alg:offline_heavy_cell_decomposition} satisfies $\lambda(C)\in |C\cap X|\pm 0.1R_i$ or $\lambda(C)\in (1\pm 0.01)\cdot |C\cap X|$, and for every $i\in\{0,1,\cdots, L\}$, the estimated value $\lambda(X^i)$ in line~\ref{sta:point_estimation_in_each_crucial_level} of Algorithm~\ref{alg:offline_coreset_construction} satisfies $\lambda(X^i)\in |X^i|\pm 0.1\varepsilon\gamma R_i$ or $\lambda(X^i)\in(1\pm 0.01\varepsilon)\cdot |X^i|$. 
With probability at least $1-2\delta$, 
\begin{align*}
\forall i \in I, \sum_{j=1}^{\hat{m}} \mathbf{1}(Y^i_j\not=\emptyset)  \geq 10^5\cdot 2^{2p+1}\cdot \varepsilon^{-2}\cdot (p\log n\log(kLd)+\log(1/\delta))\cdot \frac{L}{\delta}\cdot \frac{\lambda(X^i)}{R_i}
\end{align*}
\end{lemma}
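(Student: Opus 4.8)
The plan is to fix a single level $i\in I$, prove the stated inequality for that level with failure probability at most $\delta/(L+1)$ over the random subsets $\zeta''_i(\cdot)$, and then union-bound over the at most $L+1$ levels. The key observation is that $\mathbf{1}(Y^i_1\neq\emptyset),\dots,\mathbf{1}(Y^i_{\hat m}\neq\emptyset)$ are \emph{independent} $\mathrm{Bernoulli}(q_i)$ random variables with $q_i:=1-(1-p''_i)^{|X^i|}$: the events $\{j\in\zeta''_i(x)\}$ are independent over all pairs $(j,x)$, and $\mathbf{1}(Y^i_j\neq\emptyset)=\mathbf{1}(\exists x\in X^i:\,j\in\zeta''_i(x))$ is determined by the coins indexed by that particular $j$. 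Hence $\sum_j\mathbf{1}(Y^i_j\neq\emptyset)\sim\mathrm{Bin}(\hat m,q_i)$, and by a multiplicative Chernoff bound it suffices to show (i) $\hat m q_i\ge 2A_i$, where $A_i:=10^5\cdot 2^{2p+1}\varepsilon^{-2}(p\log n\log(kLd)+\log(1/\delta))\tfrac{L}{\delta}\cdot\tfrac{\lambda(X^i)}{R_i}$ is the target value, and (ii) $\hat m q_i=\Omega(\ln((L+1)/\delta))$, since then $\Pr[\mathrm{Bin}(\hat m,q_i)<A_i]\le e^{-\hat m q_i/8}\le\delta/(L+1)$. Claim (ii) will follow from (i) together with $\lambda(X^i)\ge\gamma R_i$, which forces $A_i\ge A\gamma=\Omega(\varepsilon^{-1}\log(1/\delta)/\delta)$.

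For (i) I would first record three facts for $i\in I$: $\lambda(X^i)\ge\gamma R_i>0$; by the assumed accuracy of $\lambda(X^i)$, $|X^i|\ge 1$ and $|X^i|=(1\pm 0.1\varepsilon)\lambda(X^i)$; and $p''_i<1$ --- indeed if $p''_i=1$ then $R_i\le(2000(kL+(2d^{1.5})^p))^{-1}<1$, whence every crucial (hence non-heavy) cell of $G_i$ contains fewer than $1.1R_i<1$ points by the accuracy of $\lambda(\cdot)$ and so is empty, contradicting $X^i\neq\emptyset$. Thus $p''_i=(2000(kL+(2d^{1.5})^p)R_i)^{-1}$. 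Now use $1-(1-p)^n\ge 1-e^{-pn}\ge\tfrac12\min(1,pn)$ to get $\hat m q_i\ge\tfrac12\min\!\big(\hat m,\,\hat m p''_i|X^i|\big)$. In the branch $p''_i|X^i|\le 1$ we have $\hat m p''_i|X^i|=\frac{\hat m}{2000(kL+(2d^{1.5})^p)}\cdot\frac{|X^i|}{R_i}$, so the $(kL+(2d^{1.5})^p)$ factor built into $\hat m$ cancels and, using $|X^i|\ge(1-0.1\varepsilon)\lambda(X^i)$ and $\varepsilon<\tfrac12$, one checks directly that $\hat m q_i\ge 2A_i$.

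The remaining branch $p''_i|X^i|>1$, where $q_i\ge 1-e^{-1}>\tfrac12$ and we need $\hat m\ge 4A_i$, i.e. $\lambda(X^i)/R_i=O(\varepsilon^{-1}(kL+(2d^{1.5})^p))$, is where the hypothesis $o\ge\OPT/16$ and Lemma~\ref{lem:center_cells} enter. I would bound $|X^i|/R_i$ by splitting $X^i$ according to whether a point's $G_i$-cell is a center cell: a crucial cell contains at most $1.1R_i$ points, so the center-cell part of $X^i$ has at most $1.1R_i$ times the number of center cells in $G_i$, which is $O(kL/\delta)\cdot R_i$ with probability $\ge 1-\delta$ by Lemma~\ref{lem:center_cells}; a point of the non-center part lies at distance $>\Delta_i/(2d)$ from the optimal centers $B^*$, so $\cost(X,B^*)=\OPT\ge(\text{their count})\cdot(\Delta_i/(2d))^p$, and combined with $\OPT\le 16o$ and $R_i=0.01\,o/(\sqrt d\,\Delta_i)^p$ this count is $O((2d^{1.5})^p)\cdot R_i$ --- exactly the estimate used in the proof of Lemma~\ref{lem:size_heavy_cell_partioning_sketch}. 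Hence $\lambda(X^i)/R_i=O(|X^i|/R_i)=O(kL/\delta+(2d^{1.5})^p)$, which against the size $\hat m=\Theta\!\big(2^{2p}\varepsilon^{-3}(kL+(2d^{1.5})^p)(p\log n\log(kLd)+\log(1/\delta))\tfrac{L}{\delta}\big)$ yields $\hat m\ge 4A_i$. Combining the two branches establishes (i), so $\Pr[\sum_j\mathbf{1}(Y^i_j\neq\emptyset)<A_i]\le\delta/(L+1)$ for each $i\in I$; a union bound over the $\le L+1$ levels, plus the $\le\delta$ failure probability of Lemma~\ref{lem:center_cells}, gives the $1-2\delta$ conclusion.

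The step I expect to be the main obstacle is this last branch: verifying that all the $2^{O(p)}$, $d^{O(p)}$, $\varepsilon^{-1}$, logarithmic, and $1/\delta$ factors produced by the center-cell count and the cost estimate are genuinely absorbed by the definition of $\hat m$, while keeping $\hat m q_i\ge 2A_i$ valid in the complementary branch $p''_i|X^i|\le 1$. It is precisely the fact that $p''_i=1$ is impossible for $i\in I$ that makes this calibration feasible, since otherwise $\hat m$ alone would have to dominate $A_i$ on levels where $R_i$ is arbitrarily small and $X^i$ correspondingly large relative to $R_i$.
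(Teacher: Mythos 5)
Your counting skeleton is essentially the paper's: you bound $|X^i|/R_i$ by splitting into crucial center cells (at most $1.1R_i$ points each, their number controlled by Lemma~\ref{lem:center_cells}) and non-center-cell points (controlled by $\cost(X,B^*)=\OPT\le 16o$ and $R_i=0.01\,o/(\sqrt d\,\Delta_i)^p$), and your observation that $p''_i<1$ for $i\in I$ is the contrapositive of the paper's remark that a non-empty crucial cell forces $R_i\ge 1/2$. Where you genuinely differ is the concentration step: you note that the indicators $\mathbf{1}(Y^i_j\ne\emptyset)$ are i.i.d.\ Bernoulli with parameter $q_i=1-(1-p''_i)^{|X^i|}$ and apply a Chernoff bound to $\mathrm{Bin}(\hat m,q_i)$, whereas the paper groups the $\hat m$ indices into blocks of size about $10/(|X^i|p''_i)$, uses Chebyshev within a block and Chernoff across blocks. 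Your route is cleaner (the independence is indeed there), and your branch $p''_i|X^i|\le 1$ checks out: the $(kL+(2d^{1.5})^p)$ factor cancels, $\lambda(X^i)\le |X^i|/(1-0.1\varepsilon)$, and $\hat m q_i\ge 2A_i$ follows with room to spare; the tail bound also dominates $\ln((L+1)/\delta)$ because $A_i$ carries the factor $p\log n\log(kLd)+\log(1/\delta)\gtrsim \log L+\log(1/\delta)$.

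The gap is the branch $p''_i|X^i|>1$, which you flag but do not close, and which as written fails in some parameter regimes. There you need $\lambda(X^i)/R_i\le 5000\,\varepsilon^{-1}(kL+(2d^{1.5})^p)$ to get $\hat m\ge 4A_i$, but the bound you derive is $|X^i|/R_i=O(kL/\delta+(2d^{1.5})^p)$; when $\delta\ll\varepsilon$ the term $kL/\delta$ is not $O(\varepsilon^{-1}kL)$, and no choice of absolute constants in $\hat m$ lets an $\varepsilon^{-1}$ absorb a $1/\delta$. The paper's proof never meets this case: on the center-cell event it establishes the per-level bound $|X^i|\le 1600(kL+(2d^{1.5})^p)R_i$ (Equation~\eqref{eq:single_size_of_Xi}), which is strictly below $1/p''_i=2000(kL+(2d^{1.5})^p)R_i$, so $p''_i|X^i|<1$ always holds and only your first branch is ever needed. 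Note that the paper obtains this by using a center-cell count of $6kL$ at this point (without the $1/\delta$ coming from the Markov step in Lemma~\ref{lem:center_cells} as stated); if one insists on carrying $6kL/\delta$, exactly the $\delta$-versus-$\varepsilon$ tension you identified reappears. So the correct repair of your argument is not to try to dominate $A_i$ by $\hat m$ alone, but to prove the per-level estimate $|X^i|\le 1600(kL+(2d^{1.5})^p)R_i$ and conclude that your second branch is vacuous, after which your binomial Chernoff argument completes the proof.
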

\begin{proof}
Consider $i\in I$.
According to Lemma~\ref{lem:center_cells}, with probability at least $1-\delta$, the number of center cells is at most $6kL$.
According to the construction of $X^i$ by Algorithm~\ref{alg:offline_coreset_construction}, none of $x\in X^i$ is in a heavy cell.
Therefore,
\begin{align}
|X^i|&\leq \sum_{C\in G_i:C\text{ is a center cell and $C$ is crucial}} |C\cap X| + \sum_{C\in G_i:C\text{ is not a center cell}} |C\cap X|\notag\\
&\leq \sum_{C\in G_i:C\text{ is a center cell}} 1.1 R_i +  \sum_{C\in G_i:C\text{ is not a center cell}} |C\cap X|\notag\\
&\leq 6kL\cdot 1.1R_i+ \frac{\OPT}{(\Delta_i/(2d))^p}\notag\\
&\leq 7kL\cdot R_i + 100\cdot (2d^{1.5})^p\cdot \frac{\OPT}{o} \cdot R_i\notag\\
&\leq 1600 (kL + (2d^{1.5})^p) \cdot R_i, \label{eq:single_size_of_Xi}
\end{align}
where the second step follows from that a crucial cell contains at most $1.1R_i$ poionts, the third step follows from that the number of center cells is at most $6kL$ and the distance from any point outside a center cell to the optimal centers is at least $(\Delta_i/(2d))^p$, the forth step follows from that $R_i=0.01\cdot o/(\sqrt{d}\cdot \Delta_i)^p$ and the last step follows from $\OPT\leq 16\cdot o$.

By union bound over all $x\in X^i$, we have that 
\begin{align}
\forall j\in [\hat{m}], \Pr[Y^i_j\not=\emptyset ] \leq |X^i|\cdot p''_i < 1, \label{eq:less_than_one_equation}
\end{align}
where the last inequality follows from Equation~\eqref{eq:single_size_of_Xi} and $p''_i< 1/(1600(kL + (2d^{1.5})^p)R_i)$.
Since, $|X^i|\cdot p''_i < 1$, we have $\lceil\frac{1}{|X^i|\cdot p_i''}\rceil\leq \frac{2}{|X^i|\cdot p''_i}$. 
Let $b=10\cdot \lceil \frac{1}{|X^i|\cdot p''_i} \rceil$.
Let $r=\lfloor\hat{m}/b\rfloor\geq \frac{1}{2}\cdot \hat{m}\cdot |X^i|\cdot p_i''-1$.
Since $X^i\not=\emptyset$, it means that there is at least one non-empty crucial cell in $G_i$.
By the construction of $X^i$, we know that a non-empty crucial cell in $G_i$ implies that $R_i\geq \frac{1}{2}$.
Since $p_i''= \min(1/(2000\cdot (kL+(2d^{1.5})^p)\cdot R_i),1)$, we have $p_i''= 1/(2000\cdot (kL+(2d^{1.5})^p)\cdot R_i)$.
Thus,
\begin{align}
r&\geq \frac{1}{2}\cdot \hat{m} \cdot \frac{1}{2000(kL+(2d^{1.5})^p)}\cdot \frac{|X^i|}{R_i}-1\notag\\
&\geq \frac{1}{4}\cdot \hat{m}\cdot \frac{1}{2000(kL+(2d^{1.5})^p)}\cdot \frac{|X^i|}{R_i},\label{eq:lb_of_r}
\end{align}
where the last inequality follows from that $|X^i|\geq \frac{\gamma}{2}\cdot R_i$ since $i\in I$ and $\hat{m}\geq 10^4 (kL+(2d^{1.5})^p)/\gamma$.
For $s\in [r]$, we can define a random variable $Q_s$:
\begin{align*}
Q_s= \sum_{j=(s-1)\cdot b+1}^{s\cdot b} \sum_{x\in X^i} \mathbf{1}\left(j\in \zeta''_i(x)\right). 
\end{align*}
We have that $\E[Q_s] = b\cdot |X^i|\cdot p_i''$.
By our choice of $b$, we know that $\E[Q_s]\in [10,20]$.
Since $Q_s$ is a sum of independent random variables from $\{0,1\}$, the variance $\Var[Q_s]\leq \E[Q_s]\leq 20$.
By Chebyshev's inequality, we have $\Pr[Q_s<1]\leq \frac{1}{4}$.

Notice that $Q_s\geq 1$ means that $\exists j\in \{(s-1)\cdot b+1,(s-1)\cdot b+2,\cdots,s\cdot b\}$ such that $Y^i_j\not=\emptyset$.
Define random variable $P_s=\mathbf{1}(Q_s\geq 1)$.
Then we have $\sum_{s\in [r]} P_s\geq \sum_{j=1}^{\hat{m}}\mathbf{1}(Y^i_j\not=\emptyset)$.
Notice that $\E[\sum_{s\in[r]} P_s]\geq 0.75r$.
By Chernoff bound, we have
\begin{align*}
\Pr\left[\sum_{s\in[r]} P_s \leq 0.5 r\right]\leq 2^{-r/20} \leq \delta/(L+1),
\end{align*}
where the last inequality follows from that $r\geq \frac{1}{4}\cdot \hat{m}\cdot \frac{1}{2000(kL+(2d^{1.5})^p)}\cdot \frac{|X^i|}{R_i}\geq 20\log(2L/\delta)$ since $\frac{|X^i|}{R_i}\geq\frac{\gamma}{2}$ and $\hat{m}\geq 10^7(kL+(2d^{1.5})^p)\log(L/\delta)/\gamma$.

Since $\hat{m}\geq 10^{9}\cdot 2^{2p+2}\cdot \varepsilon^{-2}\cdot (kL+(2d^{1.5})^p)\cdot (p\log n\log(kLd)+\log(1/\delta))\cdot \frac{L}{\delta}$ and $\lambda(X^i)\leq 2|X^i|$, according to Equation~\eqref{eq:lb_of_r}, we have
\begin{align*}
r\geq 10^5\cdot 2^{2p+2}\cdot \varepsilon^{-2}\cdot (p\log n\log(kLd)+\log(1/\delta))\cdot \frac{L}{\delta} \cdot \frac{\lambda(X^i)}{R_i}.
\end{align*}
Thus, by taking union bound over all $i\in I$, with probability at least $1-\delta$,
\begin{align*}
\forall i\in I,\sum_{j=1}^{\hat{m}} \mathbf{1}(Y_j^i\not=\emptyset) \geq 10^5\cdot 2^{2p+1}\cdot \varepsilon^{-2}\cdot (p\log n\log(kLd)+\log(1/\delta))\cdot \frac{L}{\delta}\cdot \frac{\lambda(X^i)}{R_i}.
\end{align*}
\end{proof}

\begin{lemma}[Number of samples needed]\label{lem:num_samples_needed}
Suppose for every $i\in \{0,1,\cdots, L\}$ and every cell $C\in G_i$, the estimated value $\lambda(C)$ in line~\ref{sta:point_estimation_in_each_cell} of Algorithm~\ref{alg:offline_heavy_cell_decomposition} satisfies $\lambda(C)\in |C\cap X|\pm 0.1R_i$ or $\lambda(C)\in (1\pm 0.01)\cdot |C\cap X|$, and for every $i\in\{0,1,\cdots, L\}$, the estimated value $\lambda(X^i)$ in line~\ref{sta:point_estimation_in_each_crucial_level} of Algorithm~\ref{alg:offline_coreset_construction} satisfies $\lambda(X^i)\in |X^i|\pm 0.1\varepsilon\gamma R_i$ or $\lambda(X^i)\in(1\pm 0.01\varepsilon)\cdot |X^i|$.
Consider Algorithm~\ref{alg:offline_coreset_construction}.
With probability at least $1-2\delta$, all of the following things happen:
\begin{enumerate}
\item $t'=\sum_{i\in I}\lambda(X^i)\cdot \min(2^{2p+1}/R_i,1) \leq 200\cdot 2^{2p+2}\cdot \left(k+(2d^{1.5})^p\cdot \frac{\OPT}{o}\right)\cdot L$.
\item The  times that $i\in I$ is sampled by line~\ref{sta:level_sampling} of Algorithm~\ref{alg:offline_coreset_construction} is at most
\begin{align*}
4000\cdot 2^{2p+1}\cdot \varepsilon^{-2} (\log n\log(2t')+\log(1/\delta))\cdot  \frac{\lambda(X^i)}{R_i}\cdot \frac{L}{\delta}.
\end{align*}
\end{enumerate}
\end{lemma}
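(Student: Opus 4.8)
The two claims are largely independent, and neither needs new randomness beyond what is already assumed. Part~(1) is a purely combinatorial upper bound on the ``total sensitivity'' $t'$, obtained by counting points in the hierarchical grid; the only probabilistic input is the bound on the number of center cells from Lemma~\ref{lem:center_cells}. Part~(2) is then immediate: once $t'$, $I$ and all $\lambda(X^i)$ are fixed, the number of rounds of line~\ref{sta:level_sampling} landing on a given level is a sum of independent indicators whose mean is exactly $m$ times the sampling probability, so Markov's inequality plus a union bound over the at most $L+1$ levels in $I$ does the job.

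\textbf{Part (1).} First I would replace $\lambda(X^i)$ by $|X^i|$: for $i\in I$ we have $\lambda(X^i)\ge\gamma R_i$, and together with the assumed estimate $\lambda(X^i)\in|X^i|\pm0.1\varepsilon\gamma R_i$ or $\lambda(X^i)\in(1\pm0.01\varepsilon)|X^i|$ this gives $|X^i|\ge0.9\gamma R_i$ and hence $\lambda(X^i)\le2|X^i|$. Therefore
\[
t' = \sum_{i\in I}\lambda(X^i)\cdot\min(2^{2p+1}/R_i,1)\;\le\;2\sum_{i=0}^{L}|X^i|\cdot\min(2^{2p+1}/R_i,1).
\]
Now write $X^i$ as the disjoint union of $C\cap X$ over crucial cells $C\in G_i$, and split those cells into center cells and non-center cells. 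A crucial cell $C$ is not heavy, so $\lambda(C)<R_i$, and the assumed accuracy of $\lambda(C)$ forces $|C\cap X|<1.1R_i$; since $\min(2^{2p+1}/R_i,1)\le2^{2p+1}/R_i$, each crucial \emph{center} cell contributes at most $1.1\cdot2^{2p+1}$ to the sum, and by Lemma~\ref{lem:center_cells} there are $O(kL)$ center cells in total (with probability $\ge1-\delta$), so this part is $O(2^{2p+2}kL)$. For the non-center cells, every $x\in X$ outside all center cells of $G_i$ has $\min_{b\in B^*}\|x-b\|_2\ge\Delta_i/(2d)$, so level $i$ contains at most $\OPT/(\Delta_i/(2d))^p=100\,(2d^{1.5})^p R_i\cdot\OPT/o$ such points (using $R_i=0.01\,o/(\sqrt d\,\Delta_i)^p$); weighting by $\min(2^{2p+1}/R_i,1)$ bounds each level's contribution by $100\cdot2^{2p+1}(2d^{1.5})^p\cdot\OPT/o$, and there are $L+1$ levels. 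Summing the two contributions and absorbing constants (and the $1/\delta$ from Lemma~\ref{lem:center_cells}, which is dominated by the $1/\delta$ factors appearing in the surrounding lemmas) yields $t'\le200\cdot2^{2p+2}(k+(2d^{1.5})^p\cdot\OPT/o)\,L$.

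\textbf{Part (2).} Condition on the event of Part~(1); in particular $t'$, $I$ and all $\lambda(X^i)$ are now fixed numbers. Fix $i\in I$ and let $q_i=\lambda(X^i)\cdot\min(2^{2p+1}/R_i,1)/t'$, so line~\ref{sta:level_sampling} picks $i$ in each of the $m$ rounds independently with probability $q_i$; the number $N_i$ of such rounds has $\E[N_i]=m q_i$. Markov's inequality gives $\Pr[N_i\ge\frac{L+1}{\delta}m q_i]\le\frac{\delta}{L+1}$, and a union bound over the $\le L+1$ levels in $I$ shows that with probability $\ge1-\delta$, $N_i\le\frac{L+1}{\delta}m q_i$ for all $i\in I$ simultaneously. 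Plugging in $m\le2000\,t'\varepsilon^{-2}(\log n\log(2t')+\log(1/\delta))$ (legitimate for the ceiling since the bracket is $\ge1$) together with $q_i\le 2^{2p+1}\lambda(X^i)/(R_i t')$ gives $N_i\le4000\cdot2^{2p+1}\varepsilon^{-2}(\log n\log(2t')+\log(1/\delta))\cdot(\lambda(X^i)/R_i)\cdot(L/\delta)$, which is the claimed bound. A final union bound with Part~(1) gives overall probability $\ge1-2\delta$.

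\textbf{Main obstacle.} The only real work is Part~(1), and the delicate point there is to weight \emph{every} point by $\min(2^{2p+1}/R_i,1)$ rather than by $1$: without that cap, deep dense center cells would make the sum unbounded, whereas with it one can charge ordinary points to the optimal cost and handle the at-most-$O(kL)$ center cells separately via Lemma~\ref{lem:center_cells}. One also has to be slightly careful turning the estimate $\lambda(X^i)$ into $|X^i|$ for $i\in I$. Part~(2) is then routine.
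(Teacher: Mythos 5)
Your proof is correct and follows essentially the same route as the paper's: bound $t'$ by replacing $\lambda(X^i)$ with $2|X^i|$ (using $\lambda(X^i)\geq\gamma R_i$ for $i\in I$ plus the goodness assumption), split the crucial cells into center cells (each contributing $O(2^{2p+1})$ since $|C\cap X|\leq 1.1R_i$, with the count controlled by Lemma~\ref{lem:center_cells}) and non-center cells (charged to $\OPT$ via $R_i=0.01\,o/(\sqrt{d}\Delta_i)^p$), and then handle part~(2) by Markov's inequality on the per-level sampling count followed by a union bound over the at most $L+1$ levels, exactly as in the paper. The one wrinkle you note --- that Lemma~\ref{lem:center_cells} only guarantees $6kL/\delta$ center cells while the stated bound on $t'$ carries no $1/\delta$ factor --- is elided in the paper's own proof as well (it simply uses $6kL$), so your treatment is on par with the paper's.
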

\begin{proof}
Let us first bound $t'$.
According to Lemma~\ref{lem:center_cells}, with probability at least $1-\delta$, the number of center cells is at most $6kL$.
According to the construction of $X^i$ by Algorithm~\ref{alg:offline_coreset_construction}, none of the point $x\in X^i$ for $i\in\{0,1,\cdots,L\}$ is in a heavy cell.
Therefore, 
\begin{align}
\sum_{i=0}^L|X^i|/R_i&\leq \sum_{i=0}^L\sum_{C\in G_i:C \text{ is a center cell and $C$ is crucial}} |C\cap X|/R_i + \sum_{i=0}^L\sum_{C\in G_i:C \text{ is not a center cell}} |C\cap X|/R_i\notag\\
&\leq \sum_{i=0}^L \sum_{C\in G_i:C \text{ is a center cell}} 1.1+\sum_{i=0}^L\sum_{C\in G_i:C \text{ is not a center cell}} |C\cap X|/R_i\notag\\
&\leq 6kL\cdot 1.1+\sum_{i=0}^L\frac{\OPT}{(\Delta_i/(2d))^p\cdot R_i}\notag\\
&\leq 7kL\cdot+100\cdot (2d^{1.5})^p\cdot \frac{ \OPT}{o}\cdot (L+1), \label{eq:bound_Xi}
\end{align}
where the second step follows from that a crucial cell contains at most $1.1R_i$ points, the third step follows from that the number of center cells is at most $6kL$ and the distance from any point outside a center cell to the optimal centers is at least $(\Delta_i/(2d))^p$, and the forth step follows from that $R_i=0.01\cdot o/(\sqrt{d}\cdot \Delta_i)^p$.

Then we have:
\begin{align*}
t'=&\sum_{i\in I} \lambda(X^i)\cdot \min(2^{2p+1}/R_i,1)\\
\leq & \sum_{i\in I} 2\cdot |X^i|\cdot \min(2^{2p+1}/R_i,1)\\
\leq & \sum_{i=0}^L 2\cdot |X^i|\cdot \min(2^{2p+1}/R_i,1)\\
\leq & 200\cdot 2^{2p+2}\cdot \left(k+(2d^{1.5})^p\cdot \frac{\OPT}{o}\right)\cdot L,
\end{align*}
where the second step follows from that $\lambda(X^i)$ is a good estimation of $|X^i|$ and $\forall i\in I,\lambda(X^i)\geq \gamma R_i$, the third step follows from that $I\subseteq\{0,1,\cdots,L\}$ and the last step follows from Equation~\eqref{eq:bound_Xi}.

Next, let us consider the times that $i\in I$ is sampled by line~\ref{sta:level_sampling} of Algorithm~\ref{alg:offline_coreset_construction}.
The expected number of times that $i\in I$ is sampled is
\begin{align*}
&m\cdot \frac{\lambda(X^i)\cdot \min(2^{2p+1}/R_i,1)}{t'}\\
\leq & 2000\cdot 2^{2p+1}\cdot \varepsilon^{-2} (\log n\log(2t')+\log(1/\delta))\cdot \frac{\lambda(X^i)}{R_i}.
\end{align*}
By Markov's inequality, with probability at least $1-\delta/(L+1)$, $i\in I$ is sampled at most 
\begin{align*}
2000\cdot 2^{2p+1}\cdot \varepsilon^{-2} (\log n\log(2t')+\log(1/\delta))\cdot  \frac{\lambda(X^i)}{R_i}\cdot \frac{L+1}{\delta}
\end{align*}
times.
Thus, by taking union bound over all $i\in I$, with probability at least $1-\delta$, $\forall i\in I$, $i$ is sampled 
\begin{align*}
4000\cdot 2^{2p+1}\cdot \varepsilon^{-2} (\log n\log(2t')+\log(1/\delta))\cdot  \frac{\lambda(X^i)}{R_i}\cdot \frac{L}{\delta}
\end{align*}
times.
\end{proof}

\begin{lemma}[Simulating process does not output \textbf{FAIL} with a good probability]\label{lem:simulating_succcess}
Consider $o\geq \OPT/16$.
Suppose for every $i\in \{0,1,\cdots, L\}$ and every cell $C\in G_i$, the estimated value $\lambda(C)$ in line~\ref{sta:point_estimation_in_each_cell} of Algorithm~\ref{alg:offline_heavy_cell_decomposition} satisfies $\lambda(C)\in |C\cap X|\pm 0.1R_i$ or $\lambda(C)\in (1\pm 0.01)\cdot |C\cap X|$, and for every $i\in\{0,1,\cdots, L\}$, the estimated value $\lambda(X^i)$ in line~\ref{sta:point_estimation_in_each_crucial_level} of Algorithm~\ref{alg:offline_coreset_construction} satisfies $\lambda(X^i)\in |X^i|\pm 0.1\varepsilon\gamma R_i$ or $\lambda(X^i)\in(1\pm 0.01\varepsilon)\cdot |X^i|$.
The probability that the simulating process via $Z''(X)$ does not output \textbf{FAIL} is at least $1-5\delta$.
\end{lemma}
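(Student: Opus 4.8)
The plan is to pin down exactly when the simulating process via $Z''(X)$ returns \textbf{FAIL} and then show this event is rare by combining three facts already proved. The process can only fail in step~(b) of its repeat loop: when a level $i\in I$ is drawn in step~(a) but every index $j\in A^i$ has $\hat{Y}^i_j=\emptyset$. Since each draw of level $i$ deletes exactly one index from $A^i$ and initially $A^i=[\hat m]$, the process never fails on level $i$ as long as the total number of times $i$ is sampled over the $m$ iterations is at most $\sum_{j=1}^{\hat m}\mathbf{1}(\hat{Y}^i_j\neq\emptyset)$. Thus it suffices to verify this inequality for every $i\in I$ simultaneously.

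I would then condition on the intersection of three good events, each of which holds under the hypotheses of the present lemma (the bounds on $\lambda(C)$ and $\lambda(X^i)$, and $o\geq\OPT/16$). First, Lemma~\ref{lem:hatYijisgood} gives $\hat{Y}^i_j=Y^i_j$ for all $i\in\{0,\dots,L\},j\in[\hat m]$ with probability at least $1-\delta$. Second, Lemma~\ref{lem:num_Yij_large} gives, with probability at least $1-2\delta$, that $\sum_{j=1}^{\hat m}\mathbf{1}(Y^i_j\neq\emptyset)\geq 10^5\cdot 2^{2p+1}\varepsilon^{-2}(p\log n\log(kLd)+\log(1/\delta))\cdot\frac{L}{\delta}\cdot\frac{\lambda(X^i)}{R_i}$ for all $i\in I$. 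Third, Lemma~\ref{lem:num_samples_needed} gives, with probability at least $1-2\delta$, that $t'\leq 200\cdot 2^{2p+2}(k+(2d^{1.5})^p\cdot\OPT/o)\cdot L$ and that each $i\in I$ is sampled at most $4000\cdot 2^{2p+1}\varepsilon^{-2}(\log n\log(2t')+\log(1/\delta))\cdot\frac{\lambda(X^i)}{R_i}\cdot\frac{L}{\delta}$ times. A union bound over these three events costs $\delta+2\delta+2\delta=5\delta$, matching the claimed probability.

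On this intersection, using $\hat{Y}^i_j=Y^i_j$ to rewrite the count of non-empty buckets, the inequality we need for each $i\in I$ reduces, after cancelling the common factor $2^{2p+1}\varepsilon^{-2}\cdot\frac{L}{\delta}\cdot\frac{\lambda(X^i)}{R_i}$, to $4000(\log n\log(2t')+\log(1/\delta))\leq 10^5(p\log n\log(kLd)+\log(1/\delta))$. Since $o\geq\OPT/16$, the bound on $t'$ becomes $t'\leq 200\cdot 2^{2p+2}(k+16(2d^{1.5})^p)L$, so $\log(2t')=O(p+\log(kLd))=O(p\log(kLd))$; substituting this shows the left side is dominated by the right. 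Hence the number of times each $i\in I$ is sampled does not exceed $\sum_{j}\mathbf{1}(\hat{Y}^i_j\neq\emptyset)$, so the process never returns \textbf{FAIL}, which completes the proof.

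The main obstacle is this last comparison: one must confirm that, after plugging in the crude bound on $t'$, the factor $\log(2t')$ is genuinely absorbed — constants and the extra factor of $p$ included — into the $p\log n\log(kLd)+\log(1/\delta)$ term of Lemma~\ref{lem:num_Yij_large}, so the gap between the constants $4000$ and $10^5$ leaves sufficient slack. This is pure bookkeeping, but it is precisely where the numerical constants chosen in the definition of $\hat m$ and in the earlier size/counting lemmas are being cashed in.
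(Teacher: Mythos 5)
Your proposal is correct and follows essentially the same route as the paper's proof: identify that \textbf{FAIL} occurs exactly when some level $i\in I$ is sampled more times than there are non-empty $\hat{Y}^i_j$, then condition on the conclusions of Lemma~\ref{lem:hatYijisgood}, Lemma~\ref{lem:num_Yij_large} and Lemma~\ref{lem:num_samples_needed} and take a union bound costing $\delta+2\delta+2\delta=5\delta$. The only difference is that you carry out explicitly the numeric comparison (absorbing $\log(2t')$ into $p\log n\log(kLd)+\log(1/\delta)$ using $\OPT/o\leq 16$) that the paper leaves implicit when citing the two counting lemmas, which is a harmless, indeed welcome, extra verification.
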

\begin{proof}
The simulating process outputs \textbf{FAIL} if and only if $\exists i\in I$ such that the times that level $i$ is sampled is more than the number of non-empty $\hat{Y}^i_j$.
According to Lemma~\ref{lem:hatYijisgood}, with probability at least $1-\delta$, $\forall i\in I,j\in[\hat{m}],\hat{Y}^i_j=Y^i_j$.
According to Lemma~\ref{lem:num_Yij_large} and Lemma~\ref{lem:num_samples_needed}, with probability at least $1-4\delta$, $\forall i\in I$, the times that level $i$ is sampled is at most the number of non-empty sets $Y^i_j$.
By union bound, with probability at least $1-5\delta$, the simulating process via $Z''(X)$ does not output \text{FAIL}.
\end{proof}

\begin{theorem}[Bucketing-based sketch for $\ell_p$ $k$-clustering]\label{thm:bucket_sketch_clustering}
Let $k\geq 1,p\geq 1$.
Consider the $\ell_p$ $k$-clustering problem over a point set in $[\Delta]^d$ with at most $n$ points. 
Let $\varepsilon,\delta\in(0,0.5)$.
For any $o\geq 1,\alpha\geq 1$, there always exists a weak $o$-restricted $(1+\varepsilon)\alpha$-approximate bucketing-based sketch $Z^o(\cdot)$ such that for any data set $X\subseteq[\Delta]^d$,
\begin{enumerate}
    \item the recover function over $Z^o(X)$ either outputs \textbf{FAIL} or outputs a $(1+\varepsilon)\alpha$-approximation to the optimal $\ell_p$ $k$-clustering cost with probability at least $1-\delta$ conditioned on that the optimal cost is at least $o$;
    \item if the optimal cost is at most $2\cdot o$, with probability at least $1-\delta$, the space budget of $Z^o(X)$ is upper bounded by $\wt{O}_p\left(\varepsilon^{-3}\delta^{-2}\left(kd+d^{O(p)}\right)\log^3(n\Delta)\right)$ and the recover function over $Z^o(X)$ does not output \textbf{FAIL}.
\end{enumerate}
Furthermore, the time needed to compute a filter function, a bucketing function and a processing function on any data item $x\in[\Delta]^d$ is always upper bounded by $\wt{O}_p(\varepsilon^{-3}\delta^{-1}\cdot (k+d^{O(p)})\cdot \log^2(n\Delta))$.
The time to compute recover function over $Z^o(X)$ is at most $\mathcal{T}(m')$ where $m'$ is the space budget of $Z^o(X)$ and $\mathcal{T}(n')$ denotes the running time needed to compute an $\alpha$-approximation for the $\ell_p$ $k$-clustering for a weighted point set with $n'$ points in $\mathbb{R}^d$.
\end{theorem}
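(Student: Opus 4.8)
The plan is to take $Z^o(\cdot)$ to be the concatenation of the three bucketing-based sketches $Z(\cdot)$, $Z'(\cdot)$, $Z''(\cdot)$ built in Sections~\ref{sec:heavy_cell_partition_via_bucket_based_sketches} and~\ref{sec:sampling_process} on a single common instance of the randomly shifted hierarchical grids $G_{-1},G_0,\dots,G_L$, so that the heavy/crucial cell structure that all three refer to is one and the same object. Since each of $Z,Z',Z''$ is already a tuple of $L+1$ sub-sketches obeying Definition~\ref{def:bucketing_based_sketch}, their concatenation is a bucketing-based sketch with $t=3(L+1)$ sub-sketches. Its recover function runs, in order, the simulations of Algorithm~\ref{alg:offline_heavy_cell_decomposition} via $Z(X)$ to obtain the heavy cells (Lemma~\ref{lem:heavy_cell_decomposition_via_bucketing_based_sketch}); the computation of the estimates $\lambda(X^i)$ via $Z'(X)$ together with lines~\ref{sta:post_exe_call_partition}--\ref{sta:mid_offline_coreset} of Algorithm~\ref{alg:offline_coreset_construction} (Lemma~\ref{lem:simulate_sampling_first_part}); and the sampling simulation of lines~\ref{sta:repeated_sampling_start}--\ref{sta:repeated_sampling_end} via $Z''(X)$ described before Lemma~\ref{lem:hatYijisgood}. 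If that last step returns \textbf{FAIL}, the recover function returns \textbf{FAIL}; otherwise it has a weighted set $(S,w)$, on which it runs an $\alpha$-approximate $\ell_p$ $k$-clustering algorithm to obtain centers $B$ and outputs $\cost(S,w,B)/(1-\varepsilon)$ (it may also output $(S,w)$ itself).

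For the first guarantee, condition on $\OPT(X)\ge o$, so that $o\in[1,\OPT(X)]$. By Lemmas~\ref{lem:heavy_cell_decomposition_via_bucketing_based_sketch} and~\ref{lem:simulate_sampling_first_part}, with probability $1-O(\delta)$ every estimate $\lambda(C)$ and $\lambda(X^i)$ meets the accuracy hypotheses of Theorem~\ref{thm:offline_k_clustering_coreset} and lines~\ref{sta:start_offline_coreset}--\ref{sta:mid_offline_coreset} of Algorithm~\ref{alg:offline_coreset_construction} are reproduced faithfully; on this event, if the recover function does not return \textbf{FAIL}, Lemma~\ref{lem:simulation_second_part} gives (with probability $1-\delta$) that lines~\ref{sta:repeated_sampling_start}--\ref{sta:repeated_sampling_end} are reproduced faithfully as well, so the whole of Algorithm~\ref{alg:offline_coreset_construction} is simulated and Theorem~\ref{thm:offline_k_clustering_coreset} shows $(S,w)$ is an $\varepsilon$-coreset of $X$ with probability $1-\delta$. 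A single union bound over these $O(\delta)$-probability bad events shows that, conditioned on no \textbf{FAIL}, $(S,w)$ is an $\varepsilon$-coreset with probability $1-O(\delta)$, and then $\cost(S,w,B)/(1-\varepsilon)$ lies in $[\OPT(X),\,\tfrac{(1+\varepsilon)\alpha}{1-\varepsilon}\OPT(X)]$, a $(1+O(\varepsilon))\alpha$-approximation; rescaling $\varepsilon$ and $\delta$ by constants yields the claim.

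For the second guarantee, condition on $\OPT(X)\le 2o$, so $o\ge \OPT(X)/16$ and $\OPT(X)/o\le 2$. The first inequality lets Lemma~\ref{lem:simulating_succcess} apply (the good-estimate events again holding with probability $1-O(\delta)$), so the sampling simulation, hence the recover function, avoids \textbf{FAIL} with probability $1-O(\delta)$. The bound $\OPT(X)/o\le 2$ collapses the factor $\OPT/o$ in Lemmas~\ref{lem:size_heavy_cell_partioning_sketch},~\ref{lem:size_of_Zprime} and~\ref{lem:size_Zprimeprime} to a constant, so with probability $1-O(\delta)$ the space budgets of $Z(X),Z'(X),Z''(X)$ are each at most $\wt{O}_p(\varepsilon^{-3}\delta^{-2}(kd+d^{O(p)})\log^3(n\Delta))$ (using $L=\Theta(\log(nd\Delta))$, with $Z''$ the dominant term). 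Summing the three, union bounding with the no-\textbf{FAIL} event, and rescaling $\delta$ gives the stated space budget and no-\textbf{FAIL} property. For the running times, one evaluation of a filter, bucketing, or processing function on $x$ is dominated by drawing the random subset $\zeta''_i(x)\subseteq[\hat m]$, which costs $O(\hat m)+O(d)=\wt{O}_p(\varepsilon^{-3}\delta^{-1}(k+d^{O(p)})\log^2(n\Delta))$; and the recover function reads the sketch and runs the simulations in time linear in the sketch size (Lemmas~\ref{lem:heavy_cell_decomposition_via_bucketing_based_sketch} and~\ref{lem:simulate_sampling_first_part}), then makes one call to an $\alpha$-approximation algorithm on $(S,w)$, whose number of points is at most the space budget $m'$, for a total of $\mathcal{T}(m')$.

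The main obstacle I expect is the probability bookkeeping: one must pin down that the \emph{only} source of \textbf{FAIL} is the $Z''$-based sampling step (Lemma~\ref{lem:simulating_succcess}), verify that every ``good-estimate'' hypothesis feeding Theorem~\ref{thm:offline_k_clustering_coreset}, Lemma~\ref{lem:simulation_second_part} and Lemma~\ref{lem:simulating_succcess} is itself established with probability $1-O(\delta)$ by Lemmas~\ref{lem:heavy_cell_decomposition_via_bucketing_based_sketch} and~\ref{lem:simulate_sampling_first_part}, and combine all of them by one union bound that inflates $\delta$ by only a constant. A secondary point is checking that the three sketches genuinely share the same grids (so ``heavy'', ``crucial'', and $X^i$ mean the same throughout) and that limited-independence hash functions suffice wherever the offline analysis assumed fully independent filters or subset samples, so that the $\lambda(\cdot)$ estimates and the random sets $\zeta''_i(x)$ are reproducible during the stream.
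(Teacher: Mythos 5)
Your proposal is correct and follows essentially the same route as the paper: take $Z^o$ to be the composition of $Z$, $Z'$, $Z''$ on shared randomly shifted grids, invoke Lemmas~\ref{lem:heavy_cell_decomposition_via_bucketing_based_sketch}, \ref{lem:simulate_sampling_first_part}, \ref{lem:simulation_second_part} plus Theorem~\ref{thm:offline_k_clustering_coreset} for the approximation guarantee when $o\leq\OPT$, and Lemmas~\ref{lem:size_heavy_cell_partioning_sketch}, \ref{lem:size_of_Zprime}, \ref{lem:size_Zprimeprime}, \ref{lem:simulating_succcess} for the space and no-\textbf{FAIL} guarantee when $\OPT\leq 2o$, with the same union-bound bookkeeping, the same $O(d+\hat m)$ per-item processing bound, and the same $\mathcal{T}(m')$ recovery time.
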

\begin{proof}
We simply set $Z^o(X)$ as the composition of $Z(X),Z'(X)$ and $Z''(X)$ described in Section~\ref{sec:heavy_cell_partition_via_bucket_based_sketches} and Section~\ref{sec:sampling_process}, i.e., 
\begin{align*}
Z^o(X)=(Z_0(X),Z_1(X),\cdots,Z_L(X),Z'_0(X),Z'_1(X),\cdots,Z'_L(X),Z''_0(X),Z''_1(X),\cdots,Z''_L(X)).
\end{align*}
According to Lemma~\ref{lem:heavy_cell_decomposition_via_bucketing_based_sketch}, Lemma~\ref{lem:simulate_sampling_first_part} and Lemma~\ref{lem:simulation_second_part}, if the simulating process via $Z''(X)$ does not output \textbf{FAIL}, then with probability at least $1-4\delta$, Algorithm~\ref{alg:offline_heavy_cell_decomposition} and Algorithm~\ref{alg:offline_coreset_construction} can be simulated via $Z(X),Z'(X),Z''(X)$ and every $\lambda(C)$ satisfies $\lambda(C)\in |C\cap X|\pm 0.1R_i$ or $\lambda(C)\in(1\pm 0.01)\cdot |C\cap X|$ and every $\lambda(X^i)$ satisfies $\lambda(X^i) \in |X^i|\pm 0.1\varepsilon\gamma R_i$ or $\lambda(X^i)\in (1\pm 0.01\varepsilon)\cdot |X^i|$.
If $o\leq \OPT$, according to Theorem~\ref{thm:offline_k_clustering_coreset}, the output $(S,w)$ is an $\varepsilon$-coreset for $X$ with probability at least $1-\delta$.
Thus, if $o\leq \OPT$ and the simulating process via $Z''(X)$ does not output \textbf{FAIL}, we are able to obtain a $(1+\varepsilon)\alpha$-approximation to $\OPT$ via $Z^o(X)$ with probability at least $1-5\delta$ by computing an $\alpha$-approximation for the $\ell_p$ $k$-clustering for $(S,w)$.
Since the number of points in $S$ is at most $m'$ --- the space budget of $Z^o(X)$, the running time to get an $\alpha$-approximation for $(S,w)$ is at most $\mathcal{T}(m')$.
According to Lemma~\ref{lem:size_heavy_cell_partioning_sketch}, Lemma~\ref{lem:size_of_Zprime} and Lemma~\ref{lem:size_Zprimeprime}, if $\OPT\leq 2\cdot o$, with probability at least $1-6\delta$, the space budget of $Z^o(X)$ is at most 
\begin{align*}
&O\left(2^{O(p)}\varepsilon^{-3}\left(kd+d^{O(p)}\right)\cdot (p\log(n)\log(kd\log(nd\Delta))+\log(1/\delta))\cdot \frac{\log^2(nd\Delta)}{\delta^2}\right)\\
=&\wt{O}_p\left(\varepsilon^{-3}\delta^{-2}\left(kd+d^{O(p)}\right)\log^3(n\Delta)\right).
\end{align*}
If $\OPT\leq 2\cdot o$, according to Lemma~\ref{lem:simulating_succcess}, the simulating process via $Z''(X)$ does not output \textbf{FAIL} with probability at least $1-5\delta$.

The bottleneck of processing an data point is to compute $\zeta_i''(x)$.
The running time is upper bounded by $O(d+\hat{m})=\wt{O}_p(\varepsilon^{-3}\delta^{-1}\cdot (k+d^{O(p)})\cdot \log^2(n\Delta))$.
\end{proof}

\subsection{Sliding Window Algorithm for $k$-Clustering}
By plugging the bucketing-based sketch for $\ell_p$ $k$-clustering into our algorithmic framework (Algorithm~\ref{alg:sliding_framework}), we are able to obtain an efficient sliding window algorithm for the clustering problem.
\begin{theorem}\label{thm:sliding_window_clustering}
For any $\varepsilon,\delta'\in(0,0.5),\alpha\geq 1,k\geq 1,p\geq 1$, there is a $(1+\varepsilon)\alpha$-approximate algorithm for the $\ell_p$ $k$-clustering problem for a point set from $[\Delta]^d$ in the sliding window model with window size $W\geq 1$ using space $\wt{O}_p(\varepsilon^{-3}{\delta'}^{-2}(kd+d^{O(p)})\log^6(W\Delta))$.
The update time is at most $\wt{O}_p(\varepsilon^{-3}{\delta'}^{-2}(kd+d^{O(p)})\log^7(W\Delta))$.
The running time to get the approximate solution at the end of the stream is at most $\mathcal{T}(\wt{O}_p(\varepsilon^{-3}{\delta'}^{-2}(kd+d^{O(p)})\log^6(W\Delta)))\cdot O_p(\log(Wd\Delta))$ where $\mathcal{T}(n')$ denotes the running time needed to compute an $\alpha$-approximation for the $\ell_p$ $k$-clustering for a weighted point set with $n'$ points in $[\Delta]^d$.
The success probability is at least $1-\delta'$.
\end{theorem}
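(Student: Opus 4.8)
The plan is to obtain Theorem~\ref{thm:sliding_window_clustering} by plugging the bucketing-based sketch of Theorem~\ref{thm:bucket_sketch_clustering} into the generic reduction of Theorem~\ref{thm:framework}, with a trivial side procedure for the degenerate case $\OPT(X)=0$. First I would isolate $\OPT(X)=0$: under the discretization assumption the universe is $[\Delta]^d$, and $\cost(X,B)=0$ for some $|B|\le k$ exactly when $X$ has at most $k$ distinct points, in which case those points form an optimal center set. This is detected in the sliding window model by a small buffer storing, for each distinct value seen, its latest timestamp, evicting the value with the smallest stored timestamp whenever more than $k+1$ distinct values are present; at the end the stored values with timestamp $\ge N-W+1$ are exactly the distinct points of $X$ whenever there are at most $k$ of them. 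This is the same argument used in the proof of Theorem~\ref{thm:sliding_window_div}, and it costs $O(kd)$ space and $O(d)$ update time, so it can be run in parallel. It then remains to treat instances with $\OPT(X)>0$; these form a family $\mathcal D$ of point sets of size $\le W$ with more than $k$ distinct points, for which $\OPT(X)\in[m,M]$ with $m=\Theta(2^{-p})$ (two distinct integer points forced onto a common center contribute cost at least $2^{-p}$) and $M\le W\cdot(\sqrt d\,\Delta)^p$, so $\log(M/m)=O_p(\log(Wd\Delta))$.

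Next I would invoke Theorem~\ref{thm:bucket_sketch_clustering} with approximation parameter $\alpha$, accuracy $\varepsilon$, and failure probability $\delta=\Theta(\delta'/\log(M/m))=\Theta(\delta'/(p\log(Wd\Delta)))$. For every $o\in[m,M]$ this produces a weak $o$-restricted $(1+\varepsilon)\alpha$-approximate bucketing-based sketch $Z^o(\cdot)$ whose two guarantees are verbatim the two hypotheses of Theorem~\ref{thm:framework}: conditioned on $\OPT(X)\ge o$ the recover function either outputs \textbf{FAIL} or a $(1+\varepsilon)\alpha$-approximation of $\OPT(X)$ with probability $\ge 1-\delta$, and conditioned on $\OPT(X)\le 2o$ the space budget is at most $S=\wt O_p(\varepsilon^{-3}\delta^{-2}(kd+d^{O(p)})\log^3(n\Delta))$ and the recover function does not fail, with probability $\ge 1-\delta$. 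Feeding this family of sketches into Algorithm~\ref{alg:sliding_framework} and applying Theorem~\ref{thm:framework} (with $f=\OPT$ and parameters $m,M,S$) directly yields a sliding window algorithm that reports a $(1+\varepsilon)\alpha$-approximation of $\OPT(X)$ with failure probability $O(\delta\log(M/m))$; choosing the constant in $\delta$ and taking a union bound with the deterministic $\OPT=0$ subroutine gives overall success probability $\ge 1-\delta'$. As in Theorem~\ref{thm:bucket_sketch_clustering} the recovered object $\hat H^{o^*}$ is in fact an $\varepsilon$-coreset $(S,w)$ in the sense of Definition~\ref{def:coreset} (via Theorem~\ref{thm:offline_k_clustering_coreset}), so running an offline $\alpha$-approximate clustering routine on it is what converts the coreset guarantee into the claimed $(1+\varepsilon)\alpha$ ratio.

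Finally I would substitute parameters into the space/time bounds of Theorem~\ref{thm:framework}: with $n\le W$ and $\delta^{-1}=O_p(\delta'^{-1}\log(Wd\Delta))$, the total space $O(S\log(M/m))$ becomes $\wt O_p(\varepsilon^{-3}\delta'^{-2}(kd+d^{O(p)})\log^6(W\Delta))$; for the update time I would use that there are $t=3(L+1)=O(\log(Wd\Delta))$ sub-sketches and that the per-item cost of a filter/bucketing/processing evaluation is $\wt O_p(\varepsilon^{-3}\delta^{-1}(k+d^{O(p)})\log^2(n\Delta))$, and—exactly as in the sharper analyses inside the proofs of Theorems~\ref{thm:sliding_window_kcover_fast_runtime} and \ref{thm:sliding_window_div}—observe that since all tuples within a sub-sketch have (essentially) the same size, each bucket insertion triggers only $O(1)$ amortized evictions, so the $S\log(M/m)$ eviction term does not dominate and tracking the logarithmic factors gives update time $\wt O_p(\varepsilon^{-3}\delta'^{-2}(kd+d^{O(p)})\log^7(W\Delta))$; the end-of-stream cost is $|O|=O_p(\log(Wd\Delta))$ evaluations of the recover function, each on a stored sketch of size at most $S=\wt O_p(\varepsilon^{-3}\delta'^{-2}(kd+d^{O(p)})\log^6(W\Delta))$, hence $\mathcal T(\wt O_p(\varepsilon^{-3}\delta'^{-2}(kd+d^{O(p)})\log^6(W\Delta)))\cdot O_p(\log(Wd\Delta))$. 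The main obstacle is not mathematical depth—everything substantive is already carried by Theorems~\ref{thm:framework} and \ref{thm:bucket_sketch_clustering}—but rather (i) handling the $\OPT=0$ boundary case, which lies outside the framework, by the separate sliding-window buffer above, and (ii) carrying the $\delta\mapsto\delta'$ and $n\mapsto W$ substitutions and the logarithm accounting carefully enough that the exponents $\log^6$/$\log^7$ come out as stated.
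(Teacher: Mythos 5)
Your proposal is correct and follows essentially the same route as the paper: handle the degenerate case $\OPT=0$ (at most $k$ distinct points) by a separate small sliding-window buffer, then plug the sketches of Theorem~\ref{thm:bucket_sketch_clustering} with $\delta=\Theta(\delta'/\log(Wd\Delta))$ into Theorem~\ref{thm:framework} and perform the same $\delta\mapsto\delta'$, $n\mapsto W$ substitutions to obtain the stated space, update-time and end-of-stream bounds. The only differences are cosmetic: the paper takes $m=1$ (asserting $\OPT\geq 1$ whenever $\OPT>0$) rather than your more careful $m=\Theta(2^{-p})$, which incidentally keeps every $o$ inside the range $o\geq 1$ for which Theorem~\ref{thm:bucket_sketch_clustering} is stated, and it reads the $\log^7(W\Delta)$ update time directly off the generic bound of Theorem~\ref{thm:framework}, so your sharper eviction argument (whose "all tuples have essentially the same size" premise is shaky for the $Z''$ sub-sketches, where $\zeta''_i(x)$ is a variable-size random subset of $[\hat m]$) is unnecessary, though harmless since the theorem only claims an upper bound.
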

\begin{proof}
Suppose the point set of interest at the end of the stream is $X\subseteq[\Delta]^d$.
We first need to handle the case if $\OPT=0$.
We have $\OPT=0$ if and only if $X$ contains at most $k$ distinct points.
We can use the following sliding window procedure to check whether $X$ has at most $k$ points:
\begin{enumerate}
    \item Initialize a list of points $L=\emptyset$.
    \item For the latest $x$ in the stream:
        \begin{enumerate}
            \item Add $x$ into $L$.
            \item If there are $2$ points (including $x$ itself) in $L$ that are equal to $x$, remove the point which is equal to $x$ with earlier timestamp from $L$.
            \item Otherwise, if $L$ has $k+2$ points, remove the point with the earliest timestamp from $L$.
        \end{enumerate}
    \item At timestamp $N$, let $L'=\{x\in L\mid\text{ the timestamp is at least }N-W+1\text{, i.e., }x\in X\}$.
    \item If $L'$ has at most $k$ distinct points, return $0$.
\end{enumerate}

In the remaining of the proof, we only need to discuss the case when $\OPT>0$.
Since $X\subseteq[\Delta]^d$ and $|X|\leq W$, we have $\OPT\in[1,W\cdot(\sqrt{d}\Delta)^p]$.
Let $\delta=O_p(\delta'/\log(Wd\Delta))$.
According to Theorem~\ref{thm:bucket_sketch_clustering}, $\forall o\in[1,W\cdot(\sqrt{d}\Delta)^p], \alpha\in [0,1],\varepsilon,\delta\in(0,0.5)$, there always exists a weak $o$-restricted $(1+\varepsilon)\alpha$-approximate bucketing-based sketch $Z^o(\cdot)$ for $\ell_p$ $k$-clustering such that the recover function over $Z^o(X)$ either outputs \textbf{FAIL} or outputs a $(1+\varepsilon)\alpha$-approximation to $\OPT$ with probability at least $1-\delta$ conditioned on that $\OPT\geq o$.
Furthermore, if $\OPT\leq 2\cdot o$, with probability at least $1-\delta$, the space budget of $Z^o(X)$ is at most $\wt{O}_p(\varepsilon^{-3}\delta^{-2}(kd+d^{O(p)})\log^3(W\Delta))$ and the recover function over $Z^o(X)$ does not output \textbf{FAIL}.
According to Theorem~\ref{thm:framework}, after plugging the bucketing-based sketch into Algorithm~\ref{alg:sliding_framework}, we obtain the sliding window algorithm which outputs a $(1+\varepsilon)\alpha$-approximation to $\OPT$.
The space needed is at most $\wt{O}_p(\varepsilon^{-3}\delta^{-2}(kd+d^{O(p)})\log^3(W\Delta))\cdot O_p(\log(Wd\Delta))=\wt{O}_p(\varepsilon^{-3}\delta^{-2}(kd+d^{O(p)})\log^4(W\Delta))=\wt{O}_p(\varepsilon^{-3}{\delta'}^{-2}(kd+d^{O(p)})\log^6(W\Delta))$.
The success probability is at most $1-\delta\cdot O_p(\log(Wd\Delta))=1-\delta'$.
According to Theorem~\ref{thm:bucket_sketch_clustering}, the time needed to compute a filter function, a bucketing function and a processing function on any data item is always at most $\wt{O}_p(\varepsilon^{-3}\delta^{-1}\cdot(k+d^{O(p)})\cdot \log^2(W\Delta))$.
According to Theorem~\ref{thm:framework}, the update time is at most $\wt{O}_p(\log(Wd\Delta)\cdot\log(Wd\Delta)\cdot \varepsilon^{-3}\delta^{-1}\cdot(k+d^{O(p)})\cdot \log^2(W\Delta)+\varepsilon^{-3}{\delta'}^{-2}(kd+d^{O(p)})\log^6(W\Delta)\cdot \log(Wd\Delta))=\wt{O}_p(\varepsilon^{-3}{\delta'}^{-2}(kd+d^{O(p)})\log^7(W\Delta))$.
At the end of the stream, to gent the final approximation, we need to evaluate the recover function $O_p(\log(Wd\Delta))$ times according to Algorithm~\ref{alg:sliding_framework}.
Since the size of each sketch is at most $\wt{O}_p(\varepsilon^{-3}{\delta'}^{-2}(kd+d^{O(p)})\log^6(W\Delta))$, according to Theorem~\ref{thm:bucket_sketch_clustering}, the time needed to compute the recover function for each sketch is at most $\mathcal{T}(\wt{O}_p(\varepsilon^{-3}{\delta'}^{-2}(kd+d^{O(p)})\log^6(W\Delta)))$.
Thus, the overall running time to get the approximate solution at the end of the stream is at most $\mathcal{T}(\wt{O}_p(\varepsilon^{-3}{\delta'}^{-2}(kd+d^{O(p)})\log^6(W\Delta)))\cdot O_p(\log(Wd\Delta))$.
\end{proof}

\addcontentsline{toc}{section}{References}
\bibliographystyle{alpha}
\bibliography{ref}

\appendix
\section{Discussion of the Smoothness of $k$-Cover, Diversity Maximization and Clustering}\label{sec:lack_of_smoothness}

In this section, we briefly review the smooth histogram framework.
Then we will show examples that $k$-cover, diversity maximization and clustering are not smooth enough to obtain $(1\pm\varepsilon)$-approximation via smooth histogram.

\subsection{Smooth Function and Smooth Histogram}
Exponential histogram is an algorithmic framework for sliding window algorithms and is proposed by~\cite{datar2002maintaining}.
It shows that \emph{weakly additive functions} can be approximated by efficient sliding window algorithms via exponential histogram.
Later, \cite{braverman2007smooth} proposes the smooth histogram framework which can be used to develop efficient sliding window algorithms for approximating  \emph{smooth functions}.
It was shown that the class of weakly additive functions is a sub-class of the class of smooth functions.
Thus, smooth histogram is a more general framework than exponential histogram.

Suppose $A$ and $B$ are two streams of data items.
Let $B\subseteq_{r} A$ indicate that $B$ is a suffix of $A$.
The definition of the smooth function is as follows.
\begin{definition}[$(\alpha,\beta)$-Smooth function~\cite{braverman2007smooth}]
For $\alpha\in(0,1),\beta\in(0,\alpha]$, function $f(\cdot)$ is $(\alpha,\beta)$-smooth, if it holds the following properties:
\begin{enumerate}
    \item $\exists$ constant $c>0$, $\forall A,0\leq f(A)\leq |A|^c$.
    \item $\forall B\subseteq_{r} A$, $f(A)\geq f(B)$.
    \item If $B\subseteq_{r} A$ and $(1-\beta) f(A)\leq f(B)$ then $(1-\alpha)f(A\cup C)\leq f(B\cup C)$ for any subsequent $C$.
\end{enumerate}
\end{definition}
Roughly speaking, for a smooth function, if the function over a suffix of a stream of the data is a good approximation of the function over the entire stream, then after the arrival of any new data items, the function over such suffix is always a good approximation over the entire stream of data.
Therefore, the idea of smooth histogram is to maintain a set of start points over the stream.
The sliding window algorithm maintains a sketch of data items from the start point to the current end of the stream for each start point.
A start point is created when a new data item is arrived and a start point can be dropped if the stream from the next start point can approximate the stream from the previous start point very well, i.e., the value of the function over the stream starting from the next start point is at least $(1-\beta)$ times the value of the function over the stream starting from the previous start point.
If the function is an $(\alpha,\beta)$-smooth function, the function values produced by two adjacent start points are always within $(1-\alpha)$.
Since a start point can be dropped if the function value produced by the next start point is within $(1-\beta)$ of the function value produced by the previous start point, the smooth histogram only needs to maintain at most $O\left(\frac{1}{\beta}\cdot \log W\right)$ start points.
The formal statement of the guarantee of the sliding window algorithm via smooth histogram is shown as the following.
\begin{theorem}[\cite{braverman2007smooth}]
Let $f(\cdot)$ be an $(\alpha,\beta)$-smooth function.
If there is a streaming algorithm which outputs a $(1-\varepsilon')$-approximation (or $1/(1-\varepsilon')$-approximation) of $f(\cdot)$ using $S(\varepsilon')$ space, then there exists a sliding window algorithm which outputs a $(1-\alpha-\varepsilon')$-approximation (or $1/(1-\alpha-\varepsilon')$-approximation) of $f(\cdot)$ using $O(\log(W)/\beta\cdot S(\varepsilon'))$ space where $W\geq 1$ is the window size.
\end{theorem}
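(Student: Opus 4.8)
The plan is to rebuild the \emph{smooth histogram} of \cite{braverman2007smooth} and verify its two guarantees directly from the three clauses of $(\alpha,\beta)$-smoothness together with the hypothesis that a streaming algorithm $\mathcal{A}$ $(1-\varepsilon')$-approximates $f$ in space $S(\varepsilon')$. Throughout, for $a\le b$ write $[a,b]$ for the substream $x_a,\dots,x_b$ and $f[a,b]:=f(x_a,\dots,x_b)$.

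\textbf{Structure and update.} At time $N$ the algorithm keeps an increasing list of \emph{checkpoints} $t_1<\dots<t_s$, each carrying an independent running copy of $\mathcal{A}$ fed exactly the suffix $[t_i,N]$; let $\hat f_i$ be its output, so $(1-\varepsilon')f[t_i,N]\le \hat f_i\le f[t_i,N]$. Two invariants are maintained: \emph{(coverage)} $t_1\le N-W+1$ and, when $s\ge 2$, $t_2\in(N-W+1,N]$, so that $[t_2,N]\subseteq[N-W+1,N]\subseteq[t_1,N]$; and \emph{(sparsity)} $f[t_{i+2},N]<(1-\beta)f[t_i,N]$ for every $i\le s-2$. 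On arrival of $x_{N+1}$ we feed it to every copy of $\mathcal{A}$, append a fresh checkpoint $t_{s+1}=N+1$, then \emph{sparsify} (while some consecutive triple has $\hat f_{i+2}\ge(1-\beta)\hat f_i$, delete the middle checkpoint $t_{i+1}$) and finally \emph{trim} (while $s\ge 2$ and $t_2\le (N+1)-W+1$, delete $t_1$). Using the estimates $\hat f_i$ in the deletion test rather than the exact $f[t_i,N]$ only perturbs $\beta$ by a constant, which I absorb. Trimming restores (coverage); since sparsification deletes the middle of a triple and keeps its right endpoint, and a new checkpoint is created at every position, a surviving checkpoint always remains inside the current window (so $s\ge 2$ except in the degenerate case $W=1$, where one returns $\hat f_1$).

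\textbf{Correctness.} The crux is the invariant: at every time $N$ and every $i<s$, either $t_{i+1}=t_i+1$, or $f[t_{i+1},N]\ge(1-\alpha)f[t_i,N]$. I would prove it by tracking the history of the pair $(t_i,t_{i+1})$: if $t_{i+1}>t_i+1$, then since every position once held a checkpoint, consider the last time $N'$ at which a checkpoint strictly between them was deleted; at that instant it was the middle of the triple $(t_i,\cdot,t_{i+1})$, so the deletion rule gives $f[t_{i+1},N']\ge(1-\beta)f[t_i,N']$ (after absorbing the estimate slack), and no checkpoint is ever inserted between $t_i$ and $t_{i+1}$ afterward. Applying the third clause of smoothness with $A=[t_i,N']$, $B=[t_{i+1},N']$, $C=x_{N'+1}\dots x_N$ upgrades this to $f[t_{i+1},N]\ge(1-\alpha)f[t_i,N]$ for all $N\ge N'$, and this is untouched by later trimming, which only removes leftmost checkpoints and leaves the surviving suffixes intact. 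To answer a query: if $t_1=N-W+1$ (forced whenever $t_1$ and $t_2$ are adjacent, by (coverage)) the window is exactly $[t_1,N]$ and we return $\hat f_1$; otherwise $t_2\ge t_1+2$, the invariant gives $f[t_2,N]\ge(1-\alpha)f[t_1,N]$, and with the sandwich and monotonicity (the second clause) $(1-\alpha)f[N-W+1,N]\le f[t_2,N]\le f[N-W+1,N]$, so returning $\hat f_2$ yields $(1-\alpha)(1-\varepsilon')f[N-W+1,N]\le \hat f_2\le f[N-W+1,N]$, a $(1-\alpha-\varepsilon')$-approximation. The $1/(1-\varepsilon')$-variant, where $\mathcal{A}$ overestimates, is symmetric: one returns $\hat f_1$ and uses the invariant as $f[t_1,N]\le\tfrac1{1-\alpha}f[t_2,N]\le\tfrac1{1-\alpha}f[N-W+1,N]$.

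\textbf{Space, and the main obstacle.} For the space bound, (sparsity) makes $f$ of the suffix decay by a factor $(1-\beta)$ along odd-indexed (equivalently even-indexed) checkpoints; every checkpoint except $t_1$ starts at position $>N-W+1$, so its suffix has length $<W$ and the first clause of smoothness bounds its value by $\poly(W)$, while (as in \cite{braverman2007smooth}) $f$ is valued in $\{0\}\cup[1,\infty)$ on these substreams with the all-zero window handled trivially; a geometric argument then gives $s=O(\log W/\beta)$. Each checkpoint costs $S(\varepsilon')$ for its copy of $\mathcal{A}$ plus $O(1)$ for its estimate, so the total is $O(S(\varepsilon')\log W/\beta)$, and an update performs $O(s)$ streaming updates and one sparsification scan. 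The step I expect to be the real work is the correctness invariant: pinning down ``the moment two checkpoints became consecutive,'' threading the $(1-\beta)\to(1-\alpha)$ promotion through the third smoothness clause with the right $A,B,C$, and carrying the $\Theta(\varepsilon')$ slack from $\mathcal{A}$'s estimates through both the deletion test and the final inequality so that, after rescaling $\varepsilon'$ and $\beta$ by constants, the bounds still read $(1-\alpha-\varepsilon')$ and $O(S(\varepsilon')\log W/\beta)$. The sandwich, monotonicity, and the geometric count are routine once that invariant is in place.
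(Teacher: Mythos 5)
The paper does not prove this statement: it is imported verbatim from \cite{braverman2007smooth}, so there is no internal proof to compare against. Your reconstruction is the standard smooth-histogram argument (checkpoints carrying parallel streaming instances, the deletion rule, the $(1-\beta)\rightarrow(1-\alpha)$ promotion via the third smoothness clause applied at the last deletion time between two surviving checkpoints, answering from $\hat f_1$ or $\hat f_2$ using coverage plus monotonicity, and the geometric-decay count giving $O(\log(W)/\beta)$ checkpoints), and it is essentially correct. The one point to state more carefully is the estimate slack: $(\alpha,\beta)$-smoothness can only be invoked when the \emph{true} values satisfy $f(B)\geq(1-\beta)f(A)$, so the deletion test must be conservative, e.g.\ delete only when $\hat f_{i+2}\geq \frac{1-\beta}{1-\varepsilon'}\cdot \hat f_i$, rather than ``absorbing'' a slack that weakens the hypothesis of the clause; this also implicitly requires $\varepsilon'=O(\beta)$ so that surviving checkpoints still decay geometrically and the $O(\log(W)/\beta)$ bound on their number goes through.
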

Thus, if $f(\cdot)$ is $(\alpha,\beta)$-smooth with smaller $\alpha$, the sliding window algorithm via smooth histogram can output a better approximation.
In the remaining of the section, we will show that there is a constant lower bound of $\alpha$ for $k$-cover and diversity maximization.
$\ell_p$ $k$-Clustering is not $(\alpha,\beta)$-smooth for any $\alpha,\beta\geq 0$.

\subsection{$(\Omega(1),\beta)$-Smoothness of $k$-Cover}

The following lemma shows that we cannot use the smooth histogram to obtain a sliding window algorithm for $k$-cover in the edge-arrival model with approximation ratio better than $1/2$.
\begin{lemma}
$k$-Cover in edge-arrival model cannot be $(\alpha,\beta)$-smooth for any $\alpha<1/2$.
\end{lemma}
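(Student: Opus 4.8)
The plan is to exhibit a single explicit family of counterexamples that defeats $(\alpha,\beta)$-smoothness simultaneously for every $\alpha<1/2$ and every admissible $\beta$. Recall that to violate the third property in the definition of an $(\alpha,\beta)$-smooth function it is enough to produce a stream $A$, a suffix $B\subseteq_{r}A$, and a continuation $C$ such that (i) $f(B)=f(A)$, so that the hypothesis $(1-\beta)f(A)\le f(B)$ holds for \emph{every} $\beta\in(0,1)$, yet (ii) $f(B\cup C)<(1-\alpha)\,f(A\cup C)$. The first two properties are trivial for $k$-cover (the objective is at most the number of edges $|A|$, and deleting a prefix only shrinks the edge set, hence never increases the coverage of any fixed family), so the entire burden is to construct $A,B,C$ with $f(A)=f(B)$ while $f(B\cup C)/f(A\cup C)=1/2$.

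For $k=1$ the gadget I would use is: ground set $\{e_1,\dots,e_{2n}\}$ and two sets $S_1,S_2$. Let the prefix $P$ consist of the edges $(S_1,e_1),\dots,(S_1,e_n)$, let the suffix be $B=(S_2,e_{n+1}),\dots,(S_2,e_{2n})$, let $A$ be the concatenation of $P$ and $B$ (so $B\subseteq_{r}A$), and let the continuation be $C=(S_1,e_{n+1}),\dots,(S_1,e_{2n})$. A direct check gives $f(B)=n$ (only $S_2$ appears, covering $n$ elements), $f(A)=n$ (the coverages of $S_1$ and $S_2$ are disjoint and each has size $n$, so a single set still covers only $n$), $f(A\cup C)=2n$ (now $S_1$ covers all $2n$ elements), and $f(B\cup C)=n$ (both $S_1$ and $S_2$ cover exactly $\{e_{n+1},\dots,e_{2n}\}$). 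Thus $f(A)=f(B)$ while $f(B\cup C)=\tfrac12 f(A\cup C)$, and property three would force $(1-\alpha)\cdot 2n\le n$, i.e.\ $\alpha\ge 1/2$, contradicting $\alpha<1/2$.

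For a general fixed $k$ I would take $k$ pairwise vertex-disjoint and set-disjoint copies of the above gadget and let $A,B,C$ be the concatenations of the per-copy streams. The one non-routine point — the step I expect to require the most care — is verifying the four optimal $k$-cover values for this product instance: one must argue that, with a budget of $k$ sets spread over $k$ copies, it is never strictly better to place two chosen sets in one copy than to use exactly one set per copy. This holds because within any single copy at most one chosen set is ever useful for $B$ and for $B\cup C$ (there the relevant sets have identical coverage), while for $A$ and for $A\cup C$ a single set already attains the per-copy maximum. Consequently $f(A)=f(B)=kn$ and $f(B\cup C)=\tfrac12 f(A\cup C)=kn$, and the arithmetic of the $k=1$ case shows that $(1-\alpha)f(A\cup C)\le f(B\cup C)$ fails for every $\alpha<1/2$ and every $\beta\in(0,\alpha]$, which proves the lemma.
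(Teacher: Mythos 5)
Your proposal is correct and uses essentially the same construction as the paper: the identical two-set gadget with disjoint halves of the ground set for $k=1$, the continuation $C$ that lets $S_1$ absorb the second half, and $k$ disjoint copies for general $k$ (your extra verification that one set per copy is optimal just fills in the detail the paper leaves implicit).
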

\begin{proof}
Consider $k=1$.
There are two sets $S_1,S_2$ and $2m$ elements $e_1,e_2,\cdots,e_{2m}$.
The stream $A$ contains edges $(S_1,e_1),(S_1,e_2),\cdots,(S_1,e_m),(S_2,e_{m+1}),(S_2,e_{m+2}),\cdots,(S_2,e_{2m})$.
The suffix stream $B$ contains edges $(S_2,e_{m+1}),(S_2,e_{m+2}),\cdots,(S_2,e_{2m})$.
The optimal $1$-cover of the both $A$ and $B$ are the same, i.e., $\OPT_1(A)=\OPT_1(B)=m$.
Thus, for any $\beta\geq 0,(1-\beta)\OPT_1(A)\leq \OPT_1(B)$.
Let $C$ be the edges $(S_1,e_{m+1}),(S_1,e_{m+2}),\cdots,(S_1,e_{2m})$.
Then we have $(1-\alpha)\cdot \OPT_1(A\cup C)=(1-\alpha)\cdot 2m>  \OPT_1(B\cup C)=m$.
Thus, it cannot be $(\alpha,\beta)$-smooth for any $\alpha<1/2$.
To generalize to $k>1$, we only need to make $k$ copies of $S_1,S_2,e_1,e_2,\cdots,e_{2m}$.
\end{proof}

\subsection{$(\Omega(1),\beta)$-Smoothness of Diversity Maximization}
Recall that the diversity functions that we are considering is listed in Table~\ref{tab:diversity_functions}.
The definition of $\wb{\OPT}$ is given in Section~\ref{sec:div_max_sliding_window}.
Let us consider the case $k=2$.
In this case, $\wb{\OPT}(P)$ is always the same for every diversity function, i.e., $\wb{\OPT}(P)=\max_{Q\subseteq P:|Q|=2}\min_{u\not=v\in Q}\|u-v\|_2$.
The following lemma shows that we cannot use smooth histogram to obtain a sliding window algorithm for diversity maximization for $k=2$ and any diversity function listed in Table~\ref{tab:diversity_functions} with approximation ratio better than $1/\sqrt{2}$.

\begin{lemma}
Diversity maximization with $k=2$ and any diversity function listed in Table~\ref{tab:diversity_functions} cannot be $(\alpha,\beta)$-smooth for any $\alpha<1-1/\sqrt{2}$.
\end{lemma}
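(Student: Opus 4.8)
The plan is to exhibit a concrete two-point instance in the plane that shows the third smoothness condition fails for any $\alpha < 1 - 1/\sqrt{2}$. Since $k=2$, every diversity function in Table~\ref{tab:diversity_functions} reduces to $\div(Q) = \|u-v\|_2$ where $Q = \{u,v\}$, so it suffices to analyze a single scalar quantity. First I would take the stream $A$ to consist of two points at distance $1$ apart, say $p_1 = (0,0)$ followed by $p_2 = (1,0)$, and let the suffix $B$ consist of just the single point $p_2 = (1,0)$. Then $\OPT(A) = 1$ (the only pair has distance $1$), while $\OPT(B) = 0$ since $B$ has fewer than $k=2$ distinct points — but to make the smoothness hypothesis $(1-\beta)f(A) \le f(B)$ meaningful I should instead let $B$ contain two points that are close together, or alternatively observe that one typically needs $f(B)$ to be a good approximation of $f(A)$; the cleanest route is to arrange $A$ so that its suffix $B$ already achieves the optimum.

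Here is the cleaner construction I would actually use. Let $A$ be the stream $p_1, p_2, p_3$ with $p_1 = (0,0)$, $p_2 = (1,0)$, $p_3 = (1,1)$ (wait — I want the suffix to have the same diversity as the whole). Let me instead set $p_1 = (0,0)$, $p_2 = (0,0)$? No. The right choice: let $A = (q_1, q_2)$ with $\|q_1 - q_2\|_2 = \sqrt{2}$, and let $B = (q_2)$ be the last point only, then append $C = (q_3)$ with $q_3$ chosen so that $\|q_1 - q_3\|_2$ is large but $\|q_2 - q_3\|_2$ is small. Concretely put $q_1 = (0,0)$, $q_2 = (1,1)$, $q_3 = (1,-1)$. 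Then $\OPT(A) = \sqrt{2}$, $\OPT(B) = 0$; this still has the $B$-is-degenerate problem. To avoid it I would duplicate: take $q_2$ to appear twice in the suffix, i.e. $A = (q_1, q_2, q_2)$ and $B = (q_2, q_2)$, so $\OPT(B) = 0 = (1-1)\OPT(A)$, which only rules out $\beta = 1$. That is too weak.

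The genuinely correct instance — matching the $1 - 1/\sqrt{2}$ bound — is: let $A$ have suffix $B$ with $\OPT(B)$ close to $\OPT(A)$ by placing the two optimal points of $A$ both inside $B$. Take $q_1 = (0,0)$, $q_2 = (1,0)$ as the first two points of $A$, so already $\OPT(A) \ge 1$; let $B$ be the suffix containing just $q_2$ — no. After more thought, the construction that works is the standard one: $A = (q_1, q_2)$, $B = (q_2)$, but we measure against the \emph{normalized} optimum and note that when $B$ has a single point we may pad it; the lemma's proof in the paper almost certainly takes $A$ to have its two closest-to-optimal points in the suffix. I would set $A$ to be the sequence consisting of $q_1 = (-1,0)$, then $q_2 = (0, 1)$, then $q_3 = (0,-1)$; the suffix $B = (q_2, q_3)$ has $\OPT(B) = 2$, while $\OPT(A) = \sqrt{2}\cdot\sqrt{2}$? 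Let me just state the plan: choose points so that $\OPT(B) = \OPT(A)$ exactly (condition 2 of smoothness is then tight for every $\beta$), then append $C = (q_4)$ chosen so $q_4$ is at distance $D$ from $q_1$ but the pair $\{q_2, q_3, q_4\}$ restricted to the best pair in $B \cup C$ has diameter only $\OPT(B)$, while $A \cup C$ contains a pair $\{q_1, q_4\}$ of distance $D$. Optimizing the ratio $\OPT(B\cup C)/\OPT(A\cup C)$ over planar configurations with the constraint gives $1/\sqrt{2}$, achieved by a right-isoceles configuration.

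The key steps, in order, would be: (i) reduce all diversity functions in Table~\ref{tab:diversity_functions} to $\|u-v\|_2$ for $k=2$; (ii) define the explicit point stream $A$, its suffix $B$, and the continuation $C$ in $\mathbb{R}^2$ using a right-isoceles triangle configuration scaled so that $\OPT(A) = \OPT(B)$; (iii) compute $\OPT(A \cup C) = \sqrt{2}\cdot\OPT(B\cup C)$ and conclude that $(1-\alpha)\OPT(A\cup C) \le \OPT(B\cup C)$ forces $\alpha \ge 1 - 1/\sqrt{2}$, contradicting any $\alpha < 1-1/\sqrt{2}$; (iv) remark that the construction is embedded in $[\Delta]^d$ for suitable $\Delta$ and generalizes to larger $k$ by adding $k-2$ far-apart cluster gadgets that are forced into every optimal solution. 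The main obstacle I anticipate is step (ii): getting the geometry exactly right so that the suffix $B$ has \emph{the same} normalized optimum as $A$ (so condition 2 imposes no constraint, i.e. any $\beta > 0$ is allowed) while simultaneously the ratio after appending $C$ is exactly $1/\sqrt{2}$ and not merely bounded by it — this requires the right-isoceles placement and a short case check that no other pair in $B \cup C$ beats the intended one.
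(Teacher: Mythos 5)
Your overall strategy is the same as the paper's: for $k=2$ every diversity function is (up to a harmless constant factor) just the pairwise distance, so you exhibit a stream $A$ whose suffix $B$ already attains $\OPT(A)$ (making the hypothesis $(1-\beta)\OPT(A)\le\OPT(B)$ hold for every $\beta$), and then append a single point $C$ so that $\OPT(A\cup C)\ge\sqrt{2}\cdot\OPT(B\cup C)$, which forces $\alpha\ge 1-1/\sqrt{2}$. The paper does exactly this with four explicit points in $\{0,1\}^4$ (namely $(0,1,1,0),(1,1,0,0),(1,0,1,0)$ and appended point $(1,0,0,1)$, where all pairwise distances are $\sqrt{2}$ except the one between the first and the appended point, which is $2$). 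So there is no difference in method, only in the ambient space you propose to use.

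The genuine gap is that your write-up never actually produces the witness: after several discarded attempts you end with ``a right-isoceles configuration'' and explicitly defer ``getting the geometry exactly right'' and the ``short case check that no other pair in $B\cup C$ beats the intended one.'' For this lemma the explicit instance plus that case check \emph{is} the entire proof, so as it stands nothing is proved. The gap is fillable, even in the plane: take the stream $A=(q_1,q_2,q_3)$ with $q_1=(\tfrac12,\tfrac{\sqrt{3}}{2})$, $q_2=(0,0)$, $q_3=(1,0)$, suffix $B=(q_2,q_3)$, and $C=(q_4)$ with $q_4=(\tfrac12,-\tfrac{\sqrt{3}}{2})$; then all five distances other than $\|q_1-q_4\|_2$ equal $1$, so $\OPT(A)=\OPT(B)=\OPT(B\cup C)=1$ while $\OPT(A\cup C)=\|q_1-q_4\|_2=\sqrt{3}>\sqrt{2}$. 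Incidentally, this also shows your side claim that ``optimizing the ratio over planar configurations gives $1/\sqrt{2}$'' is false -- the plane admits ratio $1/\sqrt{3}$ -- but that error is harmless since you only need the ratio to be at most $1/\sqrt{2}$; still, you should either drop the optimality claim or simply state one concrete configuration, verify its six pairwise distances, and conclude that $(1-\alpha)\OPT(A\cup C)>\OPT(B\cup C)$ for every $\alpha<1-1/\sqrt{2}$.
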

\begin{proof}
Let $A=\{(0,1,1,0),(1,1,0,0),(1,0,1,0)\}$.
Let $B$ be the suffix of $A$, i.e., $B=\{(1,1,0,0),(1,0,1,0)\}$.
Let $C=\{(1,0,0,1)\}$.
It is easy to verify that $\wb{\OPT}(A)=\wb{\OPT}(B)=\sqrt{2}$ and $\wb{\OPT}(A\cup C)=2,\wb{\OPT}(B\cup C)=\sqrt{2}$.
Thus we have $(1-\beta)\wb{\OPT}(A)\leq \wb{\OPT}(B)$ but $(1-\alpha)\wb{\OPT}(A\cup C)>\wb{\OPT}(B\cup C)$.
\end{proof}

In the following, we extend the above lemma to the case when $k>2$.
\begin{lemma}
Consider $k>2$.
\begin{enumerate}
\item Remote-edge cannot be $(\alpha,\beta)$-smooth for any $\alpha<1-1/\sqrt{2}$.
\item Remote-clique cannot be $(\alpha,\beta)$-smooth for any $\alpha<(2-\sqrt{2})/(k(k-1))$.
\item Remote-tree cannot be $(\alpha,\beta)$-smooth for any $\alpha<(2-\sqrt{2})/(2(k-1))$.
\item Remote-cycle cannot be $(\alpha,\beta)$-smooth for any $\alpha<(2-\sqrt{2})/(2k)$.
\item Remote $t$-trees cannot be $(\alpha,\beta)$-smooth for any $\alpha<(2-\sqrt{2})/(2(k-t))$.
\item Remote $t$-cycle cannot be $(\alpha,\beta)$-smooth for any $\alpha<(2-\sqrt{2})/(2k)$.
\item Remote-star cannot be $(\alpha,\beta)$-smooth for any $\alpha<(2-\sqrt{2})/(2(k-1))$.
\item Remote-bipartition cannot be $(\alpha,\beta)$-smooth for any $\alpha<(2-\sqrt{2})/(2\lfloor k/2\rfloor \lceil k/2 \rceil)$.
\item Remote-pseudoforest cannot be $(\alpha,\beta)$-smooth for any $\alpha<(2-\sqrt{2})/(2k)$.
\item Remote-matching cannot be $(\alpha,\beta)$-smooth for any $\alpha<(2-\sqrt{2})/k$.
\end{enumerate}
\end{lemma}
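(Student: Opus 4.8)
The plan is to extend the $k=2$ construction to general $k>2$ by padding the three-point witness with $k-2$ "far away" dummy points that are shared across all of $A$, $B$ and $C$, so that the optimal solution for each stream is forced to include all dummies plus two of the original points. Concretely, I would embed the earlier example $A_0=\{(0,1,1,0),(1,1,0,0),(1,0,1,0)\}$, $B_0=\{(1,1,0,0),(1,0,1,0)\}$, $C_0=\{(1,0,0,1)\}$ into $\mathbb{R}^{4+(k-2)}$, appending $0$'s in the new coordinates, and introduce dummy points $u_1,\dots,u_{k-2}$ that are pairwise separated by a huge distance $M\gg 1$ (e.g.\ $u_j = M\cdot e_{4+j}$ for distinct new unit directions, so $\|u_i-u_j\|_2=M\sqrt2$ and $\|u_j - x\|_2\approx M$ for any original point $x$). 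Place the dummies at the very end of the stream so they lie in every suffix; then $A = (\text{old }A_0\text{ points}), u_1,\dots,u_{k-2}$ and $B$ is the suffix consisting of the last two old points together with all dummies, and $C = C_0$.

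The key steps, in order: (1) verify that for $M$ large enough the optimal $k$-subset of $A$, of $B$, of $A\cup C$ and of $B\cup C$ all contain $u_1,\dots,u_{k-2}$ — this follows because omitting any dummy in favour of an original point changes a batch of $\Theta(1)$ distances into $\Theta(M)$ distances, which only helps for cost-type (minimum-cost-tree/cycle/matching) objectives and, for the $\min$-type objectives (remote-edge, remote-star, remote-bipartition), never increases the objective. So the choice reduces to selecting $2$ points from the $4$-coordinate sub-instance exactly as in the $k=2$ lemma; (2) compute $\mathrm{div}$ for each of the four sets by splitting the diversity function into the fixed "dummy contribution'' (identical across $A,B,A\cup C,B\cup C$ because the same dummies appear) plus the "cross'' and "inner'' contributions involving the two chosen original points; (3) read off $\wb{\OPT}(A)=\wb{\OPT}(B)=\sqrt2$ and $\wb{\OPT}(A\cup C)=2$, $\wb{\OPT}(B\cup C)=\sqrt2$ from the normalization factors listed in Section~\ref{sec:div_max_sliding_window}. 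For instance, for remote-clique, $\OPT$ is a sum of $\binom k2$ distances of which $\binom{k-2}2$ dummy-dummy terms and $2(k-2)$ dummy-original terms are common, so the gap between $A\cup C$ and $B\cup C$ is exactly the difference $2-\sqrt2$ in the single remaining pair, i.e.\ $\wb{\OPT}$ differs by $(2-\sqrt2)/\binom k2 = (2-\sqrt2)/(k(k-1))$; and similarly each listed bound is $(2-\sqrt2)$ divided by the number of summed pairwise distances for that function, which is precisely the denominator appearing in the statement. One must check that the common dummy terms really are achieved by the optimum — i.e.\ that adding the two original points cannot make it preferable to drop a dummy — which is where the "$M$ huge'' argument is needed, and also that for the tree/cycle/star/matching structures the optimal subgraph on $\{u_1,\dots,u_{k-2}\}\cup\{\text{two originals}\}$ decomposes as (optimal structure on dummies) $+$ (edges to the two originals) in the claimed way; for remote-cycle and remote $t$-cycles this needs a short argument that the cheapest tour/collection of cycles visits the two nearby originals "consecutively.''

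For remote-edge the normalization is trivial ($\wb{\OPT}=\OPT$) and the $k=2$ argument goes through verbatim after padding, giving the bound $\alpha<1-1/\sqrt2$. The remaining cases are all of the form: replace $\alpha<1-1/\sqrt2 = (2-\sqrt2)/2$ by $\alpha<(2-\sqrt2)/(\#\text{pairs})$, and since $\#\text{pairs}$ is $2$ for remote-edge, $k(k-1)$ for remote-clique, $2(k-1)$ for remote-tree/remote-star, $2k$ for remote-cycle/remote $t$-cycles/remote-pseudoforest, $2(k-t)$ for remote $t$-trees, $2\lfloor k/2\rfloor\lceil k/2\rceil$ for remote-bipartition and $k$ for remote-matching, we obtain exactly the ten stated inequalities. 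I expect the main obstacle to be the bookkeeping in step~(1)/(2) for the structured objectives — remote-bipartition in particular, where the partition into $Q_1,Q_2$ interacts with which side the two original points land on, and remote $t$-trees/$t$-cycles, where one has to argue the optimal forest/cycle-cover isolates a component containing the two originals. In each of these I would handle it by choosing $M$ so large that any solution not of the canonical "dummies separate, originals grouped'' form is strictly suboptimal, reducing every case to the already-verified $k=2$ computation on the $4$-dimensional gadget.
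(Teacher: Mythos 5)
There is a genuine gap, and it affects nine of the ten items. The smoothness condition you must violate is multiplicative: you need $(1-\alpha)f(A\cup C) > f(B\cup C)$, so what matters is the \emph{ratio} $f(B\cup C)/f(A\cup C)$, not the additive gap. Your padding points sit at mutual distance $M\sqrt{2}$ and at distance $\approx M$ from the gadget, and for every objective other than remote-edge (remote-clique, remote-star, remote-bipartition, remote-tree, remote-cycle, remote $t$-trees/$t$-cycles, remote-pseudoforest, remote-matching) the dummy--dummy and dummy--original distances are \emph{summed into the objective value}. The "fixed dummy contribution'' $F=\Theta(Mk^2)$ (or $\Theta(Mk)$ for the tree/cycle/star/matching structures) therefore appears in both $\OPT(A\cup C)=F+2$-type quantities and $\OPT(B\cup C)=F+\sqrt2$-type quantities; it cancels additively but not multiplicatively. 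The violation you obtain is only $\alpha < (2-\sqrt2)/(F+2)$, which tends to $0$ as $M\to\infty$ and is far weaker than the stated bounds $(2-\sqrt2)/(k(k-1))$, $(2-\sqrt2)/(2(k-1))$, etc. (Your remark that remote-star and remote-bipartition are "min-type'' and hence unaffected is also incorrect: they are minima of \emph{sums}, so the dummy distances enter their values as well.) Only item 1 (remote-edge), being a pure max-min, survives your construction.

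The stated denominators are exactly (up to the factor 2 per edge) the number of pairwise distances summed by each diversity function, and obtaining them forces a bounded-diameter instance: one needs all distances at most $2$, a $k$-subset of $A\cup C$ in which \emph{every} counted pair has distance exactly $2$, and the property that \emph{every} $k$-subset of $B\cup C$ contains at least one pair at distance $\sqrt2$. The paper builds such an instance directly in $\{0,1\}^{2k}$ with the cyclic gadget $v_i=\e_i+\e_{i-1}$ (so $A=\{v_1,\dots,v_{k+1}\}$, $B=\{v_2,\dots,v_{k+1}\}$) and partners $u_i=\e_{i+1}+\e_{k+1+i}$ forming $C$, proving the $B\cup C$ property by induction; the additive loss $2-\sqrt2$ then converts into the stated multiplicative bounds because $\OPT(A\cup C)$ equals the maximum possible value for each objective. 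Your plan could in principle be repaired by placing the $k-2$ padding points at distance exactly $2$ from each other and from the four gadget points (this is geometrically possible, since the gadget's circumradius is $1$), but that abandons the "huge $M$'' mechanism your structural arguments rely on and amounts to reconstructing a bounded construction of the paper's type, with the per-objective verifications still to be done. As written, the proposal does not prove items 2 through 10.
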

\begin{proof}
Let $A$ contain $k+1$ $(2k)$-dimensional vectors:
\begin{align*}
A=\{&v_1=&(1,0,0,\cdots,0,1,0,0,\cdots,0),\\
&v_2=&(1,1,0,\cdots,0,0,0,0,\cdots,0),\\
&v_3=&(0,1,1,\cdots,0,0,0,0,\cdots,0),\\
&\cdots&\\
&v_{k+1}=&(0,0,0,\cdots,1,1,0,0,\cdots,0)\},
\end{align*}
i.e., $v_1=\e_1+\e_{k+1},v_2=\e_2+\e_1,v_3=\e_3+\e_2,\cdots,v_{k+1}=\e_{k+1}+\e_k$.
Let $B$ contain $k$ $(2k)$-dimensional vectors: $B=\{v_2,v_3,\cdots,v_{k+1}\}$.
Let $C$ contain $k-1$ $(2k)$-dimensional vectors:
\begin{align*}
C=\{&u_1=&(0,1,0,\cdots,0,0,1,0,\cdots,0),\\
&u_2=&(0,0,1,\cdots,0,0,0,1,\cdots,0),\\
&\cdots&\\
&u_{k-1}=&(0,0,0,\cdots,1,0,0,0,\cdots,1)\},
\end{align*}
i.e., $u_1=\e_2+\e_{k+2},u_2=\e_3+\e_{k+3},\cdots,u_{k-1}=\e_k+\e_{2k}$.

By symmetry, if we choose arbitrary $k$ vectors $Q\subseteq A$, then $\min_{p\not=q\in Q} \|p-q\|_2=\sqrt{2}$.
Thus, if we choose $Q=B$, then $\min_{p\not=q\in Q} \|p-q\|_2=\sqrt{2}$.
Consider $A\cup C$. if we choose $Q=\{v_1,u_1,u_2,\cdots,u_{k-1}\}\subseteq A\cup C$, then $\min_{p\not=q\in Q}\|p-q\|_2=2$.

Next we consider $B\cup C$.
\begin{claim}
For any $i\in [k-1]$, let $Q_i$ be arbitrary $i+1$ vectors from $\{v_2,v_3,\cdots,v_{i+2},u_1,u_2,\cdots,u_i\}$.
Then $\min_{p\not=q\in Q_i} \|p-q\|_2= \sqrt{2}$.
\end{claim}
\begin{proof}
The proof is by induction.
Consider the base case: $Q_1$ contains arbitrary $2$ vectors from $\{v_2,v_3,u_1\}$.
It is easy to verify that  $\min_{p\not=q\in Q_1} \|p-q\|_2=\sqrt{2}$.
Suppose the claim is true for $i-1$.
Consider $Q_i$. 
There are two cases.
In the first case, $Q_i$ contains at least $i$ vectors from $\{v_2,v_3,\cdots,v_{i+1},u_1,u_2,\cdots,u_{i-1}\}$.
By induction hypothesis, since $Q_i$ contains at least $i$ vectors from $\{v_2,v_3,\cdots,v_{i+1},u_1,u_2,\cdots,u_{i-1}\}$, we have $\min_{p\not = q\in Q_i}\|p-q\|_2= \sqrt{2}$.
In the second case, $Q_i$ contains at most $i-1$ vectors from $\{v_2,v_3,\cdots,v_{i+1},u_1,u_2,\cdots,u_{i-1}\}$.
In this case, both $v_{i+2}=\e_{i+2}+\e_{i+1}$ and $u_i=\e_{i+1}+\e_{i+k}$ are in $Q_i$.
Thus, $\min_{p\not = q\in Q_i}\|p-q\|_2=\|\e_{i+2}-\e_{i+k}\|_2= \sqrt{2}$.
\end{proof}
By use $i=k-1$ in the above claim, we know that any $Q\subseteq B\cup C$ with $|Q|=k$, we have $\min_{p\not = q\in Q}\|p-q\|_2=\sqrt{2}$.
Thus,
\begin{enumerate}
\item For remote-edge: $\OPT(A)=\OPT(B)=\sqrt{2},\OPT(A\cup C)=2,\OPT(B\cup C)=\sqrt{2}$.
\item For remote-clique: $\OPT(A)=\OPT(B)=(k-1)\cdot \sqrt{2}+(k-1)(k-2)/2\cdot 2,\OPT(A\cup C)=k(k-1),\OPT(B\cup C)\leq k(k-1)-(2-\sqrt{2})$.
\item For remote-tree: $\OPT(A)=\OPT(B)=(k-1)\cdot \sqrt{2},\OPT(A\cup C)=2(k-1),\OPT(B\cup C)\leq 2(k-1)-(2-\sqrt{2})$.
\item For remote-cycle: $\OPT(A)=\OPT(B)=k\cdot \sqrt{2},\OPT(A\cup C)=2k,\OPT(B\cup C)\leq 2k-(2-\sqrt{2})$.
\item For remote $t$-trees: $\OPT(A)=\OPT(B)=(k-t)\cdot \sqrt{2},\OPT(A\cup C)=2(k-t),\OPT(B\cup C)\leq 2(k-t)-(2-\sqrt{2})$.
\item For remote $t$-cycle: $\OPT(A)=\OPT(B)=k\cdot \sqrt{2},\OPT(A\cup C)=2k,\OPT(B\cup C)\leq 2k-(2-\sqrt{2})$.
\item For remote-star: $\OPT(A)=\OPT(B)=(k-1)\cdot \sqrt{2},\OPT(A\cup C)=2(k-1),\OPT(B\cup C)\leq 2(k-1)-(2-\sqrt{2})$.
\item For remote-bipartition: $\OPT(A)=\OPT(B)=\lfloor k/2 \rfloor\cdot \lceil k/2 \rceil \cdot \sqrt{2},\OPT(A\cup C)=2\cdot \lfloor k/2 \rfloor\cdot \lceil k/2 \rceil,\OPT(B\cup C)\leq 2\cdot \lfloor k/2 \rfloor\cdot \lceil k/2 \rceil-(2-\sqrt{2})$.
\item For remote-pseudoforest: $\OPT(A)=\OPT(B)=k\cdot \sqrt{2},\OPT(A\cup C)=2k,\OPT(B\cup C)\leq 2k-(2-\sqrt{2})$.
\item For remote-matching: $\OPT(A)=\OPT(B)=k/2\cdot \sqrt{2},\OPT(A\cup C)=2\cdot k/2,\OPT(B\cup C)\leq 2\cdot k/2-(2-\sqrt{2})$.
\end{enumerate}
\end{proof}

\subsection{Non-smoothness of $\ell_p$ $k$-Clustering}
As shown by~\cite{braverman2016clustering}, $\ell_p$ $k$-clustering is not $(\alpha,\beta)$-smooth for any $\alpha$ and $\beta$.
Thus, smooth histogram cannot be used to develop a sliding window algorithm for $\ell_p$ $k$-clustering for any multiplicative approximation.
\begin{lemma}[\cite{braverman2016clustering}]
Consider $\ell_p$ $k$-clustering problem for $p\geq 1$.
There exists sets of points $A,B,C\subseteq [\Delta]^d$ such that $\forall 0<\beta\leq \alpha <1$, $(1-\beta)\OPT(A\cup B)\leq \OPT(B)$ but $(1-\alpha)\OPT(A\cup B\cup C)>(1-\alpha)\OPT(B\cup C)$.
\end{lemma}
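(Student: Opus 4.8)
The plan is to witness the failure of the third (multiplicative) condition in the definition of an $(\alpha,\beta)$-smooth function by a single explicit family of instances, parametrised by $k$, that works simultaneously for \emph{all} admissible pairs $0<\beta\le\alpha<1$. The key observation is that the $\ell_p$ $k$-clustering cost is rigid at the bottom: $\OPT(Y)=0$ iff $Y$ contains at most $k$ distinct points, and $\OPT(Y)>0$ once $Y$ contains $k+1$ distinct points. Hence it suffices to arrange point sets $A,B,C$ so that $B$ is a suffix of the stream $A\cup B$ (order the items of $A$ before those of $B$), both $A\cup B$ and $B\cup C$ contain at most $k$ distinct points, but $A\cup B\cup C$ contains $k+1$; then $\OPT(A\cup B)=\OPT(B)=\OPT(B\cup C)=0$ while $\OPT(A\cup B\cup C)>0$, which is exactly the ``absorbed for free under one grouping, forced costly merge under another'' phenomenon that smoothness forbids.

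Concretely, I would fix $k+1$ pairwise distinct points $p_0,p_1,\cdots,p_k\in[\Delta]^d$ (possible whenever $\Delta^d\ge k+1$, the only nontrivial regime) and set
\begin{align*}
A=\{p_k\},\qquad B=\{p_1,p_2,\cdots,p_{k-1}\},\qquad C=\{p_0\},
\end{align*}
with the convention $B=\emptyset$ when $k=1$. Then $A\cup B=\{p_1,\cdots,p_{k-1},p_k\}$ has exactly $k$ distinct points, $B\cup C=\{p_0,p_1,\cdots,p_{k-1}\}$ has exactly $k$ distinct points, and $A\cup B\cup C=\{p_0,p_1,\cdots,p_k\}$ has exactly $k+1$ distinct points.

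The verification is short. For the premise: placing one centre on each distinct point of $A\cup B$ gives cost $0$, so $\OPT(A\cup B)=0$, and likewise $\OPT(B)=0$; hence $(1-\beta)\OPT(A\cup B)=0\le 0=\OPT(B)$ for every $\beta\in(0,1)$. For the (failed) conclusion: $\OPT(B\cup C)=0$ by the same argument, but $\OPT(A\cup B\cup C)>0$ — given any $k$ centres $b_1,\cdots,b_k$, by pigeonhole two of the $k+1$ distinct points of $A\cup B\cup C$, say $x\ne y$, have the same nearest centre $b_j$, so the cost is at least $\|x-b_j\|_2^p+\|y-b_j\|_2^p>0$ because $b_j$ cannot equal both $x$ and $y$; as this holds for every centre set, $\OPT(A\cup B\cup C)$ is a fixed positive number. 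Therefore $(1-\alpha)\OPT(A\cup B\cup C)>0=\OPT(B\cup C)$ for every $\alpha\in(0,1)$. Since the two displayed facts hold for all $0<\beta\le\alpha<1$, the third smoothness condition is violated for every such pair, so $\ell_p$ $k$-clustering is not $(\alpha,\beta)$-smooth for any $\alpha,\beta$; in particular smooth histogram cannot produce a sliding-window algorithm for it with any multiplicative approximation guarantee.

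There is no genuinely hard step here; the only point to flag is that this counterexample lives entirely in the $\OPT=0$ regime, which is precisely why the sliding-window clustering algorithm must (and does, see the proof of Theorem~\ref{thm:sliding_window_clustering}) detect $\OPT=0$ with a separate dedicated subroutine rather than through the bucketing framework alone. If one instead wanted every relevant value to be strictly positive, I would take $B$ with an optimal $k$-clustering whose centre set is $\{c_1,\cdots,c_k\}$, let $A$ be a point placed exactly at some $c_i$ (so $\OPT(A\cup B)=\OPT(B)$), and let $C$ be a point that is cheap to insert into the large instance $A\cup B\cup C$ yet forces a different, strictly costlier grouping of $B\cup C$; this still yields $(1-\beta)\OPT(A\cup B)\le\OPT(B)$ together with $\OPT(A\cup B\cup C)>\OPT(B\cup C)$, matching the stated inequalities. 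Since the lemma as written is cleanest via the zero-cost construction, that is the route I would take.
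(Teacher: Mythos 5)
Your construction is correct, and it is exactly the spirit of the argument behind this lemma: the paper itself states the result by citation to \cite{braverman2016clustering} without reproducing a proof, and the standard witness there is likewise the degenerate regime where the $k$-clustering cost jumps from $0$ to a positive value, so there is no substantive divergence to compare. Two small points of rigor are worth tightening. First, ``the cost is positive for every centre set'' does not by itself give $\OPT(A\cup B\cup C)>0$, since $\OPT$ is an infimum over all centre sets in $\mathbb{R}^d$; add one line, e.g.\ with $\delta=\min_{x\neq y}\|x-y\|_2>0$ over the $k+1$ distinct points, pigeonhole gives two points sharing a nearest centre $b$, whence $\max(\|x-b\|_2,\|y-b\|_2)\geq \delta/2$ and so $\cost(\cdot,B)\geq(\delta/2)^p$ uniformly, hence $\OPT\geq(\delta/2)^p>0$. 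Second, your $k=1$ case uses $B=\emptyset$ (with the convention $\OPT(\emptyset)=0$ and an empty suffix allowed); if one prefers to avoid that, simply assume $k\geq 2$ or note, as you do, that the construction only requires $\Delta^d\geq k+1$, the only regime in which the lemma is meaningful. With those touches the proof is complete, and your closing remark correctly identifies why the sliding-window algorithm of Theorem~\ref{thm:sliding_window_clustering} must test $\OPT=0$ by a separate routine.
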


\subsection{$(\Omega(1),\beta)$-Smoothness of Toy $1$-Median Problem}\label{sec:lack_of_smoothness_toy_1_median}
In the toy $1$-median problem, given a (multi-)set $X\subseteq\{0,1\}$, the goal is to compute $\OPT(X)=\min_{z\in\{0,1\}} \sum_{x\in X}|x-z|$.
Let $A=\{0,0\},B=\{0\},C=\{1,1,1\}$.
Then we have $(1-\beta)\OPT(A)=0\leq \OPT(B)$.
But $\frac{1}{2}\cdot \OPT(A\cup C)=\frac{1}{2}\cdot 2 = 1\geq  \OPT(B\cup C)=1$

\section{Analysis of Offline Coreset Construction for $k$-Clustering}\label{sec:analysis_offline_k_clustering}
We give the analysis of the coreset construction shown in Section~\ref{sec:offline_coreset_k_clustering}.
The goal is to prove Theorem~\ref{thm:offline_k_clustering_coreset}.
The analysis is similar to~\cite{hu2018nearly}.
We include the analysis in this section for completeness.

\begin{definition}[Good estimation of number of points]
For $i\in\{0,1,\cdots,L\},C\in G_i$ if the estimated value $\lambda(C)$ in line~\ref{sta:point_estimation_in_each_cell} of Algorithm~\ref{alg:offline_heavy_cell_decomposition} satisfies $\lambda(C)\in |C\cap X|\pm0.1R_i$ or $\lambda(C)\in (1\pm 0.01)\cdot |C\cap X|$, then we say that $\lambda(C)$ is good.
For $i\in \{0,1,\cdots, L\}$, if the estimated value $\lambda(X^i)$ in line~\ref{sta:point_estimation_in_each_crucial_level} of Algorithm~\ref{alg:offline_coreset_construction} satisfies $\lambda(X^i)\in |X^i|\pm 0.1\varepsilon\gamma R_i$ or $\lambda(X^i)\in (1\pm 0.01\varepsilon)\cdot |X^i|$, then we say that $\lambda(X^i)$ is good.
\end{definition}

\begin{fact}\label{fac:points_in_heavy_cells}
If $o\leq \OPT$ and the estimation $\lambda(C)$ is good for every cell $C$, then for level $i\in \{-1,0,\cdots,L\}$, every heavy cell $C\in G_i$ contains at least $0.9R_i$ points, i.e., $|C\cap X|\geq 0.9 R_i$, and every crucial cell $C\in G_i$ contains at most $1.1R_i$ points, i.e., $|C\cap X|\leq 1.1 R_i$.
\end{fact}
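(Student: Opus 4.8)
The plan is to prove Fact~\ref{fac:points_in_heavy_cells} by a direct case analysis over the three kinds of cells that appear in the statement, using only the marking rules of Algorithm~\ref{alg:offline_heavy_cell_decomposition} and Algorithm~\ref{alg:offline_coreset_construction} together with the hypotheses $o\le\OPT$ and ``$\lambda(C)$ is good for every cell $C$''. Recall that goodness of $\lambda(C)$ means either $\lambda(C)\in|C\cap X|\pm 0.1R_i$ or $\lambda(C)\in(1\pm 0.01)\cdot|C\cap X|$, so in every step below I would carry through both branches and keep the worse bound.

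First I would handle a heavy cell $C\in G_i$ with $i\in\{0,1,\dots,L\}$. Such a cell is marked heavy only through line~\ref{sta:mark_heavy_via_estimation}, which in particular requires $\lambda(C)\ge R_i$. In the additive branch this gives $|C\cap X|\ge \lambda(C)-0.1R_i\ge 0.9R_i$, and in the multiplicative branch $|C\cap X|\ge \lambda(C)/1.01 > 0.9R_i$; either way $|C\cap X|\ge 0.9R_i$, as claimed. The level $i=-1$ needs a separate argument, since the unique heavy cell $C\in G_{-1}$ is marked heavy unconditionally in line~\ref{sta:handle_minusone_level} rather than via an estimate. Here $[\Delta]^d\subseteq C$, so $|C\cap X|=|X|$, and I would lower bound $|X|$ by noting that placing a single center at any point of $X$ costs at most $|X|\cdot(\sqrt d\,\Delta)^p$, because the diameter of $[\Delta]^d$ is at most $\sqrt d\,\Delta$; hence $o\le\OPT\le|X|\cdot(\sqrt d\,\Delta)^p$. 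Since $R_{-1}=0.01\,o/(\sqrt d\cdot 2\Delta)^p$, this yields $|X|\ge o/(\sqrt d\,\Delta)^p\ge o/(\sqrt d\cdot 2\Delta)^p = 100R_{-1}\ge 0.9R_{-1}$.

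Finally I would treat a crucial cell $C\in G_i$, which by definition (Algorithm~\ref{alg:offline_coreset_construction}) only occurs for $i\ge 0$, is not heavy, but has all ancestors heavy. Since the ``all ancestors heavy'' half of the condition in line~\ref{sta:mark_heavy_via_estimation} holds, the reason $C$ was not marked heavy must be $\lambda(C)<R_i$. Goodness then gives $|C\cap X|\le \lambda(C)+0.1R_i<1.1R_i$ in the additive branch, and $|C\cap X|\le \lambda(C)/0.99<1.02R_i<1.1R_i$ in the multiplicative branch. There are no crucial cells in level $-1$, so the level range $i\in\{-1,0,\dots,L\}$ of the statement is covered.

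I do not expect a genuine obstacle: each bound follows immediately from the triggering condition of the relevant line and the definition of a good estimate. The only point that requires any care (and the only place the hypothesis $o\le\OPT$ is used) is the level $i=-1$ heavy cell, where one must invoke the trivial upper bound $\OPT\le|X|\cdot(\sqrt d\,\Delta)^p$ on the clustering cost rather than the marking rule; the rest is routine arithmetic with the constants $0.9$, $1.1$, $0.01$, and $0.1$. This mirrors the corresponding step in~\cite{hu2018nearly}.
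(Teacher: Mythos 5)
Your proposal is correct and follows essentially the same route as the paper's proof: levels $i\ge 0$ are handled directly from the marking conditions together with the two branches of the goodness guarantee, and the unique heavy cell at level $-1$ is handled by the trivial bound $\OPT\le |X|\cdot(\sqrt{d}\Delta)^p$ combined with $o\le\OPT$ and the definition of $R_{-1}$. You merely spell out the arithmetic that the paper summarizes as "follows directly from the construction," so there is nothing to change.
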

\begin{proof}
For level $i\in \{0,1,\cdots,L\}$, the claim follows directly from the construction of heavy and crucial cells in Algorithm~\ref{alg:offline_heavy_cell_decomposition} and Algorithm~\ref{alg:offline_coreset_construction}.
For $i=-1$, $G_{-1}$ does not contain any crucial cell.
The unique heavy cell $C\in G_{-1}$ contains every point in $X$.
Notice that each point can contribute to the clustering cost at most $(\sqrt{d}\cdot \Delta)^p$.
Thus, $|X|\geq \OPT/(\sqrt{d}\cdot \Delta)^p\geq o/(\sqrt{d}\cdot \Delta)^p\geq R_{-1}$.
Thus $|C\cap X|\geq 0.9R_{-1}$.
\end{proof}

\begin{fact}[No heavy cell in $G_L$]\label{fac:no_heavy_cell_GL}
If the estimated value $\lambda(C)$ is good for every cell $C$, none of the cell in $G_L$ is marked as heavy by Algorithm~\ref{alg:offline_heavy_cell_decomposition}.
\end{fact}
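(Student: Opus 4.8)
The plan is to show the threshold $R_L$ at the bottom level is so large that, given $X$ has at most $n$ points, no estimate $\lambda(C)$ for $C\in G_L$ can reach it, hence line~\ref{sta:mark_heavy_via_estimation} never marks a cell of $G_L$ heavy.

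First I would lower-bound $R_L$. By definition $R_L = 0.01\cdot o/(\sqrt d\,\Delta_L)^p$ with $\Delta_L=\Delta/2^L$ and $L=\lceil\log(nd\Delta)\rceil+10$, so $2^L\ge 1024\,nd\Delta$. Rewriting, $R_L = 0.01\,o\,(2^L/(\sqrt d\,\Delta))^p$. Since $o\ge 1$ and $2^L/(\sqrt d\,\Delta)\ge 1024\, n\sqrt d\ge 1024\, n\ge 1$, and the exponent $p\ge 1$, we get $(2^L/(\sqrt d\,\Delta))^p\ge 2^L/(\sqrt d\,\Delta)\ge 1024\,n$, hence $R_L \ge 0.01\cdot 1024\,n > 10n$.

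Next, fix any cell $C\in G_L$ and note $|C\cap X|\le |X|\le n$. The hypothesis that $\lambda(C)$ is good gives two possibilities: either $\lambda(C)\le |C\cap X|+0.1R_L\le n+0.1R_L$, or $\lambda(C)\le (1+0.01)|C\cap X|\le 1.01\,n$. In the first case, $\lambda(C)\ge R_L$ would force $n\ge 0.9R_L$, i.e.\ $R_L\le 10n/9$, contradicting $R_L>10n$; in the second case $\lambda(C)\le 1.01\,n < R_L$ directly. So in either case $\lambda(C)<R_L$.

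Finally I would observe that a cell of $G_L$ can only be marked heavy through line~\ref{sta:mark_heavy_via_estimation} (line~\ref{sta:handle_minusone_level} marks only the single cell of $G_{-1}$ containing $[\Delta]^d$), and that step requires $\lambda(C)\ge R_L$, which never holds for $C\in G_L$. Hence no cell in $G_L$ is marked heavy. The argument is entirely a size estimate; the only minor bookkeeping is handling both branches of the "good estimation" definition, and there is no genuine obstacle.
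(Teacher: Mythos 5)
Your proof is correct and follows essentially the same route as the paper's: lower-bound $R_L$ by a large multiple of $n$ using $L=\lceil\log(nd\Delta)\rceil+10$, and then use the goodness of $\lambda(C)$ together with $|C\cap X|\leq n$ to conclude that no cell of $G_L$ can satisfy $\lambda(C)\geq R_i$ in line~\ref{sta:mark_heavy_via_estimation}. The only cosmetic difference is that the paper invokes Fact~\ref{fac:points_in_heavy_cells} for the ``heavy implies many points'' step, whereas you unfold the goodness definition directly, which incidentally avoids relying on the $o\leq\OPT$ hypothesis appearing in that fact.
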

\begin{proof}
Since $L=\lceil\log(nd\Delta)\rceil+10$, we have $R_L=0.01\cdot o/(\sqrt{d}\cdot \Delta_L)^p\geq 0.01/(\sqrt{d}\cdot \Delta/2^L)^p\geq 2n$.
Consider each cell $C\in G_L$.
According to Fact~\ref{fac:points_in_heavy_cells}, if $C\in G_L$ is a heavy cell, then we have $|C\cap X|\geq 0.9R_{-1}\geq 1.8n$ which contradicts to $|X|\leq n$.
Thus, any cell $C\in G_L$ cannot be heavy.
\end{proof}

\begin{fact}[$X$ is partitioned by crucial cells]\label{fac:partition_of_X}
If the estimated value $\lambda(C)$ is good for every cell $C$, $X^0,X^1,\cdots,X^L$ in Algorithm~\ref{alg:offline_coreset_construction} is a partition of $X$, i.e., $\forall x\in X$, there is a unique $i\in\{0,1,\cdots,L\}$ such that $x\in X^i$.
\end{fact}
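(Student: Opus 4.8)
The plan is to exhibit, for each $x\in X$, exactly one level at which $x$ lands in a crucial cell, by walking down the nested chain of cells containing $x$. Write $c_{-1}(x)$ for the cell of $G_{-1}$ containing $x$ and $c_0(x),c_1(x),\cdots,c_L(x)$ for the cells of $G_0,\cdots,G_L$ containing $x$; by the nesting of the grids, the ancestors of $c_i(x)$ are precisely $c_{-1}(x),c_0(x),\cdots,c_{i-1}(x)$. First I would record three structural facts about heaviness: (i) $c_{-1}(x)$ is heavy, since line~\ref{sta:handle_minusone_level} of Algorithm~\ref{alg:offline_heavy_cell_decomposition} marks the unique cell of $G_{-1}$ containing $[\Delta]^d$ as heavy; (ii) $c_L(x)$ is not heavy, by Fact~\ref{fac:no_heavy_cell_GL}, which applies because $\lambda(C)$ is assumed good for every cell; and (iii) by the marking rule in line~\ref{sta:mark_heavy_via_estimation}, a cell is heavy only if all its ancestors are heavy, so if $c_i(x)$ is heavy then so is $c_j(x)$ for every $-1\le j\le i$ (equivalently, once the chain turns non-heavy it stays non-heavy).

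For existence, I would let $i^*$ be the smallest index in $\{0,1,\cdots,L\}$ with $c_{i^*}(x)$ not heavy; this exists by (ii). Minimality gives that $c_0(x),\cdots,c_{i^*-1}(x)$ are heavy, and together with (i) this means all ancestors of $c_{i^*}(x)$ are heavy while $c_{i^*}(x)$ itself is not --- which is exactly the definition of a crucial cell. Hence $x$ lies in the crucial cell $c_{i^*}(x)\in G_{i^*}$, so $x\in X^{i^*}$. For uniqueness, suppose $x\in X^i$ and $x\in X^j$ with $i<j$. Since $x\in X^j$, the cell $c_j(x)$ is crucial, so all of its ancestors are heavy; in particular $c_i(x)$, being an ancestor of $c_j(x)$, is heavy. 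But $x\in X^i$ forces $c_i(x)$ to be crucial, hence not heavy --- a contradiction. Therefore each $x\in X$ belongs to exactly one $X^i$, i.e., $X^0,X^1,\cdots,X^L$ partition $X$.

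I do not expect a genuine obstacle here: the argument is a direct unwinding of the definition of ``crucial cell.'' The only points needing care are the two boundary levels --- that $G_{-1}$ contains a heavy cell by fiat and that $G_L$ contains no heavy cell --- together with the recursive nature of the heaviness test; all three are already isolated in line~\ref{sta:handle_minusone_level}, Fact~\ref{fac:no_heavy_cell_GL}, and line~\ref{sta:mark_heavy_via_estimation}, so the good-estimation hypothesis enters only through Fact~\ref{fac:no_heavy_cell_GL}.
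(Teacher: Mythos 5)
Your proof is correct and follows essentially the same route as the paper: existence via taking the smallest level at which the containing cell is non-heavy (anchored by the heavy cell in $G_{-1}$ and the absence of heavy cells in $G_L$ from Fact~\ref{fac:no_heavy_cell_GL}), and uniqueness from the observation that every ancestor of a crucial cell is heavy. No issues to flag.
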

\begin{proof}
Notice that any ancestor of a crucial cell must be a heavy cell by the construction of crucial cells.
Thus, any $x\in X$ can be in at most one crucial cell.
Consider an arbitrary point $x\in X$. 
Let $i\in \{0,1,\cdots, L\}$ be the smallest value such that cells $c_{-1}(x),c_{0}(x),\cdots,c_{i-1}(x)$ are heavy and $c_i(x)$ is not heavy.
Notice that Fact~\ref{fac:no_heavy_cell_GL} implies that $c_L(x)$ is not heavy.
On the other hand, the cell $c_{-1}(x)$ must be heavy according to line~\ref{sta:handle_minusone_level} of Algorithm~\ref{alg:offline_heavy_cell_decomposition}.
Thus we are guaranteed to find such $i$.
We can verify that $c_i(x)$ is a crucial cell.
Thus, $x\in X^i$.
\end{proof}

\begin{fact}
If $C$ is a heavy cell in $G_i$ for some $i\in\{-1,0,\cdots,L-1\}$, then every child cell $C'$ of $C$ is either heavy or crucial. 
\end{fact}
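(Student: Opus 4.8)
The plan is to unwind the definitions of \emph{heavy} and \emph{crucial} cells from Algorithm~\ref{alg:offline_heavy_cell_decomposition} and Algorithm~\ref{alg:offline_coreset_construction}, and to observe that the only way a cell can fail to be heavy-or-crucial is for one of its proper ancestors to be non-heavy; a child of a heavy cell cannot suffer this. So the whole argument reduces to identifying the ancestor set of a child cell.

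First I would record the two relevant definitions at level $i+1$ (this is legitimate since $i\le L-1$, so $i+1\le L$ and the loop in Algorithm~\ref{alg:offline_heavy_cell_decomposition} does process level $i+1$): a cell $C'\in G_{i+1}$ is marked heavy iff $\lambda(C')\ge R_{i+1}$ \emph{and} every ancestor of $C'$ is heavy (line~\ref{sta:mark_heavy_via_estimation}), while $C'$ is marked crucial iff $C'$ is not heavy but every ancestor of $C'$ is heavy. Taken together, these two cases describe exactly the cells all of whose ancestors are heavy. Hence it suffices to show: if $C'$ is a child cell of a heavy cell $C\in G_i$, then every ancestor of $C'$ is heavy.

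Next I would use the nesting of the grids. The ancestors of a child cell $C'$ of $C$ are precisely $C$ itself together with all ancestors of $C$ (because $C$ is the unique cell of $G_i$ containing $C'$, and any cell of a coarser grid containing $C'$ must contain $C$). For $i\ge 0$, heaviness of $C$ gives, via line~\ref{sta:mark_heavy_via_estimation}, that all ancestors of $C$ are heavy; combined with $C$ itself being heavy, every ancestor of $C'$ is heavy, so $C'$ is heavy or crucial. For $i=-1$, $C$ is the unique cell of $G_{-1}$ containing $[\Delta]^d$, which is marked heavy in line~\ref{sta:handle_minusone_level}, and it is the sole ancestor of $C'$, so again every ancestor of $C'$ is heavy. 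This finishes the proof.

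I expect no real obstacle here: the claim is an immediate consequence of the definitions once the ancestor set of a child cell is pinned down. The only minor care-points are the boundary case $i=-1$ (where ``all ancestors heavy'' is satisfied by the single $G_{-1}$ cell marked heavy by fiat) and noting explicitly that the loop in Algorithm~\ref{alg:offline_heavy_cell_decomposition} indeed reaches level $i+1$.
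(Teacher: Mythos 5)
Your proof is correct and is just a careful unwinding of the definitions, which is exactly what the paper does (its proof is a one-line "follows directly from the construction of heavy and crucial cells"). Your explicit identification of the ancestor set of a child cell and the $i=-1$ boundary case fills in the same routine details the paper leaves implicit.
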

\begin{proof}
Follows directly from the construction of heavy and crucial cells in Algorithm~\ref{alg:offline_heavy_cell_decomposition} and Algorithm~\ref{alg:offline_coreset_construction}.
\end{proof}

\begin{fact}\label{fac:good_estimation_Xi}
If $\lambda(X^i)$ is good for every $i\in\{0,1,\cdots,L\}$, then $\forall i\in I, |X^i|\geq 0.9\gamma R_i$ and $\forall i\not\in I, |X^i|\leq 1.1\gamma R_i$.
\end{fact}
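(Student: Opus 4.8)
The plan is a direct case analysis driven by the definition of a good estimate, with no appeal to any property of the coreset construction or of $\OPT$. Recall that $\lambda(X^i)$ being good means either $\lambda(X^i)\in |X^i|\pm 0.1\varepsilon\gamma R_i$ (the additive guarantee) or $\lambda(X^i)\in(1\pm 0.01\varepsilon)\cdot |X^i|$ (the multiplicative guarantee), and that by definition $i\in I$ exactly when $\lambda(X^i)\geq \gamma R_i$. For each of the two directions I would simply invert the defining inequality of $I$ using whichever guarantee holds, carrying the error term along and checking that the slack constants $0.9$ and $1.1$ absorb it for every $\varepsilon\in(0,0.5)$.

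First I would treat the case $i\in I$, so $\lambda(X^i)\geq\gamma R_i$. Under the additive guarantee, $|X^i|\geq \lambda(X^i)-0.1\varepsilon\gamma R_i\geq (1-0.1\varepsilon)\gamma R_i\geq 0.9\gamma R_i$, using $\varepsilon<0.5$ in the last step. Under the multiplicative guarantee, $|X^i|\geq \lambda(X^i)/(1+0.01\varepsilon)\geq \gamma R_i/(1+0.01\varepsilon)\geq 0.9\gamma R_i$, again because $\varepsilon<0.5$. Either way the first claim follows.

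Next I would treat the case $i\notin I$, so $\lambda(X^i)<\gamma R_i$. Under the additive guarantee, $|X^i|\leq \lambda(X^i)+0.1\varepsilon\gamma R_i<(1+0.1\varepsilon)\gamma R_i\leq 1.1\gamma R_i$. Under the multiplicative guarantee, $|X^i|\leq \lambda(X^i)/(1-0.01\varepsilon)<\gamma R_i/(1-0.01\varepsilon)\leq 1.1\gamma R_i$, since $\varepsilon<0.5$ gives $1/(1-0.01\varepsilon)\leq 1.1$. This yields the second claim and finishes the proof.

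There is no genuine obstacle here: the statement is pure bookkeeping about how a $(1\pm O(\varepsilon))$-multiplicative (or $O(\varepsilon\gamma R_i)$-additive) estimate of $|X^i|$ relates to the threshold $\gamma R_i$ defining $I$. The only point to keep an eye on is that the generous constants $0.9$ and $1.1$ appearing in the statement are exactly what makes the case analysis go through uniformly over the range $\varepsilon\in(0,0.5)$, which the inequalities above confirm directly.
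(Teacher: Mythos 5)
Your proof is correct and matches the paper, which simply remarks that the fact ``follows directly from the construction'' of $I$ in Algorithm~\ref{alg:offline_coreset_construction}; you have merely spelled out that one-line appeal to the definitions. The case analysis over the additive and multiplicative guarantees and the constants check out for all $\varepsilon\in(0,0.5)$, so nothing further is needed.
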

\begin{proof}
Follows directly from the construction of heavy and crucial cells in Algorithm~\ref{alg:offline_heavy_cell_decomposition} and Algorithm~\ref{alg:offline_coreset_construction}.
\end{proof}

\begin{lemma}[Levels outside $I$ can be ignored]\label{lem:levels_outside_I_can_be_ignored}
Suppose all estimated values $\lambda(C)$ and $\lambda(X^i)$ in Algorithm~\ref{alg:offline_heavy_cell_decomposition} and Algorithm~\ref{alg:offline_coreset_construction} are good.
Let $X^I:=\bigcup_{i\in I} X^i$.
If $o\in[1,\OPT]$, then for any $B\subset \mathbb{R}^d$ with $|B|=k$, we have
\begin{align*}
\cost(X^I,B)\leq \cost(X,B)\leq (1+\varepsilon/10)\cdot \cost(X^I,B).
\end{align*}
\end{lemma}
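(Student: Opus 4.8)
\textbf{Proof plan for Lemma~\ref{lem:levels_outside_I_can_be_ignored}.}
The plan is to show that the points lying in levels outside $I$ contribute a negligible fraction of the clustering cost for \emph{any} choice of $k$ centers $B$. The left inequality $\cost(X^I,B)\le \cost(X,B)$ is immediate since $X^I\subseteq X$ and each term $\min_{b\in B}\|x-b\|_2^p$ is non-negative, so the content is entirely in the right inequality. Writing $X\setminus X^I=\bigcup_{i\notin I} X^i$ (a valid disjoint decomposition by Fact~\ref{fac:partition_of_X}), it suffices to prove
\begin{align*}
\cost\!\left(\bigcup_{i\notin I} X^i,\,B\right)\le \frac{\varepsilon}{10}\cdot \cost(X,B)
\end{align*}
for every $B$, since then $\cost(X,B)=\cost(X^I,B)+\cost(X\setminus X^I,B)\le \cost(X^I,B)+\tfrac{\varepsilon}{10}\cost(X,B)$ rearranges to $\cost(X,B)\le \tfrac{1}{1-\varepsilon/10}\cost(X^I,B)\le (1+\varepsilon/10)\cdot 2\cdots$ — wait, I need to be a little careful here: rearranging gives $\cost(X,B)\le \frac{1}{1-\varepsilon/10}\cost(X^I,B)$, and $\frac{1}{1-\varepsilon/10}\le 1+\varepsilon/10$ fails; instead I will target the cleaner bound $\cost(X\setminus X^I,B)\le \frac{\varepsilon}{20}\cost(X^I,B)$, or simply absorb constants by noting $\gamma=\varepsilon/(40\cdot 2^{2p+2}L)$ carries plenty of slack, so the real task is just to produce \emph{some} bound of the form $\cost(X\setminus X^I,B)\le \eta\cdot\OPT$ with $\eta$ small, then use $\OPT\le\cost(X,B)$ and rescale.

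The key estimate is a per-level diameter bound. For a level $i\notin I$, Fact~\ref{fac:good_estimation_Xi} gives $|X^i|\le 1.1\gamma R_i$. Every point of $X^i$ sits in a crucial cell of $G_i$, which has side length $\Delta_i$, hence diameter at most $\sqrt d\,\Delta_i$; more to the point, for the purpose of bounding $\cost(X^i,B)$ I will compare the true cost to the cost if all of $X^i$ were instead served by their current optimal assignment — actually the cleanest route is: the total number of points that can be "far" from the optimal center set $B^*$ at scale $i$ is controlled because $R_i=0.01\,o/(\sqrt d\,\Delta_i)^p$ and $o\le\OPT$. Concretely, substituting $R_i$ into $|X^i|\le 1.1\gamma R_i$ yields $|X^i|\le 1.1\gamma\cdot 0.01\cdot o/(\sqrt d\,\Delta_i)^p$, so
\begin{align*}
\sum_{i\notin I}|X^i|\cdot (\sqrt d\,\Delta_i)^p \le \sum_{i\notin I} 0.011\,\gamma\cdot o \le 0.011\,\gamma (L+1)\,o.
\end{align*}
This is the crude bound that says: if we charged every point of $X^i$ the worst-case cost $(\sqrt d\Delta_i)^p$ (the $p$-th power of a cell diameter), the total is at most $0.011\gamma(L+1)o$. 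But for an arbitrary center set $B$ this worst-case charge is not valid (a point could be arbitrarily far from $B$), so I need the standard coreset trick from~\cite{hu2018nearly}: relate $\cost(X^i,B)$ to a "snapped" cost where each $x\in X^i$ is moved to the center of its crucial cell, paying a movement cost of at most $(\sqrt d\,\Delta_i/2)^p$ per point, and then invoke a triangle-inequality / $p$-th-power-inequality bound (e.g. $\|a-b\|^p\le 2^{p-1}(\|a-c\|^p+\|c-b\|^p)$) to fold the movement error into $\cost(X,B)$ itself. The point is that the movement cost summed over all levels outside $I$ is $\le \sum_{i\notin I}|X^i|(\sqrt d\,\Delta_i)^p\le 0.011\gamma(L+1)o\le 0.011\gamma(L+1)\OPT$, and with $\gamma=\varepsilon/(40\cdot 2^{2p+2}L)$ this is at most $\varepsilon\cdot\OPT/(40\cdot 2^{2p+2})$, comfortably small after the $2^{p-1}$ or $2^{2p}$ blowup from the power-triangle inequality. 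So the structure is: (i) bound the movement cost of $X\setminus X^I$ by $O(\gamma L\,\OPT)$; (ii) apply the $p$-relaxed triangle inequality to convert $\cost(X\setminus X^I,B)$ into (movement cost) plus a term bounded by (movement cost)$^{?}$ times $\cost(X,B)$ — here one uses $\alpha^p\beta^{p} $ Cauchy-Schwarz-type splitting, or more simply the inequality $a^p\le (1+\tau)^{p-1}b^p+(1+1/\tau)^{p-1}c^p$ with $\tau$ chosen $\approx\varepsilon/p$; (iii) since $\OPT\le\cost(X,B)$, everything collapses to $\cost(X\setminus X^I,B)\le \tfrac{\varepsilon}{10}\cost(X,B)$ after choosing the constant in $\gamma$ appropriately (which is exactly why $\gamma$ has the $2^{2p+2}L$ in its denominator).

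The main obstacle is step (ii): the $p$-th power makes the triangle inequality lossy, and one must be careful that the loss factor $2^{O(p)}$ (or $(1+1/\tau)^{p-1}$) is absorbed by the $2^{2p+2}$ already present in $\gamma$ rather than blowing past it — this is the one place where the exact constants in the definition of $\gamma$, $R_i$, and $L$ matter and must be tracked honestly. The second, more bookkeeping-level subtlety is ensuring $X^0,\dots,X^L$ really is a partition of $X$ so that no point is double-counted or missed, which is handled by Fact~\ref{fac:partition_of_X}, and that the logarithmic number of levels $L+1=O(\log(nd\Delta))$ is the factor appearing against $\gamma$, matching the $L$ in the denominator of $\gamma$ so that $\gamma(L+1)=O(\varepsilon/2^{2p+2})$. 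Everything else — the hierarchical grid facts, $R_{i+1}=2^pR_i$, diameters of cells — is routine and supplied by the setup in Section~\ref{sec:offline_coreset_k_clustering}. I expect this lemma to be essentially a transcription of the corresponding argument in~\cite{hu2018nearly} with the estimation error tolerances ($\lambda(C)$, $\lambda(X^i)$ good) carried along, since those only perturb the $|X^i|$ counts by $O(\gamma R_i)$-additive or $(1\pm 0.01\varepsilon)$-multiplicative amounts, neither of which changes the order of the final bound.
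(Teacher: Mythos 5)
There is a genuine gap in step (ii) of your plan: you never exhibit points of $X^I$ that are provably close to the discarded points, and without that the argument cannot close. Snapping each $x\in X^i$ ($i\notin I$) to the center of its \emph{crucial} cell and invoking $\|a-b\|_2^p\le 2^{p-1}(\|a-c\|_2^p+\|c-b\|_2^p)$ only produces the movement term (which you correctly bound by $O(\gamma L)\cdot o\le O(\gamma L)\cdot\cost(X,B)$, matching one of the two error terms in the paper) plus the term $\min_{b\in B}\|c-b\|_2^p$ for the snapped location $c$. Against an adversarial center set $B$ this second term is essentially the same as $\min_{b\in B}\|x-b\|_2^p$ itself, so "folding it into $\cost(X,B)$" is circular: the crucial cell containing $x$ may contain no point of $X^I$ at all, so nothing already counted in $\cost(X^I,B)$ pays for it, and your relaxed-triangle-inequality step (marked with a question mark in your own writeup) has no valid instantiation. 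The smallness of $|X^i|$ and of the cell diameter does not by itself constrain how far these points are from $B$.

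The paper supplies exactly the missing ingredient: each $x\in X^i$ lies in a \emph{heavy} parent cell $c_{i-1}(x)\in G_{i-1}$, heavy cells contain at least $0.9R_{i-1}$ points of $X$ (Fact~\ref{fac:points_in_heavy_cells}), and the inductive Claim~\ref{cla:lot_of_points_in_heavy_cell} shows a $(1-2^{p+1}L\gamma)$ fraction of those points survives into $X^I$, hence $|c_{i-1}(x)\cap X^I|\ge 0.5R_{i-1}$. One then charges $\min_{b\in B}\|x-b\|_2^p$, via $2^{p-1}$-relaxed triangle inequality and an \emph{averaging} argument over the surviving points in the same parent cell, to $\frac{\sum_{x'\in c_{i-1}(x)\cap X^I}\min_{b\in B}\|x'-b\|_2^p}{0.5R_{i-1}}+(\sqrt{d}\Delta_{i-1})^p$; summing over the at most $1.1\gamma R_i$ discarded points per level yields $\cost(X,B)\le(1+2^{2p+2}L\gamma)\cost(X^I,B)+0.01\cdot 2^{2p+1}L\gamma\cdot\cost(X,B)$, and the choice $\gamma=\varepsilon/(40\cdot 2^{2p+2}L)$ gives $(1+\varepsilon/10)$ after the rearrangement you already anticipated. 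Your proposal captures the diameter/movement half of this bound and the correct use of $o\le\OPT\le\cost(X,B)$, the partition Fact~\ref{fac:partition_of_X}, and the bookkeeping of $\gamma$, $L$, $2^{O(p)}$; but without the heavy-parent-cell retention claim (or an equivalent statement that every discarded point has $\Omega(R_{i-1})$ points of $X^I$ within distance $\sqrt{d}\Delta_{i-1}$), the term charged to $\cost(X^I,B)$ never appears and the proof does not go through.
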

The above lemma shows that we can construct a coreset of $X^I$ instead.
The coreset of $X^I$ will automatically become a coreset of $X$.
Before proving Lemma~\ref{lem:levels_outside_I_can_be_ignored}, we need to prove the following claim.
\begin{claim}\label{cla:lot_of_points_in_heavy_cell}
Suppose all estimated values $\lambda(C)$ and $\lambda(X^i)$ in Algorithm~\ref{alg:offline_heavy_cell_decomposition} and Algorithm~\ref{alg:offline_coreset_construction} are good.
Let $X^I:=\bigcup_{i\in I} X^i$ and let $\gamma$ be the same as in Algorithm~\ref{alg:offline_coreset_construction}.
Suppose $o\in [1,\OPT]$.
For $i\in \{0,1,\cdots, L\}$, if $C\in G_{i-1}$ is marked as heavy, then we have $|C\cap X^I|\geq (1-2^{p+1}(L-i)\gamma)\cdot |C\cap X|$.
\end{claim}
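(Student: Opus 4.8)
Claim~\ref{cla:lot_of_points_in_heavy_cell} asserts that if a cell $C\in G_{i-1}$ is heavy, then all but a $2^{p+1}(L-i)\gamma$ fraction of its points lie in the levels that are kept (i.e., in $X^I$).

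\textbf{Plan.} The plan is to prove this by reverse induction on $i$, from $i=L$ down to $i=0$ (or wherever heavy cells stop). First I would handle the base case: by Fact~\ref{fac:no_heavy_cell_GL} no cell in $G_L$ is heavy, and more generally a heavy cell $C\in G_{L-1}$ has all of its children in $G_L$ crucial; hence $C\cap X$ is partitioned among $X^L$-points living in crucial children of $C$. So either all those children-levels are in $I$ (giving $|C\cap X^I|=|C\cap X|$, stronger than needed), or we simply note the crude bound. The cleaner induction is the following: for a heavy $C\in G_{i-1}$, partition its points according to which cell of $G_i$ they fall into. Each such child cell $C'\in G_i$ is either crucial or heavy (by the fact that children of heavy cells are heavy or crucial).

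\textbf{Key steps.} For the crucial children $C'\in G_i$ of $C$: the points in $C'\cap X$ all belong to $X^i$. If $i\in I$, these points are in $X^I$ and contribute nothing to the ``lost'' count. If $i\notin I$, then by Fact~\ref{fac:good_estimation_Xi} we have $|X^i|\le 1.1\gamma R_i$, so \emph{all} crucial cells in $G_i$ together lose at most $1.1\gamma R_i$ points globally — in particular the crucial children of $C$ lose at most $1.1\gamma R_i$ points. For the heavy children $C'\in G_i$ of $C$: apply the induction hypothesis, so $|C'\cap X^I|\ge (1-2^{p+1}(L-i-1)\gamma)|C'\cap X|$, i.e., $C'$ loses at most $2^{p+1}(L-i-1)\gamma\cdot|C'\cap X|$ points. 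Summing the heavy-child losses over all heavy children $C'$ of $C$ gives at most $2^{p+1}(L-i-1)\gamma\cdot|C\cap X|$. Now I need to bound $|C\cap X|$ from below relative to $R_i$: since $C$ is heavy, by Fact~\ref{fac:points_in_heavy_cells} $|C\cap X|\ge 0.9 R_{i-1} = 0.9 R_i/2^p$, so $R_i \le 2^p|C\cap X|/0.9 \le 2^{p+1}|C\cap X|$. Therefore the crucial-child loss $1.1\gamma R_i \le 1.1\cdot 2^{p+1}\gamma|C\cap X| \le 2^{p+1}\gamma|C\cap X|$ up to the constant slack (one should be a bit careful with the constant; $1.1<2$ so $1.1\gamma R_i \le 2\gamma \cdot 2^p |C\cap X|/0.9 = $ roughly $2^{p+1}\gamma|C\cap X|$, and the bookkeeping needs $2^{p+1}(L-i-1)\gamma + (\text{crucial loss coefficient}) \le 2^{p+1}(L-i)\gamma$). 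Adding the heavy-child loss and the crucial-child loss: total loss $\le \big(2^{p+1}(L-i-1) + 2^{p+1}\big)\gamma|C\cap X| = 2^{p+1}(L-i)\gamma|C\cap X|$, which is exactly what is claimed.

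\textbf{Main obstacle.} The delicate point — and the step I expect to require the most care — is the constant juggling in bounding $1.1\gamma R_i$ by something that fits inside the ``one extra $2^{p+1}\gamma$'' budget per level, i.e., making sure $1.1 R_i \le 2 R_{i-1}\cdot 2^p/\text{(lower bound constant)}$ closes cleanly against $|C\cap X| \ge 0.9 R_{i-1}$. Concretely: $1.1\gamma R_i = 1.1\gamma\cdot 2^p R_{i-1} \le 1.1\gamma\cdot 2^p \cdot (|C\cap X|/0.9) < 2^{p+1}\gamma|C\cap X|$ since $1.1/0.9 < 2$. With that inequality in hand the induction goes through verbatim. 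A secondary subtlety is the global-vs-local accounting for crucial children: I use that the crucial children of $C$ are a subset of all crucial cells in $G_i$ and that $|X^i|$ counts \emph{all} points in crucial $G_i$-cells, so $|X^i|\le 1.1\gamma R_i$ already bounds the loss attributable to $C$'s crucial children; no finer argument is needed. Everything else — the partition of $C\cap X$ into children, the classification of children as heavy/crucial via the stated facts, and linearity of the loss count — is routine.
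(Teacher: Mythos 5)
Your inductive step is essentially the paper's argument: partition a heavy $C\in G_{i-1}$ into its $G_i$-children (each heavy or crucial), bound the loss from crucial children by $0$ when $i\in I$ and by $|X^i|\le 1.1\gamma R_i\le 2^{p+1}\gamma\,|C\cap X|$ when $i\notin I$ (using $|C\cap X|\ge 0.9R_{i-1}=0.9R_i/2^p$ and $1.1/0.9<2$), and charge the heavy children via the induction hypothesis; that bookkeeping closes correctly and matches the paper.

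The genuine gap is the base case $i=L$. There the claim has coefficient $2^{p+1}(L-L)\gamma=0$, i.e.\ it demands $|C\cap X^I|=|C\cap X|$ exactly for every heavy $C\in G_{L-1}$ — no loss is allowed. Your proposal offers a dichotomy ``either level $L$ is in $I$, or note the crude bound,'' but the crude bound (loss at most $1.1\gamma R_L\le 2^{p+1}\gamma|C\cap X|$) only yields coefficient $2^{p+1}\gamma$ at the base, so the induction would propagate to $(1-2^{p+1}(L-i+1)\gamma)|C\cap X|$ — strictly weaker than the stated $(1-2^{p+1}(L-i)\gamma)|C\cap X|$. What is actually needed, and what the paper proves, is that the second branch of your dichotomy cannot occur: if $G_{L-1}$ contains a heavy cell $C$, then all of its children are crucial (no heavy cells in $G_L$), so $|X^L|\ge |C\cap X|\ge 0.9R_{L-1}=0.9R_L/2^p\ge R_L/2^{p+1}$, and since $\lambda(X^L)$ is good and $\gamma\le 1/2^{p+2}$ this forces $\lambda(X^L)\ge \gamma R_L$, i.e.\ $L\in I$; hence the loss at the top level is exactly zero. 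Without this argument your base case does not establish the claimed inequality (also, as a side remark, there is only one level $L$ involved there, not several ``children-levels''). Supplying this one-step argument repairs the proof and makes it coincide with the paper's.
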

\begin{proof}
The proof is by induction.
Consider the base case $i=L$.
If there is no heavy cell in $G_{L-1}$, the claim holds for $i=L$ case.
Otherwise consider any heavy cell $C\in G_{L-1}$.
According to Fact~\ref{fac:no_heavy_cell_GL}, none of the children cells of $C$ is heavy.
According to the construction of crucial cells, every child cell of $C$ is crucial.
By the construction of $X^L$, we know that $|C\cap X^L|=|C\cap X|$.
Since $\lambda(C)$ is good, we have $|C\cap X^L|=|C\cap X|\geq 0.9R_{L-1}\geq R_L/2^{p+1}$.
Since $\lambda(X^L)$ is good, we have 
\begin{align*}
\lambda(X^L)&\geq \min(|X^L|-0.1\varepsilon \gamma R_L,(1-0.01\varepsilon)|X^L|)\\
&\geq \min(|X^L\cap C|-0.1\varepsilon \gamma R_L,(1-0.01\varepsilon)|X^L\cap C|)\\
&=\min(|X\cap C|-0.1\varepsilon \gamma R_L,(1-0.01\varepsilon)|X\cap C|)\\
&\geq \gamma R_L,
\end{align*}
where the last step follows from that $\gamma \leq 1/2^{p+2}$.
Thus, by the construction of $I$, we have $L\in I$ and thus $|C\cap X^I|=|C\cap X^L|=|C\cap X|$.

Now suppose the claim is true for $i+1,i+2,\cdots,L$.
If there is no heavy cell in $G_{i-1}$, the claim directly holds for $i$.
Otherwise consider any heavy cell $C\in G_{i-1}$.
There are two cases.
In the first case, $i\in I$, i.e., $X^i\subseteq X^I$.
In this case, we have
\begin{align*}
|C\cap X^I| &= \sum_{C'\in G_i:\text{$C'$ is a heavy child of $C$}} |C'\cap X^I| + \sum_{C'\in G_i:\text{$C'$ is a crucial child of $C$}} |C'\cap X^I|\\
&=\sum_{C'\in G_i:\text{$C'$ is a heavy child of $C$}} |C'\cap X^I| + |C\cap X^i|\\
&\geq (1-2^{p+1}(L-i-1)\gamma)\sum_{C'\in G_i:\text{$C'$ is a heavy child of $C$}} |C'\cap X| + |C\cap X^i|\\
&\geq (1-2^{p+1}(L-i-1)\gamma)|C\cap X|\\
&\geq (1-2^{p+1}(L-i)\gamma)|C\cap X|,
\end{align*}
where the second step follows from $i\in I$ and the construction of $X^i$, the third step follows from the induction hypothesis and the forth step follows from that $(C\cap X^i)\cup \bigcup_{C'\in G_i:C'\text{ is a heavy child of }C}(C'\cap X)=C\cap X$.

In the second case, $i\not\in I$, i.e., $X^i\cap X^I=\emptyset$.
In this case, we have:
\begin{align}
\sum_{C'\in G_i:C'\text{ is a heavy child of }C}|C'\cap X| &= |C\cap X| - \sum_{C'\in G_i:C'\text{ is a crucial child of }C} |C'\cap X| \notag\\
&\geq |C\cap X|-|X^i|\notag\\
&\geq |C\cap X|-1.1\gamma R_i\notag\\
&\geq (1-2^{p+1}\gamma) |C\cap X|, \label{eq:lb_points_heavy_cell}
\end{align}
where the first step follows from that each child of $C$ is either heavy or crucial, the second step follows from that $C'\cap X\subseteq X^i$ for any crucial cell $C'\in G_i$, the third step follows from $i\not\in I$ and Fact~\ref{fac:good_estimation_Xi}, and the last step follows from that $C\in G_{i-1}$ is a heavy cell which implies that $|C\cap X|\geq 0.9R_{i-1}\geq 0.9R_i/2^p$ by Fact~\ref{fac:points_in_heavy_cells}.
Thus, we have
\begin{align*}
|C\cap X^I|&\geq \sum_{C'\in G_i:C'\text{ is a heavy child of }C}|C'\cap X^I|\\
&\geq (1-2^{p+1}(L-i-1)\gamma)\sum_{C'\in G_i:C'\text{ is a heavy child of }C}|C'\cap X|\\
&\geq (1-2^{p+1}(L-i-1)\gamma)(1-2^{p+1}\gamma)|C\cap X|\\
&\geq (1-2^{p+1}(L-i)\gamma)|C\cap X|,
\end{align*}
where the second step follows from the induction hypothesis, the third step follows from Equation~\eqref{eq:lb_points_heavy_cell}.
\end{proof}

\paragraph{Proof of Lemma~\ref{lem:levels_outside_I_can_be_ignored}.}
Now we are able to prove Lemma~\ref{lem:levels_outside_I_can_be_ignored}.
Observe that $X^I\subseteq X$, it is easy to verify that $\cost(X^I,B)\leq \cost(X,B)$ for any $B\subset \mathbb{R}^d$ with $|B|\leq k$.
In the remaining of the proof, we will show that $\cost(X,B)\leq (1+\varepsilon/10)\cdot \cost(X^I,B)$ for any $B\subset \mathbb{R}^d$ with $|B|\leq k$.

Consider an arbitrary level $i\in\{0,1,\cdots,L\}\setminus I$, i.e., $X^i\cap X^I=\emptyset$.
Consider a point $x\in X^i$.
By the construction of $X^i$ by Algorithm~\ref{alg:offline_coreset_construction}, we know that $c_{i-1}(x)$ is a heavy cell.
By Claim~\ref{cla:lot_of_points_in_heavy_cell}, $|c_{i-1}(x)\cap X^I|\geq (1-2^{p+1}L\gamma)\cdot |c_{i-1}(x)\cap X|$.
Since $c_{i-1}(x)$ is a heavy cell, we have $|c_{i-1}(x)\cap X|\geq 0.9R_{i-1}$ by Fact~\ref{fac:points_in_heavy_cells}.
Therefore, 
\begin{align}
|c_{i-1}(x)\cap X^I|\geq (1-2^{p+1}L\gamma)\cdot 0.9R_{i-1}\geq 0.5 R_{i-1}
\end{align}
since $\gamma \leq \frac{\varepsilon}{40\cdot 2^{2p+2}L}$.
By an averaging argument, we can find a point $y\in c_{i-1}(x)\cap X^I$ such that
\begin{align}
\min_{b\in B}\|x-b\|_2^p&\leq 2^{p-1}\cdot \left(\min_{b\in B}\|y-b\|_2^p+\|x-y\|_2^p\right)\notag\\
&\leq 2^{p-1}\cdot \left(\min_{b\in B}\|y-b\|_2^p+ (\sqrt{d}\Delta_{i-1})^p\right)\notag\\
&\leq 2^{p-1} \cdot \left(\frac{\sum_{x'\in c_{i-1}(x)\cap X^I} \min_{b\in B}\|x'-b\|_2^p}{|c_{i-1}(x)\cap X^I|}+ (\sqrt{d}\Delta_{i-1})^p\right),\label{eq:average_argument}
\end{align}
where the first step follows from the convexity of $\|\cdot\|_2^p$, the second step follows from that $x$ and $y$ are both in the cell $c_{i-1}(x)\in G_{i-1}$, and the last step follows from an averaging argument.

Let $X^N=X\setminus X^I$.
We have the following:
\begin{align*}
\cost(X,B)&=\cost(X^I,B)+\cost(X^N,B)\\
&=\cost(X^I,B) + \sum_{i\not\in I}\sum_{x\in X^i} \min_{b\in B}\|x-b\|_2^p\\
&\leq\cost(X^I,B) + 2^{p-1}\cdot\sum_{i\not\in I}\sum_{x\in X^i}\left(\frac{\sum_{x'\in c_{i-1}(x)\cap X^I} \min_{b\in B}\|x'-b\|_2^p}{|c_{i-1}(x)\cap X^I|}+ (\sqrt{d}\Delta_{i-1})^p\right)\\
&\leq\cost(X^I,B) + 2^{p-1}\cdot\sum_{i\not\in I}\sum_{x\in X^i}\left(\frac{\sum_{x'\in c_{i-1}(x)\cap X^I} \min_{b\in B}\|x'-b\|_2^p}{0.5R_{i-1}}+ (\sqrt{d}\Delta_{i-1})^p\right)\\
&\leq\cost(X^I,B) + 2^{p-1}\cdot\sum_{i\not\in I}\sum_{x\in X^i}\left(\frac{\sum_{x'\in X^I} \min_{b\in B}\|x'-b\|_2^p}{0.5R_{i-1}}+ (\sqrt{d}\Delta_{i-1})^p\right)\\
&\leq \cost(X^I,B) + 2^{p-1}\cdot\sum_{i\not\in I}1.1\gamma R_i\left(\frac{\sum_{x'\in X^I} \min_{b\in B}\|x'-b\|_2^p}{0.5R_{i-1}}+ (\sqrt{d}\Delta_{i-1})^p\right)\\
&\leq \cost(X^I,B) + 2^{p-1}\cdot 4L\gamma R_i\left(\frac{\sum_{x'\in X^I} \min_{b\in B}\|x'-b\|_2^p}{0.5R_{i-1}}+ (\sqrt{d}\Delta_{i-1})^p\right)\\
&=\cost(X^I,B) + 2^{p-1}\cdot 4L\gamma R_i\left(\frac{\cost(X^I,B)}{0.5R_{i-1}}+ (\sqrt{d}\Delta_{i-1})^p\right)\\
&=\cost(X^I,B) + 2^{p-1}\cdot 4L\gamma R_i\cdot\frac{\cost(X^I,B)}{0.5R_{i-1}}+ 2^{2p+1}L\gamma \cdot 0.01o\\
&=\cost(X^I,B) + 2^{2p+2}\cdot L\gamma \cdot\cost(X^I,B)+ 2^{2p+1}L\gamma \cdot 0.01o\\
&\leq (1+2^{2p+2}\cdot L\gamma)\cost(X^I,B) + 2^{2p+1} L \gamma \cdot 0.01\cdot  \cost(X,B),
\end{align*}
where the first step follows from $X=X^I\cup X^N$ and $X^I\cap X^N=\emptyset$, the second step follows from the definition of $\cost(X^N,B)$, the third step follows from Equation~\eqref{eq:average_argument}, the fourth step follows from Equation~\ref{eq:average_argument}, the fifth step follows from $(c_{i-1}(x)\cap X^I)\subseteq X^I$, the sixth step follows from that $|X^i|\leq 1.1\gamma R_i$ for $i\not\in I$ by Fact~\ref{fac:good_estimation_Xi}, the seventh step follows from that $|\{0,1,\cdots,L\}\setminus I|\leq L+1\leq 2L$, the eighth step follows from the definition of $\cost(X^I,B)$, the ninth step follows from that $R_i=0.01o/(0.5\sqrt{d}\Delta_{i-1})^p$, the tenth step follows from that $R_i=R_{i-1}\cdot 2^p$, and the last step follows from that $o\leq \OPT\leq \cost(X,B)$.

Since $\gamma\leq \frac{\varepsilon}{40\cdot 2^{2p+2}L}$, the above inequality implies that
\begin{align*}
\frac{\cost(X,B)}{\cost(X^I,B)}=\frac{1+2^{2p+2}\cdot L\gamma}{1-0.01\cdot 2^{2p+1}L\gamma}\leq \frac{1+\varepsilon/40}{1-\varepsilon/40}\leq 1+\varepsilon/10.
\end{align*}
\qed

\begin{lemma}[Sensitivity upper bound of each point]\label{lem:sensitivity_ub}
Suppose all estimated values $\lambda(C)$ and $\lambda(X^i)$ in Algorithm~\ref{alg:offline_heavy_cell_decomposition} and Algorithm~\ref{alg:offline_coreset_construction} are good.
Consider $X^I$ computed by Algorithm~\ref{alg:offline_coreset_construction}.
If $o\in[1,\OPT]$, then for every $i\in I$ and every $x\in X^i$,
\begin{align*}
\max_{B\subset\mathbb{R}^d:|B|\leq k} \frac{\min_{b\in B}\|x-b\|_2^p}{\cost(X^I,B)}\leq \min\left(2^{2p+1}\cdot \frac{1}{R_i},1\right).
\end{align*}
\end{lemma}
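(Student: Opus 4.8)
The plan is to prove the two candidate bounds in the minimum separately, the bound by $1$ being immediate and the bound by $2^{2p+1}/R_i$ being where the structural facts about heavy cells are used. For the first bound: since $i\in I$ we have $x\in X^i\subseteq X^I$, so $\min_{b\in B}\|x-b\|_2^p$ is one of the non-negative summands in $\cost(X^I,B)=\sum_{x'\in X^I}\min_{b\in B}\|x'-b\|_2^p$, and therefore the ratio is at most $1$ for every $B$ with $|B|\leq k$.

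For the bound by $2^{2p+1}/R_i$, the key point I would establish first is that the parent cell $c_{i-1}(x)$ retains a constant fraction of its mass in $X^I$. Since $x\in X^i$, the cell $c_i(x)$ is crucial, so by the construction of crucial cells in Algorithm~\ref{alg:offline_coreset_construction} all of its ancestors — in particular $c_{i-1}(x)\in G_{i-1}$ — are heavy. Fact~\ref{fac:points_in_heavy_cells} then gives $|c_{i-1}(x)\cap X|\geq 0.9R_{i-1}$, and Claim~\ref{cla:lot_of_points_in_heavy_cell}, together with $2^{p+1}L\gamma\ll 1$ (recall $\gamma=\varepsilon/(40\cdot 2^{2p+2}L)$), yields $|c_{i-1}(x)\cap X^I|\geq 0.5R_{i-1}=0.5R_i/2^p$, using $R_i=R_{i-1}\cdot 2^p$.

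Next I would fix an arbitrary $B$ with $|B|\leq k$, set $r:=\min_{b\in B}\|x-b\|_2$, and note that every $y\in c_{i-1}(x)$ satisfies $\|x-y\|_2\leq\sqrt d\,\Delta_{i-1}$, so the triangle inequality gives $\min_{b\in B}\|y-b\|_2\geq r-\sqrt d\,\Delta_{i-1}$ for all such $y$. The argument then splits on the size of $r$ relative to $2\sqrt d\,\Delta_{i-1}$. If $r\geq 2\sqrt d\,\Delta_{i-1}$, every $y\in c_{i-1}(x)\cap X^I$ has $\min_{b\in B}\|y-b\|_2\geq r/2$, so
\[
\cost(X^I,B)\ \geq\ \sum_{y\in c_{i-1}(x)\cap X^I}\min_{b\in B}\|y-b\|_2^p\ \geq\ \frac{0.5R_i}{2^p}\cdot\frac{r^p}{2^p}\ =\ \frac{R_i\,r^p}{2^{2p+1}},
\]
which rearranges to the claimed inequality. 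If instead $r< 2\sqrt d\,\Delta_{i-1}=4\sqrt d\,\Delta_i$, then $r^p< 4^p(\sqrt d\,\Delta_i)^p=4^p\cdot 0.01\,o/R_i$ by $R_i=0.01\,o/(\sqrt d\,\Delta_i)^p$, while $\cost(X^I,B)\geq \cost(X,B)/(1+\varepsilon/10)\geq \OPT/2\geq o/2$ by Lemma~\ref{lem:levels_outside_I_can_be_ignored} (valid since $o\in[1,\OPT]$ and $\cost(X,B)\geq\OPT$); hence $r^p/\cost(X^I,B)< 2^{2p}\cdot 0.02/R_i< 2^{2p+1}/R_i$.

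Taking the smaller of the two bounds $1$ and $2^{2p+1}/R_i$ over all admissible $B$ then finishes the proof. I do not expect a genuine obstacle: all the content is packaged in Fact~\ref{fac:points_in_heavy_cells}, Claim~\ref{cla:lot_of_points_in_heavy_cell} and Lemma~\ref{lem:levels_outside_I_can_be_ignored}. The only thing requiring care is the constant bookkeeping — checking that $(1-2^{p+1}L\gamma)\cdot 0.9\geq 0.5$ and that the threshold $2\sqrt d\,\Delta_{i-1}$ used for the case split makes both estimates come out with exactly the factor $2^{2p+1}$.
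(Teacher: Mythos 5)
Your proposal is correct: the trivial bound by $1$, the identification of the heavy parent cell $c_{i-1}(x)$ with $|c_{i-1}(x)\cap X^I|\geq 0.5R_{i-1}$ via Fact~\ref{fac:points_in_heavy_cells} and Claim~\ref{cla:lot_of_points_in_heavy_cell}, and the two case computations all check out (in the far case $\cost(X^I,B)\geq 0.5R_{i-1}(r/2)^p = R_ir^p/2^{2p+1}$; in the near case $r^p<2^{2p}\cdot 0.01\,o/R_i$ against $\cost(X^I,B)\geq o/2$, and the use of Lemma~\ref{lem:levels_outside_I_can_be_ignored} is legitimate since its hypotheses are exactly those of the present lemma). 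However, your route differs from the paper's in how the core estimate is assembled. The paper does not split cases: it picks, by an averaging argument, a point $y\in c_{i-1}(x)\cap X^I$ whose cost is at most the average over the cell, and then applies the convexity inequality $\min_b\|x-b\|_2^p\leq 2^{p-1}\bigl(\min_b\|y-b\|_2^p+\|x-y\|_2^p\bigr)$, which yields in one chain
\begin{align*}
\frac{\min_{b\in B}\|x-b\|_2^p}{\cost(X^I,B)}\;\leq\; 2^{p-1}\left(\frac{1}{|c_{i-1}(x)\cap X^I|}+(1+\varepsilon/10)\cdot\frac{(\sqrt{d}\Delta_{i-1})^p}{\cost(X,B)}\right)\;\leq\;\frac{2^{p+1}}{R_{i-1}}\;=\;\frac{2^{2p+1}}{R_i},
\end{align*}
using Lemma~\ref{lem:levels_outside_I_can_be_ignored} to replace $\cost(X^I,B)$ by $\cost(X,B)$ in the second term and then $\cost(X,B)\geq\OPT\geq o$ together with $R_{i-1}=0.01\,o/(\sqrt{d}\Delta_{i-1})^p$. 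Your argument replaces the averaging-plus-convexity step by a direct dichotomy on $r=\min_b\|x-b\|_2$ versus the cell diameter: when $r$ is large the many retained points of the cell each pay at least $(r/2)^p$, and when $r$ is small you compare $r^p$ directly to $o$. This is slightly more elementary (only the triangle inequality and a counting lower bound, no $2^{p-1}$-convexity trick), at the price of a case analysis; both arguments lean on exactly the same structural inputs, namely the mass-retention claim for heavy cells, the definition of $R_i$, and Lemma~\ref{lem:levels_outside_I_can_be_ignored}, and both land on the same constant $2^{2p+1}$.
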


\begin{proof}
Since $x\in X^I$, it is clear that $\max_{B\subset\mathbb{R}^d:|B|\leq k} \frac{\min_{b\in B}\|x-b\|_2^p}{\cost(X^I,B)}\leq 1$.
In the remaining of the proof, we will show that $\max_{B\subset\mathbb{R}^d:|B|\leq k} \frac{\min_{b\in B}\|x-b\|_2^p}{\cost(X^I,B)}\leq 2^{2p+1}\cdot \frac{1}{R_i}.$

Let $B\subset \mathbb{R}^d$ be a subset of at most $k$ centers.
Consider an arbitrary point $x\in X$.
By Fact~\ref{fac:partition_of_X}, there is a unique $i\in\{0,1,\cdots,L\}$ such that $x\in X^i$, i.e., there is a crucial cell in $G_i$ that contains $x$.
Let $C$ be $c_{i-1}(x)$, i.e., the crucial cell that contains $x$ is a child cell of $C$.
By the construction of crucial cells, it is easy to verify that $C$ is a heavy cell in $G_{i-1}$.
According to Claim~\ref{cla:lot_of_points_in_heavy_cell} and Fact~\ref{fac:points_in_heavy_cells}, we have
\begin{align}\label{eq:heavy_cell_size}
|C\cap X^I|\geq (1-2^{p+1}L\gamma)\cdot |C\cap X| \geq 0.5R_{i-1}, 
\end{align}
where the last inequality follows from that $|C\cap X|\geq 0.9R_{i-1}$ and $\gamma = \varepsilon/(40\cdot 2^{2p+2}\cdot L)$.
According to Claim~\ref{cla:lot_of_points_in_heavy_cell}, the cell $C$ contains at least one point from $X^I$, i.e., $C\cap X^I\not=\emptyset$.
By an average argument, we can find a point $y\in C\cap X^I$ such that
\begin{align}
\min_{b\in B}\|y-b\|_2^p \leq \frac{\sum_{x'\in C\cap X^I}\min_{b\in B}\|x'-b\|_2^p}{|C\cap X^I|}.\label{eq:average_argument_2}
\end{align}
Then we have:
\begin{align*}
\frac{\min_{b\in B}\|x-b\|_2^p}{\cost(X^I,B)}&\leq 2^{p-1}\cdot \left(\frac{\min_{b\in B}\|y-b\|_2^p}{\cost(X^I,B)} + \frac{\|x-y\|_2^p}{\cost(X^I,B)}\right)\\
&\leq  2^{p-1}\cdot \left( \frac{\sum_{x'\in C\cap X^I} \min_{b\in B}\|x'-b\|_2^p}{|C\cap X^I|\cdot \cost(X^I,B)}  + \frac{\|x-y\|_2^p}{\cost(X^I,B)}  \right)\\
&\leq  2^{p-1}\cdot \left( \frac{\sum_{x'\in X^I} \min_{b\in B}\|x'-b\|_2^p}{|C\cap X^I|\cdot \cost(X^I,B)}  + \frac{\|x-y\|_2^p}{\cost(X^I,B)}  \right)\\
& = 2^{p-1}\cdot \left(\frac{1}{|C\cap X^I|} + \frac{\|x-y\|_2^p}{\cost(X^I,B)}\right)\\
&\leq 2^{p-1}\cdot \left(\frac{1}{|C\cap X^I|} + \frac{(\sqrt{d}\Delta_{i-1})^p}{\cost(X^I,B)}\right)\\
&\leq 2^{p-1}\cdot \left(\frac{1}{|C\cap X^I|} + (1+\varepsilon/10)\cdot \frac{(\sqrt{d}\Delta_{i-1})^p}{\cost(X,B)}\right)\\
&\leq 2^{p-1}\cdot \left(\frac{1}{|C\cap X^I|} + (1+\varepsilon/10)\cdot \frac{(\sqrt{d}\Delta_{i-1})^p}{\OPT}\right)\\
&\leq 2^{p-1}\cdot \left(\frac{1}{|C\cap X^I|} + (1+\varepsilon/10)\cdot \frac{(\sqrt{d}\Delta_{i-1})^p}{o}\right)\\
&=2^{p-1}\cdot \left(\frac{1}{|C\cap X^I|} + (1+\varepsilon/10)\cdot \frac{1}{100\cdot R_{i-1}}\right)\\
&\leq 2^{p-1}\cdot \left(\frac{1}{0.5R_{i-1}} + (1+\varepsilon/10)\cdot \frac{1}{100\cdot R_{i-1}}\right)\\
&\leq 2^{p+1}\cdot\frac{1}{R_{i-1}}\\
&=2^{2p+1}\cdot\frac{1}{R_i},
\end{align*}
where the first step follows from the convexity of $\|\cdot\|_2^p$, the second step follows from Equation~\eqref{eq:average_argument_2}, the third step follows from $(C\cap X^I)\subseteq X^I$, the forth step follows from that $\cost(X^I,B)=\sum_{x'\in X^I} \min_{b\in B}\|x'-b\|_2^p$, the fifth step follows from that both $x,y$ are in the cell $C\in G_{i-1}$ and the side length of $C$ is $\Delta_{i-1}$, the sixth step follows from Lemma~\ref{lem:levels_outside_I_can_be_ignored} that $\cost(X,B)\leq (1+\varepsilon/10)\cdot \cost(X^I,B)$, the seventh step follows from $\OPT\leq \cost(X,B)$, the eighth step follows from $o\leq \OPT$, the ninth step follows from $R_{i-1}=0.01\cdot o/(\sqrt{d}\Delta_{i-1})^p$, the tenth step follows from Equation~\eqref{eq:heavy_cell_size}, and the twevlth step follows from $R_i=R_{i-1}\cdot 2^p$.
\end{proof}

\paragraph{Proof of Theorem~\ref{thm:offline_k_clustering_coreset}.}
Now, we are going to prove Theorem~\ref{thm:offline_k_clustering_coreset}.
Before we proceeds to more details, we need to introduce the sensitivity sampling theorem.
Given a point set $X\subset \mathbb{R}^d$, the sensitivity of a point $x\in X$ with respect to the $\ell_p$ $k$-clustering problem is defined as
\begin{align*}
s(x):=\sup_{B\subset \mathbb{R}^d:|B|\leq k} \frac{\min_{b\in B}\|x-b\|_2^p}{\cost(X,B)}.
\end{align*}
The following theorem shows the guarantee of the sensitivity sampling based coreset construction.
\begin{theorem}[\cite{braverman2016new},\cite{braverman2019streaming}]\label{thm:sensitivity_sampling}
Let $X\subset\mathbb{R}^d$ be a set of at most $n$ points.
Let $s':X\rightarrow \mathbb{R}_{\geq 0}$ satisfy that $\forall x\in X,$ $s'(x)$ is an upper bound of the sensitivity of $x$, i.e., $s'(x)\geq s(x)$.
Let $\hat{t}=\sum_{x\in X} s'(x)$.
Let $\delta,\varepsilon\in(0,0.5)$.
Consider a multiset $S$ of $m$ i.i.d. samples from $X$, where each sample chooses $x\in X$ with probability $\mathrm{prob}(x)=s'(x)/\hat{t}$.
For each sampled point $x$, we assign it an arbitrary weight $w(x)\in (1\pm \varepsilon/4)\cdot 1/(m\cdot\mathrm{prob}(x))$. 
If $m\geq \lceil 50\hat{t}/\varepsilon^2\cdot (\log n\log \hat{t}+\log(1/\delta)) \rceil$, $(S,w)$ is an $\varepsilon$-coreset for $X$ with probability at least $1-\delta$.
\end{theorem}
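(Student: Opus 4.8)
The plan is the standard two-stage importance-sampling argument: first prove a relative-error concentration bound for the estimator at a single, fixed center set $B$, then lift it to a guarantee holding uniformly over all $B\subset\R^d$ with $|B|\le k$. Write $\cost(x,B):=\min_{b\in B}\|x-b\|_2^p$, so $\cost(X,B)=\sum_{x\in X}\cost(x,B)$ and $s(x)=\sup_{B:|B|\le k}\cost(x,B)/\cost(X,B)$. First I would remove the slack in the weights: letting $S=\{X_1,\dots,X_m\}$ be the i.i.d.\ sample and $Y_B:=\tfrac1m\sum_{j=1}^m\cost(X_j,B)/\mathrm{prob}(X_j)$ the idealized unbiased estimator, the hypothesis $w(X_j)\in(1\pm\varepsilon/4)/(m\,\mathrm{prob}(X_j))$ gives $\cost(S,w,B)\in(1\pm\varepsilon/4)\,Y_B$ deterministically, so it suffices to show $Y_B\in(1\pm\varepsilon/2)\cost(X,B)$ simultaneously for all admissible $B$ with probability $\ge1-\delta$; then $(1\pm\varepsilon/4)(1\pm\varepsilon/2)\subseteq(1\pm\varepsilon)$. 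Fixing $B$ and setting $Z_j:=\cost(X_j,B)/\mathrm{prob}(X_j)$, one computes $\E[Z_j]=\cost(X,B)$, and using $\mathrm{prob}(x)=s'(x)/\hat t\ge s(x)/\hat t\ge\cost(x,B)/(\hat t\,\cost(X,B))$ one gets $Z_j\le\hat t\,\cost(X,B)$ surely and $\E[Z_j^2]=\sum_{x\in X}\cost(x,B)^2/\mathrm{prob}(x)\le\hat t\,\cost(X,B)^2$. Bernstein's inequality applied to $\sum_jZ_j$ (mean $m\cost(X,B)$, variance $\le m\hat t\,\cost(X,B)^2$, centered increments bounded by $\hat t\,\cost(X,B)$) then yields
\begin{align*}
\Pr\!\left[\,|Y_B-\cost(X,B)|>\tfrac{\varepsilon}{2}\cost(X,B)\,\right]\ \le\ 2\exp\!\left(-c\,\varepsilon^2 m/\hat t\right)
\end{align*}
for an absolute constant $c>0$.

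Second I would make this uniform over the continuous query space $\{B\subset\R^d:|B|\le k\}$, following the weighted-net / chaining technique of \cite{braverman2016new,braverman2019streaming}. The steps are: (i) normalize, replacing $x\mapsto\cost(x,B)$ by $\bar f_B(x):=\cost(x,B)/(s'(x)\,\hat t\,\cost(X,B))\in[0,1/\hat t]$, so that the estimator is a bounded-range average and the coreset guarantee becomes an $\varepsilon$-approximation statement for the range space induced by $\{\bar f_B\}$ on the $n$-point ground set; (ii) reduce the query set to a combinatorially finite family, using that over $n$ \emph{fixed} points only $n^{O(1)}$ per-scale combinatorially-distinct Voronoi-assignment/cost patterns can arise, so the induced range space has pseudo-dimension $O(\log n\cdot\log\hat t)$ after the standard decomposition of $X$ into $O(\log\hat t)$ geometric ``sensitivity bands'' $\{x:2^{-i-1}<s'(x)/\hat t\le 2^{-i}\}$; and (iii) run a Haussler-packing / chaining argument band by band, which upgrades ``bounded pseudo-dimension $+$ small per-query variance'' into a simultaneous relative-error guarantee once $m\gtrsim\varepsilon^{-2}\hat t\,(\mathrm{pdim}+\log(1/\delta))$. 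Plugging in $\mathrm{pdim}=O(\log n\log\hat t)$ gives the stated $m=\lceil 50\,\hat t\,\varepsilon^{-2}(\log n\log\hat t+\log(1/\delta))\rceil$, and on the resulting good event the single-$B$ bound of the previous paragraph holds for every $B$, which finishes the proof.

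The main obstacle is exactly this uniform-convergence step: passing from ``fixed $B$'' to ``all $B$'' for an unbounded, continuous space of $k$-center queries. A naive union bound over an $\varepsilon$-grid of $[\Delta]^d$ would pay a factor $d\log(\Delta/\varepsilon)$ in the exponent rather than $\log n$; obtaining $\log n$ requires the observation that only the finitely many combinatorially-distinct assignments of the $n$ ground points matter together with the Feldman--Langberg--Braverman machinery converting pseudo-dimension bounds into coreset bounds, and the $\log\hat t$ factor comes from the iterated, band-by-band chaining. The delicate book-keeping is handling points of very small sensitivity, whose normalized contribution $\bar f_B(x)$ is tiny yet must still be covered uniformly; controlling these without blowing up the sample size is the technical heart of the argument, and I would import it wholesale from \cite{braverman2016new,braverman2019streaming}.
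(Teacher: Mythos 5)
The paper does not actually prove this theorem: it is imported as a black box from \cite{braverman2016new,braverman2019streaming}, so there is no in-paper proof to compare your attempt against. Your sketch is a faithful outline of the standard argument in those references: the fixed-$B$ stage is carried out correctly (unbiasedness of $Z_j$, the sure bound $Z_j\le \hat t\,\cost(X,B)$ and the second-moment bound $\E[Z_j^2]\le \hat t\,\cost(X,B)^2$ both following from $\mathrm{prob}(x)\ge \cost(x,B)/(\hat t\,\cost(X,B))$, then Bernstein), and the uniform-convergence stage is correctly identified as the real content and explicitly deferred to the Feldman--Langberg/Braverman pseudo-dimension and sensitivity-band machinery. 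The one quibble is in your step (ii): the number of combinatorially distinct Voronoi/cost patterns that $k$ centers in $\mathbb{R}^d$ induce on $n$ fixed points is $n^{O(kd)}$, not $n^{O(1)}$, so the naive pattern count gives pseudo-dimension $O(kd\log n)$ rather than $O(\log n)$; the $\log n\log\hat t$ form in the statement reflects the specific bookkeeping of the cited constructions rather than this count. Since you import that machinery wholesale anyway and the final sample bound you arrive at matches the statement, this does not undermine the plan, but it is the one place where your own reasoning, as opposed to the imported reasoning, is not quite right as written.
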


Consider the sampling procedure in Algorithm~\ref{alg:offline_coreset_construction}.
Each sample $x\in X^i\subseteq X^I$ is drawn with probability 
\begin{align*}
\mathrm{prob}(x)=\frac{\lambda(X^i)\cdot \min(2^{2p+1}/R_i,1)}{\sum_{j\in I} \lambda(X^j)\cdot \min(2^{2p+1}/R_j,1)}\cdot \frac{1}{|X^i|} & = \frac{\frac{\lambda(X^i)}{|X^i|}\cdot \min(2^{2p+1}/R_i,1)}{\sum_{j\in I}\sum_{x'\in X^j} \frac{\lambda(X^j)}{|X^j|}\cdot \min(2^{2p+1}/R_j,1)}.
\end{align*}
The weight $w(x)$ of the sample $x$ is
\begin{align*}
w(x)=\frac{t'}{m\cdot \min(2^{2p+1}/R_i,1)} =\frac{1}{m}\cdot \frac{\sum_{j\in I}\sum_{x'\in X^j}\frac{\lambda(X^j)}{|X^j|}\cdot \min\left(2^{2p+1}/R_j,1\right)}{\frac{\lambda(X^i)}{|X^i|}\cdot \min(2^{2p+1}/R_i,1)}\cdot \frac{\lambda(X^i)}{|X^i|}.
\end{align*}
Thus, we have $w(x)=\frac{\lambda(X^i)}{|X^i|}\cdot 1/(m\cdot \mathrm{prob}(x))$.
Since $i\in I$, we have $|X^i|\geq 0.9\gamma R_i$ by Fact~\ref{fac:good_estimation_Xi}.
Since $\lambda(X^i)$ is a good estimated value of $|X^i|$, we have $\lambda(X^i)/|X^i|\in (1\pm \varepsilon/8)$.
Therefore, we have $w(x)\in (1\pm \varepsilon/8)\cdot 1/(m\cdot \mathrm{prob}(x))$.
Let $s'(x):=\frac{\lambda(X^i)}{|X^i|}\cdot 2\cdot \min\left(2^{2p+1}/R_i,1\right)$.
Since $\lambda(X^i)/|X^i|\geq 1/2$, we have $s'(x)\geq s(x)$ by Lemma~\ref{lem:sensitivity_ub}.
Let $\hat{t}=\sum_{x\in X^I} s'(x)$. 
We have $\hat{t}=2\cdot t'$ and $\forall x\in X^I, \mathrm{prob}(x)=s'(t)/\hat{t}$.
According to Theorem~\ref{thm:sensitivity_sampling}, $(S,w)$ outputted by Algorithm~\ref{alg:offline_coreset_construction} is an $(\varepsilon/2)$-coreset for $X^I$ with probability at least $1-\delta$.
According to Lemma~\ref{lem:levels_outside_I_can_be_ignored}, when $(S,w)$ is an $(\varepsilon/2)$-coreset for $X^I$, $(S,w)$ is an $\varepsilon$-coreset for $X$.
\qed

\section{$\Omega(k)$ Space is Necessary for $k$-Clustering}\label{sec:necessity_k_clustering}
There is a simple reduction from a communication problem to the $k$-clustering problem to show that any multiplicative approximate sliding window algorithm for $k$-clustering needs at least $\Omega(k)$ space.
In the INDEX~\cite{kn06} problem, Alice has a $k$-bit binary string and Bob has an index $i\in[k]$.
Alice can send a message to Bob, and Bob needs to answer whether the $i$-th bit of the Alice's string is $1$ without sending any message to Alice.
If Bob needs to answer correctly with probability at least $2/3$, Alice must send $\Omega(k)$ bits to Bob.
Now consider the reduction to the sliding window $k$-clustering in $1$-dimensional space.
Let the window size be $k+1$.
For the $j$-th bit of Alice's string, if the bit is $0$ Alice adds a point $2\cdot j-1$ into the stream, and if the bit is $1$ Alice adds a point $2\cdot j$ into the stream.
Then Alice simulates the sliding window algorithm and sends the memory to Bob.
Bob adds a point $2\cdot i$ into the stream, and continues simulating the sliding window algorithm.
If the $i$-th bit is $1$, then there are only $k$ distinct points in the stream and the clustering cost is $0$.
If the $i$-th bit is $0$, then there are $k+1$ distinct points in the stream, and the clustering cost is at least $1$.
Thus, if the sliding window algorithm can output any multiplicative approximation, Bob can determine whether the $i$-th bit of Alice's string is $1$.
Thus, the sliding window algorithm for $k$-clustering needs $\Omega(k)$ space.

\section{Dimension Reduction for $k$-Clustering}\label{sec:dim_reduce}
According to~\cite{mmr19}, there is a simple way to reduce the dimension of points for $\ell_p$ $k$-clustering problems.
\begin{lemma}[Theorem 1.3 in~\cite{mmr19}]
Consider $\ell_p$ $k$-clustering problem for $p\geq 1$.
Consider an arbitrary point set $P=\{p_1,p_2,\cdots,p_n\}\subset\mathbb{R}^d$.
Let $\varepsilon\in (0,0.5)$.
Let $S\in\mathbb{R}^{d'\times d}$ be a random matrix where $d'=\Theta(\varepsilon^{-2}\log k)$ and each entry of $S$ is drawn uniformly at random from $\{-1/\sqrt{d'},1/\sqrt{d'}\}$.
Let $P'=\{S\cdot p_1,S\cdot p_2,\cdots,S\cdot p_n\}$.
Then, with probability at least $0.99$,
\begin{align*}
(1-\varepsilon)\cdot \min_{Z\subset\mathbb{R}^d:|Z|=k}\cost(P,Z)\leq \min_{Z'\subset\mathbb{R}^{d'}:|Z'|=k}\cost(P',Z')\leq (1+\varepsilon)\cdot \min_{Z\subset\mathbb{R}^d:|Z|=k}\cost(P,Z).
\end{align*}
\end{lemma}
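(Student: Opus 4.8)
The plan is to prove the two inequalities $\min_{Z'}\cost(P',Z')\le(1+\varepsilon)\OPT_d$ and $\min_{Z'}\cost(P',Z')\ge(1-\varepsilon)\OPT_d$ separately, where $\OPT_d:=\min_{Z\subset\mathbb{R}^d:|Z|=k}\cost(P,Z)$. For a partition $\mathcal{Q}=(C_1,\dots,C_k)$ of $P$ write $\cost_{\mathcal{Q}}(P)=\sum_i\min_{z}\sum_{x\in C_i}\|x-z\|_2^p$ for the cost of the best centers consistent with $\mathcal{Q}$; note $\OPT_d=\min_{\mathcal{Q}}\cost_{\mathcal{Q}}(P)$ and, likewise, $\min_{Z'}\cost(P',Z')=\min_{\mathcal{Q}}\cost_{\mathcal{Q}}(P')$. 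The whole point is to organize both arguments so that the only union bound paid is over the $k$ clusters, never over the $\approx k^n$ partitions — that is what buys target dimension $\Theta(\varepsilon^{-2}\log k)$ rather than $\Theta(\varepsilon^{-2}\log n)$.

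\emph{Upper bound.} Fix \emph{deterministically} an optimal partition $\mathcal{P}^*=(C_1^*,\dots,C_k^*)$ of $P$ with optimal centers $c_1^*,\dots,c_k^*\in\mathbb{R}^d$, so $\OPT_d=\sum_i\sum_{x\in C_i^*}\|x-c_i^*\|_2^p$. Since $(\mathcal{P}^*,Sc_1^*,\dots,Sc_k^*)$ is one feasible solution for $P'$, it suffices to show $\sum_i\sum_{x\in C_i^*}\|S(x-c_i^*)\|_2^p\le(1+\varepsilon)\OPT_d$. Fix a cluster $i$, and let $V_i$ be the $d\times|C_i^*|$ matrix with columns $x-c_i^*$, $x\in C_i^*$. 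For $p=2$ the cluster's contribution is exactly $\|SV_i\|_F^2$, a quadratic form in the $d'd$ i.i.d.\ $\pm1/\sqrt{d'}$ entries of $S$ whose defining matrix $\frac1{d'}(I_{d'}\otimes V_iV_i^{\top})$ has Frobenius norm $\le\frac1{\sqrt{d'}}\|V_iV_i^{\top}\|_F\le\frac1{\sqrt{d'}}\|V_i\|_F^2$ and operator norm $\le\frac1{d'}\|V_i\|_F^2$ — here using only the elementary bound $\|V_iV_i^{\top}\|_{\mathrm{op}}\le\tr(V_iV_i^{\top})=\|V_i\|_F^2$ for a PSD matrix. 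The Hanson--Wright / Laurent--Massart bound then gives $\|SV_i\|_F^2\in(1\pm\varepsilon)\|V_i\|_F^2$ with probability $1-\exp(-\Omega(\varepsilon^2 d'))$. For general $p\ge1$ the same estimate holds after replacing $\|SV_i\|_F^2$ by $\sum_{x\in C_i^*}\|Sv_x\|_2^p$ ($v_x:=x-c_i^*$), using $\E\|Sv\|_2^p=(1\pm o(1))\|v\|_2^p$ when $d'\gg p^2$ and bounding $\operatorname{Var}\big(\sum_{x\in C_i^*}\|Sv_x\|_2^p\big)$ by a quantity of the same trace-dominated form, which absorbs a harmless $\poly(p)$ factor into $d'$. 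Taking $d'=\Theta(\varepsilon^{-2}\log k)$ and union bounding over the $k$ clusters yields $\sum_i\sum_{x\in C_i^*}\|S(x-c_i^*)\|_2^p\le(1+\varepsilon)\OPT_d$ with probability $\ge0.995$.

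\emph{Lower bound.} Let $(\mathcal{P}',W')$ with $\mathcal{P}'=(\hat C_1,\dots,\hat C_k)$, $W'=(\hat w_1,\dots,\hat w_k)$, be an optimal solution for $P'$; we may assume each $\hat w_i$ lies in the image of $S$ (projecting it there only decreases the distances $\|Sx-\hat w_i\|_2$), so there are $w_i\in\mathbb{R}^d$ with $Sw_i=\hat w_i$. Writing $i(x)$ for the cluster of $x$ in $\mathcal{P}'$ and $u_x:=x-w_{i(x)}$, we get $\cost_{\mathcal{P}'}(P')=\sum_x\|Sx-\hat w_{i(x)}\|_2^p=\sum_x\|Su_x\|_2^p$, while $\sum_x\|u_x\|_2^p\ge\cost_{\mathcal{P}'}(P)\ge\OPT_d$ (the left side uses one feasible center set for the partition $\mathcal{P}'$, the right side is the global optimum). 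Hence it suffices to show $\sum_x\|Su_x\|_2^p\ge(1-\varepsilon)\sum_x\|u_x\|_2^p$; the per-cluster lower-tail bound from the previous paragraph would do it — except that $\mathcal{P}'$, and therefore the vectors $\{u_x\}$, depend on the random matrix $S$, so we cannot union-bound the concentration over this one partition, and a union bound over all partitions costs $\log(k^n)$ and inflates $d'$ to $\Theta(\varepsilon^{-2}n\log k)$. This $S$-dependence is the main obstacle, and it is precisely the technical heart of~\cite{mmr19}: it is circumvented by decomposing each within-cluster cost into geometrically increasing distance scales and controlling each scale by a net whose cardinality does not depend on $n$ (again exploiting the trace-versus-operator-norm domination, so that the surviving union bound only loses $\Theta(\log k)$) — in effect a chaining argument — and we invoke that argument. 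Finally, since the upper and lower bounds each fail with probability $\le0.005$, rescaling $\varepsilon$ by an absolute constant and taking a last union bound gives the claimed inequalities with probability $\ge0.99$.
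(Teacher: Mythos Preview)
The paper does not prove this lemma: it is quoted verbatim as Theorem~1.3 of~\cite{mmr19} and used as a black box, with no argument supplied. So there is no ``paper's own proof'' to compare against.

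Your proposal is likewise not a self-contained proof. The upper-bound half is a reasonable sketch (fix an optimal partition of $P$, apply Hanson--Wright per cluster, union bound over $k$ clusters). For the lower bound you correctly isolate the real difficulty --- the optimal partition $\mathcal{P}'$ of $P'$ depends on $S$, so one cannot simply condition and apply per-cluster concentration --- and then explicitly write ``we invoke that argument'' from~\cite{mmr19}. That is a citation, not a proof; the chaining/net construction you allude to is exactly the technical content of~\cite{mmr19}, and you have not reproduced it. One small wrinkle in your reduction: when you lift $\hat w_i$ to a preimage $w_i\in\mathbb{R}^d$, the quantity $\sum_x\|u_x\|_2^p$ depends on the choice of preimage (the kernel of $S$ is nontrivial), and for a bad choice the inequality $\sum_x\|Su_x\|_2^p\ge(1-\varepsilon)\sum_x\|u_x\|_2^p$ fails trivially. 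This is not fatal --- one can fix a canonical preimage, or work directly with the partition cost $\cost_{\mathcal{P}'}(P)$ --- but it is a loose thread in the write-up. In any case, since both you and the paper ultimately defer to~\cite{mmr19}, the bottom line is the same.
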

According to the above lemma, we only need to solve the clustering problem for the point set $P'$.
For a sliding window algorithm, we can sample a matrix $S$ before the algorithm starts.
For each point $p$ during the stream, we can compute $S\cdot p$ and apply the algorithm for $S\cdot p$.
Thus, the dimension of a point that needed to be handled is at most $O(\log(k)/\varepsilon^2)$.
The entire algorithm only needs $O(d'\cdot d)=O(kd/\varepsilon^2)$ additional space to store $S$.

\end{document}